\declaretheorem[name=Proposition,
	refname={proposition,propositions},
	Refname={Proposition,Propositions}]{prop}
\declaretheorem[name=Lemma,
	refname={lemma,lemmas},
	Refname={Lemma,Lemmas},
	numberlike=prop]{lemma}
\declaretheorem[name=Definition,
	refname={definition,definitions},
	Refname={Definition,Definitions},
	numberlike=prop]{definition}
\newenvironment{claim}{\underline{Claim:} }{}
\newenvironment{claimproof}{\begin{leftbar}\noindent}{\end{leftbar}}
\begin{document}

\title{Outer approximations of classical multi-network correlations}

\author{Victor Gitton}
\affiliation{Institute for Theoretical Physics, ETH Zürich, Switzerland}
\email{vgitton@ethz.ch}
\maketitle

\begin{abstract}

We propose a framework, named the postselected inflation framework, to obtain converging outer approximations of the sets of probability distributions that are compatible with classical multi-network scenarios. 
Here, a network is a bilayer directed acyclic graph with a layer of sources of classical randomness, a layer of agents, and edges specifying the connectivity between the agents and the sources.
%
A multi-network scenario is a list of such networks, together with a specification of subsets of agents using the same strategy.
%
%
%
%
%
We furthermore show that the postselected inflation framework is mathematically equivalent to the standard inflation framework: in that respect, our results allow to gain further insights into the convergence proof of the inflation hierarchy of Navascuès and Wolfe, and extend it to the case of multi-network scenarios.

\end{abstract}

\tableofcontents

\newpage
\section{Introduction}

\paragraph{Causal compatibility.}

The problem of causal compatibility is the problem of deciding whether a given probability distribution can arise from a given causal structure.
The study of causal compatibility in quantum mechanics can be traced back to Bell's theorem \cite{bell_einstein_1964}. In modern language, this result can be understood as a proof that there exist outcome distributions compatible with a certain quantum causal model (Alice and Bob sharing an entangled quantum state and having classical inputs) yet incompatible with the corresponding classical causal model (where the entangled quantum state is replaced with shared classical randomness). This result, apart from its fundamental implications for possible theories of natures, turns out to be crucial for quantum cryptography \cite{ekert_quantum_1991,pirandola_advances_2020}.

Beyond the setting of Bell's theorem, there are a number of reasons to be interested in causal compatibility in greater generality \cite{tavakoli_bell_2021}. Here, we are concerned with causal structures called networks that feature a number of independent, unobserved parameters, which we refer to as ``sources'', that may influence a number of observed parameters, which we refer to as the outcomes of ``agents'': this naming convention reflects the quantum information mindset where the observed parameters would be the outcomes of measurements that human agents would carry out in a lab.
Such networks come in different flavors depending on the type of physical systems that implement the unobserved parameters: these are typically taken to be classical (e.g., bit strings sent out to the agents), quantum (featuring in particular quantum systems whose states may be entangled with respect to the different agent labs they are sent out to) or merely non-signaling (i.e., physical systems that are less constrained than quantum systems, whose internal description is not specified, but that nonetheless verify certain non-signaling conditions).
Some important results from the perspective of quantum information theory are the demonstration of non-locality without inputs \cite{renou_genuine_2019}, of full network non-locality (where all sources have to be non-classical) \cite{pozas-kerstjens_full_2022}, as well as the necessary existence of $N$-partite ``entanglement'' in any non-signaling theory of nature \cite{coiteux-roy_any_2021}.

In this work, we wish to investigate the structure of achievable outcome distributions in classical multi-network scenarios. These can be understood as networks of classical observers sharing maximally entropic classical sources of randomness in a specific arrangement, allowing for some observers to be using the same strategy (i.e., responding identically to the same set of inputs).
Accounting for such same-strategy constraints can be seen, from an operational standpoint, as a ``markovianity'' constraint on the agents. For instance, if the agents are in fact memory-less black boxes that can be plugged in various positions of the network, then such same-strategy constraints would arise.
In fact, a distribution that is causally incompatible with a given same-strategy multi-network scenario but is causally compatible with the corresponding ``any-strategy'' multi-network scenario is a distribution in which these memory effects are non-trivial.
Alternatively, the same-strategy constraints can be desirable in applications: this may be the case of a network in which each of many agents choose among two strategies depending on the correlations they wish to achieve with the rest of the network.
While we focus on the theory behind classical networks, such same-strategy constraints were already investigated in the non-signaling case \cite{bancal_non-local_2021}.


\paragraph{Methods.}

The problem of causal compatibility is typically hard to solve analytically. If one is interested in inner approximations of the set of achievable distributions of a given causal structure, then one can resort to an analytical or numerical sampling of the underlying search space --- namely, the space of all agent strategies and all source behaviors within a given theory.
A notably efficient tool in this direction is the neural network oracle for classical causal compatibility introduced in \cite{krivachy_neural_2020}.

In order to obtain guarantees that a fair amount of the search space has been sampled, it is crucial to have access to tractable outer approximations of the set of outcome distributions. A useful review for that purpose is that of \cite{tavakoli_bell_2021}. One basic analytical tool is Finner's inequality \cite{renou_limits_2019}, that provides simple bounds on achievable correlations in non-signaling networks.
In the case of classical networks, one can analytically certify the infeasibility of certain distributions based on so-called rigidity arguments \cite{renou_network_2020}. 
A general possibility to further study classical, quantum or non-signaling networks is to use entropy-based outer approximations of the feasible correlations \cite{chaves_informationtheoretic_2015,chaves_entropic_2016,weilenmann_analysing_2017}. Additionally, for quantum or classical networks, one may also use the semidefinite-programming-based relaxation of \cite{pozas-kerstjens_bounding_2019}, that is based on building a positive semidefinite correlation matrix augmented with scalar operators that enable the incorporation of conditional independence relations, if the network at hand features such conditional independences.

The only method known to date of generating outer approximations that converge to the actual set of feasible correlations in classical networks is the technique of inflation. Inflation is a general technique that can come in three flavors for classical, quantum and non-signaling networks.
The general idea is simple: given a certain network and an outcome distribution for the agents of this network, one could in principle have access to several copies of sources and agents, wire them in various ways, and then obtain a sensible probability distribution that should verify certain compatibility conditions with respect to the original outcome distribution.
In the classical case, the inflation technique --- which we will call ``fanout inflation'' --- was originally introduced in \cite{wolfe_inflation_2019}, and later proven to converge asymptotically in \cite{navascues_inflation_2020}.
The case of quantum and non-signaling inflation is different to the classical case: while classical information can be freely cloned, this does not hold for quantum and non-signaling information. Hence, in classical inflation, one makes use of ``fanout'' inflation graphs with explicit cloning of classical information, while in the quantum and non-signaling cases, the inflation graphs are in that sense ``non-fanout''. 
A description of quantum inflation can be found in \cite{wolfe_quantum_2021}. Some recent developments regarding the potential convergence of quantum inflation can be found in \cite{ligthart_convergent_2021}, but there remains a ``rank constraint'' loophole to be addressed.
The case of non-signaling inflation was discussed initially in \cite{wolfe_inflation_2019}, and then further developed in e.g.\ \cite{gisin_constraints_2020,coiteux-roy_any_2021}.
From a practical perspective, inflation is typically handled as a linear program in the classical and non-signaling case, while it takes the form of a semidefinite program in the quantum case.

\paragraph{Objectives.}

The proof of convergence of the classical inflation \cite{navascues_inflation_2020} is a rather surprising result that deserves some attention.
The primary objective of the present manuscript is to gain additional insights as to how the proof works.
This desire eventually yielded the postselected inflation formulation, which can be understood as an equivalent formulation of fanout inflation: the equivalence holds for the outer approximations that these two schemes can generate, but also in terms of the linear programs that one would solve in either formulation.
Interestingly, in the postselected inflation formulation, the convergence of the outer approximations is rather straightforward. On the other hand, the fact that the postselected inflation scheme yields outer approximations of the relevant set of outcome distributions is non-trivial.
The situation is the opposite with the fanout inflation formulation, which clearly yields outer approximation, but whose convergence is rather hidden.
In that sense, we strongly encourage the interested reader to gain familiarity with both formulations, as they complement each other with respect to the intuition that one gains from knowing about them.
A possible structure for this manuscript would have been to start introducing fanout inflation, and then work our way towards the postselected inflation formalism.
Instead, we choose to temporarily pretend, for the sake of pedagogy, that fanout inflation does not exist, and motivate and prove the soundness and convergence of the postselected inflation formalism from the bottom up.
This brings the opportunity to prove the convergence of certain inflation hierarchies in the contexts of classical multi-network correlations involving same-strategy constraints.
At last, we will show explicitly some working examples of the correspondence between fanout inflation and postselected inflation.

\paragraph{Outline.}

The Correlated Sleeper example is introduced first in section \cref{sec:intro sc} as a motivation to the kind of problems that the inflation framework can solve. 
We then introduce the relevant tensor network notation in section \ref{sec:tensor notation}. 
The multi-network scenarios and causal compatibility problem that we will consider are introduced in section \ref{sec:causal compat}. 
The next section \ref{sec:outer approx} is there to motivate the postselected inflation framework, and to intuitively show that the resulting scheme is convergent.
We collect certain basic formal results regarding the postselected inflation scheme in section \ref{sec:post-selected inflation} --- namely, that the scheme generates outer approximations that are increasingly tight and that eventually converge. 
We apply the postselected inflation formalism to the Correlated Sleeper in section \ref{sec:applications} as a concrete example.
The correspondence between postselected and fanout inflation is made explicit in section \ref{sec:fanout inflation}.

\newpage
\section{The Correlated Sleeper}
\label{sec:intro sc}

Let us introduce the Correlated Sleeper task: in the rest of the article, we will use it as the main example to which we will apply our framework. This task involves an agent $A$ that will be subject to two rounds of interrogation. In each of these rounds, $A$ has to output a number, say, either $1$ or $2$. If the outputs in the first and second round are equal, $A$ wins a prize, and $A$'s objective is to maximize her average probability of winning.

To succeed in this task, $A$ has access to two inputs, referred to as her left and right inputs.
These inputs each provide $A$ with a real number between $0$ and $1$. One of the two inputs will contain the same number (drawn uniformly at random between $0$ and $1$) across the two rounds --- this is the ``faithful'' input. The other input will contain two independently drawn numbers during the two rounds. However, $A$ does not know which of her two inputs is the
faithful one: this is decided according to the toss of a fair coin to which $A$ does not have access. The situation is summarized in \cref{fig:protocol_flow}. On top of her inputs, $A$ may also use some local randomness (for instance, she may flip a coin to decide which input to trust).

\begin{figure}[h!]
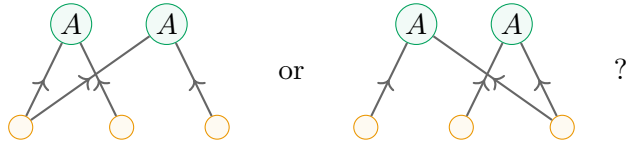

\centering
\begin{equation*}
\centertikz{
\node[agent] (a1) {$A$};
\node[agent] (a2) [right=20pt of a1] {$A$};
\node[source] (alpha) [below left=30pt and 10pt of a1] {};
\drawabsleg (alpha) -- (a1);
\drawabsleg (alpha) -- (a2);
\node[source] (beta1) [below right=30pt and 10pt of a1] {};
\drawabsleg (beta1) -- (a1);
\node[source] (beta2) [below right=30pt and 10pt of a2] {};
\drawabsleg (beta2) -- (a2);
}
\quad
\textup{ or }
\quad
\centertikz{
\node[agent] (a1) {$A$};
\node[agent] (a2) [right=20pt of a1] {$A$};
\node[source] (alpha1) [below left=30pt and 10pt of a1] {};
\drawabsleg (alpha1) -- (a1);
\node[source] (alpha2) [below left=30pt and 10pt of a2] {};
\drawabsleg (alpha2) -- (a2);
\node[source] (beta) [below right=30pt and 10pt of a2] {};
\drawabsleg (beta) -- (a1);
\drawabsleg (beta) -- (a2);
}
\quad\textup{?}
\end{equation*}
\caption{In the Correlated Sleeper protocol, effectively, the agent $A$ has a uniform prior over the above two networks. By assumption, the agent $A$ has one unique strategy independent of her location in the networks. Each $\protect\centertikz{\protect\node[source] {};}$ node denotes an independent source of randomness that generates a uniform number in $[0,1]$; the $\protect\centertikz{\protect\drawabsleg (0,0) -- (20pt,0);}$ edges specify to which source each instance of the agent $\protect\centertikz{\protect\node[agent] {$A$};}$ has access; and the bottom-left and bottom-right inputs of the $\protect\centertikz{\protect\node[agent] {$A$};}$ node are distinguishable by $A$.}
\label{fig:protocol_flow}
\end{figure}

Now, there are some restrictions at play. The first restriction is that the agent $A$ will have no memories of the first round during the second round, and there is no information available to $A$ that allows her to distinguish between the two rounds. This is in analogy with the setting of the Sleeping Beauty paradox \cite{elga_self-locating_2000}, in which the Sleeping Beauty is woken up multiple times without knowing how many times she has been woken up earlier on.
This justifies the fact that $A$ has to use the same strategy during the two rounds. This strategy is most generally captured by the set of conditional probabilities $p_A(a|\alpha,\beta)$ of $A$ giving the outcome $a$ upon seeing that her left (resp.\ right) input number is $\alpha$ (resp.\ $\beta$).
Alternatively, and perhaps more realistically, the agent $A$ can be thought of as a computer that has been programmed to use the strategy $p_A$ and is rebooted between the two rounds.
The second restriction is a marginal constraint on the strategy that $A$ may use: on average over the first round, her outcome distribution must be uniform over the two values $1$ and $2$. This is expressed as: for all $a\in\{1,2\}$,\footnote{We will typically leave the integration domains implicit in this work. Here, the integral runs over $[0,1]^{\times 2}$.}
\begin{equation}
\label{eq:marginal constraint}
\int\dd\alpha\dd\beta\, p_A(a|\alpha,\beta) = \frac{1}{2}.
\end{equation}

The question is then: what is the maximal probability $\aopt$, optimized over the strategy $p_\alice$, of $\alice$ winning the game? For instance, a viable strategy could be the following: choose
\begin{equation}
\label{eq:app example strat}
p_{\alice}(\outputa|\alpha,\beta) = \left\{\begin{aligned}
\kdelta{\outputa}{1}, \textup{ if } \alpha < 0.5, \\
\kdelta{\outputa}{2}, \textup{ if } \alpha \geq 0.5,
\end{aligned}
\right.
\end{equation}
corresponding to $\alice$ only looking at her left input $\alpha$.
The single-round output of $\alice$ is indeed uniform over $\{1,2\}$. Then, if we are in the case where the left input
is the faithful one, it will hold that (denoting with $\outputa_i$ the output of $A$ in the round $i \in \{1,2\}$) $\outputa_1 = \outputa_2$ with certainty. In the case where the right input
is the faithful one, then $\alice$ is effectively choosing $\outputa_1$ and $\outputa_2$ independently and uniformly over $\{1,2\}$, so that $\outputa_1 = \outputa_2$ occurs with probability $1/2$. On average, the probability of success is thus of $3/4$, which means that $\aopt$ is at least $3/4$. 

Let us denote the outcome distribution of $A$ when the left (resp.\ right) input was the faithful one as $p_1$ (resp.\ $p_2$) (a probability distribution over the set $\{1,2\}\times\{1,2\}$). We can thus write $\aopt$ as the following optimization problem:
\begin{subequations}
\label{eq:sleeping correlations plain}
\begin{align}
\aopt\ :=\ &\sup_{p_{\alice}} \ 
\ \frac{1}{2}\sum_{\outputa\in\{1,2\}} \left(p_1(\outputa,\outputa) + p_2(\outputa,\outputa)\right), \hspace{4cm}\\
&\ \textup{s.t. }
\forall \alpha,\beta\in[0,1], \forall \outputa,\outputa_1,\outputa_2 \in \{1,2\}:
\end{align}
\begin{align}
p_{\alice}(\outputa|\alpha,\beta) &\geq 0,
\quad
\sum_{\outputa'\in\{1,2\}} p_{\alice}(\outputa'|\alpha,\beta) = 1,\label{eq:sleeping correlations norm} \\
p_1(\outputa_1,\outputa_2) &= 
\int\dd\alpha \dd{\beta_1}\dd{\beta_2}
p_\alice(\outputa_1|\alpha,\beta_1)p_\alice(\outputa_2|\alpha,\beta_2),\\
p_2(\outputa_1,\outputa_2) &= 
\int\dd{\alpha_1}\dd{\alpha_2}\dd\beta
p_\alice(\outputa_1|\alpha_1,\beta)p_\alice(\outputa_2|\alpha_2,\beta),\\
\frac{1}{2} &= \int\dd\alpha\dd\beta p_\alice(\outputa|\alpha,\beta) \label{eq:sleeping correlations marginal}.
\end{align}
\end{subequations}
As we will show in \cref{prop:aopt}, it turns out that $p^* = 3/4$, so that the strategy of \eqref{eq:app example strat} is in fact optimal. 
It is quite likely that this result can be obtained with a more straightforward proof and in some greater generality (e.g., allowing the agent $A$ to use a different strategy during each round), but this nonetheless gives us the opportunity to see a working example of our framework at a minimal computational cost.

\newpage
\section{Tensor notation}
\label{sec:tensor notation}

In this section, we introduce the tensor notation that we will be using to present our results --- it is merely a specialized tensor network notation that is reviewed more generally in e.g.\ \cite{bridgeman_hand-waving_2017}.

\subsection{Probability tensors}


We will use probability tensors to represent conditional probability distributions. The input legs are drawn at the bottom of the boxes, while the outputs are above, allowing to think of the diagrams as a time-ordered transmission of information from bottom to top. These probability tensors can be thought of as functions from several sets, one set per leg, to the interval $[0,1]$ such that upon summation over the outputs of the top legs, one obtains $1$ to achieve the desired normalization.

\paragraph{Examples.}

A classical agent such as $\alice$ in the Correlated Sleeper task (see section \ref{sec:intro sc}) with two inputs and one output uses a conditional probability distribution that we draw as $\twostrat{\alice}{}{}{}$ which, upon evaluation, gives
\begin{equation}
\twostrat{\alice}{\outputa}{\alpha}{\beta} = p_\alice(\outputa|\alpha,\beta).
\end{equation}
The normalization can be written as
\begin{equation}
\forall \alpha,\beta \lst \sum_{\outputa'} \twostrat{\alice}{\outputa'}{\alpha}{\beta} = 1.
\end{equation}
Analogously, an outcome distribution over two outcomes, e.g.\ $p_1$ in \eqref{eq:sleeping correlations plain}, corresponds to a tensor $\atargetp{1}{}{}$ which evaluates to
\begin{equation}
\atargetp{1}{\outputa_1}{\outputa_2} = p_1(\outputa_1,\outputa_2).
\end{equation}
We will make extensive use of the tensors $\isource{\dsunif}{}$, whose output leg has domain $\{1,\dots,\ninf\}$ and which evaluates to, for all $\dsvalalpha\in\{1,\dots,\ninf\}$,
\begin{equation}
\isource{\dsunif}{\dsvalalpha} := \frac{1}{\ninf}.
\end{equation}
We will also make use of, in a certain sense, the limit case $\ninf \rightarrow \infty$, which we define as the probability tensor $\isource{\maxunif}{}$ that represents the uniform probability density over the unit interval $[0,1]$. That is, the output leg has continuous domain $[0,1]$, and $\isource{\maxunif}{\alpha} := 1$ for all $\alpha\in[0,1]$.

\paragraph{Tensor domains.}

In principle, one should always specify the domain of a tensor leg index. Here, we will leave this implicit, as it should be relatively clear from the context and is anyway typically irrelevant for the general constructions that we describe.


\subsection{Composition rules}

There are several ways to combine the above tensors together, which we clarify in this section.

\paragraph{Scalar multiplication.}

Drawing two tensors next to each others simply implies the scalar multiplication of the probabilities. For instance,
\begin{equation}
\isource{\csalpha}{\csvalalpha}\isource{\csbeta}{\csvalbeta}  = \left(\isource{\csalpha}{\csvalalpha}\right)\cdot \left(\isource{\csbeta}{\csvalbeta}\right).
\end{equation}
One can think of such disconnected tensors as representing parallel, independent processes.

\paragraph{Contractions.}

One can contract the input leg of a tensor with the output leg of another tensor, provided that they share the same domain. This contraction, indicated graphically by the corresponding connection, implies a summation or integral over the corresponding argument. In the context of an integral, we will always tacitly assume that we are dealing with a Riemann integral, allowing us to approximate e.g.\ $\isource{\maxunif}{}$ sources with limits as $\ninf \rightarrow \infty$ of $\isource{\dsunif}{}$ sources --- this will be used in particular in \cref{app:det strat sc}. For instance, the marginal constraint of \eqref{eq:marginal constraint} can be written as
\begin{equation}
\frac{1}{2} =
\centertikz{
\node[tensornode] (alice) at (0,0) {$\alice$};
\node[voidnode] (out) [above=\outcomevspace of alice] {\indexstyle{\outputa}};
\node[tensornode] (beta) [below left=15pt and -3pt of alice] {$\maxunif$};
\node[tensornode] (gamma) [below right=15pt and -3pt of alice] {$\maxunif$};
\drawleg (beta.north) -- \AnchorOneTwo{alice}{south};
\drawleg (gamma.north) -- \AnchorTwoTwo{alice}{south};
\drawoutcomeleg (alice.north) -- (out.south);
}
=
\int \dd \alpha \dd \beta
\isource{\maxunif}{\alpha} \isource{\maxunif}{\beta}\twostrat{\alice}{\outputa}{\alpha}{\beta}
=
\int \dd \alpha \dd \beta p_\alice(\outputa|\alpha,\beta).
\end{equation}
We can actually be even more compact by writing equality between tensors with open legs, which corresponds to component-wise equality. The output legs of either side of an equality have to be matched from left to right. For instance, the constraint \eqref{eq:sleeping correlations marginal} can be written as
\begin{equation}
\isource{\macrodsunif{2}}{} = 
\centertikz{
\node[tensornode] (alice) at (0,0) {$\alice$};
\node[voidnode] (out) [above=\outcomevspace of alice] {\indexstyle{}};
\node[tensornode] (beta) [below left=15pt and -3pt of alice] {$\maxunif$};
\node[tensornode] (gamma) [below right=15pt and -3pt of alice] {$\maxunif$};
\drawleg (beta.north) -- \AnchorOneTwo{alice}{south};
\drawleg (gamma.north) -- \AnchorTwoTwo{alice}{south};
\drawoutcomeleg (alice.north) -- (out.south);
}.
\end{equation}
We will occasionally draw tensor contractions using a dashed leg such as $\centertikz{\drawdashedleg (0,0) -- (20pt,0);}$ to better distinguish overlapping legs.

\subsection{Special tensors}

\label{sec:special tensors}

\paragraph{Deterministic tensors.}

A deterministic tensor is defined as a probability tensor which, upon evaluation of all input and output legs, yields either 0 or 1. They are represented by double-edged boxes, e.g.\ $\scdetstrat$. Such deterministic tensors are in one-to-one correspondence with functions from the joint values of the input legs to the joint values of the output legs, e.g., if $\alice$ uses a deterministic strategy $\scdetstrat$ in the Correlated Sleeper task, then there must exist a function $f : [0,1] \times [0,1] \to \{1,2\}$ such that, for all $\outputa \in \{1,2\}$, for all $\alpha,\beta\in[0,1]$,
\begin{equation}
\scdetstratargs = \ddelta{\outputa}{f(\alpha,\beta)}.
\end{equation}
We will prefer using the deterministic probability tensors over the functions such as $f$ in this work.

\paragraph{Marginal node.}

Another useful node is the marginal node, $\centertikz{\node[margnode] {  };}$, which takes in arbitrary inputs, has no outputs, and always evaluates to $1$. This implies, through the contraction rule, that placing this node on an output leg amounts to marginalizing over this leg:
\begin{equation}
\centertikz{
\node[tensornode] (p) {$p$};
\node[margnode] (a) [above left=11pt and 0pt of p] { };
\node[voidnode] (b) [above=\outcomevspace of p] {\indexstyle{\outputb}};
\node[voidnode] (c) [above right=\outcomevspace and \outcomehspacetargetp of p] {\indexstyle{\outputc}};
\drawoutcomeleg \AnchorOneThree{p}{north} -- (a.south);
\drawoutcomeleg \AnchorTwoThree{p}{north} -- (b.south);
\drawoutcomeleg \AnchorThreeThree{p}{north} -- (c.south);
}
=
\sum_\outputa 
\centertikz{
\node[margnode] (m) {};
\node[voidnode] (a) [below=\outcomevspace of m] {\indexstyle{\outputa}};
\drawleg (a.north) -- (m);
}
\targetp{\outputa}{\outputb}{\outputc}
=
\sum_\outputa \targetp{\outputa}{\outputb}{\outputc}.
\end{equation}

\paragraph{Fanout nodes.}

Since we deal with the transmission of classical information, there is a special tensor which we will use quite often, namely the \emph{fanout node}. It has one input and arbitrarily many outputs, all within the same domain, and gives probability one if and only if each output is equal to the input. For instance,
\begin{align}
\centertikz{
\node[voidnode] (in) at (0,0) {\indexstyle{\csvalalpha_0}};
\node[copynode] (copy) [above=10pt of in] { };
\node[voidnode] (out1) [above left=15pt and 5pt of copy] {\indexstyle{\csvalalpha_1}};
\node[voidnode] (out2) [above=15pt of copy] {\indexstyle{\csvalalpha_2}};
\node[voidnode] (out3) [above right=15pt and 5pt of copy] {\indexstyle{\csvalalpha_3}};
\drawleg (in.north) -- (copy);
\drawleg (copy) -- (out1.south);
\drawleg (copy) -- (out2.south);
\drawleg (copy) -- (out3.south);
}
&:=
\ \ddelta{\csvalalpha_1}{\csvalalpha_0}\ddelta{\csvalalpha_2}{\csvalalpha_0}\ddelta{\csvalalpha_3}{\csvalalpha_0}.
\end{align}
Here, $\delta$ denotes either a Dirac delta functional in the physicist's notation or a Kronecker delta tensor, depending on whether the tensor leg domain is in the integers or in the reals, which are the main two options here.

\paragraph{Bundle nodes.}

It will be useful to think of special types of legs which represent tuples of legs. For instance, suppose an agent $\alice$ receives a number of inputs $\csvalalpha_1, \dots,\csvalalpha_n$ from sources $\csalpha_1,\dots\csalpha_n$:
\begin{equation}
\label{eq:tempreprmanyins}
\centertikz{
\node[tensornode] (alice) at (0,0) {$\alice$};
\node[voidnode] (out) [above=\outcomevspace of alice] { };
\node[voidnode] (small int) [below=\smalldotsvspace of alice] {\indexstyle{\dots}};
\node[voidnode] (dots) [below=18pt of alice] {$\dots$};
\node[tensornode] (in1) [left=\bigdotshspace of dots] {$\csalpha_1$};
\node[tensornode] (inn) [right=\bigdotshspace of dots] {$\csalpha_n$};
\drawleg (in1.north) -- \AnchorOneTwo{alice}{south};
\drawleg (inn.north) -- \AnchorTwoTwo{alice}{south};
\drawoutcomeleg (alice.north) -- (out.south);
}.
\end{equation}
It will be convenient to have a prescription for the notation in this sort of situation. We can achieve this by denoting
\begin{equation}
\centertikz{
\node[voidnode] (dots) at (0,0) {$\dots$};
\node[tensornode] (in1) [left=\bigdotshspace of dots] {$\csalpha_1$};
\node[tensornode] (inn) [right=\bigdotshspace of dots] {$\csalpha_n$};
\node[voidnode] (out1) [above=\outcomevspace of in1] {\indexstyle{\csvalalpha_1}};
\node[voidnode] (outn) [above=\outcomevspace of inn] {\indexstyle{\csvalalpha_n}};
\node[voidnode] (sdots) [above=10pt of dots] {\indexstyle{\dots}};
\drawleg (in1) -- (out1);
\drawleg (inn) -- (outn);
}
=
\centertikz{
\node[tuplenode] (tuple) at (0,0) { };
\node[voidnode] (dots) [below=18pt of tuple] {$\dots$};
\node[tensornode] (in1) [left=\bigdotshspace of dots] {$\csalpha_1$};
\node[tensornode] (inn) [right=\bigdotshspace of dots] {$\csalpha_n$};
\node[voidnode] (sdots) [below=\smalldotsvspace of tuple] {\indexstyle{\dots}};
\node[voidnode] (outs) [above=10pt of tuple] {\indexstyle{\vec\csvalalpha}};
\drawleg (in1.north) -- \AnchorOneTwo{tuple}{south};
\drawleg (inn.north) -- \AnchorTwoTwo{tuple}{south};
\drawtupleleg (tuple.north) -- (outs.south);
},
\end{equation}
where the leg style $\centertikz{\drawtupleleg (0,0) -- (0,10pt);}$ indicates a tuple of values normally carried by $\centertikz{\drawleg (0,0) -- (0,10pt);}$ legs, and where $\vec\csvalalpha = (\csvalalpha_1,\dots,\csvalalpha_n)$. The special \emph{bundle node} $\centertikz{\node[tuplenode] at (0,0) { };}$ is responsible of bundling all its input legs into the outgoing tuple of values.
Formally speaking, we can define this $\centertikz{\node[tuplenode] at (0,0) { };}$ tensor as
\begin{equation}
\centertikz{
\node[tuplenode] (tuple) at (0,0) { };
\node[voidnode] (dots) [below=18pt of tuple] {\indexstyle{\dots}};
\node[voidnode] (in1) [left=2pt of dots] {\indexstyle{\csvalalpha_1}};
\node[voidnode] (inn) [right=2pt of dots] {\indexstyle{\csvalalpha_n}};
\node[voidnode] (sdots) [below=2pt of tuple] {\indexstyle{\dots}};
\node[voidnode] (outs) [above=10pt of tuple] {\indexstyle{\vec\csvalbeta = (\csvalbeta_1,\dots,\csvalbeta_n)}};
\drawleg (in1.north) -- \AnchorOneTwo{tuple}{south};
\drawleg (inn.north) -- \AnchorTwoTwo{tuple}{south};
\drawtupleleg (tuple.north) -- (outs.south);
}
:=
\ddelta{\csvalbeta_1}{\csvalalpha_1} \dots \ddelta{\csvalbeta_n}{\csvalalpha_n}.
\end{equation}
The diagram of equation \eqref{eq:tempreprmanyins} now becomes
\begin{equation}
\centertikz{
\node[tensornode] (alice) at (0,0) {$\alice$};
\node[voidnode] (out) [above=\outcomevspace of alice] { };
\node[voidnode] (small int) [below=\smalldotsvspace of alice] {\indexstyle{\dots}};
\node[voidnode] (dots) [below=18pt of alice] {$\dots$};
\node[tensornode] (in1) [left=\bigdotshspace of dots] {$\csalpha_1$};
\node[tensornode] (inn) [right=\bigdotshspace of dots] {$\csalpha_n$};
\drawleg (in1.north) -- \AnchorOneTwo{alice}{south};
\drawleg (inn.north) -- \AnchorTwoTwo{alice}{south};
\drawoutcomeleg (alice.north) -- (out.south);
}
=
\centertikz{
\node[tuplenode] (tuple) at (0,0) { };
\node[voidnode] (dots) [below=18pt of tuple] {$\dots$};
\node[tensornode] (in1) [left=\bigdotshspace of dots] {$\csalpha_1$};
\node[tensornode] (inn) [right=\bigdotshspace of dots] {$\csalpha_n$};
\node[voidnode] (sdots) [below=2pt of tuple] {\indexstyle{\dots}};
\node[tensornode] (alice) [above=10pt of tuple] {$\alice$};
\node[voidnode] (a) [above=\outcomevspace of alice] { };
\drawleg (in1.north) -- \AnchorOneTwo{tuple}{south};
\drawleg (inn.north) -- \AnchorTwoTwo{tuple}{south};
\drawtupleleg (tuple.north) -- (outs.south);
\drawoutcomeleg (alice.north) -- (a.south);
}.
\end{equation}

\paragraph{Selector nodes.}

The above construction allows to capture conditional tensor contractions once we introduce the \emph{selector node}. The selector node, $\centertikz{\node[selectnode] { };}$, takes a tuple of legs as its bottom input, and receive another input, a discrete one, call it $\dsvalalpha$, on the side. The output is then the $\dsvalalpha$-th component of the input tuple. Formally, this reads
\begin{equation}
\centertikz{
\node[selectnode] (s) at (0,0) { };
\node[voidnode] (y) [above=15pt of s] {\indexstyle{\csvalbeta}};
\node[voidnode] (vecx) [below=15pt of s] {\indexstyle{\vec\csvalalpha=(\csvalalpha_1,\dots,\csvalalpha_n)}};
\node[voidnode] (i) [left=15pt of s] {\indexstyle{\dsvalalpha}};
\drawleg (s.north) -- (y.south);
\drawtupleleg (vecx.north) -- (s.south);
\drawoutcomeleg (i.east) -- (s.west);
}
:=
\ddelta{\csvalbeta}{\csvalalpha_\dsvalalpha}.
\end{equation}
One can make this even more complete by allowing the selection to pick an ordered subset of $k \leq n$ of the input legs, so that one can write:
\begin{equation}
\centertikz{
\node[selectnode] (s) at (0,0) { };
\node[voidnode] (vecy) [above=20pt of s] {\indexstyle{\vec\csvalbeta=(\csvalbeta_1,\dots,\csvalbeta_k)}};
\node[voidnode] (vecx) [below=20pt of s] {\indexstyle{\vec\csvalalpha=(\csvalalpha_1,\dots,\csvalalpha_n)}};
\node[voidnode] (veci) [left=20pt of s] {\indexstyle{\vec\dsvalalpha=(\dsvalalpha_1,\dots,\dsvalalpha_k)}};
\drawtupleleg (s.north) -- (vecy.south);
\drawtupleleg (vecx.north) -- (s.south);
\drawtupleoutcomeleg (veci.east) -- (s.west);
}
:=
\ddelta{\csvalbeta_1}{\csvalalpha_{\dsvalalpha_1}}\dots \ddelta{\csvalbeta_k}{\csvalalpha_{\dsvalalpha_k}}.
\end{equation}

\subsection{Postselection}
\label{sec:post-selection}

We will make use of postselection over tensors with $\{\false,\true\}$-valued output. We will always postselect on the output $\true$.
For instance, the tensor
$
\centertikz{
\node[tensornode] (f) {$\psname$};
\node[voidnode] (o) [above=\outcomevspace of f] { };
\node[voidnode] (i) [below=\outcomevspace of f] { };
\drawoutcomeleg (f.north) -- (o.south);
\drawleg (i.north) -- (f.south);
}
$
induces the postselection:
\begin{equation}
\centertikz{
\node[psnode] (f) {$\psname$};
\node[tensornode] (x) [below left=28pt and 3pt of f] {$\csalpha$};
\node[voidnode] (o) [left=25pt of f] {\indexstyle{\csvalalpha}};
\node[copynode] (copy) [above=\outcomevspace of x] { };
\drawleg (copy) -- (f.south);
\drawleg (x.north) -- (copy);
\drawleg (copy) -- (o.south);
}
:=
\left(
\centertikz{
\node[tensornode] (f) {$\psname$};
\node[voidnode] (of) [above=\outcomevspace of f] {\indexstyle{\true}};
\node[tensornode] (x) [below left=20pt and 3pt of f] {$\csalpha$};
\node[margnode] (o) [left=25pt of f] { };
\node[copynode] (copy) [above=\outcomevspace of x] { };
\drawoutcomeleg (f.north) -- (of.south);
\drawleg (copy) -- (f.south);
\drawleg (x.north) -- (copy);
\drawleg (copy) -- (o.south);
}
\right)^{-1}
\cdot
\centertikz{
\node[tensornode] (f) {$\psname$};
\node[voidnode] (of) [above=\outcomevspace of f] {\indexstyle{\true}};
\node[tensornode] (x) [below left=20pt and 3pt of f] {$\csalpha$};
\node[voidnode] (o) [left=25pt of f] {\indexstyle{\csvalalpha}};
\node[copynode] (copy) [above=\outcomevspace of x] { };
\drawoutcomeleg (f.north) -- (of.south);
\drawleg (copy) -- (f.south);
\drawleg (x.north) -- (copy);
\drawleg (copy) -- (o.south);
},
\end{equation}
assuming that the postselection has a chance of succeeding, that is, assuming that
\begin{equation}
\centertikz{
\node[tensornode] (f) {$\psname$};
\node[voidnode] (of) [above=\outcomevspace of f] {\indexstyle{\true}};
\node[tensornode] (x) [below left=20pt and 3pt of f] {$\csalpha$};
\node[margnode] (o) [left=25pt of f] { };
\node[copynode] (copy) [above=\outcomevspace of x] { };
\drawoutcomeleg (f.north) -- (of.south);
\drawleg (copy) -- (f.south);
\drawleg (x.north) -- (copy);
\drawleg (copy) -- (o.south);
}
\neq 
0.
\end{equation}

%
%

\subsection{Correlated Sleeper in tensor notation}

We may now rewrite the optimization problem of \eqref{eq:sleeping correlations plain} in tensor notation. Notice that the non-negativity and normalization constraint of equation \eqref{eq:sleeping correlations norm} are now omitted because they are implied by $\twostrat{\alice}{}{}{}$ being denoted as a probability tensor. This yields the compact form:
\begin{subequations}
\label{eq:sleeping correlations tensor}
\begin{align}
\aopt\ =\ &\sup_{\scalebox{0.9}{$\twostrat{\alice}{}{}{}$}} \ 
\ \frac{1}{2}\sum_{\outputa\in\{1,2\}} \left(\atargetp{1}{\outputa}{\outputa} + \atargetp{2}{\outputa}{\outputa}\right) \hspace{4cm}
\end{align}
%
%
\begin{equation}
\textup{s.t. } \atargetp{1}{}{} = \acaseone{\alice}{}{}{0},\quad 
\atargetp{2}{}{} = \acasetwo{\alice}{}{}{0},\quad
\isource{\macrodsunif{2}}{} =
\amargconstraint{\alice}{}{0}.
\end{equation}
\end{subequations}
Let us in fact take the opportunity to further simplify this problem with the following proposition, which allows us to restrict the optimization to deterministic strategies for $\alice$. The proof is given in \app~\ref{app:det strat sc}; it primarily relies on the Riemann integrability assumption over $\alice$'s strategy.
\begin{restatable}{prop}{PropDetApp}
\label{prop:det app}
It holds that one can restrict the optimization variable of \eqref{eq:sleeping correlations tensor}, namely, the probability tensor $\twostrat{\alice}{}{}{}$, to range over the \emph{deterministic} probability tensors only:
\begin{subequations}
\label{eq:sc det tensor}
\begin{align}
\label{eq:sc det tensor obj}
\aopt\ =\ &\sup_{\scalebox{0.9}{$\scdetstrat$}} \ 
\ \frac{1}{2}\sum_{\outputa\in\{1,2\}} \left(\atargetp{1}{\outputa}{\outputa} + \atargetp{2}{\outputa}{\outputa}\right) \hspace{4cm}
\end{align}
%
%
\begin{equation}
\label{eq:sc det tensor param}
\textup{s.t. } \atargetp{1}{}{} = \acaseone{\alice}{}{}{1},\quad 
\atargetp{2}{}{} = \acasetwo{\alice}{}{}{1},\quad
\isource{\macrodsunif{2}}{} =
\amargconstraint{\alice}{}{1}.
\end{equation}
\end{subequations}
\end{restatable}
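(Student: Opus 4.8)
The plan is to recognize that, once the lone free function $f(\alpha,\beta):=p_\alice(1\mid\alpha,\beta)\in[0,1]$ (with $p_\alice(2\mid\alpha,\beta)=1-f(\alpha,\beta)$) is isolated, the objective of \eqref{eq:sleeping correlations tensor} is a \emph{convex} functional of $f$, while the feasible set cut out by the box constraint $0\le f\le 1$ and the linear marginal constraint \eqref{eq:sleeping correlations marginal} is convex. The supremum of a convex functional over a convex set is controlled by the extreme points of that set, and here the extreme points are exactly the $\{0,1\}$-valued (i.e.\ deterministic) strategies. So the conceptual statement is: maximizing a convex objective over the strategy polytope, the optimum sits at a deterministic vertex.

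First I would make the objective explicit. Writing $g(\alpha):=\int\dd\beta\,f(\alpha,\beta)$ and $h(\beta):=\int\dd\alpha\,f(\alpha,\beta)$, a direct computation gives $p_1(\outputa,\outputa)$ summed over $\outputa\in\{1,2\}$ equal to $\int\dd\alpha\,[g(\alpha)^2+(1-g(\alpha))^2]$ and likewise $\sum_{\outputa}p_2(\outputa,\outputa)=\int\dd\beta\,[h(\beta)^2+(1-h(\beta))^2]$, so that, using $x^2+(1-x)^2=2(x-\tfrac12)^2+\tfrac12$,
\[
\tfrac12\sum_{\outputa}\bigl(p_1(\outputa,\outputa)+p_2(\outputa,\outputa)\bigr)=\tfrac12+\int\dd\alpha\,\bigl(g(\alpha)-\tfrac12\bigr)^2+\int\dd\beta\,\bigl(h(\beta)-\tfrac12\bigr)^2.
\]
Since $g$ and $h$ are linear in $f$, this is a sum of squares of affine functionals of $f$, hence a convex (indeed quadratic) functional of $f$; the constraints are convex. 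This is the heart of the matter: we are maximizing a convex quantity, so we expect the optimum at an extreme point, and the extreme points of $\{0\le f\le1,\ \int f=\tfrac12\}$ are the $\{0,1\}$-valued functions.

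To make this rigorous and to deal with attainment in a function space, I would discretize exactly as anticipated in \cref{sec:tensor notation}, replacing each continuous uniform source $\isource{\maxunif}{}$ by the finite uniform source $\isource{\dsunif}{}$ on $\{1,\dots,\ninf\}$. Riemann integrability of $f$ guarantees that, for a grid discretization of $f$, the discretized objective and marginal converge to their continuous counterparts as $\ninf\to\infty$, so that the continuous value $\aopt$ is recovered as a limit of finite-dimensional problems. Each finite problem maximizes a convex quadratic over the polytope $\{x\in[0,1]^{\ninf^2}:\ \tfrac{1}{\ninf^2}\sum_{i,j}x_{ij}=\tfrac12\}$; a convex function attains its maximum over a polytope at a vertex, and a vertex of a box intersected with a single linear equality is $\{0,1\}$-valued in all but at most one coordinate. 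Thus each finite problem is solved, up to one coordinate, by a deterministic strategy that lifts to a piecewise-constant deterministic strategy on $[0,1]^2$.

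The hard part will be the bookkeeping around that lone fractional coordinate together with the exact marginal constraint. Rounding the single fractional coordinate perturbs both the objective and the marginal by $O(1/\ninf^2)$, so its effect on the objective is $o(1)$, but one must still restore $\int f=\tfrac12$ exactly; I would do this by flipping $f$ on an auxiliary region of measure $O(1/\ninf^2)$ and invoking the local Lipschitz continuity of the objective under small $L^1$ perturbations of $f$ (which holds because $g,h$, and hence the quadratic functional, depend continuously on $f$) to keep the objective change $o(1)$. Combining these estimates produces, for every $V<\aopt$, a genuinely deterministic feasible strategy with objective exceeding $V$, giving $\sup_{\mathrm{det}}\ge\aopt$; the reverse inequality is immediate since deterministic strategies form a subset of all strategies. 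The two delicate points to get right are the (uniform enough) convergence of the Riemann sums for the quadratic objective and the fact that it is precisely this convergence, rather than any abstract weak-$*$ compactness argument, that legitimately exchanges the supremum with the $\ninf\to\infty$ limit.
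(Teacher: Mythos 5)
Your proposal is correct, but it takes a genuinely different route from the paper. The paper's proof (in \cref{app:det strat sc}) is a \emph{simulation} argument that never looks at the objective function: it shows that any randomized strategy can be approximated arbitrarily well by a deterministic one, because the agent can deterministically distil a fresh, locally uniform seed from her two continuous inputs (the extractor $E_\ninf$ composed with the sum-mod-$\ninf$ node), then it patches the marginal exactly by flipping outputs on a small region, and finally transfers the resulting equality of the \emph{closures of the two feasible sets} to an equality of suprema via continuity of the objective (\cref{lem:basic topology}). You instead exploit the specific algebraic form of the objective: after isolating $f=p_\alice(1|\cdot,\cdot)$ it is a convex quadratic in $f$ (your identity $\tfrac12\sum_\outputa(p_1(\outputa,\outputa)+p_2(\outputa,\outputa))=\tfrac12+\int(g-\tfrac12)^2+\int(h-\tfrac12)^2$ checks out), so after discretization the maximum over the strategy polytope sits at a vertex, which is deterministic up to one coordinate, and the remaining rounding and marginal repair are $O(1/\ninf^2)$ perturbations controlled by the $L^1$-Lipschitz continuity of the objective. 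Each approach buys something: yours is shorter and more elementary for this particular optimization, but it only establishes equality of the optimal \emph{values} and is tied to the convexity of this objective; the paper's argument is objective-agnostic, proves the stronger statement that the achievable \emph{sets} of distribution pairs have the same closure, and it is that stronger fact which underlies the later identification of the entire feasible region with a deterministic-strategy causal compatibility problem in \eqref{eq:sc causal compat formulation} and the general philosophy of \cref{sec:causal compat} that local randomness can be extracted from shared sources. The one technical point you flag --- uniform convergence of the Riemann sums for the partial integrals $g,h$ --- is real but is the same Riemann-integrability technicality the paper itself leans on, so it does not constitute a gap relative to the paper's standard of rigor.
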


\newpage
\section{Causal compatibility}
\label{sec:causal compat}

We now turn to the problem of causal compatibility.
It is worth mentioning that in the work of \cite{navascues_inflation_2020}, one can find an excellent introduction to the notion of causal unpacking, which describes the tools that one can use to translate the problem of causal compatibility with a causal structure into a related problem of causal compatibility with another \emph{simpler} causal structure.
In the case of a classical causal structure, where all the nodes (agent, sources etc.) have an associated probability tensor, one can actually unpack this causal structure (featuring e.g.\ direct causal influence between observed agents, several layers of unobserved sources interconnected in arbitrary ways, measurement settings à la CHSH, etc.) into a bi-layer causal structure with no inputs.
However, we will be considering multi-network scenarios with some agents using the same strategies, and it is now unclear whether bilayer structures are most general in this extended case. Let us nonetheless restrict our attention to these cases, since this is an interesting generalization of the work of \cite{navascues_inflation_2020}.

\subsection{Network scenarios}
\label{sec:networks}

\paragraph{Single-network scenarios.}
The most general causal structure that we shall consider will be called a network. The network consists of several ingredients. There are three integer parameters: $\pcount$ labels the number of strategies that may be used by the agents, $\npcount$ labels the number of agents in the network (we can assume in the case of a single-network scenario that $\pcount \leq \npcount$), and $\scount$ labels the number of sources (sometimes called latent nodes in the literature) that exist in the network. There are now two maps to specify: one is the strategy assignment map, $\pmap : \{1,\dots,\npcount\} \to \{1,\dots,\pcount\}$, which says that the agent $\npindex \in \{1,\dots,\npcount\}$ must use the strategy $\pmap(\npindex) \in \{ 1,\dots,\pcount\}$. The other is the connectivity map,
\begin{equation}
\cmap  : \{1,\dots,\npcount\} \to \powerset{\{1,\dots,\scount\}} := \{ (1), (2,3), (3,2), (1,4,5), (1,2,\scount-2,\scount),\dots\},
\end{equation}
where $\powerset{\{1,\dots,\scount\}}$ is the set of all sequences coming from the subsets of $\{1,\dots,\scount\}$. This map $\cmap$ specifies that the agent $\npindex$ receives the sources $\cmap(\npindex)$, in this order, as inputs to their strategy. We implicitly assume, for consistency, that whenever two agents $\npindex \neq \npindex'$ are using the same strategy $\pmap(\npindex) = \pmap(\npindex')$, it must be that $\cmap(\npindex)$ and $\cmap(\npindex')$ are sequences of equal length, since the strategy $\pmap(\npindex)$ has a well-defined number of inputs (occasionally denoted ``$\nin_{\pmap(\npindex)}$'').

For instance, consider the bilocal network (also known as the ``three-on-a-line'' network), represented graphically in \cref{fig:bilocal net}, where $\npcount = 3$ agents share $\scount = 2$ sources, so that the agent $\npindex = 2$ has access to the two sources $\sindex = 1,2$, while the agent $\npindex = 1$ has only access to the source $\sindex = 1$ and the agent $\npindex = 3$ has access to the $\sindex = 2$ source. If the three agents are allowed to use $\pcount = 3$ arbitrary strategies, then this network will be specified as
\begin{subequations}
\label{eq:bilocal topo}
\begin{align}
\bilocaltopo &= (\pcount = 3, \npcount = 3, \scount = 2, \pmap, \cmap), \\
\pmap(1) &= 1, \quad \pmap(2) = 2, \quad \pmap(3) = 3, &\netnote{3 different strategies} \\
\cmap(1) &= (1), \quad \cmap(2) = (1,2), \quad \cmap(3) = (2). &\netnote{connectivity of the bilocal network}
\end{align}
\end{subequations}
\begin{figure}[h!]
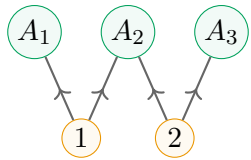

\centering
\begin{equation*}
\centertikz{
\node[agent] (a) at (0,0) {$A_1$};
\node[agent] (b) at (35pt,0) {$A_2$};
\node[agent] (c) at (70pt,0) {$A_3$};
\node[source] (alpha) at (17.5pt,-40pt) {1};
\node[source] (beta) at (52.5pt,-40pt) {2};
\drawabsleg (alpha) -- (a);
\drawabsleg (alpha) -- (b);
\drawabsleg (beta) -- (b);
\drawabsleg (beta) -- (c);
}
\end{equation*}
\caption{The graph corresponding to the $\bilocaltopo$ single-network scenario. The labeling of the sources and of the different agent strategies is chosen to match the parametrization of \eqref{eq:bilocal topo} --- more conventionally, one would prefer to think of $A_1$ as Alice, $A_2$ as Bob and $A_3$ as Charlie.}
\label{fig:bilocal net}
\end{figure}

\paragraph{Multi-network scenarios.}

We will also be interested in cases where several networks are involved, with the set of available strategies being globally shared across these networks: such scenarios will be called \emph{multi-network scenarios}.
While a multi-network scenario can always be embedded into a single-network scenario whose associated graph features several disconnected components, the inflation framework is most conveniently applied to the multi-network scenario formulation.

To fix the notation, consider a number $\topocount$ of networks. Each network $\topoindex \in \{1,\dots,\topocount\}$ will have its own number of agents $\npcount_\topoindex$, number of sources $\scount_\topoindex$, strategy assignment map $\pmap_\topoindex : \{1,\dots,\npcount_\topoindex\} \to \{1,\dots,\pcount\}$, and connectivity map $\cmap_\topoindex : \{1,\dots,\npcount_\topoindex\} \to \powerset{\{1,\dots,\scount_\topoindex\}}$, but the number of strategy $\pcount$ does not depend on $\topoindex$. 
There is a consistency condition that is implicitly assumed: for all $\topoindex, \topoindex' \in \{1,\dots,\topocount\}$, for any pair of agents $\npindex \in \{1,\dots,\npcount_\topoindex\}$ and $\npindex'\in\{1,\dots,\npcount_{\topoindex'}\}$ that use the same strategy, i.e., $\pmap_\topoindex(\npindex) = \pmap_{\topoindex'}(\npindex')$, it must be that these two agents receive the same number of inputs, i.e., $\cmap_\topoindex(\npindex)$ and $\cmap_{\topoindex'}(\npindex')$ must be sequences of the same length, since the agent strategy $\pmap_\topoindex(\npindex)$ has a well-defined number of inputs $\nin_{\pmap_\topoindex(\npindex)}$.
Such a multi-network scenario will be denoted as $\genmultopo$ (the sequence notation is to emphasize the fact that we pick a specific ordering of the networks), or more explicitly as
\begin{equation}
\label{eq:def genmultopoexpl}
\genmultopoexpl.
\end{equation}

\paragraph{The Correlated Sleeper's multi-network scenario.}

In the case of the Correlated Sleeper, there are three relevant networks, whose graphs are represented in \cref{fig:sc configs}: the first network is the one where the two instances of $\alice$ are connected through the left input, which corresponds to the network 
\begin{subequations}
\label{eq:sc config 1}
\begin{align}
\scnetone &= (\pcount = 1,\npcount_1 = 2,\scount_1 = 3,\pmap_1,\cmap_1), &\netnote{one strategy, two agents, three sources}\\
\pmap_1(1) &= 1, \quad \pmap_1(2) = 1, &\netnote{agents use same strategy}\\
\cmap_1(1) &= (1,2), \quad \cmap_1(2) = (1,3). &\netnote{agents' first inputs connected to first source}
\end{align}
\end{subequations}
The second network is the one where the two $\alice$'s are connected through the right input, i.e.,
\begin{subequations}
\label{eq:sc config 2}
\begin{align}
\scnettwo &= (\pcount = 1,\npcount_2 = 2,\scount_2 = 3,\pmap_2,\cmap_2), \\
\pmap_2(1) &= 1, \quad  \pmap_2(2) = 1, \\
\cmap_2(1) &= (1,3), \quad \cmap_2(2) = (2,3). &\netnote{agents' second inputs connected to third source}
\end{align}
\end{subequations}
The last network is the one that allows us to express the marginal constraint, where we only look at one isolated agent $\alice$:
\begin{subequations}
\label{eq:sc config 3}
\begin{align}
\scnetthree &= (\pcount = 1,\npcount_3 = 1,\scount_3 = 2,\pmap_3,\cmap_3), \\
\pmap_3(1) &= 1, \\
\cmap_3(1) &= (1,2). &\netnote{the agent inputs are two i.i.d.\ sources}
\end{align}
\end{subequations}

\begin{figure}[h!]
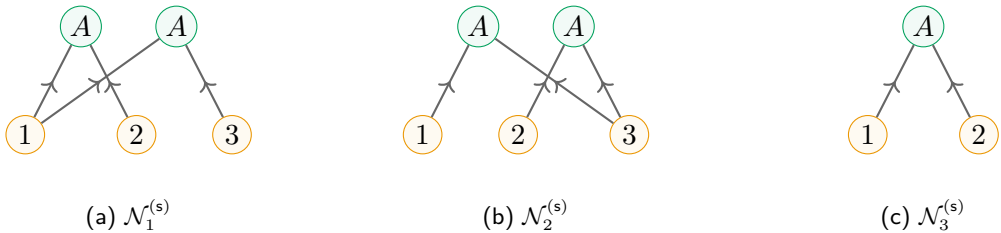

\centering
	\begin{subfigure}{0.3\textwidth}
	\centering
	\begin{equation*}
	\centertikz{
	\node[agent] (a1) {$A$};
	\node[agent] (a2) [right=20pt of a1] {$A$};
	\node[source] (alpha) [below left=30pt and 10pt of a1] {$1$};
	\drawabsleg (alpha) -- (a1);
	\drawabsleg (alpha) -- (a2);
	\node[source] (beta1) [below right=30pt and 10pt of a1] {$2$};
	\drawabsleg (beta1) -- (a1);
	\node[source] (beta2) [below right=30pt and 10pt of a2] {$3$};
	\drawabsleg (beta2) -- (a2); 
	}
	\end{equation*}
	\caption{$\scnetone$}
	\end{subfigure}
	\begin{subfigure}{0.3\textwidth}
	\centering
	\begin{equation*}
	\centertikz{
	\node[agent] (a1) {$A$};
	\node[agent] (a2) [right=20pt of a1] {$A$};
	\node[source] (alpha1) [below left=30pt and 10pt of a1] {$1$};
	\drawabsleg (alpha1) -- (a1);
	\node[source] (alpha2) [below left=30pt and 10pt of a2] {$2$};
	\drawabsleg (alpha2) -- (a2);
	\node[source] (beta) [below right=30pt and 10pt of a2] {$3$};
	\drawabsleg (beta) -- (a1);
	\drawabsleg (beta) -- (a2);
	}
	\end{equation*}
	\caption{$\scnettwo$}
	\end{subfigure}
	\begin{subfigure}{0.3\textwidth}
	\centering
	\begin{equation*}
	\centertikz{
	\node[agent] (a) {$A$};
	\node[source] (alpha) [below left=30pt and 10pt of a] {$1$};
	\node[source] (beta) [below right=30pt and 10pt of a] {$2$};
	\drawabsleg (alpha) -- (a);
	\drawabsleg (beta) -- (a);
	}
	\end{equation*}
	\caption{$\scnetthree$}
	\end{subfigure}
\caption{The Correlated Sleeper's multi-network scenario.}
\label{fig:sc configs}
\end{figure}

\subsection{Causal compatibility}

\paragraph{A note on deterministic strategies.}

Our framework deals best with deterministic agent strategies as basic primitives. This may sound restrictive, but fundamentally speaking, it is not: any non-deterministic strategy can be achieved with a deterministic strategy upon giving each agent access to an additional local randomness source, which can be captured by an appropriate update of the network scenario.
From a computational perspective, this explicit addition of additional sources can turn out to be costly --- we will return to this aspect in \cref{sec:remarks about implementation}. However, this addition of local sources is not always necessary: depending on the sources shared between the agents, local randomness can sometimes be extracted without adding additional local sources. This is for instance demonstrated in \cref{prop:det app} for the Correlated Sleeper. In the following bilocal network example, although we do not give an explicit construction, it is also the case that local randomness can be extracted from the shared sources.

\paragraph{Causal compatibility: example.}
We first present causal compatibility with the bilocal network before generalizing to arbitrary networks.
A probability tensor $\targetp{}{}{}$, which we will typically call an outcome distribution, is compatible with the bilocal network $\bilocaltopo$ (equation \eqref{eq:bilocal topo}), which we denote  as
\begin{equation}
\label{eq:outputdistribs bilocaltopo}
\targetp{}{}{} \in \outputdistribs{\bilocaltopo},
\end{equation}
if there exist deterministic probability tensors
\begin{subequations}
\label{eq:example bilocal}
\begin{equation}
\detonestrat{\alice}{}{}, \dettwostrat{\bob}{}{}{}, \detonestrat{\charlie}{}{},
\end{equation}
such that for all $\outputa, \outputb, \outputc$,
\begin{equation}
\bilocalnetwork{\csalpha_1}{\csalpha_2}{\alice}{\bob}{\charlie}{\outputa}{\outputb}{\outputc}
=
\targetp{\outputa}{\outputb}{\outputc}
.
\end{equation}
\end{subequations}

\paragraph{Causal compatibility: single-network scenarios.}

More generally, a probability tensor $\gtargetp$ is compatible with a single-network scenario $\gentopo = \gentopoexpl$, denoted
\begin{equation}
\gtargetp \in \outputdistribs{\gentopo},
\end{equation}
if there exist deterministic probability tensors
\begin{subequations}
\label{eq:gen causal compat}
\begin{equation}
\label{eq:gen causal compat prim}
\left\{\gonestratdet{\alice_\pindex}{}{}\right\}_{\pindex=1}^{\pcount}
\end{equation}
%
such that for all ${\outputa_1}, {\outputa_2}, \dots, {\outputa_\npcount}$:
\begin{equation}
\gnetwork{\outputa_1}{\outputa_2}{\outputa_\npcount}{0}
=
\gtargetpargs{\outputa_1}{\outputa_2}{\outputa_\npcount}.
\end{equation}
\end{subequations}
%
The output domains of the strategies are some finite subsets of the integers (this is anyway the only possibility from a computational perspective) that are also left implicit here.
Recall the special selector tensor $\centertikz{\node[selectnode] { };}$, allowing us to parametrize the sources that each agent has access to, and the bundle tensor $\centertikz{\node[tuplenode] {};}$, collecting the source outputs in a single vector, both introduced more precisely in section \ref{sec:special tensors}.

\paragraph{Causal compatibility: multi-network scenarios.}
\label{sec:causal compat multiple configs}

%
We now introduce causal compatibility for multi-network scenarios: this will be the problem formulation that we shall use in the rest of this work. In fact, the rest of this work will be concerned with computationally tractable supersets of the following $\outputdistribs{\genmultopo}$.

\begin{definition}
\label{def:causal compat}
Consider a multi-network scenario
\begin{equation}
\label{eq:genmultopoexpl}
\genmultopoexpl
\end{equation}
(this notation is explained in \cref{sec:networks}).
We say that a sequence of outcome distributions $\genmultopotargetps$, where the outcome distribution $\macrogtargetp{\targetpname_\topoindex}$ must have $\npcount_\topoindex$ output legs,\footnote{Strictly speaking, we should also specify the domain of the outputs of each agent (i.e., how many different outcomes they may output) to have a well-defined set $\outputdistribs{\genmultopo}$, but we leave this dependence implicit.}
is causally compatible with the multi-network scenario $\genmultopo$, denoted
\begin{equation}
\label{eq:output distribs multiple configs}
\genmultopotargetps \in \outputdistribs{\genmultopo},
\end{equation}
if there exist deterministic probability tensors
\begin{subequations}
\label{eq:causal compat multiple config}
\begin{equation}
\left\{\gonestratdet{\alice_\pindex}{}{}\right\}_{\pindex=1}^{\pcount}
\end{equation}
such that, for all $\topoindex=1,\dots,\topocount$,
\begin{equation}
\label{eq:causal compat multiple config condition}
\gnetwork{}{}{}{1}
=
\macrogtargetp{\targetpname_\topoindex}.
\end{equation}
\end{subequations}
\end{definition}

In equation \eqref{eq:causal compat multiple config condition}, we used the compact notation where equality of two tensors with open legs simply corresponds to component-wise equality. 
Furthermore, we use ``$\dots$'' in the same sense that $\{1,\dots,4\} = \{1,2,3,4\}$, but with the sources, the number of terms omitted is not very explicit, so we write `` $\overset{(\scount_\topoindex - 3)}{\dots}$ '' to indicate that we omitted $\scount_\topoindex - 3$ sources and drew the 3 remaining one explicitly. 
Note that the tensors $\centertikz{\node[selectnode] { };}$ and $\centertikz{\node[tuplenode] {};}$ were introduced in section \ref{sec:special tensors}.

\paragraph{Causal compatibility: Correlated Sleeper.}

In this notation, thanks to \cref{prop:det app} and the multi-network scenario of equations \eqref{eq:sc config 1}-\eqref{eq:sc config 3}, we can rewrite the feasible region of the optimization problem \eqref{eq:sc det tensor} as a causal compatibility problem:
\begin{subequations}
\label{eq:sc causal compat formulation}
\begin{align}
p^*\ =\ &\sup_{\scalebox{0.8}{$\atargetp{1}{}{},\atargetp{2}{}{}$}} \ 
\ \frac{1}{2}\sum_{\outputa\in\{1,2\}} \left(\atargetp{1}{\outputa}{\outputa} + \atargetp{2}{\outputa}{\outputa}\right) \\
&\textup{s.t. }
\left(\atargetp{1}{}{},\atargetp{2}{}{},\isource{\macrodsunif{2}}{}\right)
\in\outputdistribs{\scnetone,\scnettwo,\scnetthree}. \label{eq:sc causal compat formulation 2}
\end{align}
\end{subequations}
We emphasize the fact that the marginal constraint induced from \eqref{eq:marginal constraint} on $\atargetp{1}{}{}$ and $\atargetp{2}{}{}$ is indeed contained in the above equation \eqref{eq:sc causal compat formulation 2}.

\subsection{Further generalizations}

\paragraph{Complex source behavior.}

We are currently allowing all the sources to be ``maximally entropic'', so that any other source distribution can be obtained by the agents upon applying the relevant postprocessing. One thing that our framework can not deal with (currently, at least --- it is unclear whether this can be nicely incorporated in) is the possibility to constrain the sources to a specific type of distribution. This could either be a network-wide constraint, e.g.\ restrict all sources to be uniform over a fixed number of values, or context-dependent constraints, allowing to capture e.g.\ the performance of a strategy faced with different source distributions.

\paragraph{Partial constraints, optimization.}

Our framework can relatively straightforwardly deal with partial constraints over the network probabilities, as well as optimizing polynomials of the network probabilities. These ideas and techniques are explored more systematically in \cite{navascues_inflation_2020}, and can be adapted to the present framework easily.
For instance, in the problem \eqref{eq:causal compat multiple config}, one may not know the full statistics $\macrogtargetp{\targetpname_\topoindex}$ for all $\topoindex$, but perhaps only an average value for $\topoindex=1$, the probability of one event (one tuple of outcomes) only for $\topoindex=2$, a lower bound of a certain polynomial over the probabilities of the events for $\topoindex=3$, etc.
We shall not attempt to parameterize these sorts of problems in order to remain somewhat concise,
%
but we will deal with the explicit example of the Correlated Sleeper in \cref{sec:applications}.

\newpage
\section{Postselected inflation: motivation}
\label{sec:outer approx}

In this section, we give an intuitive motivation for the postselected inflation outer approximations in the context of a single-network scenario, namely, the bilocal network $\bilocaltopo$. In the next \cref{sec:post-selected inflation}, we will state general proofs of soundness of this approach, before explicitly applying these techniques in \cref{sec:applications}. We will return to the correspondence with the usual fanout inflation formalism in \cref{sec:fanout inflation}.

\subsection{Convexification of the causal compatibility problem}
\label{sec:convexification}

Deciding the causal compatibility of a distribution $\gtargetp$ with a single-network scenario is generally hard. 
There are two reasons behind this: one is that the problem in its standard formulation as in equation \eqref{eq:gen causal compat} is not convex, in the sense that a convex combination of the solution tensors of equation \eqref{eq:gen causal compat prim} cannot be used as a new solution of the problem \eqref{eq:gen causal compat}.
In fact, allowing for convex combinations of these tensors is equivalent to sending the output of a source $\isource{\hv}{}$ to all the agents.
%
%
%
Let us make the corresponding causal compatibility problem explicit in the case of the bilocal network (this is to be compared with equations \eqref{eq:outputdistribs bilocaltopo}-\eqref{eq:example bilocal}): we let $\itempSR$ (for ``global randomness'') be the set of all distributions $\targetp{}{}{}$
for which there exist
\begin{subequations}
\begin{equation}
\dettwostrat{\alice}{}{}{}, \threedstrat{\bob}{}{}{}{}, \dettwostrat{\charlie}{}{}{}, \isource{\hv}{}
\end{equation}
such that (the style difference between the dashed and solid edges is there to guide the eye but implies the same operation of tensor contraction)
\begin{equation}
\bilocalnetworkhv{\csalpha}{\csbeta}{\alice}{\bob}{\charlie}{}{}{}{\hv} = \targetp{}{}{}.
\end{equation}
\end{subequations}
This modified causal compatibility is too permissive: it holds that $\outputdistribs{\bilocaltopo} \subsetneq \itempSR$, which is a general feature of allowing global randomness.
One needs to think of something else to obtain a causal compatibility problem that is both convex, i.e., which allows for a global randomness source $\isource{\hv}{}$, and that yields a good outer approximation of the set $\outputdistribs{\bilocaltopo}$. 
The trick is the following: define the set $\itempL$ of all distributions $\targetp{}{}{}$ for which there exist
\begin{subequations}
\label{eq:convex game}
\begin{equation}
\dettwostrat{\alice}{}{}{}, \threedstrat{\bob}{}{}{}{}, \dettwostrat{\charlie}{}{}{}, \isource{\hv}{}
\end{equation}
such that
\begin{equation}
\label{eq:doubledbilocalnetworkhv}
\doubledbilocalnetworkhv
=
\targetp{}{}{}\targetp{}{}{}.
\end{equation}
\end{subequations}
%
Indeed, it holds that $\outputdistribs{\bilocaltopo} = \itempL$.
The fact that $\outputdistribs{\bilocaltopo} \subseteq \itempL$ is trivial --- the agents may simply discard the input from $\isource{\hv}{}$.
%
For the other direction, if the strategies $\dettwostrat{\alice}{}{}{}, \threedstrat{\bob}{}{}{}{}, \dettwostrat{\charlie}{}{}{}$ are making a non-trivial use of the global randomness from $\isource{\hv}{}$, then, they will do so on both ends of the diagram, and will thus necessarily become correlated, whereas the constraint of \eqref{eq:doubledbilocalnetworkhv} imposes the two halves of the diagram to be uncorrelated. 
%
Formally speaking, this follows from our ``main lemma'', whose proof is given in \app~\ref{app:post-selected inflation proofs}, along with the definition of the relevant norms.

\begin{restatable}[Main lemma]{lemma}{LemmaMainLemma}
\label{lem:average twonorm}
For any $k\in\mathbb{N}$,
for any probability tensors 
\begin{equation}
\isource{\hv}{}, \isource{\targetpname}{} \in \mathbb{R}^k, \left\{\onestrat{\targetptildename}{}{\hvval} \in \mathbb{R}^k\right\}_{\hvval},
\end{equation}
it holds that
\begin{equation}
\int\dd\hvval \isource{\hv}{\hvval} \twonorm{\isource{\targetpname}{} - \onestrat{\targetptildename}{}{\hvval}}^2
\leq 
3\onenorm{\isource{\targetpname}{}\isource{\targetpname}{} - \cpdoubleq{}{}}.
\end{equation}
\end{restatable}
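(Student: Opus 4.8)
The plan is to prove the inequality
\begin{equation*}
\int\dd\hvval\, \isource{\hv}{\hvval}\, \twonorm{\isource{\targetpname}{} - \onestrat{\targetptildename}{}{\hvval}}^2 \leq 3\onenorm{\isource{\targetpname}{}\isource{\targetpname}{} - \cpdoubleq{}{}}
\end{equation*}
by first expanding the left-hand side, then relating it to the right-hand side, where the tensor $\cpdoubleq{}{}$ on the right should be understood as $\int\dd\hvval\, \isource{\hv}{\hvval}\, \onestrat{\targetptildename}{}{\hvval}\onestrat{\targetptildename}{}{\hvval}$, i.e.\ the $\hvval$-averaged product of two copies of $\targetptildename$. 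The key quantity that makes this work is that $\isource{\targetpname}{}$ should be the $\hvval$-average of $\onestrat{\targetptildename}{}{\hvval}$, namely $\isource{\targetpname}{} = \int\dd\hvval\, \isource{\hv}{\hvval}\, \onestrat{\targetptildename}{}{\hvval}$; this is presumably the setup from the convexification discussion, where averaging a conditional strategy over the global randomness $\isource{\hv}{}$ recovers the marginal distribution.

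First I would expand the squared two-norm on the left as $\twonorm{\isource{\targetpname}{}}^2 - 2\langle \isource{\targetpname}{}, \onestrat{\targetptildename}{}{\hvval}\rangle + \twonorm{\onestrat{\targetptildename}{}{\hvval}}^2$ and integrate term by term against $\isource{\hv}{\hvval}$. Using $\isource{\targetpname}{} = \int\dd\hvval\, \isource{\hv}{\hvval}\, \onestrat{\targetptildename}{}{\hvval}$, the cross term integrates to $-2\twonorm{\isource{\targetpname}{}}^2$, so the whole left-hand side collapses to $\int\dd\hvval\, \isource{\hv}{\hvval}\,\twonorm{\onestrat{\targetptildename}{}{\hvval}}^2 - \twonorm{\isource{\targetpname}{}}^2$. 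This is a variance-type expression. The next step is to recognize that $\int\dd\hvval\, \isource{\hv}{\hvval}\,\twonorm{\onestrat{\targetptildename}{}{\hvval}}^2$ is exactly the ``diagonal'' sum of the averaged product tensor $\cpdoubleq{}{}$, while $\twonorm{\isource{\targetpname}{}}^2$ is the diagonal sum of the product $\isource{\targetpname}{}\isource{\targetpname}{}$. Hence the left-hand side equals the diagonal part of the difference tensor $\cpdoubleq{}{} - \isource{\targetpname}{}\isource{\targetpname}{}$, and I would bound this diagonal contribution by the full $1$-norm $\onenorm{\isource{\targetpname}{}\isource{\targetpname}{} - \cpdoubleq{}{}}$.

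The remaining gap is the constant $3$. A crude bound of the diagonal by the full $1$-norm would only give a factor $1$, so the factor $3$ must come from the fact that the two sides measure different things: the $2$-norm squared on the left, versus the $1$-norm on the right, and these are compared entrywise on probability vectors where the entries lie in $[0,1]$. For a probability vector $p$ with entries in $[0,1]$ one has $\sum_a p(a)^2 \leq \sum_a p(a) = 1$, and more generally for a signed difference of such matrices one can use $|x|^2 \leq |x|$ on $[-1,1]$ entrywise; the factor $3$ likely arises from carefully splitting $\twonorm{\cdot}^2 = \onenorm{\cdot}$-type comparisons across the positive and negative parts, or from bounding $\langle \isource{\targetpname}{}, \onestrat{\targetptildename}{}{\hvval}\rangle$ and the individual norms separately rather than using the clean variance identity. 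I expect \textbf{controlling the constant} to be the main obstacle: establishing the exact variance identity is routine linear algebra, but producing the factor $3$ (rather than a cleaner $1$ or $2$) suggests the author does not use the optimal telescoping and instead bounds each of the three terms $\twonorm{\isource{\targetpname}{}}^2$, the cross term, and $\int \isource{\hv}{\hvval}\twonorm{\onestrat{\targetptildename}{}{\hvval}}^2$ by $\onenorm{\isource{\targetpname}{}\isource{\targetpname}{} - \cpdoubleq{}{}}$ or a closely related quantity, accumulating the constant. I would therefore keep the variance identity as the backbone and, if the clean factor $1$ does not survive the switch from $2$-norm to $1$-norm on the diagonal versus off-diagonal entries, absorb the slack into the stated factor $3$ by a triangle-inequality split of the difference tensor into its diagonal and off-diagonal parts.
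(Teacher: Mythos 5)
There is a genuine gap: your argument silently adds the hypothesis that $p$ is the $\lambda$-average of the $q^{(\lambda)}$, i.e.\ that $p_\alpha = \int\dd\hvval\,\Lambda(\hvval)\,q^{(\hvval)}_\alpha$. The lemma makes no such assumption --- it is stated for \emph{arbitrary} probability tensors $p$ and $\{q^{(\hvval)}\}_{\hvval}$ --- and it is genuinely needed in that generality: in the proof of \cref{th:convergence} the averaged product $Q_{\alpha\beta}:=\int\dd\hvval\,\Lambda(\hvval)\,q^{(\hvval)}_\alpha q^{(\hvval)}_\beta$ is only \emph{approximately} equal to $p_\alpha p_\beta$ (after stripping off the postselection), so nothing forces the marginal $\bar q_\alpha=\sum_\beta Q_{\alpha\beta}$ to coincide with $p_\alpha$. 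Without your extra hypothesis the cross term does not collapse, the variance identity fails, and the left-hand side equals $\sum_\alpha\big(Q_{\alpha\alpha}-p_\alpha^2\big)+2\sum_\alpha p_\alpha\big(p_\alpha-\bar q_\alpha\big)$ rather than just the diagonal of $Q-pp$. The second sum is exactly the piece your plan has no way to control.

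The paper's proof handles it by rewriting the integrand as $2p_\alpha\big(p_\alpha-q^{(\hvval)}_\alpha\big)-\big(p_\alpha^2-(q^{(\hvval)}_\alpha)^2\big)$, integrating over $\hvval$, and applying the triangle inequality together with $p_\alpha\le1$: the first piece is bounded by $2\,\|p-\bar q\|_1$, which in turn is bounded by $2\,\|pp-Q\|_1$ because $p$ and $\bar q$ are the marginals of the two-output tensors $pp$ and $Q$ and marginalization is $1$-norm contractive (\cref{lem:onenorm marg}); the second piece is the diagonal part of $pp-Q$ and is bounded by $\|pp-Q\|_1$. This is where the constant $3=2+1$ comes from --- not from any $2$-norm-versus-$1$-norm entrywise comparison or diagonal/off-diagonal splitting as you speculate. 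Your variance identity is the special case $\bar q=p$, in which the first piece vanishes and the constant $1$ suffices; to prove the lemma as stated you must add the step that controls $\|p-\bar q\|_1$ by the right-hand side via the marginal-contraction lemma.
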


In our case, we can apply \cref{lem:average twonorm} to equation \eqref{eq:doubledbilocalnetworkhv} to read out that we must have
\begin{equation}
\int\dd\hvval \isource{\hv}{\hvval} \twonorm{\targetp{}{}{} - 
\centertikz{
\node[detnode] (alice) at (-1,0) {$\alice$};
\node[detnode] (bob) at (0,0) {$\bob$};
\node[detnode] (charlie) at (1,0) {$\charlie$};
\node[voidnode] (a) [above=\outcomevspace of alice] {\indexstyle{}};
\node[voidnode] (b) [above=\outcomevspace of bob] {\indexstyle{}};
\node[voidnode] (c) [above=\outcomevspace of charlie] {\indexstyle{}};
\node[copynode] (alphacopy) at (-0.52,-30pt) { };
\node[tensornode] (alpha) [below=\outcomevspace of alphacopy] {$\maxunif$};
\node[copynode] (betacopy) at (0.52,-30pt) { };
\node[tensornode] (beta) [below=\outcomevspace of betacopy] {$\maxunif$};
\drawleg (alpha.north) -- (alphacopy);
\drawleg (alphacopy) to [out=150,in=270] ($0.75*(alice.south west) + 0.25*(alice.south east)$);
\drawleg (alphacopy) to [out=30,in=270] ($0.8*(bob.south west) + 0.2*(bob.south east)$);
\drawleg (beta.north) -- (betacopy);
\drawleg (betacopy) to [out=150,in=270] ($0.5*(bob.south west) + 0.5*(bob.south east)$);
\drawleg (betacopy) to [out=30,in=270] ($0.75*(charlie.south west) + 0.25*(charlie.south east)$);
\drawoutcomeleg (alice.north) -- (a.south);
\drawoutcomeleg (bob.north) -- (b.south);
\drawoutcomeleg (charlie.north) -- (c.south);
\node[voidnode] (hva) [below right=7pt and -4pt of alice] {\indexstyle{\hvval}};
\drawleg  (hva.north) -- \AnchorTwoTwo{alice}{south};
\node[voidnode] (hvb) [below right=7pt and -4pt of bob] {\indexstyle{\hvval}};
\drawleg  (hvb.north) -- \AnchorThreeThree{bob}{south};
\node[voidnode] (hvc) [below right=7pt and -4pt of charlie] {\indexstyle{\hvval}};
\drawleg  (hvc.north) -- \AnchorTwoTwo{charlie}{south};
}
}^2 = 0.
\end{equation}
This implies in particular that there exists a value $\hvval_0$ such that
\begin{equation}
\centertikz{
\node[detnode] (alice) at (-1,0) {$\alice$};
\node[detnode] (bob) at (0,0) {$\bob$};
\node[detnode] (charlie) at (1,0) {$\charlie$};
\node[voidnode] (a) [above=\outcomevspace of alice] {\indexstyle{}};
\node[voidnode] (b) [above=\outcomevspace of bob] {\indexstyle{}};
\node[voidnode] (c) [above=\outcomevspace of charlie] {\indexstyle{}};
\node[copynode] (alphacopy) at (-0.52,-30pt) { };
\node[tensornode] (alpha) [below=\outcomevspace of alphacopy] {$\maxunif$};
\node[copynode] (betacopy) at (0.52,-30pt) { };
\node[tensornode] (beta) [below=\outcomevspace of betacopy] {$\maxunif$};
\drawleg (alpha.north) -- (alphacopy);
\drawleg (alphacopy) to [out=150,in=270] ($0.75*(alice.south west) + 0.25*(alice.south east)$);
\drawleg (alphacopy) to [out=30,in=270] ($0.8*(bob.south west) + 0.2*(bob.south east)$);
\drawleg (beta.north) -- (betacopy);
\drawleg (betacopy) to [out=150,in=270] ($0.5*(bob.south west) + 0.5*(bob.south east)$);
\drawleg (betacopy) to [out=30,in=270] ($0.75*(charlie.south west) + 0.25*(charlie.south east)$);
\drawoutcomeleg (alice.north) -- (a.south);
\drawoutcomeleg (bob.north) -- (b.south);
\drawoutcomeleg (charlie.north) -- (c.south);
\node[voidnode] (hva) [below right=7pt and -4pt of alice] {\indexstyle{\hvval_0}};
\drawleg  (hva.north) -- \AnchorTwoTwo{alice}{south};
\node[voidnode] (hvb) [below right=7pt and -4pt of bob] {\indexstyle{\hvval_0}};
\drawleg  (hvb.north) -- \AnchorThreeThree{bob}{south};
\node[voidnode] (hvc) [below right=7pt and -4pt of charlie] {\indexstyle{\hvval_0}};
\drawleg  (hvc.north) -- \AnchorTwoTwo{charlie}{south};
}
= \targetp{}{}{},
\end{equation} 
which proves that $\itempL \subseteq \outputdistribs{\bilocaltopo}$.

Now, it is important to notice how we have effectively turned the causal compatibility problem into a convex one.
For each output of the source $\isource{\hv}{}$, the agents of \eqref{eq:convex game} are using some strategies in the original network scenario. Thus, solving for the problem \eqref{eq:convex game} is equivalent to optimizing the distribution $\isource{\hv}{}$ whose output domain is the set of possible tuples of strategies, and the only constraint in place is that of \eqref{eq:doubledbilocalnetworkhv}, which is linear in $\isource{\hv}{}$ for fixed $\targetp{}{}{}$.\footnote{Note that in \eqref{eq:doubledbilocalnetworkhv}, one may assume without loss of generality that $\hv$ outputs at most $\sim k^2$ distinct values, given $k$ outcomes for $\targetp{}{}{}$ --- this is the content of Carathéodory's theorem, see e.g.\ theorem 4.3.2 in \cite{ConvexAnalysis}.}
However, this convexification of the original problem is trivial so far: one has to enumerate all possible tuples of strategies, and there are infinitely many of them due to the fact that the sources output an unbounded number of values to which the agents may react differently.
%
%
To tackle this, we move on to finding a way to restrict the output of the sources to take very few values, e.g., 2 or 3 each, while still having a chance of certifying that $\targetp{}{}{} \notin \outputdistribs{\bilocaltopo}$.

\subsection{Restricting the output cardinality of the sources}

Still focusing on the example of the bilocal network, let us now attempt to add to the problem of \eqref{eq:convex game} the constraint that the sources may only take $\ninf\in\mathbb{N}, \ninf \geq 2$ different values (potentially, say, only 2 or 3 values --- in fact, we could also use a different cardinality for each individual source, but we do not make this option explicit here for simplicity).
All sources will thus be distributed as the uniform distribution $\dsource{\dsunif}{}$ over $\ninf$ values.
We let $\itempRestricted$ be the set of all $\targetp{}{}{}$ for which there exist
\begin{subequations}
\label{eq:convex game dsource}
\begin{equation}
\twodstrat{\alice}{}{}{}, \threedstrat{\bob}{}{}{}{}, \twodstrat{\charlie}{}{}{}, \isource{\hv}{},
\end{equation}
such that 
\begin{equation}
\label{eq:convex game dsource indep}
\doubleddiscretebilocalnetworkhv = 
\targetp{}{}{}\targetp{}{}{}.
\end{equation}
\end{subequations}
This problem can be formulated without loss of generality as a linear program over $\isource{\hv}{}$, whose outputs can be taken to range over the tuple of deterministic strategies that the agents should use, and there are now finitely many such strategies.
It is apparent that the above set $\itempRestricted$ taken in the limit of arbitrarily large $\ninf$ will coincide with $\itempL$ as in equation \eqref{eq:convex game}, which by the arguments of \cref{sec:convexification} equals to $\outputdistribs{\bilocaltopo}$, so the convergence of the above scheme is under control. However, we have restricted the possibilities for the agents by going from \eqref{eq:convex game} to \eqref{eq:convex game dsource}, so it is clear that $\itempRestricted \subseteq \outputdistribs{\bilocaltopo}$:
%
we need to give the agents more possibilities to obtain outer approximations of $\outputdistribs{\bilocaltopo}$.

\subsection{Adding possibilities through postselection} 

So far, the agents cannot really use the source $\isource{\hv}{}$ because of the independence condition of \eqref{eq:convex game dsource indep} and the argument surrounding \cref{lem:average twonorm}. To bypass \cref{lem:average twonorm}, let us add additional correlations between the two halves of the left-hand side diagram of \eqref{eq:convex game dsource indep} besides the source $\isource{\hv}{}$.
More precisely, let us add correlations between the four i.i.d.\ sources $\isource{\dsunif}{}$. We could in principle denote this by a new source with four output legs, but it is more adequate to in fact add correlations in the form of postselection on the outputs of the sources $\isource{\dsunif}{}$. 
Let us for now leave the postselection arbitrary: we denote it with $\pstensor$, meaning that we postselect on the output $\true$ of a tensor $\pstensorbase{}{}{}$ which has outputs $\true$ and $\false$ --- see also \cref{sec:post-selection} for more details on postselection.
We say that a distribution $\targetp{}{}{}$ is compatible with a \findname{}, denoted for now
\begin{equation}
\targetp{}{}{} \in \inftemp,
\end{equation}
if there exist
\begin{subequations}
\label{eq:gen inf bilocal}
\begin{equation}
\label{eq:gen inf bilocal tensors}
\twodstrat{\alice}{}{}{}, \threedstrat{\bob}{}{}{}{}, \twodstrat{\charlie}{}{}{}, \isource{\hv}{},
\end{equation}
such that
\begin{equation}
\label{eq:gen inf bilocal indep}
\geninfbilocal = \targetp{}{}{}\targetp{}{}{}.
\end{equation}
\end{subequations}
\Cref{lem:average twonorm} no longer applies in this case, since the left-hand side of \eqref{eq:gen inf bilocal indep} no longer has the form of a mixture of i.i.d.\ distributions.
%
%
The convergence of the above problem is still under control: if the postselection $\pstensor$ becomes closer and closer to being trivial while $\ninf$ becomes larger and larger, the set $\inftemp$ will converge to the set $\outputdistribs{\bilocaltopo}$ as characterized in equation \eqref{eq:convex game}.
However, compared with the original causal compatibility problem \eqref{eq:convex game}, we took two contradicting steps, and the sets $\inftemp$ and $\outputdistribs{\bilocaltopo}$ seem incomparable:
on the one hand, we let the agents share some additional correlations through the postselection $\pstensor$,
but we also restricted the cardinality of the sources $\isource{\dsunif}{}$.
Ideally, we would like a postselection $\pstensor$ such that, overall, the ``extra possibilities'' given by the postselection win over the restriction of the source cardinalities, so that we obtain an outer approximation $\outputdistribs{\bilocaltopo} \subseteq \inftemp$. 

\subsection{Fixing the postselection}
\label{sec:fixing the postselection}

It turns out that the two main criteria that the postselection $\pstensor$ should fulfill are the following. The main one is that only distinct values for the inputs of $\pstensor$ should pass the postselection: this will guarantee that $\outputdistribs{\bilocaltopo} \subseteq \inftemp$. The second important feature is that $\pstensor$ should be as unimportant as possible; that is, the impact of introducing the postselection in the network should be as low as possible. This will guarantee that $\inftemp$ is a good outer approximation of $\outputdistribs{\bilocaltopo}$. For these reasons, we make the choice (the label ``2'' refers to the numbers of inputs of the tensor)
\begin{equation}
\label{eq:psdiff choice}
\pstensor = \psdiff,
\end{equation}
where the right hand-side denotes the postselection over the outcome $\true$ of the tensor $\psdiffbase{}{}{}$ defined, for all $b \in\{\false,\true\}$, for all $\dsvalunif_1,\dsvalunif_2 \in \{1,\dots,\ninf\}$, through
\begin{equation}
\psdiffbase{\psval}{\dsvalunif_1}{\dsvalunif_2} := \delta_{\psval,\false}\delta_{\dsvalunif_1,\dsvalunif_2}
+ \delta_{\psval,\true}(1 - \delta_{\dsvalunif_1,\dsvalunif_2}).
\end{equation}
The effect of this postselection on the sources is the following:
\begin{equation}
\forall i,j \in\{1,\dots,\ninf\}\st \centertikz{
\node[tensornode] (u1) at (0,0) {$\dsunif$};
\node[copynode] (copy1) [above=\outcomevspace of u1] {};
\drawleg (u1.north) -- (copy1);
\node[tensornode] (u2) at (50pt,0) {$\dsunif$};
\node[copynode] (copy2) [above=\outcomevspace of u2] {};
\drawleg (u2.north) -- (copy2);
\node[psnode] (ps) at (25pt,40pt) {\psdiffname{2}};
\drawleg (copy1) -- (ps.south west);
\drawleg (copy2) -- (ps.south east);
\node[voidnode] (o1) at (-10pt, 40pt) {\indexstyle{i}};
\node[voidnode] (o2) at (60pt, 40pt) {\indexstyle{j}};
\foreach \x in {1,2}
	\drawleg (copy\x) -- (o\x);
}
= \frac{1}{\ninf(\ninf-1)}(1-\delta_{ij}).
\end{equation}
The set $\inftemp$ with this choice of postselection is denoted as $\infneq$. It is almost obvious that we will have, roughly speaking, that $\lim_{\ninf\to\infty}\infneq = \outputdistribs{\bilocaltopo}$: as $\ninf\rightarrow\infty$, the postselection strategy $\psdiff$ has almost no effect (the inputs are anyway not equal to one another with high probability), so that $\lim_{\ninf\to\infty}\infneq$ is essentially equal to $\lim_{\ninf\to\infty}\itempRestricted$, which is itself equal to the set $\outputdistribs{\bilocaltopo}$ as characterized in equation \eqref{eq:convex game}.
This claim will be made general and formal in \cref{th:convergence}.

Perhaps more surprising is the fact that, although $\inftemp$ and $\outputdistribs{\bilocaltopo}$ seemed incomparable, we have $\outputdistribs{\bilocaltopo} \subseteq \infneq$, i.e., this specific postselection guarantees that we obtain an outer approximation as desired. 
Let us prove this fact in the following lemma, which is, at least conceptually, a corollary of the more general \cref{th:certifying ps} that we will give in the next section \ref{sec:post-selected inflation} --- however, an explicit proof in this simple context captures the general proof idea.

\begin{lemma}
\label{lem:simple proof}
It holds that
\begin{equation}
\outputdistribs{\bilocaltopo} \subseteq \mathcal I_{\neq}^{(\ninf = 2)}(\bilocaltopo).
\end{equation}
\end{lemma}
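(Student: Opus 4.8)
The plan is to start from an arbitrary realization of a distribution in $\outputdistribs{\bilocaltopo}$ and to explicitly build the data certifying membership in $\infneq$ at $\ninf=2$. By the definition of causal compatibility with the bilocal network (equation \eqref{eq:example bilocal}), if $\targetp{}{}{}\in\outputdistribs{\bilocaltopo}$ there exist deterministic strategies $\alice,\bob,\charlie$ and source values such that the single bilocal network outputs $\targetp{}{}{}$. After discretizing the sources --- invoking the Riemann-integrability assumption so that each deterministic strategy resolves only finitely many cells of its sources, exactly in the spirit of \cref{prop:det app} --- I may assume the original sources $\csalpha,\csbeta$ take finitely many values, with $\alice$ depending only on $\csalpha$, $\bob$ on both, and $\charlie$ only on $\csbeta$.

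The heart of the construction is the choice of the global randomness $\isource{\hv}{}$: I would let it emit four mutually independent samples, $\alpha^{(1)},\alpha^{(2)}$ drawn from the original $\csalpha$-distribution and $\beta^{(1)},\beta^{(2)}$ from the original $\csbeta$-distribution, so that $\hv=(\alpha^{(1)},\alpha^{(2)},\beta^{(1)},\beta^{(2)})$. Each $\ninf=2$ source $\dsunif$ of the inflation then carries a single bit in $\{1,2\}$, used purely as a pointer. The new deterministic strategies read their $2$-valued source to select which encoded sample to use, and then apply the original strategy: $\alice$ receives $(s_\alpha,\hv)$ and outputs $\alice(\alpha^{(s_\alpha)})$; $\bob$ receives $(s_\alpha,s_\beta,\hv)$ and outputs $\bob(\alpha^{(s_\alpha)},\beta^{(s_\beta)})$; $\charlie$ receives $(s_\beta,\hv)$ and outputs $\charlie(\beta^{(s_\beta)})$. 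The same deterministic tensor is used in both halves of the doubled diagram of \eqref{eq:gen inf bilocal indep}, as required; the two halves differ only through the source bits fed into them.

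The key step is to check that the distinctness postselection $\psdiff$ forces the product form $\targetp{}{}{}\targetp{}{}{}$. Writing $L,R$ for the left and right halves, the postselection succeeds with probability $1/2$ per source (two distinct ordered pairs out of four), hence is well defined, and conditions the two copies of each source to be distinct. At $\ninf=2$ this means the halves receive complementary bits: $\{s^L_\alpha,s^R_\alpha\}=\{1,2\}$ and likewise for $\beta$. Consequently the left half reads the samples $\alpha^{(s^L_\alpha)},\beta^{(s^L_\beta)}$ while the right half reads the complementary, and therefore disjoint, samples $\alpha^{(s^R_\alpha)},\beta^{(s^R_\beta)}$. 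Since $\alpha^{(1)},\alpha^{(2)}$ (resp.\ $\beta^{(1)},\beta^{(2)}$) are i.i.d., the samples seen by the two halves are independent \emph{even though $\hv$ is shared}; each half thus feeds a genuine independent sample of $(\csalpha,\csbeta)$ into the original strategies and reproduces $\targetp{}{}{}$. Averaging over the surviving (distinct) source configurations, which are all equivalent under the relabeling $1\leftrightarrow2$ of the i.i.d.\ samples, and over $\hv$, yields exactly $\targetp{}{}{}\targetp{}{}{}$, hence $\targetp{}{}{}\in\mathcal I_{\neq}^{(\ninf = 2)}(\bilocaltopo)$.

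I expect the main obstacle to be twofold. First, the bookkeeping that turns a possibly continuous original realization into a finite one, so that $\hv$ carries well-defined samples; this is where the Riemann-integrability assumption and an argument in the style of \cref{prop:det app} are needed. Second, and more conceptually, one must verify carefully that sharing $\hv$ across the two halves does not reintroduce correlations --- this is precisely the failure mode quantified by \cref{lem:average twonorm} and witnessed by the overly permissive set $\itempSR$, and it is resolved here exactly because the distinctness constraint forces the two halves onto disjoint, independent components of $\hv$.
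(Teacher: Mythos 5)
Your construction is exactly the paper's proof: let $\isource{\hv}{}$ broadcast four independent copies of the original sources, let the $\ninf=2$ sources act as pointers selecting which copy each agent feeds into the original strategy, and observe that the distinctness postselection forces the two halves of the doubled diagram onto disjoint, hence independent, copies so that the product form $\targetp{}{}{}\targetp{}{}{}$ follows. The only divergence is your preliminary discretization of $\csalpha,\csbeta$, which is unnecessary (and, as an \emph{exact} finite reduction, not actually granted by Riemann integrability alone): the paper simply lets $\isource{\hv}{}$ output a tuple of four continuous $\maxunif$ samples directly, and your argument works verbatim in that setting.
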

\begin{proof}
Let $\targetp{}{}{} \in \outputdistribs{\bilocaltopo}$ so that we have probability tensors $\detonestrat{\alice_0}{}{}$, $\dettwostrat{\bob_0}{}{}{}$, $\detonestrat{\charlie_0}{}{}$ such that
\begin{equation}
\bilocalnetwork{\csalpha_1}{\csalpha_2}{\alice_0}{\bob_0}{\charlie_0}{}{}{}
=
\targetp{}{}{}.
\end{equation}
Let us choose the tensors of \eqref{eq:gen inf bilocal tensors} to be as follows. First off, the source $\isource{\hv}{}$ will actually be sending a tuple of four values, sampled from four independent $\isource{\maxunif}{}$ sources:
\begin{equation}
\isource{\hv}{(\csvalalpha_1,\csvalbeta_1,\csvalalpha_2,\csvalbeta_2)} = \isource{\maxunif}{\csvalalpha_1}\isource{\maxunif}{\csvalbeta_1}\isource{\maxunif}{\csvalalpha_2}\isource{\maxunif}{\csvalbeta_2}.
\end{equation}
The strategy $\dettwostrat{\alice}{}{}{}$ will consist in using the input $\dsvalalpha$ coming from the $\isource{\macrodsunif{2}}{}$ source of \eqref{eq:gen inf bilocal indep} to choose to use the value $\csvalalpha_\dsvalalpha$ coming from the source $\isource{\hv}{}$ as the input to the original strategy $\detonestrat{\alice_0}{}{}$. The other two strategies are analogous: we let
\begin{subequations}
\begin{align}
\dettwostrat{\alice}{\outputa}{\dsvalalpha}{(\csvalalpha_1,\csvalbeta_1,\csvalalpha_2,\csvalbeta_2)} &:= \detonestrat{\alice_0}{\outputa}{\csvalalpha_\dsvalalpha}, \\
\threedstrat{\bob}{\outputb}{\dsvalalpha}{\dsvalbeta}{(\csvalalpha_1,\csvalbeta_1,\csvalalpha_2,\csvalbeta_2)} &:= \dettwostrat{\bob_0}{\outputb}{\csvalalpha_\dsvalalpha}{\csvalbeta_\dsvalbeta}, \\
\dettwostrat{\charlie}{\outputc}{\dsvalbeta}{(\csvalalpha_1,\csvalbeta_1,\csvalalpha_2,\csvalbeta_2)} &:= \detonestrat{\charlie_0}{\outputc}{\csvalbeta_\dsvalbeta}.
\end{align}
\end{subequations}
We can now show that \eqref{eq:gen inf bilocal indep} is indeed verified: for all $a,b,c,\tilde a,\tilde b,\tilde c$,
\begin{subequations}
\begin{equation}
\centertikz{
\node[detnode] (alice) at (-1,0) {$\alice$};
\node[detnode] (bob) at (0,0) {$\bob$};
\node[detnode] (charlie) at (1,0) {$\charlie$};
\node[voidnode] (a) [above=\outcomevspace of alice] {\indexstyle{\outputa}};
\node[voidnode] (b) [above=\outcomevspace of bob] {\indexstyle{\outputb}};
\node[voidnode] (c) [above=\outcomevspace of charlie] {\indexstyle{\outputc}};
\node[copynode] (alphacopy) at (-0.6,\psbilocalsourcepos) { };
\node[tensornode] (alpha) [below=\outcomevspace of alphacopy] {$\macrodsunif{2}$};
\node[copynode] (betacopy) at (0.5,\psbilocalsourcepos) { };
\node[tensornode] (beta) [below=\outcomevspace of betacopy] {$\macrodsunif{2}$};
\drawdsource (alpha.north) -- (alphacopy);
\drawdsource (alphacopy) -- \AnchorOneTwo{alice}{south};
\drawdsource (alphacopy) -- \AnchorOneThree{bob}{south};
\drawdsource (beta.north) -- (betacopy);
\drawdsource (betacopy) -- \AnchorTwoThree{bob}{south};
\drawdsource (betacopy) -- \AnchorOneTwo{charlie}{south};
\drawoutcomeleg (alice.north) -- (a.south);
\drawoutcomeleg (bob.north) -- (b.south);
\drawoutcomeleg (charlie.north) -- (c.south);
%
\node[detnode] (alice2) at (2,0) {$\alice$};
\node[detnode] (bob2) at (3,0) {$\bob$};
\node[detnode] (charlie2) at (4,0) {$\charlie$};
\node[voidnode] (a2) [above=\outcomevspace of alice2] {\indexstyle{\tilde\outputa}};
\node[voidnode] (b2) [above=\outcomevspace of bob2] {\indexstyle{\tilde\outputb}};
\node[voidnode] (c2) [above=\outcomevspace of charlie2] {\indexstyle{\tilde\outputc}};
\node[copynode] (alphacopy2) at (2.5,\psbilocalsourcepos) { };
\node[tensornode] (alpha2) [below=\outcomevspace of alphacopy2] {$\macrodsunif{2}$};
\node[copynode] (betacopy2) at (3.6,\psbilocalsourcepos) { };
\node[tensornode] (beta2) [below=\outcomevspace of betacopy2] {$\macrodsunif{2}$};
\drawdsource (alpha2.north) -- (alphacopy2);
\drawdsource (alphacopy2) -- \AnchorOneTwo{alice2}{south};
\drawdsource (alphacopy2) -- \AnchorOneThree{bob2}{south};
\drawdsource (beta2.north) -- (betacopy2);
\drawdsource (betacopy2) -- \AnchorTwoThree{bob2}{south};
\drawdsource (betacopy2) -- \AnchorOneTwo{charlie2}{south};
\drawoutcomeleg (alice2.north) -- (a2.south);
\drawoutcomeleg (bob2.north) -- (b2.south);
\drawoutcomeleg (charlie2.north) -- (c2.south);
%
%
\node[tensornode] (hv) at (1.5,-3.4) {$\hv$};
\node[copynode] (hvcopy) [above=8pt of hv] { };
\drawleg (hv.north) -- (hvcopy.south);
\drawdashedleg (hvcopy) -- ($(alpha.west) - (11pt,9pt)$) -- \AnchorTwoTwo{alice}{south};
\drawdashedleg (hvcopy) -- ($(beta.west) - (9pt,9pt)$) -- \AnchorThreeThree{bob}{south};
\drawdashedleg (hvcopy) -- ($(beta.east) + (6pt,-8pt)$) -- \AnchorTwoTwo{charlie}{south};
\drawdashedleg (hvcopy) -- ($(alpha2.west) - (6pt,8pt)$) -- \AnchorTwoTwo{alice2}{south};
\drawdashedleg (hvcopy) -- ($(beta2.west) - (6pt,15pt)$) -- \AnchorThreeThree{bob2}{south};
\drawdashedleg (hvcopy) -- ($(beta2.east) + (6pt,-12pt)$) -- \AnchorTwoTwo{charlie2}{south};
%
%
\node[psnode] (f1) [above left=140pt and 5pt of hv] {\psdiffname{2}};
\node[psnode] (f2) [above right=140pt and 5pt of hv] {\psdiffname{2}};
\drawdashedleg (alphacopy) to [out=90,in=220] (f1.south west);
\drawdashedleg (alphacopy2) to [out=75,in=335] (f1.south east);
\drawdashedleg (betacopy) to [out=105,in=215] (f2.south west);
\drawdashedleg (betacopy2) to [out=90,in=330] (f2.south east);
}
\end{equation}%
\begin{align}
&=
\frac{1}{4}
\sum_{\substack{\dsvalalpha_1,\dsvalalpha_2,\dsvalbeta_1,\dsvalbeta_2\in\{1,2\} \\ \dsvalalpha_1 \neq \dsvalalpha_2, \dsvalbeta_1 \neq \dsvalbeta_2}}
\int \dd{\csvalalpha_1}\dd{\csvalalpha_2}\dd{\csvalbeta_1}\dd{\csvalbeta_2}
\detonestrat{\alice_0}{\outputa}{\csvalalpha_{\dsvalalpha_1}}
\dettwostrat{\bob_0}{\outputb}{\csvalalpha_{\dsvalalpha_1}}{\csvalbeta_{\dsvalbeta_1}}
\detonestrat{\charlie_0}{\outputc}{\csvalbeta_{\dsvalbeta_1}}
\detonestrat{\alice_0}{\tilde\outputa}{\csvalalpha_{\dsvalalpha_2}}
\dettwostrat{\bob_0}{\tilde\outputb}{\csvalalpha_{\dsvalalpha_2}}{\csvalbeta_{\dsvalbeta_2}}
\detonestrat{\charlie_0}{\tilde\outputc}{\csvalbeta_{\dsvalbeta_2}} \\
&= \frac{1}{4}
\sum_{\substack{\dsvalalpha_1,\dsvalalpha_2,\dsvalbeta_1,\dsvalbeta_2\in\{1,2\} \\ \dsvalalpha_1 \neq \dsvalalpha_2, \dsvalbeta_1 \neq \dsvalbeta_2}}
\int \textup{d}{\csvalalpha_{\dsvalalpha_1}}\textup{d}{\csvalbeta_{\dsvalbeta_1}}
\left(
\detonestrat{\alice_0}{\outputa}{\csvalalpha_{\dsvalalpha_1}}
\dettwostrat{\bob_0}{\outputb}{\csvalalpha_{\dsvalalpha_1}}{\csvalbeta_{\dsvalbeta_1}}
\detonestrat{\charlie_0}{\outputc}{\csvalbeta_{\dsvalbeta_1}}
\right)
\int\textup{d}{\csvalalpha_{\dsvalalpha_2}}\textup{d}{\csvalbeta_{\dsvalbeta_2}}
\left(
\detonestrat{\alice_0}{\tilde\outputa}{\csvalalpha_{\dsvalalpha_2}}
\dettwostrat{\bob_0}{\tilde\outputb}{\csvalalpha_{\dsvalalpha_2}}{\csvalbeta_{\dsvalbeta_2}}
\detonestrat{\charlie_0}{\tilde\outputc}{\csvalbeta_{\dsvalbeta_2}}
\right) \\
&= \frac{1}{4}
\sum_{\substack{\dsvalalpha_1,\dsvalalpha_2,\dsvalbeta_1,\dsvalbeta_2\in\{1,2\} \\ \dsvalalpha_1 \neq \dsvalalpha_2, \dsvalbeta_1 \neq \dsvalbeta_2}}
\bilocalnetwork{\csalpha_1}{\csalpha_2}{\alice_0}{\bob_0}{\charlie_0}{\outputa}{\outputb}{\outputc}
\bilocalnetwork{\csalpha_1}{\csalpha_2}{\alice_0}{\bob_0}{\charlie_0}{\tilde\outputa}{\tilde\outputb}{\tilde\outputc} \\
&= \targetp{\outputa}{\outputb}{\outputc}\targetp{\tilde\outputa}{\tilde\outputb}{\tilde\outputc},
\end{align}
\end{subequations}
so that indeed $\targetp{}{}{}\in\mathcal I_{\neq}^{(\ninf = 2)}(\bilocaltopo)$.
\end{proof}

Note that this proof idea would work regardless of the precise behavior of the $\isource{\macrodsunif{2}}{}$ and $\psdiff$ tensors, as long as the postselection enables $\dsvalalpha_1 \neq \dsvalalpha_2$ and $\dsvalbeta_1 \neq \dsvalbeta_2$. 



\newpage
\section{Postselected inflation: formal aspects}
\label{sec:post-selected inflation}

In this section, we give a general description of the sort of outer approximations that we consider, and we prove a number of results that characterize these.
The previous section \ref{sec:outer approx} gave some intuition for the idea of this construction.

\subsection{Definition}

Consider a multi-network scenario $\genmultopo$ as in equation
\eqref{eq:def genmultopoexpl}.
%
The set $\outputdistribs{\genmultopo}$ was defined in \cref{def:causal compat}.
Let us define the postselected inflation set $\geninfset{\genmultopo}{\ninf}{\ninfcons}$ --- parametrized by two integers $\ninf$ (number of output values for the discretized sources) and $\ninfcons$ (order of the tensor product constraints) --- which is meant to be the outer approximation of the set $\outputdistribs{\{\gentopo_\topoindex\}_{\topoindex=1}^\topocount}.$ 
In the following definition, for all integers $k$, the tensor $\gpstensor{k}$ denotes the postselection of the $\true$ outcome of the probability tensor $\gpstensorbase{k}$ (see section \ref{sec:post-selection} for the definition of postselection). This probability tensor $\gpstensorbase{k}$ has $k$ input legs, each with domain $\{1,\dots,\ninf\}$ in this context, and is defined, for all $b\in\{\false,\true\}$, $\dsvalalpha_1,\dots,\dsvalalpha_{k}\in\{1,\dots,\ninf\}$, as
\begin{equation}
\gpstensorbaseargs{k}{b}{\dsvalalpha_1}{\dsvalalpha_2}{\dsvalalpha_k} = \left\{\begin{aligned}
&\delta_{b,\true}, \textup{ if all the values }\dsvalalpha_1,\dsvalalpha_2,\dots,\dsvalalpha_k\textup{ are pairwise distinct,} \\
&\delta_{b,\false} \textup{ else.}
\end{aligned}\right.
\end{equation}
%
Now, the following diagrammatic constraint \eqref{eq:geninfcondition} is quite large, so let us
describe how to obtain it: for each network $\topoindex \in \{1,\dots,\topocount\}$,
\begin{myitem}
\item Take the corresponding tensor contraction, as in equation \eqref{eq:causal compat multiple config condition}.
\item Replace each source $\isource{\maxunif}{}$ by a uniform source $\isource{\dsunif}{}$ over $\ninf$ values.
\item Add an additional input to all the strategy tensors, e.g.\ $\gonestratdet{\alice_1}{}{} \rightarrow \geninfstrat{\alice_1}{}{}$.
\item Duplicate $\ninfcons$ times (take the $\ninfcons$-fold tensor product of) both the network tensor and the target distribution $\macrogtargetp{\targetpname_\topoindex}$ in equation \eqref{eq:causal compat multiple config condition}.
\item Connect all the agent's additional inputs to a global randomness source $\isource{\hv}{}$. Importantly, the tensor $\isource{\hv}{}$ is the same for all $\topoindex = 1,\dots,\topocount$.
\item 
At last, postselect on the $\scount_\topoindex \cdot \ninfcons$ sources $\isource{\dsunif}{}$ having different values. This is the step which requires $\ninf \geq \scount_\topoindex\cdot\ninfcons$.
\end{myitem}
This yields the following definition (recall that the style difference between the dashed and solid edges is there to guide the eye but implies the same operation of tensor contraction).

\newpage
\begin{definition}[Postselected inflation]
\label{def:postselected inflation set}
Consider a multi-network scenario $\genmultopoexpl$. Let $\ninfcons$ be an integer parameter, and let $\ninf$ be any integer such that
\begin{equation}
\label{eq:ps m n condition}
\ninf \geq \max_{\topoindex\in\{1,\dots,\topocount\}} \scount_\topoindex\cdot \ninfcons. 
\end{equation}
A list of distributions $\genmultopotargetps$ belongs to the set $\geninfset{\genmultopo}{\ninf}{\ninfcons}$ if and only if there exist probability tensors
\begin{subequations}
\label{eq:prob gen inf set}
\begin{equation}
\label{eq:prob gen inf set primitives}
\left\{\geninfstrat{\alice_\pindex}\right\}_{\pindex=1}^{\pcount}, \isource{\hv}{},
\end{equation}
such that, for all $\topoindex = 1,\dots,\topocount$, it holds that
\begin{multline}
\label{eq:geninfcondition}
\genpsinfcondition \\
{ } \\
{} \\
= 
\macrogtargetptwoargs{\targetpname_\topoindex} \macrogtargetptwoargs{\targetpname_\topoindex} \overset{(\ninfcons-3)}{\dots} \macrogtargetptwoargs{\targetpname_\topoindex}.\hspace{2cm}
\end{multline}
\end{subequations}
\end{definition}


\subsection{Remarks about the implementation}
\label{sec:remarks about implementation}

The idea behind the above \cref{def:postselected inflation set} is that testing for $\genmultopotargetps \in \geninfset{\genmultopo}{\ninf}{\ninfcons}$ can be easily formulated as a linear program over $\isource{\hv}{}$ only. Indeed, it is clear from \eqref{eq:geninfcondition} that one can assume without loss of generality that the output domain of $\isource{\hv}{}$ is the set of all tuples of $\pcount$ deterministic strategies such that each agent $\npindex$ then chooses the $\pmap(\npindex)$-th element of this tuple to use as their strategy.
This output domain is quite large. For each $\pindex \in \{1,\dots,\pcount\}$, we denote with $\nout_\pindex$ the number of outcomes of the strategy $\pindex$,
and with $\nin_\pindex$ the number of input legs of this strategy.\footnote{This number $\nin_\pindex$ would be the length of the list $\cmap_\topoindex(\npindex)$ for any $\topoindex$ and $\npindex$ such that $\pmap_\topoindex(\npindex) = \pindex$.} This strategy will then receive from the sources $\isource{\dsunif}{}$ a tuple of $\nin_\pindex$ values each in the set $\{1,\dots,\ninf\}$, and there are $\ninf^{\nin_\pindex}$ such tuples. Since for each such tuple, the deterministic strategy may give $\nout_\pindex$ different outcomes, there are thus $\exp\big(\ln[\nout_\pindex]\ninf^{\nin_\pindex}\big)$ such deterministic strategies. Overall, the distribution $\isource{\hv}{}$ should thus have an output (indexing the possible strategies) with cardinality
\begin{equation}
\exp\left(\sum_{\pindex=1}^\pcount 
\ln[\nout_\pindex]
\ninf^{\nin_\pindex}
\right),
\end{equation}
which is quite a large number as $\ninf$ grows. 

For this reason, it is crucial to be able to formulate versions of \cref{def:postselected inflation set} with $\ninf$ being a small integer. Depending on the number $\scount_\topoindex$ of sources in each network $\topoindex$, this may be problematic with respect to the postselection constraint of equation \eqref{eq:ps m n condition}. It is thus useful to keep in mind that although \cref{def:postselected inflation set} is fully general for our purposes, one should in practice investigate the network's structure to see if it is possible to have less postselection than in \eqref{eq:geninfcondition}. 
Additionally, as already alluded to in \cref{sec:fixing the postselection}, having less postselection is also helpful for the purpose of letting the outer approximation generated by the postselected inflation procedure be as tight as possible.
In this work, two examples of enforcing less postselection than in \cref{def:postselected inflation set} are given: the case of a single-network scenario ($\topocount = 1$) where all agents use independent strategies ($\pcount = \npcount$) --- see \cref{sec:fixing the postselection} as well as \cref{sec:triangle network three strats} --- and the case of the Correlated Sleeper --- see \cref{sec:sc outer approx}.
Furthermore, in the latter, we show how letting the parameter $\ninfcons$ depend on $\topoindex$ can be convenient with respect to these requirements. It is also possible to think of taking different value of $\ninf$ for each source $\isource{\dsunif}{}$ appearing in the left-hand side of \eqref{eq:geninfcondition}.

\subsection{Results}

We now prove a number of results regarding the postselected inflation scheme. As will be discussed in \cref{sec:fanout inflation}, these results extend those of \cite{navascues_inflation_2020} to the case of our multi-network scenarios with subsets of agents using the same strategy. The proofs are gathered in \app~\ref{app:post-selected inflation proofs}.
The first result establishes that the postselected inflation scheme indeed allows to certify causal incompatibility.
The proof is the generalization of that of \cref{lem:simple proof} and \cref{lem:app inf feasible}.

\begin{restatable}[Certification]{theorem}{ThCertifyingPs}
\label{th:certifying ps}
Let $\genmultopo$, $\ninf$, $\ninfcons$ be as in \cref{def:postselected inflation set}.
It holds that
\begin{equation}
\label{eq:causal compat certified}
\outputdistribs{\genmultopo} 
\subseteq
\geninfset{\genmultopo}{\ninf}{\ninfcons}{}.
\end{equation}
\end{restatable}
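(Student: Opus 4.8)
The plan is to prove the inclusion by exhibiting, for an arbitrary compatible list $\genmultopotargetps \in \outputdistribs{\genmultopo}$, an explicit choice of the inflation tensors $\left\{\geninfstrat{\alice_\pindex}\right\}_{\pindex=1}^{\pcount}$ and of the shared source $\isource{\hv}{}$ of \eqref{eq:prob gen inf set primitives} that satisfies the defining constraint \eqref{eq:geninfcondition} of \cref{def:postselected inflation set} for every $\topoindex$. This is the direct generalization of the construction in the proof of \cref{lem:simple proof}; the only genuinely new features are that several networks must be handled by a single family of tensors, and that the full pairwise-distinctness postselection of \cref{def:postselected inflation set} is used in place of the factorized one used there.

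First I would unpack the hypothesis: by \cref{def:causal compat} there exist deterministic strategies $\{p^{0}_\pindex\}_{\pindex=1}^{\pcount}$ and maximally entropic sources $\isource{\maxunif}{}$ such that, for each $\topoindex$, substituting them into the network tensor of \eqref{eq:causal compat multiple config condition} reproduces $\macrogtargetp{\targetpname_\topoindex}$. I then let the global source $\isource{\hv}{}$ emit a tuple $(\omega_1,\dots,\omega_\ninf)$ of $\ninf$ independent values, each distributed as $\isource{\maxunif}{}$, and I define the inflated strategies so that an agent reads its discrete source values $v_1,\dots,v_{\nin_\pindex}\in\{1,\dots,\ninf\}$ and uses them as \emph{indices} into this tuple before applying the original deterministic strategy:
\begin{equation}
p^{\mathrm{inf}}_\pindex\big(a \mid v_1,\dots,v_{\nin_\pindex},(\omega_1,\dots,\omega_\ninf)\big) := p^{0}_\pindex\big(a \mid \omega_{v_1},\dots,\omega_{v_{\nin_\pindex}}\big).
\end{equation}
The crucial point is that this definition depends only on the strategy label $\pindex$ and its well-defined input number $\nin_\pindex$, not on $\topoindex$ nor on the agent's position, so the same tensors serve all $\topocount$ networks at once and the same-strategy constraints are respected by construction; the resulting tensors are moreover deterministic, hence admissible in \cref{def:postselected inflation set}.

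The verification of \eqref{eq:geninfcondition} then rests entirely on the postselection. Fix $\topoindex$; the postselection restricts the $\scount_\topoindex\cdot\ninfcons$ discrete source values (one per source, per copy) to be pairwise distinct, which is feasible precisely because of the cardinality condition \eqref{eq:ps m n condition}. For any such distinct assignment, distinctness \emph{within} a single copy forces the $\scount_\topoindex$ indices selected there to pick out independent uniform values $\omega_v$, so that copy evaluates to exactly the original network and hence to $\macrogtargetp{\targetpname_\topoindex}$; distinctness \emph{across} copies forces the $\ninfcons$ copies to select pairwise disjoint, hence independent, families of $\omega$'s, so the joint evaluation factorizes into the $\ninfcons$-fold tensor product on the right-hand side of \eqref{eq:geninfcondition}. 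Since this product is the same for every admissible pairwise-distinct assignment, averaging over the postselected ensemble leaves it unchanged, which establishes the constraint and therefore the claimed inclusion.

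I expect the main difficulty to be bookkeeping rather than conceptual: one must check carefully that a single global source $\isource{\hv}{}$ and a single family of strategy tensors simultaneously satisfy the $\topocount$ separate constraints, which is exactly where the consistency of $\nin_\pindex$ across networks is needed. The only analytic subtlety is the familiar one encountered around \cref{lem:average twonorm}: the $\omega_v$ are genuine continuous uniform variables and the $p^{0}_\pindex$ are Riemann integrable, so that marginalizing the unused components of $\isource{\hv}{}$ and integrating the selected independent ones indeed reproduces the target distributions $\macrogtargetp{\targetpname_\topoindex}$.
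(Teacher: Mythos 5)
Your proposal is correct and follows essentially the same route as the paper's own proof: the global source $\isource{\hv}{}$ emits $\ninf$ i.i.d.\ copies of $\isource{\maxunif}{}$, the inflated agents use their discrete inputs as selector indices into that tuple before applying the original deterministic strategies, and the constraint \eqref{eq:geninfcondition} is verified by conditioning on a postselection-compatible (pairwise-distinct) source assignment, observing factorization across copies and recovery of $\macrogtargetp{\targetpname_\topoindex}$ within each copy, and then averaging. This matches the paper's construction and its two claims essentially step for step.
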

%
The next theorem establishes that increasing the parameters $\ninf$ and $\ninfcons$ that appear in the postselected inflation problem will make the outer approximation $\geninfset{\genmultopo}{\ninf}{\ninfcons}$ smaller and smaller, while remaining an outer approximation of $\outputdistribs{\genmultopo}$ thanks to \cref{th:certifying ps}.

\begin{restatable}[Hierarchy]{theorem}{ThInclusionRelations}
\label{th:inclusion relations}
For all $\genmultopoexpl$ and $\ninf, \ninfcons$  ($\ninf \geq \max_\topoindex \scount_\topoindex \cdot \ninfcons$) as well as $\ninf', \ninfcons'$ ($\ninf' \geq \max_\topoindex \scount_\topoindex \cdot \ninfcons'$) such that
\begin{equation}
\ninf \geq \ninf' \text{ and } \ninfcons \geq \ninfcons',
\end{equation}
it holds that
\begin{equation}
\label{eq:general inclusion}
\geninfset{\genmultopo}{\ninf}{\ninfcons}
\subseteq
\geninfset{\genmultopo}{\ninf'}{\ninfcons'}.
\end{equation}
\end{restatable}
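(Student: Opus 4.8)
The plan is to establish the two monotonicity directions separately, since the parameters $\ninf$ and $\ninfcons$ play conceptually different roles: $\ninf$ controls the cardinality of the discretized sources, while $\ninfcons$ controls how many copies of the network are tensored together before postselection. Because the claimed inclusion $\geninfset{\genmultopo}{\ninf}{\ninfcons} \subseteq \geninfset{\genmultopo}{\ninf'}{\ninfcons'}$ asserts that the \emph{larger} parameters give the \emph{smaller} (tighter) set, I would take an arbitrary feasible point of the left-hand set, i.e.\ a list $\genmultopotargetps \in \geninfset{\genmultopo}{\ninf}{\ninfcons}$ with witnessing tensors $\{\geninfstrat{\alice_\pindex}\}_{\pindex}$ and $\isource{\hv}{}$ satisfying \eqref{eq:geninfcondition}, and construct from it witnessing tensors for membership in the right-hand set. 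By transitivity it suffices to handle the two single-parameter steps $(\ninf,\ninfcons)\to(\ninf',\ninfcons)$ and $(\ninf,\ninfcons)\to(\ninf,\ninfcons')$ in turn and then compose them.

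For the $\ninfcons \geq \ninfcons'$ direction (fewer copies), the key observation is that the constraint \eqref{eq:geninfcondition} at order $\ninfcons$ is stronger than at order $\ninfcons'$: the right-hand side is a $\ninfcons$-fold product of the target, and one recovers the $\ninfcons'$-fold product by marginalizing out $\ninfcons - \ninfcons'$ of the network copies. Concretely, I would place marginal nodes $\centertikz{\node[margnode] { };}$ on all output legs of the extra copies on both sides of \eqref{eq:geninfcondition}; since each $\macrogtargetptwoargs{\targetpname_\topoindex}$ is a normalized distribution, marginalizing a copy on the right yields exactly $1$ and drops a factor. On the left, marginalizing the outputs of a network copy collapses that copy to the scalar $1$ \emph{provided} its sources can be decoupled from the postselection; the subtlety is that the surviving copies must still satisfy the all-distinct postselection among their own $\scount_\topoindex\cdot\ninfcons'$ sources, which they do since all-distinct on the full set of $\scount_\topoindex\cdot\ninfcons$ values certainly implies all-distinct on any subset. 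The same $\isource{\hv}{}$ and strategies serve as witnesses, so this direction should reduce to checking that marginalization commutes with the postselection in the intended way.

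For the $\ninf \geq \ninf'$ direction (more source values), the natural idea is to simulate a $\ninf'$-valued uniform source using an $\ninf$-valued one by restricting (via postselection or relabeling) to a distinguished $\ninf'$-element subset of $\{1,\dots,\ninf\}$, and to define the new strategies to ignore which representative within the subset was received. Here I expect \textbf{the main obstacle}: one must verify that the all-distinct postselection on $\ninf$ values is compatible with the restriction to $\ninf'$ values in a way that reproduces the all-distinct postselection at cardinality $\ninf'$, and that the induced marginal on the retained source values is again uniform — otherwise the constructed witness would not satisfy \cref{def:postselected inflation set} at $(\ninf',\ninfcons)$. The cleanest route is probably to fold the source-restriction into the global randomness tensor $\isource{\hv}{}$, letting it additionally encode an injection $\{1,\dots,\ninf'\}\hookrightarrow\{1,\dots,\ninf\}$, so that the agents read off a genuine $\ninf'$-valued input; one then checks that the distinctness constraint is preserved under this injection and that averaging over the injection restores uniformity and normalization. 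Once both single-parameter steps are in place, composing them gives the full statement, and I would close by noting that both constructed witnesses are again honest probability tensors, so the right-hand membership holds.
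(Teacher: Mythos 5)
Your proposal matches the paper's proof: it too reduces the statement to the two single-parameter steps and composes them, handles the $\ninfcons$-step by marginalizing the extra network copies (using that marginalizing a source feeding the postselection yields the lower-order postselection), and handles the $\ninf$-step by folding a uniformly random injection into the global source $\isource{\hv}{}$ and pre-composing the strategies with it --- the paper uses a uniformly random permutation of $\{1,\dots,\ninf\}$, which is equivalent once restricted to $\{1,\dots,\ninf'\}$. The only points to make fully explicit are the ones you already flag as checks: the marginal of the uniform-over-distinct-tuples distribution is again uniform over distinct sub-tuples because every distinct sub-tuple admits the same number of distinct extensions, and the image of a fixed distinct tuple under a uniformly random injection is uniformly distributed over all distinct tuples in $\{1,\dots,\ninf\}$, which is exactly what lets the $\ninf$-level constraint be invoked.
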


To discuss convergence, we will make use of the $p$-norms on $\mathbb{R}^k$ (defined explicitly in \cref{def:p norms} in \app~\ref{app:post-selected inflation proofs}). We will take the 1-norm as the operationally most meaningful one, since it is related to an operational measure of distinguishability of two distributions.
The following theorem makes precise the fact that $\geninfset{\genmultopo}{\ninf}{\ninfcons} \overset{\ninf\rightarrow\infty}{\longrightarrow} \outputdistribs{\genmultopo}$. The fact that increasing the parameter $\ninfcons$ to its maximal value, $\left\lfloor \ninf / \max_\topoindex \scount_\topoindex \right\rfloor$, does not improve the below convergence rate suggests that the convergence rate is actually better than the one we give here.
%

\begin{restatable}[Convergence]{theorem}{ThConvergence}
\label{th:convergence}
Let $\genmultopoexpl$, $\ninf$ and $\ninfcons$ be as in \cref{def:postselected inflation set}. We assume that $\ninfcons \geq 2$. Then, for any list of distributions
\begin{equation}
\left(\macrogtargetp{\targetpname_\topoindex} \in \mathbb{R}^{d_\topoindex}\right)_{\topoindex=1}^\topocount
\in
\geninfset{\genmultopo}{\ninf}{\ninfcons}
\end{equation}
where we made explicit the total number of outcomes $d_\topoindex$ of each distribution,\footnote{To tie the notation together: we have that $d_\topoindex = \prod_{\npindex=1}^{\npcount_\topoindex} \nout_{\nu_\topoindex(\npindex)}$.}
it holds that
\begin{align}
\label{eq:convergence rate}
\inf_{
\scalebox{0.95}{$
\scriptsize\big(\macrogtargetp{\targetptildename_\topoindex}\big)_{\topoindex=1}^\topocount\in\outputdistribs{\genmultopo}
$}
}
\frac{1}{\topocount}\sum_{\topoindex=1}^\topocount
\onenorm{\macrogtargetp{\targetpname_\topoindex} - \macrogtargetp{\targetptildename_\topoindex}}
\ 
&\leq
\ 
\frac{
\sqrt{12\topocount^{-1} \sum_{\topoindex=1}^\topocount d_\topoindex \scount_\topoindex\left(\scount_\topoindex - \frac{1}{2}\right) }
}
{\sqrt \ninf}
+ \bigo{\frac{1}{\ninf\sqrt{\ninf}}}.
\end{align}
\end{restatable}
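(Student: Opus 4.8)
The plan is to turn a witness of membership in $\geninfset{\genmultopo}{\ninf}{\ninfcons}$ into an honest feasible point of $\outputdistribs{\genmultopo}$ for (almost) every value of the global source, and then to control the average distance between the given list $\genmultopotargetps$ and these honest points using the main lemma, \cref{lem:average twonorm}. Concretely, fix tensors $\{\geninfstrat{\alice_\pindex}\}_\pindex$ and $\isource{\hv}{}$ realizing \eqref{eq:geninfcondition}; by the remark in \cref{sec:remarks about implementation} we may assume $\isource{\hv}{}$ outputs tuples of deterministic strategies. For each value $\hvval$, conditioning on $\hv=\hvval$ yields a single consistent family of deterministic strategies shared across all networks $\topoindex$. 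Feeding these with \emph{independent} discretized sources $\isource{\dsunif}{}$ (no postselection) produces a list $\big(\tilde p_\topoindex^{\hvval}\big)_\topoindex$, and since a deterministic strategy on $\{1,\dots,\ninf\}$ inputs lifts to a deterministic strategy on $[0,1]$ inputs by bucketing $[0,1]$ into $\ninf$ equal parts (thereby realizing the same distribution with $\isource{\maxunif}{}$ sources, cf.\ \cref{prop:det app}), each such list genuinely lies in $\outputdistribs{\genmultopo}$. Because the infimum in \eqref{eq:convergence rate} ranges over all feasible lists, it is at most the $\isource{\hv}{}$-average of $\tfrac{1}{\topocount}\sum_\topoindex \onenorm{\macrogtargetp{\targetpname_\topoindex} - \tilde p_\topoindex^{\hvval}}$; the crucial point here is that the \emph{same} $\hvval$ is used simultaneously for every $\topoindex$, which is exactly what makes $\big(\tilde p_\topoindex^{\hvval}\big)_\topoindex$ a valid \emph{multi}-network witness.

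It remains to bound this average network by network. The main lemma, applied with $\isource{\targetpname}{}=\macrogtargetp{\targetpname_\topoindex}$ and $\onestrat{\targetptildename}{}{\hvval}=\tilde p_\topoindex^{\hvval}$, gives
\begin{equation}
\int\dd\hvval\,\isource{\hv}{\hvval}\,\twonorm{\macrogtargetp{\targetpname_\topoindex}-\tilde p_\topoindex^{\hvval}}^2
\leq
3\,\onenorm{\macrogtargetp{\targetpname_\topoindex}\macrogtargetp{\targetpname_\topoindex} - \int\dd\hvval\,\isource{\hv}{\hvval}\,\tilde p_\topoindex^{\hvval}\tilde p_\topoindex^{\hvval}}.
\end{equation}
By \eqref{eq:geninfcondition}, the first term inside the $1$-norm is the two-copy marginal of the postselected $\ninfcons$-fold network; the key structural observation is that marginalizing the uniform law on pairwise-distinct $\scount_\topoindex\ninfcons$-tuples down to any $2\scount_\topoindex$ of its entries again yields the uniform law on pairwise-distinct $2\scount_\topoindex$-tuples. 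Hence both terms inside the $1$-norm are the \emph{same} (for fixed $\hvval$) deterministic response map applied to two source laws on $2\scount_\topoindex$ values — uniform-on-distinct versus i.i.d.\ uniform — so the $1$-norm is at most the $\ell^1$-distance of these two source laws, since stochastic maps contract the $1$-norm. A direct count gives $\mathbb P(\text{all } 2\scount_\topoindex \text{ distinct}) = 1 - \binom{2\scount_\topoindex}{2}/\ninf + \bigo{\ninf^{-2}}$, whence the source distance is $2\big(1-\mathbb P\big) = 2\scount_\topoindex(2\scount_\topoindex-1)/\ninf + \bigo{\ninf^{-2}}$, and this bound survives the convex $\isource{\hv}{}$-average. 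Multiplying by $3$ yields $\int\dd\hvval\,\isource{\hv}{\hvval}\,\twonorm{\macrogtargetp{\targetpname_\topoindex}-\tilde p_\topoindex^{\hvval}}^2 \leq 12\,\scount_\topoindex(\scount_\topoindex-\tfrac12)/\ninf + \bigo{\ninf^{-2}}$, which already exhibits the constant appearing in \eqref{eq:convergence rate}.

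To finish, I would pass from the $2$-norm to the $1$-norm by Cauchy--Schwarz, $\onenorm{v}\leq\sqrt{d_\topoindex}\,\twonorm{v}$ on $\mathbb{R}^{d_\topoindex}$, and then use concavity of the square root twice: once in $\hvval$ to move the expectation inside the root, $\int\dd\hvval\,\isource{\hv}{\hvval}\,\twonorm{\cdot}\le\big(\int\dd\hvval\,\isource{\hv}{\hvval}\,\twonorm{\cdot}^2\big)^{1/2}$, and once in $\topoindex$ to pull $\tfrac1\topocount\sum_\topoindex$ inside a single root. Assembling these produces exactly $\sqrt{12\,\topocount^{-1}\sum_\topoindex d_\topoindex\scount_\topoindex(\scount_\topoindex-\tfrac12)}\big/\sqrt\ninf$, with the $\bigo{\ninf^{-2}}$ correction contributing only the stated $\bigo{\ninf^{-3/2}}$ tail. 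The genuinely delicate points — and where I expect to spend the most care — are the honest feasibility of the conditional lists $\big(\tilde p_\topoindex^{\hvval}\big)_\topoindex$ (the discrete-to-continuous bucketing and the shared-$\hvval$ consistency across networks) together with the marginalization identity for the uniform-on-distinct law; by contrast the main-lemma step and the Cauchy--Schwarz/Jensen bookkeeping are routine.
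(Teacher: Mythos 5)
Your proposal is correct and follows essentially the same route as the paper's proof: condition on the global source to obtain honest feasible lists, bound the averaged squared $2$-norm via the main lemma, compare postselected versus i.i.d.\ source laws to get the $2\scount_\topoindex(2\scount_\topoindex-1)/\ninf$ term, and finish with Cauchy--Schwarz and Jensen. The only cosmetic differences are that the paper first invokes the Hierarchy theorem to reduce to $\ninfcons=2$ (you instead inline the equivalent marginalization of the uniform-on-distinct law) and proves the source-law comparison by direct expansion rather than by your data-processing phrasing.
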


A corollary of \cref{th:certifying ps} and \cref{th:convergence} is the following:

\begin{restatable}{corollary}{CorollaryConvergence}
\label{corollary:convergence}
Consider some $\genmultopoexpl$ and $\ninfcons \geq 2$, and let $\ninf_0 := \max_{\topoindex} \scount_\topoindex \cdot \ninfcons$.
In the topology induced by the metric
\begin{equation}
\metric\left[
\genmultopotargetps,
\Big(\macrogtargetp{\targetptildename_\topoindex}\Big)_{\topoindex=1}^\topocount
\right]
:=
\frac{1}{\topocount}\sum_{\topoindex=1}^\topocount
\onenorm{\macrogtargetp{\targetpname_\topoindex} - \macrogtargetp{\targetptildename_\topoindex}},
\end{equation}
and denoting $\closure{X}$ the closure of a set $X$ in this topology, it holds that
\begin{equation}
\outputdistribs{\genmultopo} \subseteq \bigcap_{\ninf = \ninf_0}^{\infty} \geninfset{\genmultopo}{\ninf}{\ninfcons} \subseteq \closure{\outputdistribs{\genmultopo}}.
\end{equation}
\end{restatable}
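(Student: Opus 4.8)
The plan is to prove the two inclusions separately, reading the left one off \cref{th:certifying ps} and the right one off \cref{th:convergence} combined with the elementary characterization of closures in a metric space.

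For the inclusion $\outputdistribs{\genmultopo} \subseteq \bigcap_{\ninf=\ninf_0}^{\infty}\geninfset{\genmultopo}{\ninf}{\ninfcons}$, I would first note that the standing assumption $\ninfcons \geq 2$ together with $\ninf \geq \ninf_0 = \max_\topoindex \scount_\topoindex\cdot\ninfcons$ is exactly the admissibility condition \eqref{eq:ps m n condition} needed for $\geninfset{\genmultopo}{\ninf}{\ninfcons}$ to be defined. Hence \cref{th:certifying ps} applies for \emph{every} $\ninf \geq \ninf_0$ and gives $\outputdistribs{\genmultopo}\subseteq\geninfset{\genmultopo}{\ninf}{\ninfcons}$. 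Since an inclusion that holds for each member of a family is preserved by intersecting that family, the left inclusion follows at once.

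For the inclusion $\bigcap_{\ninf=\ninf_0}^{\infty}\geninfset{\genmultopo}{\ninf}{\ninfcons}\subseteq\closure{\outputdistribs{\genmultopo}}$, I would fix an arbitrary list $\genmultopotargetps$ in the intersection; by definition it belongs to $\geninfset{\genmultopo}{\ninf}{\ninfcons}$ for every $\ninf\geq\ninf_0$. The crucial observation is that the left-hand side of the convergence bound \eqref{eq:convergence rate} is, for this fixed list, nothing but its $\metric$-distance to the target set, i.e.\ $\inf_{\big(\macrogtargetp{\targetptildename_\topoindex}\big)_{\topoindex=1}^{\topocount}\in\outputdistribs{\genmultopo}}\metric\big[\genmultopotargetps,\big(\macrogtargetp{\targetptildename_\topoindex}\big)_{\topoindex=1}^{\topocount}\big]$, which does not depend on $\ninf$. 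Applying \cref{th:convergence} for each admissible $\ninf$ therefore bounds this single number by $C/\sqrt{\ninf}+\bigo{\ninf^{-3/2}}$, where $C:=\sqrt{12\topocount^{-1}\sum_{\topoindex}d_\topoindex\scount_\topoindex(\scount_\topoindex-\frac{1}{2})}$ is a constant independent of $\ninf$. Letting $\ninf\to\infty$ drives the right-hand side to $0$, so the distance of $\genmultopotargetps$ to $\outputdistribs{\genmultopo}$ vanishes. Invoking the standard fact that a point lies in the closure of a set precisely when its distance to that set is zero, I conclude $\genmultopotargetps\in\closure{\outputdistribs{\genmultopo}}$, and since the list was arbitrary the second inclusion follows.

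The main thing to get right---and really the only step beyond bookkeeping---is the identification of the infimum in \eqref{eq:convergence rate} with the distance-to-set functional associated to $\metric$, together with the closure characterization; once these are in place, the result is a genuine corollary requiring no new estimate, only the admissibility check and the passage to the limit $\ninf\to\infty$.
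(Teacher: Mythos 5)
Your proposal is correct and follows essentially the same route as the paper: the left inclusion is read directly off \cref{th:certifying ps} applied to every admissible $\ninf \geq \ninf_0$, and the right inclusion is obtained by applying \cref{th:convergence} for each $\ninf$ to a fixed element of the intersection, letting $\ninf\to\infty$ to drive its $\metric$-distance to $\outputdistribs{\genmultopo}$ to zero, and then invoking the characterization of the closure (the paper phrases this via an explicit approximating sequence and positive definiteness of $\metric$, but this is the same standard fact you cite). No gaps.
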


\newpage
\section{Application: Correlated Sleeper}
\label{sec:applications}

In this section, we apply the postselected inflation formalism introduced in sections \ref{sec:outer approx} and \ref{sec:post-selected inflation} to the Correlated Sleeper protocol, which was introduced in section \ref{sec:intro sc}. This allows to demonstrate the use of the formalism in a simple example.

\subsection{Feasible region}

\paragraph{Parametrization.} 

It will be useful to parametrize the achievable distributions in the Correlated Sleeper protocol as follows. The symmetry of the protocol and the marginal constraint \eqref{eq:marginal constraint} first give the following lemma. See \app~\ref{sec:sc parametrization} for the proof.
\begin{restatable}{lemma}{LemmaScLambdas}
\label{lem:sc lambdas}
Let $\atargetp{1}{}{}$ and $\atargetp{2}{}{}$ be as in \eqref{eq:sc det tensor param}. 
Define for all $\topoindex \in \{1,2\}$
\begin{equation}
	\lambda_\topoindex := \atargetp{\topoindex}{1}{1}.
\end{equation}
It then holds that
\begin{subequations}
\begin{align}
\atargetp{\topoindex}{2}{2} &= \lambda_{\topoindex}, \\
\atargetp{\topoindex}{1}{2} = \atargetp{\topoindex}{2}{1} &= \frac{1}{2} - \lambda_\topoindex,
%
\end{align}
\end{subequations}
with
\begin{equation}
	\lambda_\topoindex \in \left[0,\frac{1}{2}\right].
\end{equation}
\end{restatable}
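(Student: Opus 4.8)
The plan is to derive two structural properties of each distribution $p_\topoindex$ — an exchange symmetry between its two outputs, and uniform single-output marginals — and then read off the claimed identities by solving the resulting $2\times 2$ linear system. I write the strategy generically as $p_\alice(\outputa|\alpha,\beta)$; the argument uses only its normalization and the marginal constraint \eqref{eq:marginal constraint}, and in fact never uses determinism.

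First I would establish the exchange symmetry $p_\topoindex(\outputa_1,\outputa_2) = p_\topoindex(\outputa_2,\outputa_1)$. For $\topoindex = 1$ the defining integral $\int\dd\alpha\,\dd\beta_1\,\dd\beta_2\, p_\alice(\outputa_1|\alpha,\beta_1)\,p_\alice(\outputa_2|\alpha,\beta_2)$ is manifestly invariant under relabeling the dummy variables $\beta_1\leftrightarrow\beta_2$, which exchanges the roles of $\outputa_1$ and $\outputa_2$; the case $\topoindex = 2$ is identical with $\alpha_1\leftrightarrow\alpha_2$. In particular this gives $p_\topoindex(1,2) = p_\topoindex(2,1)$.

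Next I would show the marginals are uniform, i.e.\ $\sum_{\outputa_2} p_\topoindex(\outputa_1,\outputa_2) = \frac{1}{2}$ for each $\outputa_1$. Summing over $\outputa_2$ collapses the factor $p_\alice(\outputa_2|\alpha,\beta_2)$ to $1$ by normalization; the now-free source integrates to $\int\dd\beta_2 = 1$, and the remaining $\int\dd\alpha\,\dd\beta_1\, p_\alice(\outputa_1|\alpha,\beta_1)$ equals $\frac{1}{2}$ by \eqref{eq:marginal constraint}. With $\lambda_\topoindex := p_\topoindex(1,1)$, the two marginal identities $p_\topoindex(1,1) + p_\topoindex(1,2) = \frac{1}{2}$ and $p_\topoindex(2,1) + p_\topoindex(2,2) = \frac{1}{2}$, combined with the exchange symmetry, yield $p_\topoindex(1,2) = p_\topoindex(2,1) = \frac{1}{2} - \lambda_\topoindex$ and hence $p_\topoindex(2,2) = \lambda_\topoindex$. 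The range $\lambda_\topoindex\in[0,\frac{1}{2}]$ then follows from non-negativity of probabilities: $\lambda_\topoindex = p_\topoindex(1,1)\geq 0$ and $\frac{1}{2} - \lambda_\topoindex = p_\topoindex(1,2)\geq 0$.

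There is no genuine obstacle here; the computation is routine once the two properties are in place, and the only mildly delicate point is noticing that uniform marginals of $p_\topoindex$ follow from the \emph{single-round} marginal constraint \eqref{eq:marginal constraint} via normalization of the free factor. For completeness, a slightly more explicit variant introduces the conditional frequency $g_\outputa(\alpha) := \int\dd\beta\, p_\alice(\outputa|\alpha,\beta)$, writes $p_1(\outputa_1,\outputa_2) = \int\dd\alpha\, g_{\outputa_1}(\alpha)\,g_{\outputa_2}(\alpha)$, and uses $g_1 + g_2 = 1$ together with $\int\dd\alpha\, g_\outputa(\alpha) = \frac{1}{2}$; this reproduces the same identities and, via Cauchy–Schwarz, even the sharper bound $\lambda_\topoindex\geq\frac{1}{4}$, although only $\lambda_\topoindex\in[0,\frac{1}{2}]$ is claimed in the statement.
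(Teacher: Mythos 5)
Your proof is correct and follows essentially the same route as the paper's: exchange symmetry of $p_\topoindex$ plus uniform marginals (derived from the single-round constraint \eqref{eq:marginal constraint}), then solving the resulting $2\times 2$ system and reading off the range from non-negativity. The only difference is that you spell out why the symmetry and uniform marginals hold, which the paper leaves as "apparent" from \eqref{eq:sc det tensor param}; your closing Cauchy--Schwarz remark is exactly the content of the paper's subsequent \cref{lem:bound lambdas}.
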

It is thus sufficient for us to specify the value of the pair $(\lambda_1,\lambda_2)$ to characterize the output behavior of a given strategy used by the agent $\alice$ in the Correlated Sleeper protocol.
In this parametrization, the objective function of the optimization problem \eqref{eq:sc det tensor} is just $\lambda_1 + \lambda_2$.
For instance, the strategy of \eqref{eq:app example strat}, where $\alice$ only looks at her first input, leads to $(\lambda_1 = 1/2, \lambda_2 = 1/4)$ corresponding to a score of $3/4$, while a completely mixed behavior (where $\alice$ ignores her inputs and outputs a random bit) leads to $(\lambda_1 = 1/4, \lambda_2 = 1/4)$, corresponding to a score of $1/2$. Generally speaking, if $(\lambda_1,\lambda_2)$ is feasible with a certain strategy, then using the same strategy while exchanging the role of the two inputs that $\alice$ receives shows that also $(\lambda_2,\lambda_1)$ is feasible.
Furthermore, we prove the following lemma in \app~\ref{sec:sc parametrization}.
\begin{restatable}{lemma}{LemmaBoundLambdas}
	\label{lem:bound lambdas}
	Let $(\lambda_1,\lambda_2)$ be as in \cref{lem:sc lambdas}. It holds for all $\topoindex\in\{1,2\}$ that
	\begin{equation}
		\lambda_\topoindex \geq \frac{1}{4}.
	\end{equation}
\end{restatable}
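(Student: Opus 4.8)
The plan is to rewrite $\lambda_1$ and $\lambda_2$ as second moments of the single-input marginals of $\alice$'s strategy, and then recognize the claimed bound as nothing more than the non-negativity of a variance. By \cref{prop:det app} I may work directly with $\alice$'s strategy $p_\alice$ (whether deterministic or not is irrelevant to the argument). First I would introduce the two partial marginals
\[
g(\alpha) := \int\dd\beta\, p_\alice(1|\alpha,\beta), \qquad h(\beta) := \int\dd\alpha\, p_\alice(1|\alpha,\beta),
\]
each taking values in $[0,1]$ since $p_\alice(1|\cdot,\cdot)\in[0,1]$ and the integration is over the probability measure on $[0,1]$.

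The key step is the factorization of the shared source. Using the expressions for $p_1$ and $p_2$ from \eqref{eq:sleeping correlations plain}, and recalling from \eqref{eq:sc config 1}--\eqref{eq:sc config 2} that in network $\scnetone$ the first input is shared while in $\scnettwo$ the second input is shared, the two independent non-shared integrals decouple. Concretely, $p_1(1,1)=\int\dd\alpha\,\big(\int\dd\beta_1\,p_\alice(1|\alpha,\beta_1)\big)\big(\int\dd\beta_2\,p_\alice(1|\alpha,\beta_2)\big)$, and similarly for $p_2(1,1)$, so that
\[
\lambda_1 = \atargetp{1}{1}{1} = \int\dd\alpha\, g(\alpha)^2, \qquad \lambda_2 = \atargetp{2}{1}{1} = \int\dd\beta\, h(\beta)^2.
\]

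It then remains to invoke the marginal constraint. Evaluating \eqref{eq:marginal constraint} (equivalently \eqref{eq:sleeping correlations marginal}) at $\outputa=1$ gives precisely $\int\dd\alpha\, g(\alpha) = \int\dd\beta\, h(\beta) = \tfrac12$. Since $\dd\alpha$ is a probability measure on $[0,1]$, Jensen's inequality for the convex map $x\mapsto x^2$ (equivalently Cauchy--Schwarz, or $\mathbb{E}[g^2]-(\mathbb{E}[g])^2=\mathrm{Var}(g)\geq 0$) yields
\[
\lambda_1 = \int\dd\alpha\, g(\alpha)^2 \geq \left(\int\dd\alpha\, g(\alpha)\right)^2 = \frac14,
\]
and the identical computation with $h$ gives $\lambda_2\geq\tfrac14$, which is the claim.

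The inequality itself is entirely elementary, so I expect the only real obstacle to be bookkeeping rather than mathematics: one must correctly identify \emph{which} input is the shared (faithful) one in each of $\scnetone$ and $\scnetone$, so that the remaining two independent integrals genuinely decouple into $g(\alpha)^2$ (respectively $h(\beta)^2$) rather than a generic two-variable integral. Because the paper works throughout with Riemann integrals, I would also briefly note that $g$ and $h$ are Riemann integrable (being marginals of the integrable strategy), which legitimizes both the factorization and the application of Jensen's inequality.
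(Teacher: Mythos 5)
Your proof is correct. It is close in spirit to the paper's argument but takes a more direct route. The paper bounds the off-diagonal entry: it writes $p_\topoindex(1,2)=\int\dd\alpha\, f_1^{(\topoindex)}(\alpha)f_2^{(\topoindex)}(\alpha)$ with $f_i^{(\topoindex)}$ the single-shared-input marginal of outputting $i$, applies Cauchy--Schwarz in $L^2([0,1])$ together with $\|f_i^{(\topoindex)}\|_2^2=p_\topoindex(i,i)=p_\topoindex(1,1)$ to get $p_\topoindex(1,2)\leq p_\topoindex(1,1)$, and only then invokes the marginal relation $p_\topoindex(1,1)+p_\topoindex(1,2)=\tfrac12$ to conclude $\lambda_\topoindex\geq\tfrac14$. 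You instead bound the diagonal entry directly: $\lambda_\topoindex=\int g^2\geq(\int g)^2=\tfrac14$ by Jensen, using the marginal constraint in the form $\int g=\tfrac12$. Both are the same underlying $L^2$ inequality (your $g$ is the paper's $f_1^{(\topoindex)}$ and the paper's $f_2^{(\topoindex)}=1-g$), but your version is a one-step variance bound that does not need the symmetry relations $p_\topoindex(1,2)=p_\topoindex(2,1)$ and $p_\topoindex(1,1)=p_\topoindex(2,2)$ from the parametrization lemma, only the factorization of the non-shared integrals and the uniform marginal. That is a small but genuine economy, and it makes the interpretation (``any deviation from the completely mixed marginal behaviour increases the diagonal'') slightly more transparent. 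Your bookkeeping of which input is shared in each network is right, and the Riemann-integrability remark is the appropriate level of care; the only slip is the typographical repetition of $\scnetone$ where you mean $\scnettwo$ in the closing paragraph.
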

This bound can be equivalently stated as $\atargetp{\topoindex}{1}{2} \leq 1/4$, which means that the best chance that $\alice$ has to obtain \emph{distinct} outcomes in the two round is to adopt a completely mixed behavior --- any deviation from this will inevitably lead to increased correlations.

\paragraph{Multi-network scenario.}

Recall that in the formulation \eqref{eq:sc causal compat formulation}, we defined the feasible region of the Correlated Sleeper protocol with the causal compatibility problem
\begin{equation}
\left(\atargetp{1}{}{},\atargetp{2}{}{},\isource{\macrodsunif{2}}{}\right)
\in\outputdistribs{\scnetone,\scnettwo,\scnetthree},
\end{equation}
where the relevant networks were defined in equations \eqref{eq:sc config 1}-\eqref{eq:sc config 3}. 
Here, we want to add an extra network that will look redundant at first, but that will eventually yield non-trivial feasibility constraints once considered from the perspective of the postselected inflation test.
This network, whose graph is shown in \cref{fig:sc config 4}, is one with three disconnected components where, in the first, the two $\alice$'s are connected with their left input, in the second the two $\alice$'s are connected with their right input, and the third is an isolated $\alice$:
\begin{subequations}
\begin{align}
\scnetfour &= (\pcount = 1,\npcount_4 = 5,\scount_4 = 8,\pmap_4,\cmap_4), \\
\forall \npindex \in \{1,\dots,5\},\ \pmap_4(\npindex) &= 1, &\netnote{only one strategy}\\
\cmap_4(1) &= (1,2),\ \cmap_4(2) = (1,3), &\netnote{first input connected}\\
\cmap_4(3) &= (4,6),\ \cmap_4(4) = (5,6), &\netnote{second input connected}\\
\cmap_4(5) &= (7,8). &\netnote{isolated agent}
\end{align}
\end{subequations}
In this network, we want the distribution $\atargetp{1}{}{} \atargetp{2}{}{} \isource{\macrodsunif{2}}{}$, with this output ordering, to be feasible.
\begin{figure}[h!]
\centering
\begin{equation*}
\centertikz{
\node[agent] (a1) {$A$};
\node[agent] (a2) [right=20pt of a1] {$A$};
\node[source] (alpha) [below left=30pt and 10pt of a1] {$1$};
\drawabsleg (alpha) -- (a1);
\drawabsleg (alpha) -- (a2);
\node[source] (beta1) [below right=30pt and 10pt of a1] {$2$};
\drawabsleg (beta1) -- (a1);
\node[source] (beta2) [below right=30pt and 10pt of a2] {$3$};
\drawabsleg (beta2) -- (a2);
}
\quad
\centertikz{
\node[agent] (a1) {$A$};
\node[agent] (a2) [right=20pt of a1] {$A$};
\node[source] (alpha1) [below left=30pt and 10pt of a1] {$4$};
\drawabsleg (alpha1) -- (a1);
\node[source] (alpha2) [below left=30pt and 10pt of a2] {$5$};
\drawabsleg (alpha2) -- (a2);
\node[source] (beta) [below right=30pt and 10pt of a2] {$6$};
\drawabsleg (beta) -- (a1);
\drawabsleg (beta) -- (a2);
}
\quad
\centertikz{
\node[agent] (a) {$A$};
\node[source] (alpha) [below left=30pt and 10pt of a] {$7$};
\node[source] (beta) [below right=30pt and 10pt of a] {$8$};
\drawabsleg (alpha) -- (a);
\drawabsleg (beta) -- (a);
}
\end{equation*}
\caption{The network $\scnetfour$.}
\label{fig:sc config 4}
\end{figure}
%

We are now ready to define the feasible region $\appfeasible$
in the form that we will feed in to the postselected inflation:
\begin{equation}
\label{eq:app feasible}
\appfeasible := \left\{
 \Big(\atargetp{1}{}{},\atargetp{2}{}{}\Big)\ \middle|\ 
\left(\atargetp{1}{}{},\ \atargetp{2}{}{},\ \isource{\macrodsunif{2}}{},\ \atargetp{1}{}{}\atargetp{2}{}{}\isource{\macrodsunif{2}}{}\right) \in \outputdistribs{\left(\gentopo_\topoindex^{(\textup{s})}\right)_{\topoindex=1}^4}\right\}. 
\end{equation}
Furthermore, the subset of $[1/4,1/2]^{\times 2}$ that corresponds to the parametrization of $\appfeasible$ according to \cref{lem:sc lambdas,lem:bound lambdas} will be denoted $\appfeasibleparam$.

\subsection{Outer approximation of the feasible region}
\label{sec:sc outer approx}

The outer approximation of the feasible region will be constructed thanks to a postselected inflation feasibility problem, but with two twists with respect to \cref{def:postselected inflation set}:
\begin{itemize}
\item We will not use the same parameter $\ninfcons$ across the four networks: this is to allow $\ninf$ to be not too large. Essentially, we will take the reasonable value of $\ninf = 4$ and, for each network, we will let $\ninfcons$ be as large as possible given the postselection and $\ninf = 4$. We indicate these values of $\ninfcons$ in the equations \eqref{eq:app inf set}.
\item We will not postselect on \emph{all} the sources taking different values as in \cref{def:postselected inflation set}, but rather introduce a more minimal postselection scheme which allows, according to the arguments of \cref{sec:fixing the postselection}, to have a tighter outer approximation. In that sense, the mention of ``inflation'' in the equations \eqref{eq:app inf set} is to be understood as a small generalization of \cref{def:postselected inflation set}.
\end{itemize}

Recall that the style difference between the dashed and solid edges is there to guide the eye but implies the same operation of tensor contraction.
%
\begin{definition}
\label{def:app inf set}
We let our outer approximation of the feasible region $\appfeasible$ be the set $\appinffeasible$ defined as
\begin{subequations}
\label{eq:app inf set}
\begin{align}
\appinffeasible := \Bigg\{ \Big(\atargetp{1}{}{},\atargetp{2}{}{}\Big)
\ \Bigg|\ 
\exists \threedstrat{\alice}{}{}{}{}, \isource{\hv}{} \textup{ s.t. }& \\ \nonumber\\
\label{eq:app inf set first eq}
\centertikz{
\node (a0) {};
\foreach \x/\y in {1/0,2/1,3/2,4/3}
{
	\node[detnode] (a\x) [right=\appplayeroffset of a\y] {$\alice$};
	\node[voidnode] (o\x) [above=\outcomevspace of a\x] {};
	\drawleg (a\x.north) -- (o\x.south);
	\node[tensornode] (beta\x) [below right=\appbetapos of a\x] {$\macrodsunif{4}$};
	\node[copynode] (copybeta\x) [above=\outcomevspace of beta\x] {};
	\drawleg (beta\x.north) -- (copybeta\x);
	\drawleg (copybeta\x) -- \AnchorTwoThree{a\x}{south};
}
\foreach \x in {1,3} 
{
	\node[tensornode] (alpha\x) [below left=\appalphapos of a\x] {$\macrodsunif{4}$};
	\node[copynode] (copyalpha\x) [above=\outcomevspace of alpha\x] {};
	\drawleg (alpha\x.north) -- (copyalpha\x);
}
\foreach \x/\y in {1/1,1/2,3/3,3/4}
	\drawleg (copyalpha\x) -- \AnchorOneThree{a\y}{south};
\node[psnode] (psbeta) [below left=\appbetaps of a1] {\psdiffname{4}};
\foreach \x/\y/\z in {1/1.00/0.00,
					  2/0.67/0.33,
  					  3/0.33/0.67,
  					  4/0.00/1.00}
	\drawdashedleg (copybeta\x) to [out=170,in=340] ($\y*(psbeta.south west)+\z*(psbeta.south east)$);
\node[psnode] (psalpha) [below left=\appalphaps of a1] {\psdiffname{2}};
\drawdashedleg (copyalpha1) to [out=170,in=340] (psalpha.south west);
\drawdashedleg (copyalpha3) to [out=170,in=340] (psalpha.south east);
\node[tensornode] (hv) [below right=\apphvpos of a4] {$\hv$};
\node[copynode] (copyhv) [above=\outcomevspace of hv] {};
\drawleg (hv.north) -- (copyhv);
\foreach \x in {1,2,3,4}
	\drawdashedleg (copyhv) to [out=180,in=340] ($0.2*(a\x.south west)+0.8*(a\x.south east)$);
}
&=
\atargetp{1}{}{}\atargetp{1}{}{} \quad\netnotesmall{$\ninfcons=2$ inf.\  of $\scnetone$},\\ \nonumber\\
\centertikz{
\node (a0) {};
\foreach \x/\y in {1/0,2/1,3/2,4/3}
{
	\node[detnode] (a\x) [right=\appplayeroffset of a\y] {$\alice$};
	\node[voidnode] (o\x) [above=\outcomevspace of a\x] {};
	\drawleg (a\x.north) -- (o\x.south);
	\node[tensornode] (alpha\x) [below left=\appalphapos of a\x] {$\macrodsunif{4}$};
	\node[copynode] (copyalpha\x) [above=\outcomevspace of alpha\x] {};
	\drawleg (alpha\x.north) -- (copyalpha\x);
	\drawleg (copyalpha\x) -- \AnchorOneThree{a\x}{south};
}
\foreach \x in {2,4}
{
	\node[tensornode] (beta\x) [below right=\appbetapos of a\x] {$\macrodsunif{4}$};
	\node[copynode] (copybeta\x) [above=\outcomevspace of beta\x] {};
	\drawleg (beta\x.north) -- (copybeta\x);
}
\drawleg (copybeta2) -- \AnchorTwoThree{a2}{south};
\drawleg (copybeta4) -- \AnchorTwoThree{a4}{south};
\drawleg (copybeta2) to [out=180,in=270] ($0.5*(a1.south west)+0.5*(a1.south east)$);
\drawleg (copybeta4) to [out=180,in=270] ($0.5*(a3.south west)+0.5*(a3.south east)$);
\node[psnode] (psalpha) [below left=\appalphaps of a1] {\psdiffname{4}};
\foreach \x/\y/\z in {1/1.00/0.00,
					  2/0.67/0.33,
  					  3/0.33/0.67,
  					  4/0.00/1.00}
	\drawdashedleg (copyalpha\x) to [out=170,in=340] ($\y*(psalpha.south west)+\z*(psalpha.south east)$);
\node[psnode] (psbeta) [below left=\appbetaps of a1] {\psdiffname{2}};
\drawdashedleg (copybeta2) to [out=170,in=340] (psbeta.south west);
\drawdashedleg (copybeta4) to [out=170,in=340] (psbeta.south east);
\node[tensornode] (hv) [below right=\apphvpos of a4] {$\hv$};
\node[copynode] (copyhv) [above=\outcomevspace of hv] {};
\drawleg (hv.north) -- (copyhv);
\foreach \x in {1,2,3,4}
	\drawdashedleg (copyhv) to [out=180,in=340] ($0.2*(a\x.south west)+0.8*(a\x.south east)$);
}
&=
\atargetp{2}{}{}\atargetp{2}{}{} \quad\netnotesmall{$\ninfcons=2$ inf.\  of $\scnettwo$}, \\ \nonumber\\
\centertikz{
\node (a0) {};
\foreach \x/\y in {1/0,2/1,3/2,4/3}
{
	\node[detnode] (a\x) [right=\appplayeroffset of a\y] {$\alice$};
	\node[voidnode] (o\x) [above=\outcomevspace of a\x] {};
	\drawleg (a\x.north) -- (o\x.south);
	\node[tensornode] (alpha\x) [below left=\appalphapos of a\x] {$\macrodsunif{4}$};
	\node[copynode] (copyalpha\x) [above=\outcomevspace of alpha\x] {};
	\drawleg (copyalpha\x) -- (alpha\x.north);
	\drawleg (copyalpha\x) -- \AnchorOneThree{a\x}{south};
	\node[tensornode] (beta\x) [below right=\appbetapos of a\x] {$\macrodsunif{4}$};
	\node[copynode] (copybeta\x) [above=\outcomevspace of beta\x] {};
	\drawleg (copybeta\x) -- (beta\x.north);
	\drawleg (copybeta\x) -- \AnchorTwoThree{a\x}{south};
}
\node[psnode] (psbeta) [below left=\appbetaps of a1] {\psdiffname{4}};
\foreach \x/\y/\z in {1/1.00/0.00,
					  2/0.67/0.33,
  					  3/0.33/0.67,
  					  4/0.00/1.00}
	\drawdashedleg (copybeta\x) to [out=170,in=340] ($\y*(psbeta.south west)+\z*(psbeta.south east)$);
\node[psnode] (psalpha) [below left=\appalphaps of a1] {\psdiffname{4}};
\foreach \x/\y/\z in {1/1.00/0.00,
					  2/0.67/0.33,
					  3/0.33/0.67,
					  4/0.00/1.00}
	\drawdashedleg (copyalpha\x) to [out=170,in=340] ($\y*(psalpha.south west)+\z*(psalpha.south east)$);
\node[tensornode] (hv) [below right=\apphvpos of a4] {$\hv$};
\node[copynode] (copyhv) [above=\outcomevspace of hv] {};
\drawleg (hv.north) -- (copyhv);
\foreach \x in {1,2,3,4}
	\drawdashedleg (copyhv) to [out=180,in=340] ($0.2*(a\x.south west)+0.8*(a\x.south east)$);
}
&= \isource{\macrodsunif{2}}{}\isource{\macrodsunif{2}}{}\isource{\macrodsunif{2}}{}\isource{\macrodsunif{2}}{} \quad\netnotesmall{$\ninfcons=4$ inf.\  of $\scnetthree$},\\ \nonumber\\
\centertikz{
\node (a0) {};
\foreach \x/\y in {1/0,2/1,3/2,4/3,5/4}
{
	\node[detnode] (a\x) [right=\appplayeroffset of a\y] {$\alice$};
	\node[voidnode] (o\x) [above=\outcomevspace of a\x] {};
	\drawleg (a\x.north) -- (o\x.south);
}
\foreach \x in {1,3,4,5}
{
	\node[tensornode] (alpha\x) [below left=\appalphapos of a\x] {$\macrodsunif{4}$};
	\node[copynode] (copyalpha\x) [above=\outcomevspace of alpha\x] {};
	\drawleg (copyalpha\x) -- (alpha\x.north);
	\drawleg (copyalpha\x) -- \AnchorOneThree{a\x}{south};
}
\drawleg (copyalpha1) -- \AnchorOneThree{a2}{south};
\foreach \x in {1,2,4,5}
{
	\node[tensornode] (beta\x) [below right=\appbetapos of a\x] {$\macrodsunif{4}$};
	\node[copynode] (copybeta\x) [above=\outcomevspace of beta\x] {};
	\drawleg (copybeta\x) -- (beta\x.north);
	\drawleg (copybeta\x) -- \AnchorTwoThree{a\x}{south};
}
\drawleg (copybeta4) to [out=180,in=270] ($0.5*(a3.south west)+0.5*(a3.south east)$);
\node[psnode] (psalpha) [below left=\appalphaps of a1] {\psdiffname{4}};
\foreach \x/\y/\z in {1/1.00/0.00,
					  3/0.67/0.33,
  					  4/0.33/0.67,
  					  5/0.00/1.00}
	\drawdashedleg (copyalpha\x) to [out=170,in=340] ($\y*(psalpha.south west)+\z*(psalpha.south east)$);
\node[psnode] (psbeta) [below left=\appbetaps of a1] {\psdiffname{4}};
\foreach \x/\y/\z in {1/1.00/0.00,
					  2/0.67/0.33,
  					  4/0.33/0.67,
  					  5/0.00/1.00}
	\drawdashedleg (copybeta\x) to [out=170,in=340] ($\y*(psbeta.south west)+\z*(psbeta.south east)$);
\node[tensornode] (hv) [below right=\apphvpos of a5] {$\hv$};
\node[copynode] (copyhv) [above=\outcomevspace of hv] {};
\drawleg (hv.north) -- (copyhv);
\foreach \x in {1,2,3,4,5}
	\drawdashedleg (copyhv) to [out=180,in=340] ($0.2*(a\x.south west)+0.8*(a\x.south east)$);	
}
&=
\atargetp{1}{}{}\atargetp{2}{}{}\isource{\macrodsunif{2}}{}\quad\netnotesmall{$\ninfcons=1$ inf.\ of $\scnetfour$}\Bigg\}. \label{eq:app inf set last eq}
\end{align}
\end{subequations}
The corresponding subset of $[0,1/2]^{\times 2}$ in the parametrization of \cref{lem:sc lambdas} is denoted $\appinffeasibleparam$.
\end{definition}

Since we used a slightly different postselection compared to \cref{def:postselected inflation set}, let us briefly state that we still have an outer approximation of the feasible region:
\begin{lemma}
\label{lem:app inf feasible}
It holds that
\begin{equation}
\appfeasible \subseteq \appinffeasible,
\end{equation}
or, in the parametrization of \cref{lem:sc lambdas}, that $\appfeasibleparam\subseteq\appinffeasibleparam$.
\end{lemma}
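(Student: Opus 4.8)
The plan is to follow, essentially verbatim, the ``discrete index selects a fresh continuous sample'' construction from the proof of \cref{lem:simple proof}, extended to the four networks of \cref{def:app inf set}. First I would unfold the hypothesis: a pair $\big(\atargetp{1}{}{},\atargetp{2}{}{}\big)\in\appfeasible$ means, by \eqref{eq:app feasible} and \cref{def:causal compat}, that there is a \emph{single} deterministic two-input strategy $\dettwostrat{\alice_0}{}{}{}$ (here $\pcount=1$) that simultaneously realizes $\atargetp{1}{}{}$ through $\scnetone$, $\atargetp{2}{}{}$ through $\scnettwo$, the uniform marginal $\isource{\macrodsunif{2}}{}$ through $\scnetthree$, and the product $\atargetp{1}{}{}\atargetp{2}{}{}\isource{\macrodsunif{2}}{}$ through $\scnetfour$. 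The goal is then to produce inflation tensors $\threedstrat{\alice}{}{}{}{}$ and $\isource{\hv}{}$ witnessing the four equalities \eqref{eq:app inf set first eq}--\eqref{eq:app inf set last eq}.

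I would let $\isource{\hv}{}$ emit two \emph{independent} pools of $\ninf=4$ i.i.d.\ continuous draws, a ``left'' pool $\vec\ell=(\ell_1,\dots,\ell_4)$ and a ``right'' pool $\vec r=(r_1,\dots,r_4)$, each $\ell_i,r_j$ distributed as $\isource{\maxunif}{}$, and define the inflated strategy by $\threedstrat{\alice}{\outputa}{i}{j}{(\vec\ell,\vec r)} := \dettwostrat{\alice_0}{\outputa}{\ell_i}{r_j}$: each agent uses the discrete value $i\in\{1,\dots,4\}$ it reads on its left leg to pick the continuous sample $\ell_i$, and likewise $r_j$ on its right leg, feeding these into the original strategy. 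The separation into a left and a right pool is the one genuinely important design choice here, and I would flag it: the minimal postselections of \cref{def:app inf set} only force the left-type source labels to be pairwise distinct among themselves, and the right-type labels among themselves, with \emph{no} cross-constraint; a single shared pool could therefore accidentally correlate a left leg of one component with a right leg of another (e.g.\ in the $\scnetfour$ inflation), whereas with two pools $\ell_i$ and $r_j$ are independent draws regardless of whether $i=j$ numerically. This matches the original networks, in which no agent reads the same physical source on both legs.

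The verification proceeds equation by equation, always by the same mechanism: the postselection forces the discrete source-labels carried by distinct copies (or, in the $\scnetfour$ case, distinct connected components) to be pairwise distinct, so those copies/components read off \emph{disjoint} subsets of the draws $\vec\ell,\vec r$; integrating over the $\isource{\maxunif}{}$ draws then factorizes the left-hand side into the target product, while within each copy/component the shared-versus-independent label pattern reproduces exactly the source structure of the corresponding original network. Concretely: for the $\ninfcons=2$ inflation of $\scnetone$, the postselection makes the two left-labels distinct and the four right-labels distinct, so the two copies use disjoint pools while inside each copy the shared left-label gives the shared continuous source and the two distinct right-labels give the two independent right sources of $\scnetone$, yielding $\atargetp{1}{}{}\atargetp{1}{}{}$; the $\scnettwo$ inflation is the mirror image (left and right exchanged), giving $\atargetp{2}{}{}\atargetp{2}{}{}$; the $\ninfcons=4$ inflation of $\scnetthree$ makes all four isolated agents use disjoint pools, each reducing to the single-round marginal of $\alice_0$, which equals $\isource{\macrodsunif{2}}{}$ by the $\scnetthree$-constraint on $\alice_0$, giving four independent copies of $\isource{\macrodsunif{2}}{}$; and the single-copy inflation of $\scnetfour$ splits into its three components, which use pairwise-disjoint left- and right-indices by the two postselections and reproduce $\atargetp{1}{}{}$, $\atargetp{2}{}{}$ and $\isource{\macrodsunif{2}}{}$ respectively, giving $\atargetp{1}{}{}\atargetp{2}{}{}\isource{\macrodsunif{2}}{}$.

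The main obstacle I anticipate is not conceptual but is this index bookkeeping for the \emph{minimal} postselection: unlike \cref{def:postselected inflation set}, here not every source is forced distinct, so for each of the four equations I must confirm that the sources that \emph{are} postselected suffice both to separate the copies/components (so the integral factorizes) and, within each copy/component, to recover the exact shared/independent pattern of the underlying network. Once this tracking is carried out, the factorization over independent $\isource{\maxunif}{}$ draws and the reduction of each factor to $\atargetp{1}{}{}$, $\atargetp{2}{}{}$ or $\isource{\macrodsunif{2}}{}$ are routine, exactly as in \cref{lem:simple proof}. The parametrized statement $\appfeasibleparam\subseteq\appinffeasibleparam$ then follows at once, since the $(\lambda_1,\lambda_2)$-coordinates of \cref{lem:sc lambdas} are merely a relabeling of $\big(\atargetp{1}{}{},\atargetp{2}{}{}\big)$.
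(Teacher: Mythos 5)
Your construction is exactly the one the paper uses: the paper's witness sets $\isource{\hv}{}$ to a tuple of eight independent $\isource{\maxunif}{}$ draws split into a left pool $(\csvalalpha_1,\dots,\csvalalpha_4)$ and a right pool $(\csvalbeta_1,\dots,\csvalbeta_4)$, and defines $\threedstrat{\alice}{\outputa}{\dsvalalpha}{\dsvalbeta}{(\csvalalpha_1,\dots,\csvalbeta_4)} := \dettwostrat{\alice_0}{\outputa}{\csvalalpha_\dsvalalpha}{\csvalbeta_\dsvalbeta}$, which is your $(\vec\ell,\vec r)$ construction with different names. Your equation-by-equation verification (which the paper leaves implicit, deferring to \cref{lem:simple proof} and \cref{th:certifying ps}) is correct, including the observation that the two pools must be kept separate because the minimal postselections only enforce distinctness within the left-type and within the right-type labels.
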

\begin{proof}
The proof is analogous to those of \cref{lem:simple proof} and \cref{th:certifying ps}.
If we let $\dettwostrat{\alice_0}{}{}{}$ be the strategy that establishes that $\Big(\atargetp{1}{}{},\atargetp{2}{}{}\Big) \in \appfeasible$ (as in equation \eqref{eq:sc det tensor param}), then the following tensors (defined for all $\csvalalpha_1,\dots,\csvalalpha_4,\csvalbeta_1,\dots,\csvalbeta_4 \in [0,1]$, $\outputa \in \{1,2\}$, $\dsvalalpha,\dsvalbeta \in \{1,\dots,4\}$)
\begin{subequations}
\label{eq:app inf winning inf}
\begin{align}
\isource{\hv}{(\csvalalpha_1,\csvalalpha_2,\csvalalpha_3,\csvalalpha_4,\csvalbeta_1,\csvalbeta_2,\csvalbeta_3,\csvalbeta_4)} &:= \isource{\maxunif}{\csvalalpha_1}\isource{\maxunif}{\csvalalpha_2}\isource{\maxunif}{\csvalalpha_3}\isource{\maxunif}{\csvalalpha_4}
\isource{\maxunif}{\csvalbeta_1}\isource{\maxunif}{\csvalbeta_2}\isource{\maxunif}{\csvalbeta_3}\isource{\maxunif}{\csvalbeta_4},\\
\threedstrat{\alice}{\outputa}{\dsvalalpha}{\dsvalbeta}{(\csvalalpha_1,\dots,\csvalbeta_4)} &:=
\dettwostrat{\alice_0}{\outputa}{\csvalalpha_\dsvalalpha}{\csvalbeta_\dsvalbeta}
\end{align}
\end{subequations}
will verify equations \eqref{eq:app inf set first eq}-\eqref{eq:app inf set last eq}, thus proving that $\Big(\atargetp{1}{}{},\atargetp{2}{}{}\Big) \in \appinffeasible$.
\end{proof}

\subsection{Details about the implementation}
\label{sec:details about the implementation}

\paragraph{Semi-explicit linear program.}

Let us give some additional details about the formulation of the question $(\lambda_1,\lambda_2) \in \appinffeasibleparam$ as a linear program. We can assume without loss of generality that the source $\isource{\hv}{}$ appearing in equations \eqref{eq:app inf set} has as outputs the possible deterministic strategies that $\alice$ will then use. We can conveniently represent these deterministic strategies as $4\times4$ matrices with entries $1$ or $2$, corresponding to $\alice$ outputting the value of the matrix at position $\dsvalalpha,\dsvalbeta$ upon receiving the inputs $\dsvalalpha,\dsvalbeta$ from the sources $\isource{\macrodsunif{4}}{}$. Let us denote such matrices as $\appmat$ and the set thereof as $\appmatset$ (containing a priori $2^{16} = 65 536$ elements).
We can thus rewrite the integral over the outputs of $\hv$ as a finite sum where we make the first couple terms explicit:
\begin{equation}
\label{eq:app hvval}
\int\dd\hvval \isource{\hv}{\hvval} = \sum_{\appmat \in \appmatset} \isource{\hv}{\appmat} = 
\appsource{\hv}{\appfirstmatrix} + \appsource{\hv}{\appsecondmatrix} + \dots
\end{equation}
Testing for $(\lambda_1,\lambda_2) \in\appinffeasibleparam$ thus amounts to optimizing the coefficients $\isource{\hv}{\appmat}$ so that they fulfill the conditions \eqref{eq:app inf set first eq}-\eqref{eq:app inf set last eq}. 
For instance, the $1,1,1,1$ component of equation \eqref{eq:app inf set first eq} looks like:
\begin{multline}
\sum_{\appmat\in\appmatset} \isource{\hv}{\appmat}\cdot \frac{1}{4\cdot3}\cdot \frac{1}{4!} \sum_{\substack{
\dsvalalpha_1,\dsvalalpha_2,\dsvalbeta_1,\dots,\dsvalbeta_4 \in \{1,\dots,4\} \\
\dsvalalpha_1,\dsvalalpha_2 \allneq \\
\dsvalbeta_1,\dots,\dsvalbeta_4 \allneq
}}
\delta(\appmat_{\dsvalalpha_1,\dsvalbeta_1} = 1)
\delta(\appmat_{\dsvalalpha_1,\dsvalbeta_2} = 1)
\delta(\appmat_{\dsvalalpha_2,\dsvalbeta_3} = 1)
\delta(\appmat_{\dsvalalpha_2,\dsvalbeta_4} = 1)
\\
=
1 \cdot \appsource{\hv}{\appfirstmatrix}
+ \frac{3}{4} \cdot \appsource{\hv}{\appsecondmatrix} + \ \dots \ 
\overset{!}{=} \atargetp{1}{1}{1}\atargetp{1}{1}{1} = \lambda_1^2.
\end{multline}

\paragraph{Symmetry reduction.}

We can in fact reduce the variable set by noting that any two $\appmat_1,\appmat_2 \in \appmatset$ where $\appmat_2$ can be obtained by permuting the rows and columns of $\appmat_1$ will yield the same behavior within any of equations \eqref{eq:app inf set first eq}-\eqref{eq:app inf set last eq}. We can thus restrict without loss of generality the source $\isource{\hv}{}$ to only have outputs in the subset $\appmatsetreduced \subset \appmatset$ of representatives of the orbits of $\appmatset$ under the group action induced by swapping rows and columns. This subset $\appmatsetreduced$ is found numerically to have cardinality 317.
We can furthermore remove certain redundant components of the equations \eqref{eq:app inf set first eq}-\eqref{eq:app inf set last eq}: indeed, we can see from the structure of the left-hand side tensor contractions that certain permutations of the outputs of the agents gives the very same constraints. For instance, in \eqref{eq:app inf set first eq}, we have that the component $(\outputa_1,\outputa_2,\outputa_3,\outputa_4)$ will give the same constraint as that of $(\outputa_2,\outputa_1,\outputa_3,\outputa_4)$ as well as $(\outputa_1,\outputa_2,\outputa_4,\outputa_3)$ and $(\outputa_3,\outputa_4,\outputa_1,\outputa_2)$. This only gives as useful constraints the $(1,1,1,1)$, $(1,1,1,2)$, $(1,2,1,2)$, $(1,1,2,2)$, $(1,2,2,2)$ and $(2,2,2,2)$ components of \eqref{eq:app inf set first eq}. However, one of these equations is trivially verified thanks to the normalization constraint on $\isource{\hv}{}$, so that we can for instance remove the $(2,2,2,2)$ component. 
We summarize the linear program taking into account these symmetry reductions in \app~\ref{sec:app linear program explicit}.

\begin{figure}[h]
\centering
\includegraphics{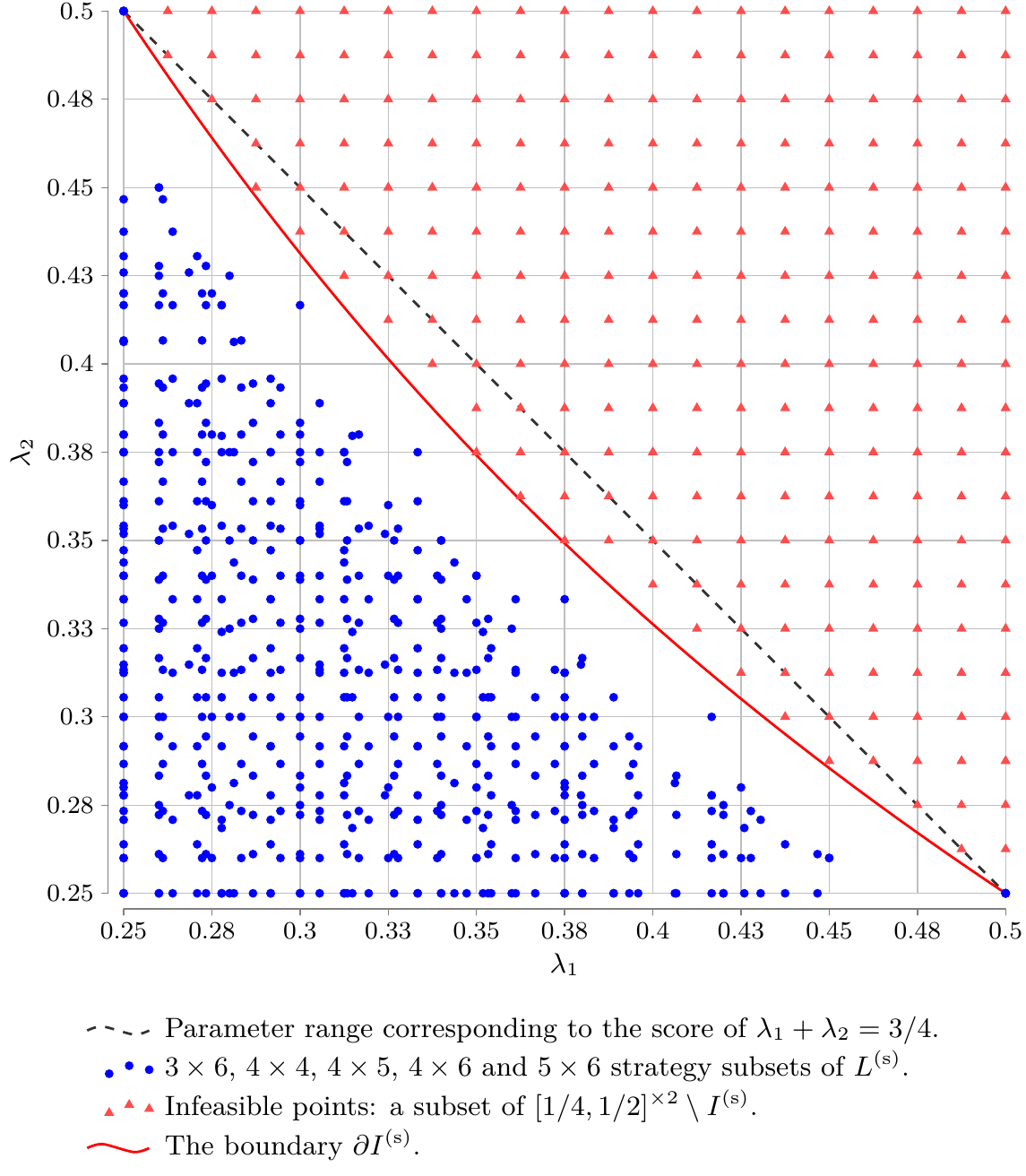}
\caption{Numerical results: we show a sample of points belonging to $\appfeasibleparam$, as well as a sample of points of $[1/4,1/2]^{\times 2} \setminus \appinffeasibleparam$, which are points that are guaranteed to lie outside of $\appfeasibleparam$ and thus correspond to infeasible correlations.}
\label{fig:main plot}
\end{figure}

\subsection{Numerical results: inner and outer approximations}

We explore the inside of $\appfeasibleparam$ by looking at deterministic strategies that process their continuous left (resp.\ right) input into a uniformly distributed discrete random variable over the set $\{1,...,n\}$ (resp.\ $\{1,...,m\}$), and refer to the corresponding set of strategies as ``the $n\times m$ strategies''. There are $2^{nm}$ such strategies without the marginal constraint \eqref{eq:marginal constraint}. To implement the marginal constraint, we first make sure that $nm$ is an even integer, and we then fill the strategy with $nm/2$ 1's and $nm/2$ 2's, for a total of $nm/2 \choose 2$ deterministic $n\times m$ strategies that verify the marginal constraint \eqref{eq:marginal constraint}. The resulting points, for reasonable values of $n$ and $m$, are shown in \cref{fig:main plot}: they populate the bottom left corner of the parameter space, corresponding to a score $\lambda_1 + \lambda_2$ less than $3/4$. The density of these points is relatively low --- it could well be at this stage that higher values of $n$ and $m$ would allow for better scores.


To certify that this is not the case, and gain further insights into the geometrical structure of the feasible region, we scan the $[1/4,1/2]^{\times 2}$ square by discretizing it into a uniform mesh, and keep in memory the points of the mesh that are found numerically to be incompatible with an inflation of the form of \cref{def:app inf set}, i.e., those points for which the corresponding linear program is found numerically to be infeasible: such points are guaranteed to lie outside of $\appfeasibleparam$. A sample of these points are shown in \cref{fig:main plot}. The infeasible region $[1/4,1/2]^{\times 2}\setminus\appinffeasibleparam$ seems to have a smooth shape: let us assume that this is the case. We can then track the boundary $\setboundary\appinffeasibleparam$ of the set $\appinffeasibleparam$ efficiently: for each fixed $\lambda_1 \in [1/4,1/2]$, we run a dichotomic search on $\lambda_2$ to find the threshold value between feasibility and infeasibility of the linear program corresponding to \cref{def:app inf set}. The resulting line (which is strictly speaking a dense mesh of data points) is shown in \cref{fig:main plot} as well.
We can already see from the data of the infeasible region, assuming that all our meshes were sufficiently fine-grained, that it looks like the best achievable score is $\lambda_1 + \lambda_2 = 3/4$ --- we draw the corresponding parameter range to guide the eye.
Additionally, in \app~\ref{sec:extended plot}, we provide an extended plot that shows the feasibility with respect to the inflation of \cref{def:app inf set} of more general distributions in the extended square $[0,1/2]^{\times 2}$: indeed, the inflation of \cref{def:app inf set} is not expected to, and does not, capture exactly the bound $\lambda_\topoindex \geq 1/4$ of \cref{lem:bound lambdas}.


\subsection{Solving the optimization}

We now present the linear program relaxation\footnote{Since \eqref{eq:app inf primal} is a feasible linear program, we know the maximum is achievable, so we replace the supremum with a maximum.} to the optimization \eqref{eq:sc det tensor} that we will use:
\begin{subequations}
\label{eq:app inf primal}
\begin{align}
\label{eq:app inf primal obj}
	\aprimal := \max_{\scalebox{0.9}{$\isource{\hv}{}, \threedstrat{\alice}{}{}{}{}{}$}}&
	\frac{1}{2}\sum_{\outputa\in\{1,2\}}\!\left(\rule{0pt}{60pt}\right.\!\!
\appprimalobjone{0}
	+\!\!\!\!
\appprimalobjtwo{0}\!\!\left.\rule{0pt}{60pt}\right)\!\!\! \\
\nonumber\\
	&\textup{s.t. } 
\appprimalconstraint{0}
	=
	\isource{\macrodsunif{2}}{}
	\isource{\macrodsunif{2}}{}. \label{eq:app inf primal constraint}
\end{align}
\end{subequations}
This is indeed a linear program according to the same logic as \cref{sec:details about the implementation}: one can assume without loss of generality that the source $\isource{\hv}{}$ tells the agent $A$ what to do, so that one can remove the optimization over $\threedstrat{\alice}{}{}{}{}$. The objective and constraint are then linear functions of $\isource{\hv}{}$.
Let us state that the linear program \eqref{eq:app inf primal} is indeed an upper bound to the original value $p^*$ of \eqref{eq:sc det tensor}:
\begin{lemma}
\label{lem:app inf opt}
It holds that
\begin{equation}
\aopt \leq \aprimal.
\end{equation}
\end{lemma}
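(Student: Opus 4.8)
The plan is to show $\aprimal \geq \aopt$ by exhibiting, for every deterministic strategy feasible in the optimization \eqref{eq:sc det tensor}, a feasible point of the linear program \eqref{eq:app inf primal} whose objective value matches. By \cref{prop:det app}, the value $\aopt$ is a supremum over deterministic strategies $\dettwostrat{\alice_0}{}{}{}$ satisfying the marginal constraint of \eqref{eq:sc det tensor param}, so it suffices to treat one such strategy and show that its objective value is attained by some feasible pair $(\threedstrat{\alice}{}{}{}{}, \isource{\hv}{})$ of \eqref{eq:app inf primal}.

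First I would take a feasible deterministic strategy $\dettwostrat{\alice_0}{}{}{}$ and lift it to the pair $(\threedstrat{\alice}{}{}{}{}, \isource{\hv}{})$ exactly as in \eqref{eq:app inf winning inf}: the source $\isource{\hv}{}$ is a product of independent $\isource{\maxunif}{}$ sources, and the inflated strategy uses its two discrete inputs $\dsvalalpha, \dsvalbeta$ to select which pair of continuous values to feed into $\dettwostrat{\alice_0}{}{}{}$. I would then verify the constraint \eqref{eq:app inf primal constraint}, which is the inflated marginal constraint attached to the network $\scnetthree$. The computation is the one already carried out in \cref{lem:simple proof} and \cref{lem:app inf feasible}: the postselection $\psdiff$ forces the discrete source labels entering a given copy of $\alice$ to be pairwise distinct, so each copy reads a fresh, independent $\isource{\maxunif}{}$ pair; the diagram then factorizes into copies of the original single-agent marginal $\amargconstraint{\alice_0}{}{1}$, which equals $\isource{\macrodsunif{2}}{}$ by feasibility of $\dettwostrat{\alice_0}{}{}{}$. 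This reproduces the right-hand side $\isource{\macrodsunif{2}}{}\isource{\macrodsunif{2}}{}$ of \eqref{eq:app inf primal constraint}.

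By the same factorization argument I would evaluate the objective \eqref{eq:app inf primal obj}. The two diagrams $\appprimalobjone{0}$ and $\appprimalobjtwo{0}$, upon the postselection and the source selection, reduce to the diagonal entries $\atargetp{1}{\outputa}{\outputa}$ and $\atargetp{2}{\outputa}{\outputa}$ obtained from $\dettwostrat{\alice_0}{}{}{}$ through the networks $\scnetone$ and $\scnettwo$ of \eqref{eq:sc det tensor param}. The objective therefore equals $\frac{1}{2}\sum_{\outputa}(\atargetp{1}{\outputa}{\outputa} + \atargetp{2}{\outputa}{\outputa})$, i.e.\ the original objective value of \eqref{eq:sc det tensor} for $\dettwostrat{\alice_0}{}{}{}$. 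Since the lifted pair is feasible for \eqref{eq:app inf primal} and attains this value, and $\aprimal$ maximizes over all feasible pairs, we obtain $\aprimal \geq \frac{1}{2}\sum_{\outputa}(\atargetp{1}{\outputa}{\outputa} + \atargetp{2}{\outputa}{\outputa})$. Taking the supremum over feasible deterministic strategies and invoking \cref{prop:det app} yields $\aopt \leq \aprimal$.

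The main obstacle is bookkeeping rather than conceptual: one must check that the specific postselection pattern and source-selection wiring of \eqref{eq:app inf primal}, which differs slightly from the uniform ``all sources distinct'' postselection of \cref{def:postselected inflation set}, still triggers the desired factorization in both the objective diagrams and the constraint diagram. This is precisely what was verified in \cref{lem:app inf feasible}, so the argument reduces to observing that the objective of \eqref{eq:app inf primal} is built from the same $\scnetone$ and $\scnettwo$ inflation diagrams, now evaluated on the diagonal outputs $(\outputa,\outputa)$ and summed against the winning weights.
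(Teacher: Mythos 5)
Your proposal is correct and follows essentially the same route as the paper: the paper's proof also lifts a feasible deterministic strategy $\dettwostrat{\alice_0}{}{}{}$ to the pair $(\threedstrat{\alice}{}{}{}{}, \isource{\hv}{})$ of equation \eqref{eq:app inf winning inf}, invokes the feasibility argument of \cref{lem:app inf feasible} for the constraint \eqref{eq:app inf primal constraint}, and observes that the objective value is preserved. Your additional bookkeeping about the factorization in the objective diagrams is exactly the (implicit) content of the paper's appeal to the proof of \cref{lem:app inf feasible}.
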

\begin{proof}
The proof of \cref{lem:app inf feasible} can be used to show that any feasible value in the optimization problem defining $\aopt$ induces a feasible value for the optimization problem defining $\aprimal$. Indeed, consider an objective value $\tilde p$ (less than or equal to $\aopt$) that is feasible in \cref{eq:sc det tensor}, and let the probability tensor that achieves it be $\dettwostrat{\alice_0}{}{}{}$. Then, the probability tensors $\isource{\hv}{}$ and $\threedstrat{\alice}{}{}{}{}$ constructed from $\dettwostrat{\alice_0}{}{}{}$ as in equation \eqref{eq:app inf winning inf} are feasible in the linear program \eqref{eq:app inf primal} and yield the same objective value $\tilde p$.
\end{proof}

Now, let us in fact focus on the \emph{dual} problem to \eqref{eq:app inf primal}:
\begin{subequations}
\label{eq:app inf dual}
\begin{align}
\label{eq:app inf dual obj}
\adual := &\min_{z_{11},z_{12},z_{21},z_{22} \in\mathbb{R}} \frac{1}{4}\sum_{\outputa,\outputb\in\{1,2\}} z_{\outputa\outputb} \\
&\textup{s.t. } \forall \dettwostrat{\alice}{}{}{}: \nonumber
\end{align}
\vspace{-20pt}
\begin{multline}
\label{eq:app inf dual constraint}
\sum_{\substack{\outputa,\outputb\in\\\{1,2\}}}\!z_{\outputa\outputb}\!
\appprimalconstraint{1}\!\!\!\!\!\!\geq
\frac{1}{2}\!\sum_{\substack{\outputa\in\\\{1,2\}}}\!
\left(\rule{0pt}{60pt}\right.\!\!\!\!\!
\appprimalobjone{1}\!\!\!\!\!\!+\!\!\!\!\!\!\appprimalobjtwo{1}\!\!\left.\rule{0pt}{60pt}\right)\!.\!\!\!
\end{multline}
\end{subequations}
This dual problem is written in standard notation in \cref{sec:app linear program explicit}. The relation between the primal and the dual can be understood intuitively as follows. Consider an agent $\alice$ who made some choice of strategy $\threedstrat{\alice}{}{}{}{}$ and shared randomness $\isource{\hv}{}$ that are feasible in the primal \eqref{eq:app inf primal} (i.e., that verify \eqref{eq:app inf primal constraint}). This agent $\alice$ was told that she will receive an amount of money equal, in some units, to the objective of \eqref{eq:app inf primal obj}.
Now, suppose the organizer of the protocol, who does not know the choice of strategy of the agent $\alice$,
will actually run an alternative protocol where the agent $\alice$ is put in the network of \eqref{eq:app inf primal constraint}, and want to give $\alice$ some amount of money $z_{\outputa\outputb}$ (in the same units) when the two $\alice$'s outputs the outcomes $\outputa$ and $\outputb$ in this network. The organizer wants to ensure that no matter what $\alice$ is actually doing, the amounts $\{z_{\outputa\outputb}\}_{\outputa\outputb}$ are chosen fairly so that $\alice$ will receive at least the amount of money that she would have gotten in the original protocol --- this is what \eqref{eq:app inf dual constraint} captures. The organizer wants however to minimize the average cost of this alternative protocol, if performed on an honest $A$ that respects \eqref{eq:app inf primal constraint} --- this is what the minimization of \eqref{eq:app inf dual obj} captures. In particular, weak duality holds:
\begin{lemma}
\label{lem:app inf weak duality}
It holds that
\begin{equation}
\aprimal \leq \adual.
\end{equation}
\end{lemma}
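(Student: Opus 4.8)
The plan is to observe that \cref{lem:app inf weak duality} is nothing but the standard weak-duality inequality for the primal/dual pair \eqref{eq:app inf primal}--\eqref{eq:app inf dual}, so that essentially all of the content lies in having set up the finite linear program correctly, which the discussion preceding \eqref{eq:app inf dual} already does. Concretely, I would first invoke the reduction stated below \eqref{eq:app inf primal}: the shared-randomness source $\isource{\hv}{}$ may be taken without loss of generality to be a probability distribution over the finite set $\appmatset$ of deterministic strategies $\dettwostrat{\alice}{}{}{}$, so that the primal variable is really the vector of nonnegative weights $\big(\isource{\hv}{\appmat}\big)_{\appmat\in\appmatset}$. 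In this encoding, the objective \eqref{eq:app inf primal obj} is a linear functional $\appmat\mapsto c_\appmat$ averaged against these weights, and the constraint \eqref{eq:app inf primal constraint} is a system of four linear equalities indexed by the outcome pair $(\outputa,\outputb)$, each with right-hand side $1/4$, since $\isource{\macrodsunif{2}}{}\isource{\macrodsunif{2}}{}$ is uniform over the four values of $(\outputa,\outputb)$.

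Next I would carry out the one-line pairing. Fix any primal-feasible $\isource{\hv}{}$ and any dual-feasible tuple $z_{\outputa\outputb}$. The dual constraint \eqref{eq:app inf dual constraint} holds for \emph{every} deterministic strategy $\dettwostrat{\alice}{}{}{}$, in particular for each $\appmat$ with $\isource{\hv}{\appmat} > 0$; multiplying that inequality at $\appmat$ by the nonnegative weight $\isource{\hv}{\appmat}$ and summing over $\appmat\in\appmatset$ preserves the inequality direction precisely because the weights are nonnegative. On the left this produces $\sum_{\outputa\outputb} z_{\outputa\outputb}$ contracted against $\sum_{\appmat}\isource{\hv}{\appmat}[\,\cdot\,]_{\outputa\outputb}$, which by the primal equality \eqref{eq:app inf primal constraint} equals $1/4$ for each $(\outputa,\outputb)$, collapsing the left side to the dual objective $\tfrac14\sum_{\outputa\outputb}z_{\outputa\outputb}$ of \eqref{eq:app inf dual obj}. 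On the right, the same weighting reconstructs exactly the primal objective \eqref{eq:app inf primal obj} evaluated at $\isource{\hv}{}$. Hence the primal objective at $\isource{\hv}{}$ is at most the dual objective at $z$.

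Finally, since this holds for all feasible $\isource{\hv}{}$ and all feasible $z_{\outputa\outputb}$, taking the supremum on the left and the infimum on the right yields $\aprimal\leq\adual$. I would also remark that the normalization of $\isource{\hv}{}$ need not be imposed separately: summing the four equality constraints and using that each deterministic strategy contributes a normalized marginal (summing to $1$) forces $\sum_\appmat\isource{\hv}{\appmat}=1$, which is why the dual carries exactly the four multipliers $z_{\outputa\outputb}$ and no extra variable for normalization. The only point requiring genuine care, and thus the main (mild) obstacle, is the finiteness reduction of $\isource{\hv}{}$ to a distribution over $\appmatset$: this is what turns \eqref{eq:app inf primal} into a bona fide finite-dimensional linear program to which the elementary weak-duality pairing applies, and it is also what fixes the correct direction of the inequality through the nonnegativity of the weights. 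Beyond this, the argument is the textbook one-line manipulation with no substantive difficulty.
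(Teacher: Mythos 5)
Your proof is correct and follows essentially the same route as the paper's: both average the dual constraint \eqref{eq:app inf dual constraint} over the nonnegative, normalized weights of $\isource{\hv}{}$ (viewed as a mixture of deterministic strategies), then use the primal equality \eqref{eq:app inf primal constraint} to collapse the left-hand side to $\tfrac14\sum_{\outputa\outputb}z_{\outputa\outputb}$, and conclude by taking the supremum and infimum. The extra remarks on the finiteness reduction and the implied normalization are harmless additions that the paper simply leaves implicit.
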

\begin{proof}
Consider some $\isource{\hv}{},\threedstrat{\alice}{}{}{}{}$ that are feasible in the primal problem of $\aprimal$ (i.e., they verify \eqref{eq:app inf primal constraint} but do not necessarily achieve the maximum value of $\aprimal$), and some $\{z_{\outputa\outputb}\}_{\outputa\outputb}$ that are feasible in the dual problem (i.e., they verify \eqref{eq:app inf dual constraint} but do not necessarily achieve the minimum value of $\adual$). Then, using the non-negativity and normalization of $\isource{\hv}{}$ (recall that these constraints are always implicitly assumed when we draw such probability tensors), we can compute:
\begin{multline}
\frac{1}{2}\sum_{\outputa}\left(\rule{0pt}{60pt}\right.\appprimalobjone{0} + \appprimalobjtwo{0}\left.\rule{0pt}{60pt}\right) \\
\overset{\eqref{eq:app inf dual constraint}}{\leq}
\sum_{\outputa,\outputb} z_{\outputa\outputb} \appprimalconstraint{2}
\overset{\eqref{eq:app inf primal constraint}}{=}
\frac{1}{4}\sum_{\outputa,\outputb} z_{\outputa\outputb}.
\end{multline}
Maximizing (resp.\ minimizing) the first (resp.\ last) term of this inequality yields that indeed $\aprimal \leq \adual$.
\end{proof}

It is now easy to solve the optimization:
\begin{prop}
\label{prop:aopt}
It holds that
\begin{equation}
\aopt = \frac{3}{4}.
\end{equation}
\end{prop}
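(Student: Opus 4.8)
The lower bound $\aopt \geq 3/4$ has already been established in \cref{sec:intro sc} through the explicit strategy of \eqref{eq:app example strat}, so the only thing left to prove is the matching upper bound $\aopt \leq 3/4$. The plan is to chain the two inequalities $\aopt \leq \aprimal \leq \adual$ furnished by \cref{lem:app inf opt} and \cref{lem:app inf weak duality}, and then to exhibit a single point that is feasible in the dual program \eqref{eq:app inf dual} whose objective \eqref{eq:app inf dual obj} equals $3/4$. Since $\adual$ is defined as a minimum, any such dual-feasible point immediately witnesses $\adual \leq 3/4$, and the three-term chain then closes the gap to give $\aopt = 3/4$.

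To produce the dual certificate, I would first exploit the symmetries of the Correlated Sleeper scenario — invariance under exchanging the two rounds and under relabelling the outcomes $1 \leftrightarrow 2$ — which suggests looking for an optimal dual point of the form $z_{11} = z_{22}$ and $z_{12} = z_{21}$, leaving only two free real parameters. Solving the resulting small linear program (or guessing and checking) yields explicit numerical values of $z_{\outputa\outputb}$ whose average $\tfrac14\sum_{\outputa\outputb} z_{\outputa\outputb}$ is exactly $3/4$; I would state these values explicitly and substitute them into \eqref{eq:app inf dual obj} to confirm the objective value.

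The substantive step, and the main obstacle, is verifying that this choice of $\{z_{\outputa\outputb}\}$ genuinely satisfies the dual feasibility constraint \eqref{eq:app inf dual constraint} for \emph{every} deterministic strategy of $\alice$. Because both sides of \eqref{eq:app inf dual constraint} are tensor contractions that depend on $\alice$ only through its values on the finitely many discrete inputs supplied by the discretized sources, this is a finite inequality check: by the symmetry reduction of \cref{sec:details about the implementation}, each deterministic $\alice$ is represented by a $4\times4$ matrix with entries in $\{1,2\}$, and it suffices to test the inequality over the $317$ orbit representatives in $\appmatsetreduced$. I would discharge this verification computationally, or — for a more self-contained argument — identify the extremal strategies analytically (the completely mixed strategy and the single-input strategies, which by \cref{lem:bound lambdas} sit at the boundary $\lambda_\topoindex = 1/4$) and argue that the constraint is tightest there, so that checking these representatives certifies dual feasibility in general. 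Once feasibility is confirmed, $\adual \leq 3/4$ follows, and together with the lower bound this establishes $\aopt = 3/4$.
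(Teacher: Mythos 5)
Your proposal follows essentially the same route as the paper: the lower bound from the explicit strategy of \eqref{eq:app example strat}, the chain $\aopt \leq \aprimal \leq \adual$ from \cref{lem:app inf opt,lem:app inf weak duality}, and a symmetric dual certificate (the paper takes $z_{11}=z_{22}=1$, $z_{12}=z_{21}=1/2$, matching your ansatz) whose feasibility is verified by a finite enumeration of the $2^{16}$ deterministic strategies, or the $317$ representatives in $\appmatsetreduced$, in exact arithmetic. The only caveat is that your suggested analytic shortcut --- checking feasibility only at purportedly extremal strategies --- is not justified as stated, since the dual constraint is quadratic in the strategy and there is no argument that it is tightest there; but since you fall back on the computational check, the proof stands as the paper's does.
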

\begin{proof}
We saw in \cref{sec:intro sc} that $\aopt \geq 3/4$.
Combining \cref{lem:app inf opt,lem:app inf weak duality} together gives $\aopt \leq \adual$. The choice of
\begin{equation}
\label{eq:app dual choice}
z_{11} = z_{22} = 1,\quad z_{12} = z_{21} = \frac{1}{2}
\end{equation}
is feasible in the dual problem \eqref{eq:app inf dual}: indeed, for all $\dettwostrat{\alice}{}{}{}$, which one may enumerate since there are $2^{16}$ of them,\footnote{This follows from $\dettwostrat{\alice}{}{}{}$ having $2$ outcomes and $4\cdot 4$ different inputs. Alternatively, one may simply enumerate the inequivalent strategies belonging to $\appmatsetreduced$ --- see \cref{sec:details about the implementation}.} the condition of \eqref{eq:app inf dual constraint} is verified --- this can be checked numerically using rational arithmetic. The choice of \eqref{eq:app dual choice} corresponds to an objective value in \eqref{eq:app inf dual obj} of 3/4, which proves at first that $\adual \leq 3/4$, and then also $\aopt = \aprimal = \adual = 3/4$.
\end{proof}

\newpage
\section{The fanout inflation correspondence}
\label{sec:fanout inflation}

In this section, we show through three examples of the fact that to each postselected inflation corresponds a fanout inflation (in the usual sense of e.g.\ \cite{wolfe_inflation_2019,navascues_inflation_2020}).
This implies that the framework that we propose can be seen as a reformulation of the fanout inflation framework. Each formulation comes with its own insights: arguably, the principles underlying the fanout inflation frameworks are easy to explain from physical principles, clearly yield outer approximations, and the same principles allow to formulate quantum or non-signaling inflation schemes \cite{wolfe_quantum_2021,gisin_constraints_2020}.
The postselected inflation formulation has the benefit that establishing convergence is relatively intuitive as explained in \cref{sec:outer approx}. 
It also allows to prove straightforwardly the formal convergence of inflation in multi-network scenarios with subsets of agents using the same strategy. Now, for each multi-network scenario, given the specific proof of convergence formulated in the context of postselected inflation, a fanout inflation proof is of course readily available once one extract the correct hierarchy of fanout inflations.
In that sense, the postselected inflation framework can be seen as a convenient parametrization of fanout inflation in which the general rule to produce convergent hierarchies is clear.

\subsection{Correlated Sleeper}

Recall that in \cref{def:app inf set}, we introduced the set $\appinffeasible$. We claim that this set has an alternative definition in terms of a fanout inflation sketched in \cref{fig:connection sc graph}. 
\begin{figure}[h!]
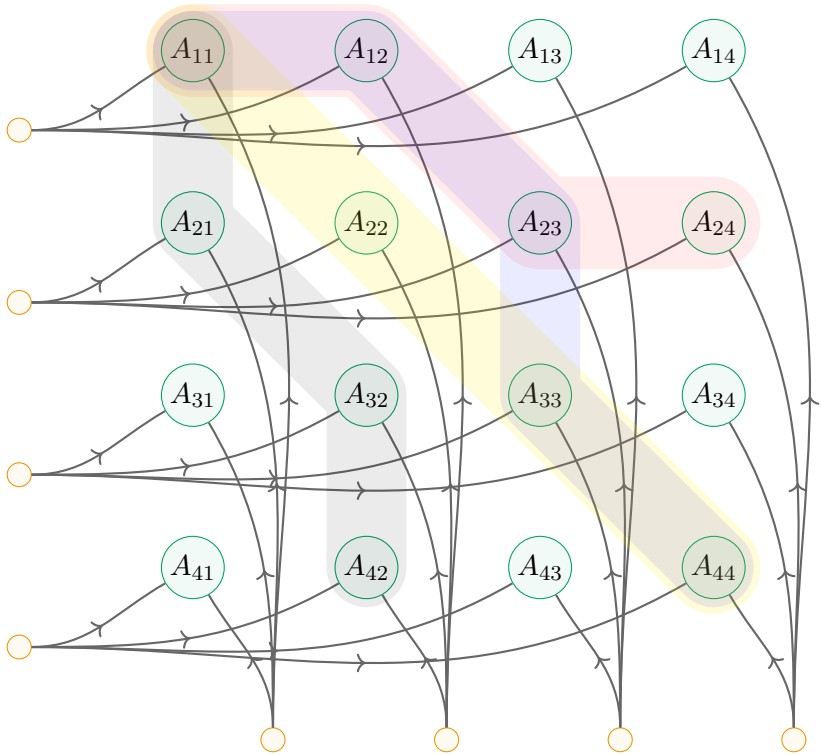

\centering
\begin{equation*}
\label{eq:connection sc graph}
\centertikz{
\foreach \x in {1,2,3,4}
{
	\foreach \y in {1,2,3,4}
		\node[agent] (a\x\y) at (\y*65pt,-\x*65pt) {$A_{\x\y}$};
}
\foreach \x in {1,2,3,4}
{
	\node[source] (alpha\x) at (0pt,-30pt-\x*65pt) {};
	\foreach \y in {1,2,3,4}
		\drawabsleg (alpha\x) to [out=0,in=210] (a\x\y);
}
\foreach \y in {1,2,3,4}
{
	\node[source] (beta\y) at (30pt+\y*65pt,-325pt) {};
	\foreach \x in {1,2,3,4}
		\drawabsleg (beta\y) to [out=90,in=300] (a\x\y);
}
\draw[opacity=0.08,cap=round,join=round,line width=35pt,draw=red] (a11.center) -- (a12.center) -- (a23.center) -- (a24.center);
\draw[opacity=0.08,cap=round,join=round,line width=30pt,draw=black] (a11.center) -- (a21.center) -- (a32.center) -- (a42.center);
\draw[opacity=0.15,cap=round,join=round,line width=35pt,draw=yellow] (a11.center) -- (a22.center) -- (a33.center) -- (a44.center);
\draw[opacity=0.08,cap=round,join=round,line width=30pt,draw=blue] (a11.center) -- (a12.center) -- (a23.center) -- (a33.center) -- (a44.center);
}
\end{equation*}
\caption{The fanout inflation graph of $\mathcal I^{(\textup{s})}_{\textup{alt}}$. We highlight with hyperedges the subsets of nodes whose marginal distribution is expressible in terms of the target distributions --- see equations \eqref{eq:inf q first condition}-\eqref{eq:inf q last condition}.}
\label{fig:connection sc graph}
\end{figure}

In the following definition, $\qinf$ is a probability distribution over $\{1,2\}^{\times 16}$, namely, over the possible outcome tuples of $16$ copies of the agent $A$. The copies of the agent $A$ are indexed in a matrix by two indices $i,j \in \{1,2,3,4\}$. We denote by $\qinf(\{A_{ij} = a_{ij}\}_{i,j=1}^4)$ the probability of an atomic event, while e.g.\ $\qinf(\{A_{11} = a_{11},A_{22} = a_{22}\})$ represents the marginal probability that the agent $A_{ii}$ outputs $a_{ii}$ for $i\in \{1,2\}$. We also let $S_4$ denote the group of permutations of $4$ elements.

\begin{lemma}
\label{lem:connection sc}
Define
\begin{subequations}
\begin{align}
\appinffeasible_{\textup{alt}} :=  \Bigg\{ \Big(\atargetp{1}{}{},\atargetp{2}{}{}\Big)
\ \Bigg|\ 
\exists &\text{ a probability distribution }\qinf \text{ over the outcomes of the agents }\centertikz{\node[agent] {$A_{ij}$};} \nonumber\\
&\text{ of the graph of }\cref{fig:connection sc graph}\text{ such that } 
\forall \sigma,\pi \in S_4, \forall \big\{a_{ij} \in \{1,2\}\big\}_{i,j=1}^{4}\st
\end{align}  
\begin{align}
q(\{A_{ij} = a_{ij}\}_{i,j=1}^4) &= q(\{A_{ij} = a_{\sigma(i)\pi(j)}\}_{i,j=1}^4), \label{eq:inf sym sc}\\
q(\{A_{11} = a_{11}, A_{12} = a_{12}, A_{23} = a_{23}, A_{24} = a_{24}\}) &= \atargetp{1}{a_{11}}{a_{12}}\atargetp{1}{a_{23}}{a_{24}}, \label{eq:inf q first condition}\\
q(\{A_{11} = a_{11}, A_{21} = a_{21}, A_{32} = a_{32}, A_{42} = a_{42}\}) &= \atargetp{2}{a_{11}}{a_{21}}\atargetp{2}{a_{32}}{a_{42}}, \\
q(\{A_{ii} = a_{ii}\}_{i=1}^4) &= \isource{\macrodsunif{2}}{a_{11}}\isource{\macrodsunif{2}}{a_{22}}\isource{\macrodsunif{2}}{a_{33}}\isource{\macrodsunif{2}}{a_{44}}, \\
q(\{A_{11} = a_{11},A_{12} = a_{12}, A_{23} = a_{23}, A_{33} = a_{33}, A_{44} = a_{44}\}) &= \atargetp{1}{a_{11}}{a_{12}}\atargetp{2}{a_{23}}{a_{33}}\isource{\macrodsunif{2}}{a_{44}} \label{eq:inf q last condition}
\Bigg\}.
\end{align}
\end{subequations}
Then, it holds that
\begin{equation}
\appinffeasible_{\textup{alt}} = \appinffeasible.
\end{equation}
\end{lemma}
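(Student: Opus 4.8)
The plan is to exhibit an explicit linear bijection between the feasible variables of the two programs and to check that, under it, the four diagrammatic constraints \eqref{eq:app inf set first eq}--\eqref{eq:app inf set last eq} defining $\appinffeasible$ become precisely the four marginal constraints \eqref{eq:inf q first condition}--\eqref{eq:inf q last condition} defining $\appinffeasible_{\textup{alt}}$. As noted in \cref{sec:details about the implementation}, one may assume without loss of generality that the source $\isource{\hv}{}$ of \cref{def:app inf set} is supported on deterministic strategies, i.e.\ on $4\times 4$ matrices $M\in\appmatset$ with entries in $\{1,2\}$. Since a joint outcome of the sixteen agents $\{A_{ij}\}_{i,j=1}^4$ of \cref{fig:connection sc graph} is likewise an element $(a_{ij})_{ij}\in\{1,2\}^{16}\cong\appmatset$, I would identify $q$ and $\isource{\hv}{}$ as one and the same object, a probability distribution over $\appmatset$, setting $q(\{A_{ij}=M_{ij}\}_{ij}) := \isource{\hv}{M}$.

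First I would reduce to symmetric distributions. The left-hand sides of \eqref{eq:app inf set first eq}--\eqref{eq:app inf set last eq} are invariant under independently permuting the rows and the columns of $M$: each discretized source $\isource{\macrodsunif{4}}{}$ is uniform and each distinctness postselection is symmetric in its inputs, so relabelling the four source copies of either type leaves the contraction unchanged. Consequently, averaging a feasible $\isource{\hv}{}$ over the group $S_4\times S_4$ acting by row/column permutations yields another distribution that is still feasible, since the right-hand sides do not depend on $\isource{\hv}{}$. Hence $\big(\atargetp{1}{}{},\atargetp{2}{}{}\big)\in\appinffeasible$ if and only if there is a \emph{row/column-symmetric} feasible $\isource{\hv}{}$, which is exactly the symmetry requirement \eqref{eq:inf sym sc} on $q$.

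It then remains to match the marginal constraints. For each network I would read off from its diagram the \emph{incidence pattern} of the contracted agents, i.e.\ which ones share a left source (equal row) and which share a right source (equal column). The postselection forces all participating rows, and all participating columns, to be pairwise distinct, so the left-hand side of each constraint is exactly the uniform average, over all configurations realizing that pattern, of the corresponding marginal of $q$; because $q$ is row/column-symmetric and every such configuration is an $S_4\times S_4$-image of a fixed \emph{canonical} placement, the average collapses to the single canonical marginal. One checks that these canonical placements are precisely the supports on the left of \eqref{eq:inf q first condition}--\eqref{eq:inf q last condition}: for the $\ninfcons=2$ inflation of $\scnetone$ (shared left source) the pattern is two same-row pairs across four distinct columns, giving $\{A_{11},A_{12},A_{23},A_{24}\}$; for $\scnettwo$ (shared right source), two same-column pairs, $\{A_{11},A_{21},A_{32},A_{42}\}$; for the $\ninfcons=4$ inflation of $\scnetthree$ (four isolated agents, all sources distinct), the diagonal $\{A_{ii}\}_{i=1}^4$; and for the $\ninfcons=1$ inflation of $\scnetfour$, the set $\{A_{11},A_{12},A_{23},A_{33},A_{44}\}$ --- exactly the four highlighted hyperedges of \cref{fig:connection sc graph}. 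Under the identification $q=\isource{\hv}{}$, each postselected-inflation constraint thus reads $q(\text{canonical support})=\atargetp{1}{}{}\atargetp{2}{}{}\cdots$, i.e.\ \eqref{eq:inf q first condition}--\eqref{eq:inf q last condition}, and the equality $\appinffeasible=\appinffeasible_{\textup{alt}}$ follows in both directions simultaneously.

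The main obstacle is the bookkeeping of the third step: one must verify carefully, diagram by diagram, that the incidence pattern dictated by the shared sources and postselections in \eqref{eq:app inf set first eq}--\eqref{eq:app inf set last eq} coincides with the support of the marginal in \eqref{eq:inf q first condition}--\eqref{eq:inf q last condition}, and that the average-collapses-to-canonical step is legitimate --- in particular that every distinct-row/distinct-column configuration with the prescribed pattern is genuinely an $S_4\times S_4$-image of the canonical one (preserving the copy-structure that assigns outcomes to the arguments of the target tensors), so that the symmetry of $q$ applies with the correct outcome labelling. The remaining algebra, namely expanding the contraction with the explicit weights of the distinctness postselections and of $\isource{\macrodsunif{4}}{}$, is routine and parallels the computation in the proof of \cref{lem:simple proof}.
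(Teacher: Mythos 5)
Your proposal is correct and follows essentially the same route as the paper's proof: both rest on identifying a probability distribution $q$ over the $16$ agent outcomes of \cref{fig:connection sc graph} with a distribution $\isource{\hv}{}$ over deterministic strategy matrices $M\in\appmatset$, and on using the $S_4\times S_4$ row/column symmetry to collapse the postselected averages onto the canonical marginals of \eqref{eq:inf q first condition}--\eqref{eq:inf q last condition}. The only difference is organizational: the paper proves the two inclusions separately (decomposing $q$ into deterministic distributions for one direction, and defining $q$ as the explicitly symmetrized product of strategies for the other), whereas you symmetrize $\isource{\hv}{}$ up front and then read off both directions from a single identification --- the verification computations are the same either way.
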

\begin{proof}
$\appinffeasible_{\textup{alt}}\subseteq\appinffeasible$: for any $\big(\atargetp{1}{}{},\atargetp{2}{}{}\big) \in \appinffeasible_{\textup{alt}}$, consider the $\qinf$ distribution associated to it. Let us write $\qinf$ as a mixture of \emph{deterministic} probability distributions $\qinf^{(\lambda)}$ with convex weights $\isource{\hv}{\hvval}$:
\begin{equation}
\label{eq:correspondence qinf 1}
\qinf =: \sum_{\hvval} \isource{\hv}{\hvval} \qinf^{(\lambda)}.
\end{equation}
Define the deterministic\footnote{Indeed, the marginal of a deterministic distribution is still deterministic.} probability tensor, for all $a\in\{1,2\}$, $i,j\in\{1,2,3,4\}$, for all $\hvval$,
\begin{equation}
\label{eq:correspondence qinf 2}
\threedstrat{\alice}{a}{i}{j}{\hvval} := \qinf^{(\lambda)}(A_{ij} = a).
\end{equation}
This combination of $\threedstrat{\alice}{}{}{}{}$ and $\isource{\hv}{}$ solves the postselected inflation for $\big(\atargetp{1}{}{},\atargetp{2}{}{}\big)$: for instance, we can verify equation \eqref{eq:app inf set first eq} explicitly: for all $a_{11},a_{12},a_{23},a_{24} \in \{1,2\}$,
\begin{subequations}
\begin{align}
\label{eq:temp correspondence qinf 2}
&\centertikz{
\node (a0) {};
\foreach \x/\y/\out in {1/0/11,2/1/12,3/2/23,4/3/24}
{
	\node[detnode] (a\x) [right=\appplayeroffset of a\y] {$\alice$};
	\node[voidnode] (o\x) [above=\outcomevspace of a\x] {\indexstyle{a_{\out}}};
	\drawleg (a\x.north) -- (o\x.south);
	\node[tensornode] (beta\x) [below right=\appbetapos of a\x] {$\macrodsunif{4}$};
	\node[copynode] (copybeta\x) [above=\outcomevspace of beta\x] {};
	\drawleg (beta\x.north) -- (copybeta\x);
	\drawleg (copybeta\x) -- \AnchorTwoThree{a\x}{south};
}
\foreach \x in {1,3} 
{
	\node[tensornode] (alpha\x) [below left=\appalphapos of a\x] {$\macrodsunif{4}$};
	\node[copynode] (copyalpha\x) [above=\outcomevspace of alpha\x] {};
	\drawleg (alpha\x.north) -- (copyalpha\x);
}
\foreach \x/\y in {1/1,1/2,3/3,3/4}
	\drawleg (copyalpha\x) -- \AnchorOneThree{a\y}{south};
\node[psnode] (psbeta) [below left=\appbetaps of a1] {\psdiffname{4}};
\foreach \x/\y/\z in {1/1.00/0.00,
					  2/0.67/0.33,
  					  3/0.33/0.67,
  					  4/0.00/1.00}
	\drawdashedleg (copybeta\x) to [out=170,in=340] ($\y*(psbeta.south west)+\z*(psbeta.south east)$);
\node[psnode] (psalpha) [below left=\appalphaps of a1] {\psdiffname{2}};
\drawdashedleg (copyalpha1) to [out=170,in=340] (psalpha.south west);
\drawdashedleg (copyalpha3) to [out=170,in=340] (psalpha.south east);
\node[tensornode] (hv) [below right=\apphvpos of a4] {$\hv$};
\node[copynode] (copyhv) [above=\outcomevspace of hv] {};
\drawleg (hv.north) -- (copyhv);
\foreach \x in {1,2,3,4}
	\drawdashedleg (copyhv) to [out=180,in=340] ($0.2*(a\x.south west)+0.8*(a\x.south east)$);
} \\
= 
&\sum_{\hvval} \isource{\hv}{\hvval} \frac{1}{12\cdot 4!} 
\sum_{\substack{
\dsvalalpha_1,\dsvalalpha_2 \allneq \\
\dsvalbeta_1,\dots,\dsvalbeta_4 \allneq
}}
\threedstrat{\alice}{\outputa_{11}}{\dsvalalpha_1}{\dsvalbeta_1}{\hvval}
\threedstrat{\alice}{\outputa_{12}}{\dsvalalpha_1}{\dsvalbeta_2}{\hvval}
\threedstrat{\alice}{\outputa_{23}}{\dsvalalpha_2}{\dsvalbeta_3}{\hvval}
\threedstrat{\alice}{\outputa_{24}}{\dsvalalpha_2}{\dsvalbeta_4}{\hvval} \label{eq:temp qinf computation}\\
\overset{\textup{eqs. }\eqref{eq:correspondence qinf 1},\eqref{eq:correspondence qinf 2}}{=} 
&\frac{1}{12\cdot 4!} 
\sum_{\substack{
\dsvalalpha_1,\dsvalalpha_2 \allneq \\
\dsvalbeta_1,\dots,\dsvalbeta_4 \allneq
}}
\qinf(\{A_{i_1j_1} = a_{11}, A_{i_1j_2} = a_{12}, A_{i_2j_3} = a_{23}, A_{i_2j_4} = a_{24}\}) \\
\overset{\sigma(1) := i_1, \dots, \pi(4) := j_4}{=}
&\frac{1}{4!\cdot4!} 
\sum_{\sigma,\pi\in S_4}
\qinf(\{A_{\sigma(1)\pi(1)} = a_{11}, A_{\sigma(1)\pi(2)} = a_{12}, A_{\sigma(2)\pi(3)} = a_{23}, A_{\sigma(2)\pi(4)} = a_{24}\}) \\
\overset{\textup{eq. }\eqref{eq:inf sym sc}}{=} &\qinf(\{A_{11} = a_{11}, A_{12} = a_{12}, A_{23} = a_{23}, A_{24} = a_{24}\}) \\
\overset{\textup{eq. }\eqref{eq:inf q first condition}}{=} &\atargetp{1}{a_{11}}{a_{12}}\atargetp{1}{a_{23}}{a_{24}}.
\end{align}
\end{subequations}
The other cases are proven analogously, so that indeed $\big(\atargetp{1}{}{},\atargetp{2}{}{}\big) \in \appinffeasible$.

$\appinffeasible_{\textup{alt}} \supseteq \appinffeasible$: for any $\big(\atargetp{1}{}{},\atargetp{2}{}{}\big) \in \appinffeasible$, consider the probability tensors $\threedstrat{\alice}{}{}{}{}$ and $\isource{\hv}{}$ (the latter can be assumed to have finitely many outputs $\hvval$ without loss of generality --- see \cref{sec:details about the implementation}) associated to it.
Define the distribution $\qinf$ through, for all $\{a_{ij} \in \{1,2\}\}_{i,j=1}^4$,
\begin{equation}
\qinf(\{A_{ij} = a_{ij}\}_{i,j=1}^4) := \sum_{\hvval} \isource{\hv}{\hvval} \frac{1}{4!\cdot4!} \sum_{\sigma,\pi\in S_4} \prod_{i,j=1}^4 \threedstrat[2]{\alice}{a_{\sigma(i)\pi(j)}}{i}{j}{\hvval}.
\end{equation}
This $\qinf$ clearly verifies the symmetry condition \eqref{eq:inf sym sc}. It furthermore verifies the desired marginal constraints: for instance, for \eqref{eq:inf q first condition}, we have
\begin{subequations}
\begin{align}
&\qinf(\{A_{11} = a_{11}, A_{12} = a_{12}, A_{23} = a_{23}, A_{24} = a_{24}\}) \\
= &\sum_{\hvval}\isource{\hv}{\hvval} \frac{1}{4!\cdot4!}\sum_{\sigma,\pi\in S_4} \threedstrat[1]{\alice}{a_{11}}{\sigma(1)}{\pi(1)}{\hvval}
\threedstrat[1]{\alice}{a_{12}}{\sigma(1)}{\pi(2)}{\hvval}
\threedstrat[1]{\alice}{a_{23}}{\sigma(2)}{\pi(3)}{\hvval}
\threedstrat[1]{\alice}{a_{24}}{\sigma(2)}{\pi(4)}{\hvval} \\
\overset{\textup{eqs. }\eqref{eq:temp qinf computation},\eqref{eq:temp correspondence qinf 2},\eqref{eq:app inf set first eq}}{=} &\atargetp{1}{a_{11}}{a_{12}}\atargetp{1}{a_{23}}{a_{24}}.
\end{align}
\end{subequations}
The other marginal constraints are verified analogously, and indeed we see that $\big(\atargetp{1}{}{},\atargetp{2}{}{}\big) \in \appinffeasible$.
\end{proof}


\begin{figure}[h!]
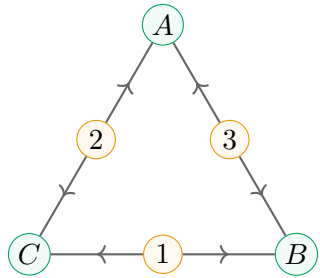

\centering
\begin{equation*}
\centertikz{
\node[agent] (a) at (0.5*100pt,0.866*100pt) {$A$};
\node[agent] (b) at (100pt,0)  {$B$};
\node[agent] (c) at (0,0)  {$C$};
\node[source] (alpha) at ($0.5*(b) + 0.5*(c)$) {$1$};
\node[source] (beta) at ($0.5*(a)+0.5*(c)$) {$2$};
\node[source] (gamma) at ($0.5*(a)+0.5*(b)$) {$3$};
\drawabsleg (beta) -- (a);
\drawabsleg (gamma) -- (a);
\drawabsleg (alpha) -- (b);
\drawabsleg (gamma) -- (b);
\drawabsleg (alpha) -- (c);
\drawabsleg (beta) -- (c);
}
\end{equation*}
\caption{The triangle network.}
\label{fig:triangle config}
\end{figure}

\subsection{Triangle network: three strategies}
\label{sec:triangle network three strats}

Let us give two additional examples. First off, consider the triangle network, sketched in \cref{fig:triangle config}:
\begin{subequations}
\begin{align}
\network^{\textup{(tr)}} &= (\pcount = 3,\npcount = 3,\scount = 3,\pmap,\cmap), \\
\forall \npindex \in \{1,2,3\},\ \pmap(\npindex) &= \npindex, &\netnote{three strategies}\\
\cmap(1) &= (2,3),\ \cmap(2) = (3,1),\ \cmap(3) = (1,2).
\end{align}
\end{subequations}

We can formulate a postselected inflation with a more minimal postselection than that of \cref{def:postselected inflation set} following the argument of \cref{lem:simple proof} (this type of postselected inflation is generic in the context of a single-network scenario with all agents using distinct strategies): let
\begin{subequations}
\begin{equation}
\mathcal I^{\textup{(tr)}}(\ninf,\ninfcons=2) := \Bigg\{ 
\targetp{}{}{} \Bigg| \exists \threedstrat{\alice}{}{}{}{},\threedstrat{\bob}{}{}{}{},\threedstrat{\charlie}{}{}{}{},\isource{\hv}{} \textup{ s.t. }
\end{equation}
\begin{equation}
\centertikz{
\node[detnode] (a1) {$A$};
\node[detnode] (b1) [right=20pt of a1] {$B$};
\node[detnode] (c1) [right=20pt of b1] {$C$};
\node[detnode] (a2) [right=20pt of c1] {$A$};
\node[detnode] (b2) [right=20pt of a2] {$B$};
\node[detnode] (c2) [right=20pt of b2] {$C$};
\foreach \x in {a,b,c}
{
	\foreach \y in {1,2}
	{
		\node[voidnode] (o\x\y) [above=\outcomevspace of \x\y] {};
		\drawleg (\x\y.north) -- (o\x\y.south);
	}
}
\node[tensornode] (gamma1) [below=50pt of a1] {$\dsunif$};
\node[tensornode] (beta1) [below=100pt of b1] {$\dsunif$};
\node[tensornode] (alpha1) [below=150pt of c1] {$\dsunif$};
\node[tensornode] (gamma2) [below=50pt of a2] {$\dsunif$};
\node[tensornode] (beta2) [below=100pt of b2] {$\dsunif$};
\node[tensornode] (alpha2) [below=150pt of c2] {$\dsunif$};
\foreach \x in {alpha,beta,gamma}
{
	\foreach \y in {1,2}
	{
		\node[copynode] (copy\x\y) [above=\outcomevspace of \x\y] {};
		\drawleg (\x\y.north) -- (copy\x\y);
	}
}
\foreach \x in {1,2}
{
	\drawleg (copyalpha\x) -- \AnchorTwoThree{b\x}{south};
	\drawleg (copyalpha\x) -- \AnchorOneThree{c\x}{south};
	\drawleg (copybeta\x) -- \AnchorTwoThree{c\x}{south};
	\drawleg (copybeta\x) -- \AnchorOneThree{a\x}{south};
	\drawleg (copygamma\x) -- \AnchorTwoThree{a\x}{south};
	\drawleg (copygamma\x) -- \AnchorOneThree{b\x}{south};
}
%
%
\foreach \x in {alpha,beta,gamma}
{
	\node[psnode] (ps\x) [above left=20pt and 40pt of \x1] {\psdiffname{2}};
	\foreach \y/\z in {1/west,2/east}
		\drawdashedleg (copy\x\y) to [out=190,in=340] (ps\x.south \z);
}
%
%
\node[tensornode] (hv) [below right=20pt and 30pt of c2] {$\hv$};
\node[copynode] (copyhv) [above=\outcomevspace of hv] {};
\drawleg (hv.north) -- (copyhv);
\foreach \x in {a,b,c}
{
	\foreach \y in {1,2}
		\drawdashedleg (copyhv) to [out=180,in=340] ($0.2*(\x\y.south west)+0.8*(\x\y.south east)$);
}
}
\quad=\quad 
\targetp{}{}{}\targetp{}{}{}\Bigg\}. \label{eq:psinf triangle condition}
\end{equation}
\end{subequations}
%
%
The above characterization coincides with the fanout inflation of the triangle network as in 
\cite{navascues_inflation_2020}. Leaving the number of outcomes implicit, let, for any $\ninf \geq 2$,
\begin{subequations}
\begin{align}
\mathcal I^{(\textup{tr})}_{\textup{alt}}(\ninf,\ninfcons=2) := \Bigg\{ \targetp{}{}{} \Bigg| &\exists\text{ a probability distribution }\qinf\textup{ over the outcomes of the agents }\{A_{ij}\}_{i,j=1}^\ninf,\nonumber \\ &\{B_{kl}\}_{k,l=1}^\ninf,\{C_{pq}\}_{p,q=1}^\ninf \textup{ such that }\forall \{a_{ij}\}_{i,j=1}^\ninf,\{b_{kl}\}_{k,l=1}^\ninf,\{c_{pq}\}_{p,q=1}^\ninf,
\end{align}
\vspace{-30pt}
\begin{multline}
\forall \sigma,\pi,\tau\in S_\ninf\st 
\qinf(
\{A_{ij} = a_{ij},
B_{kl} = b_{kl},
C_{pq} = c_{pq}\}_{i,j,k,l,p,q=1}^\ninf
) \\
= \qinf(
\{A_{ij} = a_{\tau(i)\sigma(j)},
B_{kl} = b_{\sigma(k)\pi(l)},
C_{pq} = c_{\pi(p)\tau(q)}\}_{i,j,k,l,p,q=1}^\ninf
)
\label{eq:fanout inf triangle 1}
\end{multline}
\vspace{-10pt}
\begin{equation}
\hspace{4.2cm} \textup{ and }\ \qinf(\{A_{ii} = a_{ii},B_{ii} = b_{ii}, C_{ii} = c_{ii}\}_{i=1,2}) = \targetp[1]{a_{11}}{b_{11}}{c_{11}}\targetp[1]{a_{22}}{b_{22}}{c_{22}} \Bigg\}.
\label{eq:fanout inf triangle 2}
\end{equation}
\end{subequations}
The corresponding fanout inflation graph is shown in \cref{fig:fanout inflation triangle} for the case of $\ninf = 2$.

\begin{figure}[h!]
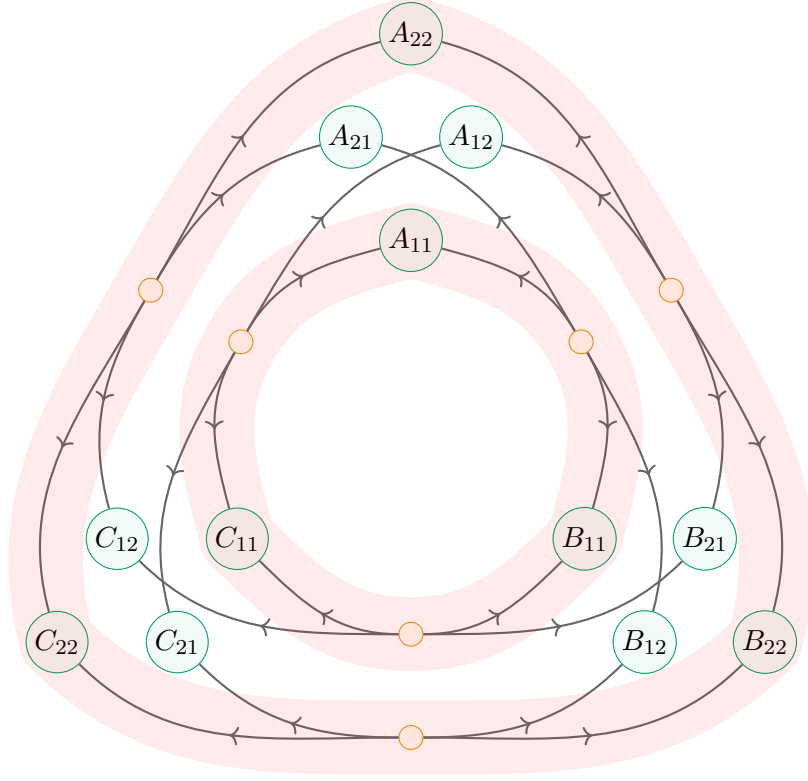

\centering
\begin{equation*}
\centertikz{
%
%
\foreach \x/\i in {0/1,1/2}
{
	\foreach \y/\j in {0/1,1/2}
	{
		\node[agent] (a\i\j) at ($(0.5*\TrBigD,0.866*\TrBigD) + \x*(-0.5*\TrSmallD,0.866*\TrSmallD) +  \y*(0.5*\TrSmallD,0.866*\TrSmallD)$) {$A_{\i\j}$};
	}
}
%
%
\foreach \x/\i in {0/1,1/2}
{
	\foreach \y/\j in {0/1,1/2}
	{
		\node[agent] (b\i\j) at ($(\TrBigD,0) + \x*(\TrSmallD,0) +  \y*(0.5*\TrSmallD,-0.866*\TrSmallD)$) {$B_{\i\j}$};
	}
}
%
%
\foreach \x/\i in {0/1,1/2}
{
	\foreach \y/\j in {0/1,1/2}
	{
		\node[agent] (c\i\j) at ($(0,0) + \y*(-\TrSmallD,0) +  \x*(-0.5*\TrSmallD,-0.866*\TrSmallD)$) {$C_{\i\j}$};
	}
}
%
%
\foreach \i in {1,2}
{
	\node[source] (alpha\i) at ($0.5*(b1\i) + 0.5*(c\i1) - 0.8*(0,\TrSmallD)$) {};
	\node[source] (beta\i) at ($0.5*(c1\i) + 0.5*(a\i1) + 0.8*(-0.866*\TrSmallD,0.5*\TrSmallD)$) {};
	\node[source] (gamma\i) at ($0.5*(a1\i) + 0.5*(b\i1) + 0.8*(0.866*\TrSmallD,0.5*\TrSmallD)$) {};
	\foreach \j in {1,2}
	{
		\drawabsleg (alpha\i) to [out=0,in=225] (b\j\i);
		\drawabsleg (alpha\i) to [out=180,in=315] (c\i\j);
		\drawabsleg (beta\i) to [out=60,in=195] (a\i\j);
		\drawabsleg (beta\i) to [out=240,in=105] (c\j\i);
		\drawabsleg (gamma\i) to [out=120,in=345] (a\j\i);
		\drawabsleg (gamma\i) to [out=300,in=75] (b\i\j);
	}
}
\draw[opacity=0.08,cap=round,join=round,line width=28pt,draw=red] (a11.center) to [out=345,in=120] (gamma1.center) to [out=300,in=75] (b11.center) to [out=225,in=0] (alpha1.center) to [out=180,in=315] (c11.center) to [out=105,in=240] (beta1.center) to [out=60,in=195] cycle;
\draw[opacity=0.08,cap=round,join=round,line width=28pt,draw=red] (a22.center) to [out=345,in=120] (gamma2.center) to [out=300,in=75] (b22.center) to [out=225,in=0] (alpha2.center) to [out=180,in=315] (c22.center) to [out=105,in=240] (beta2.center) to [out=60,in=195] cycle;
}
\end{equation*}
\caption{The fanout inflation graph for of $\mathcal I^{(\textup{tr})}_{\textup{alt}}(\ninf,\ninfcons=2)$ for $\ninf = 2$. The red hyperedge represents the known marginal of equation \eqref{eq:fanout inf triangle 2}.}
\label{fig:fanout inflation triangle}
\end{figure}

\newpage

Let us briefly sketch the equality proof (which can easily be generalized to arbitrary $\ninfcons$).
\begin{lemma}
It holds that $\mathcal I^{(\textup{tr})}_{\textup{alt}}(\ninf,\ninfcons=2) = \mathcal I^{(\textup{tr})}(\ninf,\ninfcons=2)$.
\end{lemma}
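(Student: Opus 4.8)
The plan is to reproduce, for the triangle network, the two-sided argument already used for \cref{lem:connection sc}, with the single modification that the single-strategy Correlated Sleeper graph is replaced by a genuine three-strategy network whose three sources $\alpha,\beta,\gamma$ give rise to three \emph{independent} relabeling symmetries rather than two. Throughout, I would exploit that for $\ninfcons=2$ the postselected-inflation constraint \eqref{eq:psinf triangle condition} and the fanout constraints \eqref{eq:fanout inf triangle 1}-\eqref{eq:fanout inf triangle 2} are the two ends of one and the same computation, so that both inclusions collapse to it.

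For $\mathcal I^{(\textup{tr})}_{\textup{alt}}(\ninf,2) \subseteq \mathcal I^{(\textup{tr})}(\ninf,2)$, I would start from a distribution $\qinf$ obeying \eqref{eq:fanout inf triangle 1}-\eqref{eq:fanout inf triangle 2}, decompose it into deterministic components $\qinf = \sum_\hvval \isource{\hv}{\hvval}\,\qinf^{(\hvval)}$, and read the deterministic strategy tensors off the single-agent marginals, $\threedstrat{\alice}{a}{i}{j}{\hvval} := \qinf^{(\hvval)}(A_{ij}=a)$ and likewise for $\bob,\charlie$. Since a deterministic distribution factorizes on every marginal, expanding the left-hand side of \eqref{eq:psinf triangle condition} turns the product of strategy tensors back into a six-agent marginal of $\qinf$, whose agent indices are exactly the values carried by the two copies of each source $\alpha,\beta,\gamma$. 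The postselection restricts each of these pairs to distinct values, so the diagram evaluates to a normalized average of $\qinf$ over triples of distinct index pairs, with weight $[\ninf(\ninf-1)]^{-3}$; applying the three-fold symmetry \eqref{eq:fanout inf triangle 1} (one permutation per source) maps every such term to the canonical marginal on $A_{11},B_{11},C_{11},A_{22},B_{22},C_{22}$, which by \eqref{eq:fanout inf triangle 2} equals $\targetp{}{}{}\targetp{}{}{}$.

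For the reverse inclusion I would take strategies $\threedstrat{\alice}{}{}{}{},\threedstrat{\bob}{}{}{}{},\threedstrat{\charlie}{}{}{}{}$ and a finitely-supported $\isource{\hv}{}$ (as in \cref{sec:details about the implementation}) solving \eqref{eq:psinf triangle condition}, and define $\qinf$ by the symmetrized product
\begin{equation*}
\qinf(\{A_{ij}=a_{ij},B_{kl}=b_{kl},C_{pq}=c_{pq}\}) := \sum_\hvval \isource{\hv}{\hvval}\,\frac{1}{(\ninf!)^3}\sum_{\sigma,\pi,\tau\in S_\ninf}\prod_{i,j}\threedstrat{\alice}{a_{\tau(i)\sigma(j)}}{i}{j}{\hvval}\prod_{k,l}\threedstrat{\bob}{b_{\sigma(k)\pi(l)}}{k}{l}{\hvval}\prod_{p,q}\threedstrat{\charlie}{c_{\pi(p)\tau(q)}}{p}{q}{\hvval},
\end{equation*}
where $\sigma,\pi,\tau$ relabel the copies of the sources $\gamma,\alpha,\beta$ shared by the relevant agent pairs. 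The symmetry \eqref{eq:fanout inf triangle 1} holds by construction (it is just a reindexing of the permutation sums), and to verify \eqref{eq:fanout inf triangle 2} I would marginalize $\qinf$ onto the diagonal agents $A_{ii},B_{ii},C_{ii}$, $i\in\{1,2\}$, summing away every other agent via strategy normalization. After the change of variables $u_\alpha=\pi^{-1}(1),u_\alpha'=\pi^{-1}(2)$ and analogously for $\beta,\gamma$, the surviving expression is precisely the left-hand side of \eqref{eq:psinf triangle condition}, hence equals $\targetp{}{}{}\targetp{}{}{}$ by hypothesis.

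I expect the only genuine obstacle to be the combinatorial bookkeeping: one must keep straight which of the six source-copy indices each of $\sigma,\pi,\tau$ acts on, and check that averaging over $S_\ninf^{\times 3}$ and dividing by $(\ninf!)^3$ reproduces exactly the postselected-inflation normalization $[\ninf(\ninf-1)]^{-3}$ together with the restriction to distinct pairs — each ordered distinct pair being realized by $(\ninf-2)!$ permutations. Once this dictionary between ``summing over distinct source values'' and ``averaging over relabeling permutations'' is fixed, exactly as in the chain of equalities following \eqref{eq:temp qinf computation}, both inclusions reduce to the single computation above, and the passage to arbitrary $\ninfcons$ follows by replacing pairs with $\ninfcons$-tuples of pairwise distinct indices throughout.
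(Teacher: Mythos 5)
Your proposal is correct and follows essentially the same route as the paper's proof: the forward inclusion decomposes $\qinf$ into deterministic components and reads the strategies off the single-agent marginals, while the reverse inclusion defines $\qinf$ via the identical symmetrized product over $S_\ninf^{\times 3}$, with both directions resting on the same dictionary between sums over pairwise-distinct source values and averages over relabeling permutations already worked out for the Correlated Sleeper case. The only difference is that you spell out the $(\ninf-2)!$ counting explicitly where the paper simply points to the analogous manipulations; the content is the same.
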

\begin{proof}
Given $\targetp{}{}{} \in \mathcal I^{\textup{(tr)}}_{\textup{alt}}(\ninf,\ninfcons=2)$ and the $\qinf$ distribution associated to it, write $\qinf$ as a mixture of deterministic distributions $\qinf^{(\lambda)}$ with convex weights $\isource{\hv}{\hvval}$:
\begin{equation}
\qinf =: \sum_{\hvval} \isource{\hv}{\hvval}\qinf^{(\lambda)}.
\end{equation}
Then, define the deterministic tensors, for all $i,j,k,l,p,q\in\{1,\dots,\ninf\}$ and $\outputa,\outputb,\outputc,\hvval$:
\begin{equation}
\threedstrat{\alice}{\outputa}{i}{j}{\hvval} := \qinf^{(\lambda)}(A_{ij} = \outputa), \quad
\threedstrat{\bob}{\outputb}{k}{l}{\hvval} := \qinf^{(\lambda)}(B_{kl} = \outputb), \quad
\threedstrat{\charlie}{\outputc}{p}{q}{\hvval} := \qinf^{(\lambda)}(C_{pq} = \outputc).
\end{equation}
The probability tensors $\threedstrat{\alice}{}{}{}{},\threedstrat{\bob}{}{}{}{},\threedstrat{\charlie}{}{}{}{},\isource{\hv}{}$ verify equation \eqref{eq:psinf triangle condition} thanks to equations \eqref{eq:fanout inf triangle 1} and \eqref{eq:fanout inf triangle 2} --- the manipulations are analogous to those of the proof of \cref{lem:connection sc} --- so that $\targetp{}{}{}\in\mathcal I^{\textup{(tr)}}(\ninf,\ninfcons=2)$.

Conversely, given $\targetp{}{}{} \in \mathcal I^{\textup{(tr)}}(\ninf,\ninfcons=2)$ and the tensors $\threedstrat{\alice}{}{}{}{},\threedstrat{\bob}{}{}{}{},\threedstrat{\charlie}{}{}{}{},\isource{\hv}{}$ associated to it, define the probability distribution $\qinf$ such that, for all $ \{a_{ij}\}_{i,j=1}^\ninf,\{b_{kl}\}_{k,l=1}^\ninf,\{c_{pq}\}_{p,q=1}^\ninf$,
\begin{multline}
\qinf(
\{A_{ij} = a_{ij},
B_{kl} = b_{kl},
C_{pq} = c_{pq}\}_{i,j,k,l,p,q=1}^\ninf
) 
:= \\
\sum_{\hvval} \isource{\hv}{\hvval} 
\frac{1}{n!^3} \sum_{\sigma,\pi,\tau\in S_\ninf} 
\left(\prod_{i,j=1}^\ninf 
\threedstrat[2]{\alice}{a_{\tau(i)\sigma(j)}}{i}{j}{\hvval}
\right)
\left(\prod_{k,l=1}^\ninf 
\threedstrat[2]{\bob}{b_{\sigma(k)\pi(l)}}{k}{l}{\hvval}
\right)
\left(\prod_{p,q=1}^\ninf 
\threedstrat[2]{\charlie}{c_{\pi(p)\tau(q)}}{p}{q}{\hvval}
\right).
\end{multline}
This distribution $\qinf$ verifies the symmetry condition \eqref{eq:fanout inf triangle 1} by design, and verifies \eqref{eq:fanout inf triangle 2} thanks to \eqref{eq:psinf triangle condition} after some manipulations similar to those of \cref{lem:connection sc}. This implies $\targetp{}{}{}\in\mathcal I^{(\textup{tr})}_{\textup{alt}}(\ninf,\ninfcons=2)$.
\end{proof}

Notice that as $\ninf$ grows, the outer approximation $\mathcal I^{(\textup{tr})}(\ninf,\ninfcons=2)$ will converge to the set $\outputdistribs{\network^{\textup{(tr)}}}$: this is a special case of the proof of \cite{navascues_inflation_2020}. It can also be seen intuitively from equation \eqref{eq:psinf triangle condition} together with the arguments of \cref{sec:outer approx}, and formally from \cref{th:convergence} after adapting the proof to the more minimal postselection of equation \eqref{eq:psinf triangle condition}.

\subsection{Triangle network: one strategy}
\label{sec:triangle network one strat}

Now, what about the case of the triangle network, but with only one strategy for the agents? This defines the network
\begin{subequations}
\begin{align}
\bar\network^{\textup{(tr)}} &= (\pcount = 1,\npcount = 3,\scount = 3,\pmap,\cmap), \\
\forall \npindex \in \{1,2,3\},\ \pmap(\npindex) &= 1, &\netnote{one strategy}\\
\cmap(1) &= (2,3),\ \cmap(2) = (3,1),\ \cmap(3) = (1,2),
\end{align}
\end{subequations}
which we sketch in \cref{fig:triangle config one strat}.
\begin{figure}[h!]
\centering
\begin{equation*}
\centertikz{
\node[agent] (a) at (0.5*100pt,0.866*100pt) {$A$};
\node[agent] (b) at (100pt,0)  {$A$};
\node[agent] (c) at (0,0)  {$A$};
\node[source] (alpha) at ($0.5*(b) + 0.5*(c)$) {$1$};
\node[source] (beta) at ($0.5*(a)+0.5*(c)$) {$2$};
\node[source] (gamma) at ($0.5*(a)+0.5*(b)$) {$3$};
\drawabsleg (beta) -- (a);
\drawabsleg (gamma) -- (a);
\drawabsleg (alpha) -- (b);
\drawabsleg (gamma) -- (b);
\drawabsleg (alpha) -- (c);
\drawabsleg (beta) -- (c);
}
\end{equation*}
\caption{The network $\bar\network^{\textup{(tr)}}$, corresponding to the triangle network with only one strategy for the three agents.}
\label{fig:triangle config one strat}
\end{figure}

A set of outer approximations that converges to the set $\outputdistribs{\bar\network^{\textup{(tr)}}}$ would be those defined in \cref{def:postselected inflation set} --- in this case, it does not seem to be possible to reduce the amount of postselection. Let us choose $\ninfcons = 2$; we then need $\ninf \geq 6$ to have a feasible postselection on the $\ninfcons \cdot \scount = 6$ sources present in the postselected inflation. The general construction of \cref{def:postselected inflation set} directly yields the following outer approximation:
\begin{multline}
\bar{\mathcal I}^{\textup{(tr)}}(\ninf=6,\ninfcons=2) := \Bigg\{ 
\targetp{}{}{} \Bigg| \exists \threedstrat{\alice}{}{}{}{},\isource{\hv}{} \textup{ s.t. } \\
\centertikz{
\node[detnode] (a1) {$A$};
\node[detnode] (b1) [right=20pt of a1] {$A$};
\node[detnode] (c1) [right=20pt of b1] {$A$};
\node[detnode] (a2) [right=20pt of c1] {$A$};
\node[detnode] (b2) [right=20pt of a2] {$A$};
\node[detnode] (c2) [right=20pt of b2] {$A$};
\foreach \x in {a,b,c}
{
	\foreach \y in {1,2}
	{
		\node[voidnode] (o\x\y) [above=\outcomevspace of \x\y] {};
		\drawleg (\x\y.north) -- (o\x\y.south);
	}
}
\node[tensornode] (gamma1) [below=75pt of a1] {$\macrodsunif{6}$};
\node[tensornode] (beta1) [below=75pt of b1] {$\macrodsunif{6}$};
\node[tensornode] (alpha1) [below=75pt of c1] {$\macrodsunif{6}$};
\node[tensornode] (gamma2) [below=75pt of a2] {$\macrodsunif{6}$};
\node[tensornode] (beta2) [below=75pt of b2] {$\macrodsunif{6}$};
\node[tensornode] (alpha2) [below=75pt of c2] {$\macrodsunif{6}$};
\foreach \x in {alpha,beta,gamma}
{
	\foreach \y in {1,2}
	{
		\node[copynode] (copy\x\y) [above=\outcomevspace of \x\y] {};
		\drawleg (\x\y.north) -- (copy\x\y);
	}
}
\foreach \x in {1,2}
{
	\drawleg (copyalpha\x) -- \AnchorTwoThree{b\x}{south};
	\drawleg (copyalpha\x) -- \AnchorOneThree{c\x}{south};
	\drawleg (copybeta\x) -- \AnchorTwoThree{c\x}{south};
	\drawleg (copybeta\x) -- \AnchorOneThree{a\x}{south};
	\drawleg (copygamma\x) -- \AnchorTwoThree{a\x}{south};
	\drawleg (copygamma\x) -- \AnchorOneThree{b\x}{south};
}
%
%
\node[psnode] (ps) [above left=45pt and 20pt of gamma1] {\psdiffname{6}};
\foreach \x/\yl/\yr in {gamma1/1.0/0.0,beta1/0.8/0.2,alpha1/0.6/0.4,gamma2/0.4/0.6,beta2/0.2/0.8,alpha2/0.0/1.0}
	\drawdashedleg (copy\x) to [out=170,in=330] ($\yl*(ps.south west)+\yr*(ps.south east)$);
%
%
\node[tensornode] (hv) [below right=20pt and 30pt of c2] {$\hv$};
\node[copynode] (copyhv) [above=\outcomevspace of hv] {};
\drawleg (hv.north) -- (copyhv);
\foreach \x in {a,b,c}
{
	\foreach \y in {1,2}
		\drawdashedleg (copyhv) to [out=180,in=340] ($0.2*(\x\y.south west)+0.8*(\x\y.south east)$);
}
}
\ =\ 
\targetp{}{}{}\targetp{}{}{}\Bigg\}. \label{eq:psinf triangle one strat condition}
\end{multline}
Notice that the agent $A$ will never see the two inputs from the $\isource{\macrodsunif{6}}{}$ sources being equal in the network of equation \eqref{eq:psinf triangle one strat condition}: computationally speaking, one can thus safely assume that the agent output a fixed, default outcome in this case, or even avoid storing that information entirely. This is apparent in the fanout inflation formulation.

\begin{figure}[h!]
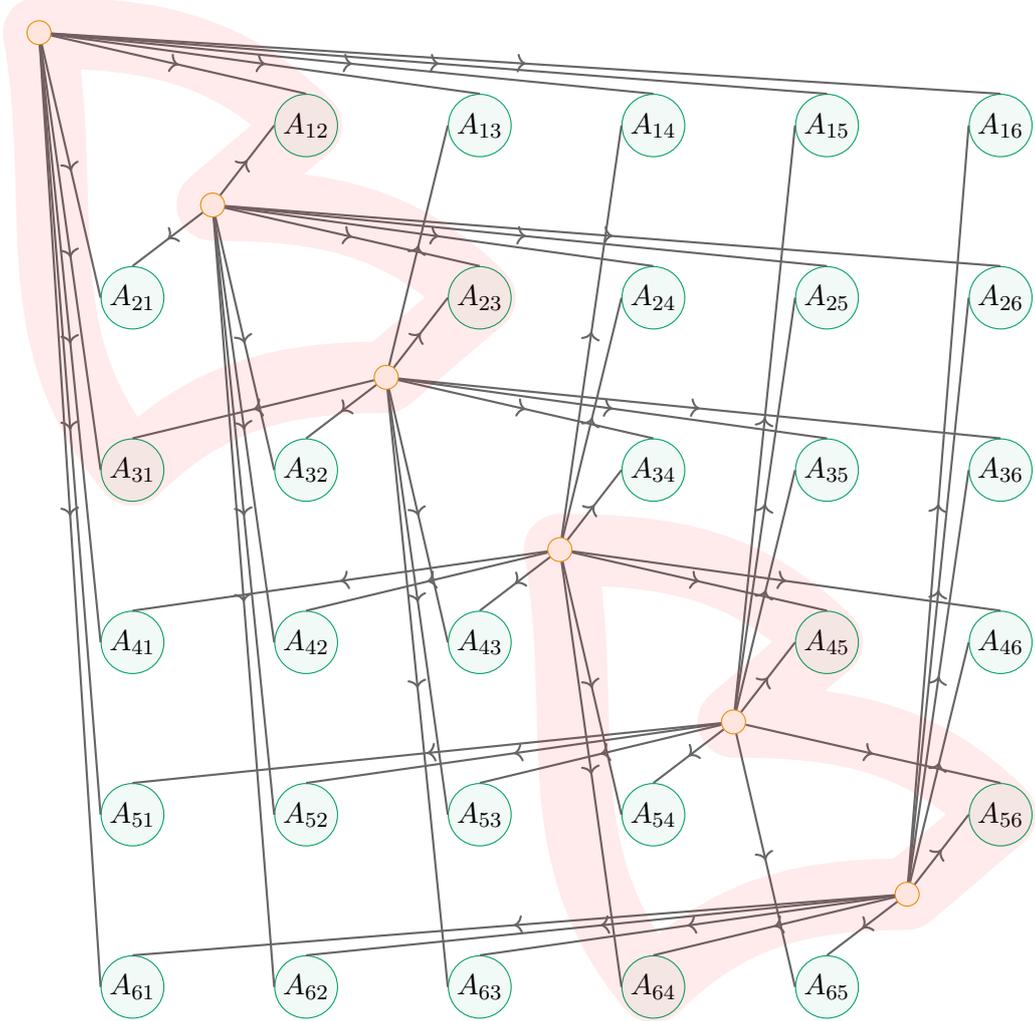

\centering
\begin{equation*}
\centertikz{
\foreach \x in {1,2,3,4,5,6}
{
	\foreach \y in {1,2,3,4,5,6}
	{
		\ifnum\x=\y
		\relax
		\else
		\node[agent] (a\x\y) at (\y*65pt,-\x*65pt) {$A_{\x\y}$};
		\fi
	}
}
\foreach \x in {1,2,3,4,5,6}
{
	\node[source] (alpha\x) at (-65pt+\x*65pt+30pt,65pt-\x*65pt-30pt) {};
	\foreach \y in {1,2,3,4,5,6}
	{ 
		\ifnum\x=\y
		\relax
		\else
		\drawabsleg (alpha\x) -- (a\x\y.north);
		\drawabsleg (alpha\x) -- (a\y\x.west);
		\fi
	}
}
\draw[join=round,cap=round,line width=27pt,draw=red,opacity=0.08] (a12.center) -- (alpha2.center) to [out=0,in=135] (a23.center) -- (alpha3.center) to [out=180,in=45] (a31.center) to [out=130,in=280] (alpha1.center) to [out=0,in=135] cycle;
\draw[join=round,cap=round,line width=27pt,draw=red,opacity=0.08] (a45.center) -- (alpha5.center) to [out=0,in=135] (a56.center) -- (alpha6.center) to [out=180,in=45] (a64.center) to [out=130,in=280] (alpha4.center) to [out=0,in=135] cycle;
}
\end{equation*}
\caption{The fanout inflation graph of $\bar{\mathcal I}^{(\textup{tr})}_{\textup{alt}}(\ninf=6,\ninfcons=2)$. The red hyperedge represents the known marginal of equation \eqref{eq:inf triangle one strat q first condition}.}
\label{fig:connection triangle one strat graph}
\end{figure}

The corresponding fanout inflation, whose graph is shown in \cref{fig:connection triangle one strat graph}, is the following:
\begin{subequations}
\label{eq:def fanout inf triangle one strat}
\begin{align}
\bar{\mathcal I}^{(\textup{tr})}_{\textup{alt}}(\ninf=6,\ninfcons=2)
:=  
\Bigg\{ \targetp{}{}{}
\Bigg|\exists 
&\text{ a probability distribution }\qinf \text{ over the outcomes of the agents }\centertikz{\node[agent] {$A_{ij}$};}\nonumber\\
&\text{ of the graph of \cref{fig:connection triangle one strat graph} such that }
\forall \{a_{ij}\}_{i\neq j \in \{1,\dots,6\}}\st \\
\forall \sigma\in S_6\st &q(\{A_{ij} = a_{ij}\}_{i\neq j \in \{1,\dots,6\}}) = q(\{A_{ij} = a_{\sigma(i)\sigma(j)}\}_{i\neq j \in \{1,\dots,6\}}), \label{eq:inf triangle one strat sym sc}\\
q(\{A_{12}\!=\!a_{12}, A_{23}\!=\!a_{23}, A_{31}\!&=\!a_{31}, A_{45}\!=\!a_{45}, A_{56}\!=\!a_{56}, A_{64}\!=\!a_{64}\})\!=\!\! \targetp[1]{a_{12}}{a_{23}}{a_{31}}\!\!\targetp[1]{a_{45}}{a_{56}}{a_{64}}\! \label{eq:inf triangle one strat q first condition}\Bigg\}.
\end{align}
\end{subequations}

The equality between the two characterizations works as in the previous two examples:
\begin{lemma}
It holds that $\bar{\mathcal I}^{(\textup{tr})}_{\textup{alt}}(\ninf=6,\ninfcons=2) = \bar{\mathcal I}^{(\textup{tr})}(\ninf=6,\ninfcons=2)$.
\end{lemma}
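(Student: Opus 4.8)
The plan is to mirror the two equality proofs already carried out for the Correlated Sleeper (\cref{lem:connection sc}) and for the three-strategy triangle, adapting them to the single-strategy symmetry group $S_6$ acting \emph{diagonally} on both indices of the agents $A_{ij}$. For the inclusion $\bar{\mathcal I}^{(\textup{tr})}_{\textup{alt}}(\ninf=6,\ninfcons=2) \subseteq \bar{\mathcal I}^{(\textup{tr})}(\ninf=6,\ninfcons=2)$, I would start from a point $\targetp{}{}{}$ in the alt set together with its distribution $\qinf$, write it as a convex mixture of deterministic distributions $\qinf = \sum_{\hvval} \isource{\hv}{\hvval}\,\qinf^{(\lambda)}$, and define the deterministic strategy tensor through the single-agent marginals $\threedstrat{\alice}{\outputa}{i}{j}{\hvval} := \qinf^{(\lambda)}(A_{ij} = \outputa)$. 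I would then expand the left-hand side of \eqref{eq:psinf triangle one strat condition}: the postselection $\psdiffname{6}$ forces the six source values to be pairwise distinct, and since $\ninf = 6 = \ninfcons\cdot\scount$, these distinct values exhaust $\{1,\dots,6\}$ and hence range exactly over the permutations $\sigma \in S_6$. Relabelling the summation index accordingly, the left-hand side becomes an $S_6$-average of $\qinf$ evaluated on the two relabelled triangles, which collapses to a single term by the symmetry \eqref{eq:inf triangle one strat sym sc} and equals $\targetp{}{}{}\targetp{}{}{}$ by the marginal constraint \eqref{eq:inf triangle one strat q first condition}. These manipulations are the verbatim analogue of the computation \eqref{eq:temp correspondence qinf 2} in \cref{lem:connection sc}.

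For the reverse inclusion, given tensors $\threedstrat{\alice}{}{}{}{}$ and $\isource{\hv}{}$ (with finitely many outputs $\hvval$, without loss of generality) realizing $\targetp{}{}{}$ in the postselected inflation set, I would define the distribution
\begin{equation*}
\qinf(\{A_{ij} = a_{ij}\}_{i\neq j}) := \sum_{\hvval} \isource{\hv}{\hvval}\, \frac{1}{6!} \sum_{\sigma\in S_6} \prod_{i\neq j} \threedstrat[2]{\alice}{a_{\sigma(i)\sigma(j)}}{i}{j}{\hvval}.
\end{equation*}
The diagonal symmetrization makes the symmetry condition \eqref{eq:inf triangle one strat sym sc} hold by construction, and the marginal constraint \eqref{eq:inf triangle one strat q first condition} follows from \eqref{eq:psinf triangle one strat condition} after restricting the product to the six agents of the two triangles $\{A_{12},A_{23},A_{31}\}$ and $\{A_{45},A_{56},A_{64}\}$ and matching the $S_6$-average against the postselected source values, exactly as in \cref{lem:connection sc}.

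The main obstacle is the bookkeeping specific to having a \emph{single} strategy: unlike the three-strategy case, where three independent permutations decouple the row and column relabellings of the three distinct agent types, here one permutation $\sigma$ simultaneously permutes both the first and the second input index of every agent. One must therefore check that this diagonal action is precisely what the postselected source routing produces --- that distributing a single permutation over the six pairwise-distinct source values feeds value $\sigma(i)$ into the first leg and $\sigma(j)$ into the second leg of the agent sitting at inflation position $(i,j)$. Verifying this routing, together with the point that $\ninf = 6$ forces the injective source assignments to be surjective and hence to coincide exactly with the elements of $S_6$, is the only genuinely new step; the remainder is a transcription of the two preceding proofs.
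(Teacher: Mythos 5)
Your proposal is correct and follows essentially the same route as the paper: the same decomposition of $\qinf$ into deterministic behaviors with the strategy defined via single-agent marginals in one direction, and the same diagonally $S_6$-symmetrized product construction of $\qinf$ in the other. The extra care you take with the single-permutation diagonal action and the observation that $\ninf=6$ forces the postselected source assignments to be exactly the elements of $S_6$ is precisely the point the paper leaves implicit, and it is handled correctly.
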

\begin{proof}
Given $\targetp{}{}{}\in\bar{\mathcal I}^{(\textup{tr})}_{\textup{alt}}(\ninf=6,\ninfcons=2)$ and the associated $\qinf$, decompose $\qinf$ as a mixture of deterministic behaviors
\begin{equation}
\qinf =: \sum_{\hvval} \isource{\hv}{\hvval} \qinf^{(\lambda)}
\end{equation}
and define the deterministic tensor $\threedstrat{\alice}{}{}{}{}$ for all $i\neq j\in\{1,\dots,6\}$ and $\outputa,\hvval$ through\footnote{The behavior whenever $i = j$ is irrelevant.}
\begin{equation}
\threedstrat{\alice}{\outputa}{i}{j}{\hvval} := \qinf^{(\lambda)}(A_{ij} = \outputa).
\end{equation}
This will show that $\targetp{}{}{}\in\bar{\mathcal I}^{\textup{(tr)}}(\ninf=6,\ninfcons=2)$.

Conversely, given $\targetp{}{}{} \in\bar{\mathcal I}^{\textup{(tr)}}(\ninf=6,\ninfcons=2)$ and the associated $\threedstrat{\alice}{}{}{}{},\isource{\hv}{}$, define the probability distribution $\qinf$ for all $\{a_{ij}\}_{i\neq j \in \{1,\dots,6\}}$ through
\begin{equation}
\qinf(\{A_{ij} = a_{ij}\}_{i\neq j \in \{1,\dots,6\}}) :=
\sum_{\hvval} \isource{\hv}{\hvval} \frac{1}{6!} \sum_{\sigma\in S_6}
\prod_{i\neq j \in \{1,\dots,6\}}
\threedstrat[2]{\alice}{\outputa_{\sigma(i)\sigma(j)}}{i}{j}{\hvval}.
\end{equation}
This will show that $\targetp{}{}{}\in\bar{\mathcal I}^{\textup{(tr)}}_{\textup{alt}}(\ninf=6,\ninfcons=2)$.
\end{proof}

The above set $\bar{\mathcal I}^{(\textup{tr})}(\ninf=6,\ninfcons=2)$, whose generalization to arbitrary $\ninf$ is clear from \cref{def:postselected inflation set}, would indeed converge to the set $\mathcal L\big( \bar{\network}^{(\textup{tr})}\big)$ thanks to \cref{th:convergence}. However, at the finite order of $\ninf=6$ that we are considering here, we are missing one constraint: this is the constraint that the postselected inflation of equation \eqref{eq:psinf triangle one strat condition} should additionally verify
\begin{equation}
\centertikz{
\foreach \x in {0,1,2}
{	
	\node[detnode] (a\x) at (\x*60pt,0pt) {$A$};
	\node[voidnode] (o\x) [above=\outcomevspace of a\x] {};
	\drawleg (a\x.north) -- (o\x.south);
}
\foreach \x/\y in {00/0,01/1,10/2,11/3,20/4,21/5}
{
	\node[tensornode] (s\x) at (-15pt+\y*30pt,-60pt) {$\macrodsunif{6}$};
	\node[copynode] (copys\x) [above=\outcomevspace of s\x]  {};
	\drawleg (s\x.north) -- (copys\x);
}
\foreach \x in {0,1,2}
{
	\drawleg (copys\x0) -- \AnchorOneThree{a\x}{south};
	\drawleg (copys\x1) -- \AnchorTwoThree{a\x}{south};
}
%
%
\node[psnode] (ps) [above left=45pt and 20pt of s00] {\psdiffname{6}};
\foreach \x/\yl/\yr in {00/1.0/0.0,01/0.8/0.2,10/0.6/0.4,11/0.4/0.6,20/0.2/0.8,21/0.0/1.0}
	\drawdashedleg (copys\x) to [out=170,in=330] ($\yl*(ps.south west)+\yr*(ps.south east)$);
%
%
\node[tensornode] (hv) [below right=20pt and 30pt of a2] {$\hv$};
\node[copynode] (copyhv) [above=\outcomevspace of hv] {};
\drawleg (hv.north) -- (copyhv);
\foreach \x in {0,1,2}
	\drawdashedleg (copyhv) to [out=180,in=340] ($0.2*(a\x.south west)+0.8*(a\x.south east)$);
}
=
\targetponemarg{}\targetponemarg{}\targetponemarg{}.
\end{equation}
The corresponding fanout inflation of equation \eqref{eq:def fanout inf triangle one strat} should verify
\begin{equation}
\forall a_{12},a_{34},a_{56}\st \qinf(\{A_{12} = a_{12},A_{34} = a_{34}, A_{56} = a_{56}\}) = \targetponemarg{a_{12}}\targetponemarg{a_{34}}\targetponemarg{a_{56}}.
\end{equation}

\newpage
\section{Outlook}

In this work, we introduced the postselected inflation framework that can be seen as a reformulation of the fanout inflation framework as exemplified in \cref{sec:fanout inflation}.
Despite the mathematical equivalence, the postselected inflation framework allows to conveniently devise converging outer approximations of the set of distributions causally compatible with a given classical multi-network scenario, in particular in the case where several agents are using the same strategy.
The general idea behind the convergence of these outer approximations was presented in \cref{sec:outer approx} and formally proven in \cref{sec:post-selected inflation}.

\paragraph{Further developments?}

Certain basic problems of causal compatibility remain open to this day. An interesting example is the outcome distribution of \cite{gisin_elegant_2017} that is causally compatible with the quantum triangle network. There, although it is believed that the distribution is not causally compatible with the classical triangle network, the inflation technique is not able to certify this causal incompatibility with modern computing power.
Successfully proving this incompatibility may involve the formulation of efficient outer approximation schemes to supplement the inflation framework.

\paragraph{Quantum analogues?}

The quantum analogues of fanout inflation in the context of networks featuring quantum sources are very natural to formulate, and were extensively studied \cite{wolfe_quantum_2021,ligthart_convergent_2021}. Whether the postselected inflation formulation may open the door to alternative outer approximation schemes in the quantum case is open.

\newpage
\section*{Acknowledgments}
\addcontentsline{toc}{section}{Acknowledgments}

I am thankful to Renato Renner, Marc-Olivier Renou, Raffaele Salvia, V. Vilasini and Elie Wolfe for their support and interest in this work.

\section*{Software}
\addcontentsline{toc}{section}{Software}

The numerical simulations were run in Python v3.8.10 (\textit{Python Software Foundation}, \href{python.org}{python.org}). The linear programming library used is MOSEK's Python Optimizer API v9.2.45 (\textit{MOSEK ApS}, \href{mosek.com}{mosek.com}). Please contact \href{mailto:vgitton@ethz.ch}{vgitton@ethz.ch} to gain access to the code and the detailed data.
The tensor networks, the network graphs and the plots were generated thanks to the TikZ and PGF packages v3.1.9a (\textit{The TikZ and PGF packages}, \href{pgf-tikz.github.io}{pgf-tikz.github.io}).

\addcontentsline{toc}{section}{References}
\bibliographystyle{my_style}
\bibliography{general_convex_out_approx}

\newpage
\appendix

\section{Deterministic strategies are sufficient for the Correlated Sleeper}
\label{app:det strat sc}

Here we state explicitly the definition of the norms we shall use.

\begin{definition}[$p$-norms]
\label{def:p norms}
Let $p\in\mathbb{R}$, $p\geq 1$, $k\in\mathbb{N}$, and $x = (x_1,\dots,x_k) \in\mathbb{R}^k$. We define
\begin{equation}
\macronorm{x}{p} = \left( \sum_{i=1}^k |x_i|^p \right)^{\frac{1}{p}}.
\end{equation}
\end{definition}

This definition will be extended in the obvious way to linear combinations of probability tensors that share the same finite output domain and that have no inputs. For instance, we can write
\begin{equation}
\onenorm{\atargetp{1}{}{} - \atargetp{2}{}{}} = \sum_{\outputa_1,\outputa_2} \left| \atargetp{1}{\outputa_1}{\outputa_2} - \atargetp{2}{\outputa_1}{\outputa_2}\right|.
\end{equation}

\subsection{Deterministic approximation}

We first prove a result regarding the set of feasible distributions in the context of the multi-network scenario described by the two networks $\scnetone$ and $\scnettwo$ of equations \eqref{eq:sc config 1} and \eqref{eq:sc config 2}; namely, that the set of feasible distributions allowing for non-deterministic strategies is the closure of the set $\outputdistribs{\scnetone,\scnettwo}$, which only allows for deterministic strategies (see \cref{def:causal compat}). The proof is based on the idea that a deterministic strategy taking the sum$\mod 2$ of the two inputs (discretized into bits) can generate local randomness. This lemma does not yet take into account the uniform-marginal constraint: this will be covered in \cref{lem:sc approx with marg}.

\begin{lemma}
\label{lem:sc approx no marg}
Let $\Big(\atargetp{1}{}{},\atargetp{2}{}{}\Big)$ be such that there exists a (Riemann integrable, as usual) probability tensor $\twostrat{\alice_0}{}{}{}$ with
\begin{equation}
\label{eq:non det strat}
\atargetp{1}{}{} = \acaseone{\alice_0}{}{}{0},\quad 
\atargetp{2}{}{} = \acasetwo{\alice_0}{}{}{0}.
\end{equation}
Then, for all $\epsilon > 0$, there exists a \emph{deterministic} probability tensor $\scdetstrat$ such that
\begin{align}
\label{eq:target det strat}
\onenorm{\atargetp{1}{}{} - \acaseone{\alice}{}{}{1}} \leq \epsilon, \quad
\onenorm{\atargetp{2}{}{} - \acasetwo{\alice}{}{}{1}} \leq \epsilon.
\end{align}
\end{lemma}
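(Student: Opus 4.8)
The plan is to approximate the given Riemann integrable strategy by a piecewise-constant one and then realize its per-cell output probabilities deterministically, extracting local randomness from the fine binary digits of the two continuous inputs via a bitwise-XOR trick. The XOR is the crucial ingredient: it must produce randomness that is independent across the two copies of the agent in \emph{both} the shared-$\alpha$ network $\scnetone$ and the shared-$\beta$ network $\scnettwo$, so that a single deterministic strategy approximates $p_1$ and $p_2$ at once.

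First I would discretize. The maps $p \mapsto p_1[p]$ and $p \mapsto p_2[p]$ defined by the integrals in \eqref{eq:non det strat} are bilinear, and since all tensors are bounded by $1$, the telescoping estimate $|uv - \bar u\bar v| \le |u-\bar u| + |v-\bar v|$ (valid for arguments in $[-1,1]$) yields $\onenorm{p_i[p] - p_i[\bar p]} \le C \sum_a \int_{[0,1]^2} |p(a|\cdot) - \bar p(a|\cdot)|$ for a universal constant $C$. Since $p_{A_0}$ is Riemann integrable (bounded and a.e.\ continuous), its cell averages $\bar p$ over a uniform $2^K\times 2^K$ grid of $[0,1]^2$ converge to it in $L^1$ as $K\to\infty$, so I can arrange $\onenorm{p_i[p_{A_0}] - p_i[\bar p]} \le \epsilon/2$ for $i=1,2$. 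On each cell $(i,j)$ the averaged strategy outputs $1$ with some probability $q_{ij}\in[0,1]$; rounding to a dyadic rational $\tilde q_{ij}=m_{ij}/2^N$ with $|q_{ij}-\tilde q_{ij}|\le 2^{-N}$ perturbs each $p_i[\cdot]$ by at most $C\,2^{-N}$.

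Next I would construct the deterministic strategy $f$. Writing $\alpha = 0.\alpha[1]\alpha[2]\dots$ in binary, let the first $K$ bits select the coarse cell index and let $s_\alpha := (\alpha[K+1],\dots,\alpha[K+N]) \in \{0,1\}^N$ be the fine string, with $s_\beta$ defined analogously from $\beta$; dyadic ambiguities of binary expansions occur on a measure-zero set and do not affect the integrals. The strategy reads the coarse indices $i,j$, forms $r := s_\alpha \oplus s_\beta$ (bitwise XOR, read as an integer in $\{0,\dots,2^N-1\}$), and outputs $1$ iff $r < m_{ij}$, else $2$. The content of the XOR trick is the identity: if $s_\alpha, s_{\beta_1}, s_{\beta_2}$ are independent and uniform on $\{0,1\}^N$, then $r_1 := s_\alpha\oplus s_{\beta_1}$ and $r_2 := s_\alpha\oplus s_{\beta_2}$ are \emph{independent} and uniform, since $\mathbb P(r_1=x,r_2=y) = \sum_{s_\alpha} 2^{-3N} = 2^{-2N}$ factorizes over $x,y$.

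Finally I would evaluate the correlations. In $\scnetone$ the two copies share $\alpha$, hence share $s_\alpha$ and the coarse index $i$, while their independent right inputs give independent $s_{\beta_1}, s_{\beta_2}$; by the XOR identity the two per-round outputs become independent given the coarse indices, with round-$k$ marginal $\mathbb P(\text{output }1) = m_{ij_k}/2^N = \tilde q_{ij_k}$. Summing over the uniform coarse indices shows that the resulting $p_1^{\mathrm{det}}$ equals \emph{exactly} $p_1[\tilde p]$, the correlation of the dyadic piecewise-constant strategy; the same computation with $\alpha$ and $\beta$ exchanged gives $p_2^{\mathrm{det}} = p_2[\tilde p]$. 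Chaining the two approximation bounds then yields $\onenorm{p_i - p_i^{\mathrm{det}}} \le \epsilon/2 + C\,2^{-N} \le \epsilon$ for $N$ large, which is \eqref{eq:target det strat}. The main obstacle is precisely securing independence of the extracted local randomness across copies in both networks simultaneously: this is what forces the symmetric XOR of fine bits from both inputs, since extracting randomness from a single input would reproduce one of $p_1, p_2$ but correlate the copies in the other.
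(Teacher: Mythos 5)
Your proof is correct, and its engine is the same as the paper's: local randomness that is simultaneously independent across the two agent copies in \emph{both} networks is manufactured by combining discretized fine-scale information from both inputs through a group operation, so that whichever input happens to be shared is masked by the fresh one. The paper implements this with a sum modulo $n$ of the discrete outputs of an extractor $E_n$ applied to $\alpha$ and to $\beta$; your bitwise XOR of $N$ fine bits is the same trick with the group $(\mathbb{Z}/2)^N$ in place of $\mathbb{Z}/n$, and your independence computation is exactly the paper's identity that adding an independent uniform discrete value yields a uniform output independent of the other summand. Where you genuinely differ is in the surrounding approximation bookkeeping. The paper first dilates the strategy $\alice_0$ into a deterministic strategy $\alice_1$ with an auxiliary local source uniform on $[0,1]$, invokes Riemann integrability once to replace that source by a uniform source on $n$ values, and then simulates the discrete seed \emph{exactly}: the extractor $E_n$ is an invertible, measure-preserving splitting of $[0,1]$ into a discrete label and a continuous residual, so the residuals are forwarded to $\alice_1$ unchanged and all remaining tensor identities hold with equality. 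You instead discretize the strategy itself into cell averages on a $2^K\times 2^K$ grid and round the cell probabilities to dyadic rationals, which costs you two approximation steps but spares you the residual legs entirely, since a piecewise-constant dyadic strategy needs only finitely many seed bits. Both routes land on the same exact factorization conditional on the coarse data, and your error accounting (bilinearity of the network functionals plus $L^1$ convergence of cell averages for a bounded, a.e.\ continuous function) is sound; your closing remark that a one-sided extraction would fail in one of the two networks is precisely the obstruction the paper's $\oplus_n$ construction is designed to evade.
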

\begin{proof}
We can always rewrite $\twostrat{\alice_0}{}{}{}$ as a deterministic strategy $\macrothreestrat{1}{\alice_1}{}{}{}{}$ with an extra input connected to a local source of randomness:
\begin{equation}
\label{eq:def a1}
\twostrat{\alice_0}{}{}{} =:
\centertikz{
\node[detnode] (a) {$\alice_1$};
\node[voidnode] (o) [above=\outcomevspace of a] {};
\drawleg (a.north) -- (o.south);
\node[tensornode] (u) [below left=10pt and 0pt of a] {$\maxunif$};
\drawleg (u.north) -- \AnchorOneThree{a}{south};
\node[voidnode] (i1) [below=15pt of a] {};
\node[voidnode] (i2) [below right=15pt and 0pt of a] {};
\drawleg (i1.north) -- \AnchorTwoThree{a}{south};
\drawleg (i2.north) -- \AnchorThreeThree{a}{south};
}.
\end{equation}
We will make use of our assumption of Riemann integrability to approximate this newly introduced $\isource{\maxunif}{}$ source. In particular, there exists $\ninf \in\mathbb{N}$ such that\footnote{Technically, going from \eqref{eq:def a1} to \eqref{eq:approx a}, the tensor $\macrothreestrat{1}{\alice_1}{}{}{}{}{}$ needs to apply a rescaling of the input $\dsvalalpha\in\{1,\dots,\ninf\}$ coming from the $\isource{\dsunif}{}$ to map it to $\dsvalalpha/\ninf \in [0,1]$, but we leave this $\ninf$-dependence implicit.}
\begin{subequations}
\label{eq:approx a}
\begin{align}
\onenorm{\atargetp{1}{}{} - 
\centertikz{
\node[detnode] (a1) at (-5pt,10pt) {$\alice_1$};
\node[detnode] (a2) at (35pt,10pt) {$\alice_1$};
\node[voidnode] (o1) [above=\outcomevspace of a1] {};
\node[voidnode] (o2) [above=\outcomevspace of a2] {};
\drawleg (a1.north) -- (o1.south);
\drawleg (a2.north) -- (o2.south);
\node[tensornode] (un1) at (-45pt,-45pt) {$\dsunif$};
\drawleg (un1.north) -- \AnchorOneThree{a1}{south};
\node[tensornode] (u1) at (-15pt,-45pt) {$\maxunif$};
\node[copynode] (copy) [above=\outcomevspace of u1] {};
\drawleg (u1.north) -- (copy);
\drawleg (copy) -- \AnchorTwoThree{a1}{south};
\drawleg (copy) -- \AnchorTwoThree{a2}{south};
\node[tensornode] (u2) at (15pt,-45pt) {$\maxunif$};
\drawleg (u2.north) -- \AnchorThreeThree{a1}{south};
\node[tensornode] (un2) at (45pt,-45pt) {$\dsunif$};
\drawleg (un2.north) -- \AnchorOneThree{a2}{south};
\node[tensornode] (u3) at (75pt,-45pt) {$\maxunif$};
\drawleg (u3.north) -- \AnchorThreeThree{a2}{south};
}
} &\leq \epsilon,
\end{align}
\begin{align}
\onenorm{\atargetp{2}{}{} - 
\centertikz{
\node[detnode] (a1) at (-5pt,10pt) {$\alice_1$};
\node[detnode] (a2) at (35pt,10pt) {$\alice_1$};
\node[voidnode] (o1) [above=\outcomevspace of a1] {};
\node[voidnode] (o2) [above=\outcomevspace of a2] {};
\drawleg (a1.north) -- (o1.south);
\drawleg (a2.north) -- (o2.south);
\node[tensornode] (un1) at (-45pt,-45pt) {$\dsunif$};
\drawleg (un1.north) -- \AnchorOneThree{a1}{south};
\node[tensornode] (u1) at (-15pt,-45pt) {$\maxunif$};
\drawleg (u1.north) -- \AnchorTwoThree{a1}{south};
\node[tensornode] (un2) at (15pt,-45pt) {$\dsunif$};
\drawleg (un2.north) -- \AnchorOneThree{a2}{south};
\node[tensornode] (u2) at (45pt,-45pt) {$\maxunif$};
\drawleg (u2.north) -- \AnchorTwoThree{a2}{south};
\node[tensornode] (u3) at (75pt,-45pt) {$\maxunif$};
\node[copynode] (copy) [above=\outcomevspace of u3] {};
\drawleg (u3.north) -- (copy);
\drawleg (copy) -- \AnchorThreeThree{a1}{south};
\drawleg (copy) -- \AnchorThreeThree{a2}{south};
}
} &\leq \epsilon.
\end{align}
\end{subequations}
The deterministic tensor $\scdetstrat$ that achieves \eqref{eq:target det strat} will be one that is such that
\begin{subequations}
\label{eq:condition for det a}
\begin{align}
\acaseone{\alice}{}{}{1} &= \centertikz{
\node[detnode] (a1) at (-5pt,10pt) {$\alice_1$};
\node[detnode] (a2) at (35pt,10pt) {$\alice_1$};
\node[voidnode] (o1) [above=\outcomevspace of a1] {};
\node[voidnode] (o2) [above=\outcomevspace of a2] {};
\drawleg (a1.north) -- (o1.south);
\drawleg (a2.north) -- (o2.south);
\node[tensornode] (un1) at (-45pt,-45pt) {$\dsunif$};
\drawleg (un1.north) -- \AnchorOneThree{a1}{south};
\node[tensornode] (u1) at (-15pt,-45pt) {$\maxunif$};
\node[copynode] (copy) [above=\outcomevspace of u1] {};
\drawleg (u1.north) -- (copy);
\drawleg (copy) -- \AnchorTwoThree{a1}{south};
\drawleg (copy) -- \AnchorTwoThree{a2}{south};
\node[tensornode] (u2) at (15pt,-45pt) {$\maxunif$};
\drawleg (u2.north) -- \AnchorThreeThree{a1}{south};
\node[tensornode] (un2) at (45pt,-45pt) {$\dsunif$};
\drawleg (un2.north) -- \AnchorOneThree{a2}{south};
\node[tensornode] (u3) at (75pt,-45pt) {$\maxunif$};
\drawleg (u3.north) -- \AnchorThreeThree{a2}{south};
}, \label{eq:condition for det a caseone} \\
\acasetwo{\alice}{}{}{1} &= \centertikz{
\node[detnode] (a1) at (-5pt,10pt) {$\alice_1$};
\node[detnode] (a2) at (35pt,10pt) {$\alice_1$};
\node[voidnode] (o1) [above=\outcomevspace of a1] {};
\node[voidnode] (o2) [above=\outcomevspace of a2] {};
\drawleg (a1.north) -- (o1.south);
\drawleg (a2.north) -- (o2.south);
\node[tensornode] (un1) at (-45pt,-45pt) {$\dsunif$};
\drawleg (un1.north) -- \AnchorOneThree{a1}{south};
\node[tensornode] (u1) at (-15pt,-45pt) {$\maxunif$};
\drawleg (u1.north) -- \AnchorTwoThree{a1}{south};
\node[tensornode] (un2) at (15pt,-45pt) {$\dsunif$};
\drawleg (un2.north) -- \AnchorOneThree{a2}{south};
\node[tensornode] (u2) at (45pt,-45pt) {$\maxunif$};
\drawleg (u2.north) -- \AnchorTwoThree{a2}{south};
\node[tensornode] (u3) at (75pt,-45pt) {$\maxunif$};
\node[copynode] (copy) [above=\outcomevspace of u3] {};
\drawleg (u3.north) -- (copy);
\drawleg (copy) -- \AnchorThreeThree{a1}{south};
\drawleg (copy) -- \AnchorThreeThree{a2}{south};
}. \label{eq:condition for det a casetwo}
\end{align}
\end{subequations}
How can $\scdetstrat$ simulate a local source of randomness using only a deterministic function of the two inputs, $\alpha$ and $\beta$? Surely, using, say, the left input $\alpha$ only as a tentative source of local randomness will not do the trick when trying to reproduce $\atargetp{1}{}{}$.
However, suppose that we define an extractor function labeled $E_\ninf$ defined through:  
for all $\dsvalalpha\in\{1,\dots,\ninf\},$ for all $\alpha,\alpha'\in[0,1]$,
\begin{equation}
\centertikz{
\node[detnode] (g) {$E_\ninf$};
\node[voidnode] (o1) [above left=\outcomevspace and \outcomehspacetargetp of g] {\indexstyle{i}};
\node[voidnode] (o2) [above right=\outcomevspace and \outcomehspacetargetp of g] {\indexstyle{\alpha'}};
\drawleg \AnchorOneTwo{g}{north} -- (o1.south);
\drawleg \AnchorTwoTwo{g}{north} -- (o2.south);
\node[voidnode] (in) [below=\outcomevspace of g] {\indexstyle{\alpha}};
\drawleg (in.north) -- (g.south);
}
:=
\ddelta{\dsvalalpha}{\left\lfloor 1 + \ninf \alpha \right\rfloor}\delta\left(\alpha' - \left(\ninf \alpha - \left\lfloor \ninf\alpha \right\rfloor\right)\right).
\end{equation}
For instance, in the case $\ninf = 3$,
\begin{equation}
\centertikz{
\node[detnode] (g) {$E_3$};
\node[voidnode] (o1) [above left=\outcomevspace and \outcomehspacetargetp of g] {\indexstyle{i}};
\node[voidnode] (o2) [above right=\outcomevspace and \outcomehspacetargetp of g] {\indexstyle{\alpha'}};
\drawleg \AnchorOneTwo{g}{north} -- (o1.south);
\drawleg \AnchorTwoTwo{g}{north} -- (o2.south);
\node[voidnode] (in) [below=\outcomevspace of g] {\indexstyle{\frac{2}{3} + \frac{1}{9}}};
\drawleg (in.north) -- (g.south);
}
= 
\ddelta{\dsvalalpha}{3}\delta\left(\alpha' - \frac{1}{3}\right).
\end{equation}
We furthermore define a sum modulo $\ninf$ function as, for all $\dsvalalpha_1,\dsvalalpha_2,\dsvalalpha_3 \in \{1,\dots,\ninf\}:$
\begin{equation}
\centertikz{
\node[detnode] (sum) {$\oplus_\ninf$};
\node[voidnode] (o) [above=\outcomevspace of sum] {\indexstyle{\dsvalalpha_3}};
\drawleg (sum.north) -- (o.south);
\node[voidnode] (i1) [below left=\outcomevspace and \outcomehspacetargetp of sum] {\indexstyle{\dsvalalpha_1}};
\node[voidnode] (i2) [below right=\outcomevspace and \outcomehspacetargetp of sum] {\indexstyle{\dsvalalpha_2}};
\drawleg (i1.north) -- \AnchorOneTwo{sum}{south};
\drawleg (i2.north) -- \AnchorTwoTwo{sum}{south};
}
:=
\delta\left(\dsvalalpha_3 - \big(\dsvalalpha_1 + \dsvalalpha_2 \mod \ninf \big)\right).
\end{equation}
With these new tensors at hand, let us define our desired tensor $\scdetstrat$, corresponding to $\alice$ extracting two discrete values from her continuous inputs and taking their sum$\mod \ninf$, before forwarding the resulting three values into the strategy $\macrothreestrat{1}{\alice_1}{}{}{}{}$ of \eqref{eq:approx a}:
\begin{equation}
\label{eq:winning choice det a}
\scdetstratargs
:=
\centertikz{
\node[detnode] (sum) {$\oplus_\ninf$};
\node[detnode] (e1) [below right=20pt and -10pt of sum] {$E_\ninf$};
\node[detnode] (e2) [below right=20pt and 20pt of sum] {$E_\ninf$};
\drawleg \AnchorOneTwo{e1}{north} -- \AnchorOneTwo{sum}{south};
\drawleg \AnchorOneTwo{e2}{north} -- \AnchorTwoTwo{sum}{south};
\node[voidnode] (alpha) [below=\outcomevspace of e1] {\indexstyle{\alpha}};
\node[voidnode] (beta) [below=\outcomevspace of e2] {\indexstyle{\beta}};
\drawleg (alpha.north) -- (e1.south);
\drawleg (beta.north) -- (e2.south);
\node[detnode] (a) [above right=20pt and 3pt of sum] {$\alice_1$};
\node[voidnode] (o) [above=\outcomevspace of a] {\indexstyle{\outputa}};
\drawleg (a.north) -- (o.south);
\drawleg (sum.north) -- \AnchorOneThree{a}{south};
\drawleg \AnchorTwoTwo{e1}{north} -- \AnchorTwoThree{a}{south};
\drawleg \AnchorTwoTwo{e2}{north} -- \AnchorThreeThree{a}{south};
}
\end{equation}
It now remains to verify \eqref{eq:condition for det a}. To do so, we will make use of three useful tensor identities.

\begin{claim}
it holds that
\begin{subequations}
\label{eq:claim det a}
\begin{align}
\centertikz{
\node[detnode] (e) {$E_\ninf$};
\node[voidnode] (o1) [above left=\outcomevspace and \outcomehspacetargetp of e] {};
\node[voidnode] (o2) [above right=\outcomevspace and \outcomehspacetargetp of e] {};
\drawleg \AnchorOneTwo{e}{north} -- (o1.south);
\drawleg \AnchorTwoTwo{e}{north} -- (o2.south);
\node[tensornode] (u) [below=10pt of e] {$\maxunif$};
\drawleg (u.north) -- (e.south);
} &= 
\isource{\dsunif}{} \isource{\maxunif}{}, \label{eq:en factorization} \\
\centertikz{
\node[detnode] (sum) {$\oplus_\ninf$};
\node[voidnode] (o) [above=\outcomevspace of sum] {};
\drawleg (sum.north) -- (o.south);
\node[tensornode] (u) [below left=15pt and -3pt of sum] {$\dsunif$};
\drawleg (u.north) -- \AnchorOneTwo{sum}{south};
\node[voidnode] (i) [below right=15pt and -3pt of sum] {};
\drawleg (i.north) -- \AnchorTwoTwo{sum}{south};
}
&=
\centertikz{
\node[margnode] (marg) {};
\node[voidnode] (i) [below=\outcomevspace of marg] {};
\drawleg (i.north) -- (marg.south);
\node[tensornode] (u) [above=\outcomevspace of marg] {$\dsunif$};
\node[voidnode] (o) [above=\outcomevspace of u] {};
\drawleg (u.north) -- (o.south);
}, \label{eq:random modulo} \\
\centertikz{
\node[detnode] (e1) at (0,0) {$E_\ninf$};
\node[margnode] (marg1) [above left=\outcomevspace and \outcomehspacetargetp of e1] {};
\node[voidnode] (o1) [above right=\outcomevspace and \outcomehspacetargetp of e1] {};
\drawleg \AnchorOneTwo{e1}{north} -- (marg1.south);
\drawleg \AnchorTwoTwo{e1}{north} -- (o1.south);
\node[detnode] (e2) at (30pt,0) {$E_\ninf$};
\node[margnode] (marg2) [above left=\outcomevspace and \outcomehspacetargetp of e2] {};
\node[voidnode] (o2) [above right=\outcomevspace and \outcomehspacetargetp of e2] {};
\drawleg \AnchorOneTwo{e2}{north} -- (marg2.south);
\drawleg \AnchorTwoTwo{e2}{north} -- (o2.south);
\node[tensornode] (u) at (15pt,-45pt) {$\maxunif$};
\node[copynode] (copy) [above=\outcomevspace of u] {};
\drawleg (u.north) -- (copy);
\drawleg (copy) -- (e1.south);
\drawleg (copy) -- (e2.south);
}
&=
\centertikz{
\node[tensornode] (u)  {$\maxunif$};
\node[copynode] (copy) [above=\outcomevspace of u] {};
\drawleg (u.north) -- (copy);
\node[voidnode] (o1) [above left=10pt and 5pt of copy] {};
\node[voidnode] (o2) [above right=10pt and 5pt of copy] {};
\drawleg (copy) -- (o1.south);
\drawleg (copy) -- (o2.south);
}. \label{eq:easy copy en}
\end{align}
\end{subequations}
\begin{claimproof}
To prove \eqref{eq:en factorization}, it suffices to realize that the map $E_\ninf$ is invertible, and so it must map the uniform distribution over $[0,1]$ to the uniform distribution over $\{1,\dots,\ninf\} \times [0,1]$. To prove \eqref{eq:random modulo}, it suffices to see that for all $\dsvalalpha_2,\dsvalalpha_3 \in \{1,\dots,\ninf\}$,
\begin{equation}
\centertikz{
\node[detnode] (sum) {$\oplus_\ninf$};
\node[voidnode] (o) [above=\outcomevspace of sum] {\indexstyle{\dsvalalpha_3}};
\drawleg (sum.north) -- (o.south);
\node[tensornode] (u) [below left=15pt and -3pt of sum] {$\dsunif$};
\drawleg (u.north) -- \AnchorOneTwo{sum}{south};
\node[voidnode] (i) [below right=15pt and -3pt of sum] {\indexstyle{\dsvalalpha_2}};
\drawleg (i.north) -- \AnchorTwoTwo{sum}{south};
}
=
\frac{1}{\ninf} \sum_{\dsvalalpha_1 = 1}^\ninf \delta\left(\dsvalalpha_3 - (\dsvalalpha_1 + \dsvalalpha_2 \mod \ninf)\right)
= \frac{1}{\ninf}
= \centertikz{
\node[margnode] (marg) {};
\node[voidnode] (i) [below=\outcomevspace of marg] {\indexstyle{\dsvalalpha_2}};
\drawleg (i.north) -- (marg.south);
\node[tensornode] (u) [above=\outcomevspace of marg] {$\dsunif$};
\node[voidnode] (o) [above=\outcomevspace of u] {\indexstyle{\dsvalalpha_3}};
\drawleg (u.north) -- (o.south);
}.
\end{equation}
Equation \eqref{eq:easy copy en} is true since, using the invertibility of $E_\ninf$ and then \eqref{eq:en factorization}, it holds that for all $\alpha_1,\alpha_2 \in [0,1]$,
\newpage
\begin{equation}
\centertikz{
\node[detnode] (e1) at (0,0) {$E_\ninf$};
\node[margnode] (marg1) [above left=\outcomevspace and \outcomehspacetargetp of e1] {};
\node[voidnode] (o1) [above right=\outcomevspace and \outcomehspacetargetp of e1] {\indexstyle{\alpha_1}};
\drawleg \AnchorOneTwo{e1}{north} -- (marg1.south);
\drawleg \AnchorTwoTwo{e1}{north} -- (o1.south);
\node[detnode] (e2) at (50pt,0) {$E_\ninf$};
\node[margnode] (marg2) [above left=\outcomevspace and \outcomehspacetargetp of e2] {};
\node[voidnode] (o2) [above right=\outcomevspace and \outcomehspacetargetp of e2] {\indexstyle{\alpha_2}};
\drawleg \AnchorOneTwo{e2}{north} -- (marg2.south);
\drawleg \AnchorTwoTwo{e2}{north} -- (o2.south);
\node[tensornode] (u) at (25pt,-45pt) {$\maxunif$};
\node[copynode] (copy) [above=\outcomevspace of u] {};
\drawleg (u.north) -- (copy);
\drawleg (copy) -- (e1.south);
\drawleg (copy) -- (e2.south);
}
=
\ddelta{\alpha_1}{\alpha_2} 
\centertikz{
\node[detnode] (e) {$E_\ninf$};
\node[margnode] (o1) [above left=\outcomevspace and \outcomehspacetargetp of e] {};
\node[voidnode] (o2) [above right=\outcomevspace and \outcomehspacetargetp of e] {\indexstyle{\alpha_1}};
\drawleg \AnchorOneTwo{e}{north} -- (o1.south);
\drawleg \AnchorTwoTwo{e}{north} -- (o2.south);
\node[tensornode] (u) [below=10pt of e] {$\maxunif$};
\drawleg (u.north) -- (e.south);
}
=
\ddelta{\alpha_1}{\alpha_2}
=
\centertikz{
\node[tensornode] (u)  {$\maxunif$};
\node[copynode] (copy) [above=\outcomevspace of u] {};
\drawleg (u.north) -- (copy);
\node[voidnode] (o1) [above left=10pt and 5pt of copy] {\indexstyle{\alpha_1}};
\node[voidnode] (o2) [above right=10pt and 5pt of copy] {\indexstyle{\alpha_2}};
\drawleg (copy) -- (o1.south);
\drawleg (copy) -- (o2.south);
}.
\end{equation}
\end{claimproof}
\end{claim}

Let us now establish equation \eqref{eq:condition for det a caseone} using the choice \eqref{eq:winning choice det a} and the identities \eqref{eq:claim det a}:
\begin{subequations}
\begin{align}
\acaseone{\alice}{}{}{1} =
\centertikz{
\node[tensornode] (alpha) at (0,0) {$\maxunif$};
\node[copynode] (alphacopy) [above=\outcomevspace of alpha] {};
\drawleg (alpha.north) -- (alphacopy);
\node[tensornode] (beta) at (30pt,0) {$\maxunif$};
\node[tensornode] (gamma) at (60pt,0) {$\maxunif$};
\node[detnode] (e1) at (-15pt,50pt) {$E_\ninf$};
\node[detnode] (e2) at (15pt,50pt) {$E_\ninf$};
\node[detnode] (e3) at (45pt,50pt) {$E_\ninf$};
\node[detnode] (e4) at (75pt,50pt) {$E_\ninf$};
\drawleg (alphacopy) -- (e1.south);
\drawleg (alphacopy) -- (e3.south);
\drawleg (beta.north) -- (e2.south);
\drawleg (gamma.north) -- (e4.south);
\node[detnode] (sum1) at (-25pt,100pt) {$\oplus_\ninf$};
\node[detnode] (sum2) at (35pt,100pt) {$\oplus_\ninf$};
\drawleg \AnchorOneTwo{e1}{north} -- \AnchorOneTwo{sum1}{south};
\drawleg \AnchorOneTwo{e2}{north} -- \AnchorTwoTwo{sum1}{south};
\drawleg \AnchorOneTwo{e3}{north} -- \AnchorOneTwo{sum2}{south};
\drawleg \AnchorOneTwo{e4}{north} -- \AnchorTwoTwo{sum2}{south};
\node[detnode] (a1) at (0pt, 150pt) {$\alice_1$};
\node[detnode] (a2) at (60pt, 150pt) {$\alice_1$};
\drawleg (sum1.north) -- \AnchorOneThree{a1}{south};
\drawleg (sum2.north) -- \AnchorOneThree{a2}{south};
\drawleg \AnchorTwoTwo{e1}{north} -- \AnchorTwoThree{a1}{south};
\drawleg \AnchorTwoTwo{e2}{north} -- \AnchorThreeThree{a1}{south};
\drawleg \AnchorTwoTwo{e3}{north} -- \AnchorTwoThree{a2}{south};
\drawleg \AnchorTwoTwo{e4}{north} -- \AnchorThreeThree{a2}{south};
\node[voidnode] (o1) [above=\outcomevspace of a1] {};
\node[voidnode] (o2) [above=\outcomevspace of a2] {};
\drawleg (a1.north) -- (o1.south);
\drawleg (a2.north) -- (o2.south);
}
%
%
%
\overset{\eqref{eq:en factorization}}{=}
\centertikz{
\node[tensornode] (alpha) at (0,0) {$\maxunif$};
\node[copynode] (alphacopy) [above=\outcomevspace of alpha] {};
\drawleg (alpha.north) -- (alphacopy);
\node[detnode] (e1) at (-15pt,50pt) {$E_\ninf$};
\node[tensornode] (e2) at (15pt,50pt) {$\dsunif$};
\node[detnode] (e3) at (45pt,50pt) {$E_\ninf$};
\node[tensornode] (e4) at (75pt,50pt) {$\dsunif$};
\drawleg (alphacopy) -- (e1.south);
\drawleg (alphacopy) -- (e3.south);
\node[detnode] (sum1) at (-25pt,100pt) {$\oplus_\ninf$};
\node[detnode] (sum2) at (35pt,100pt) {$\oplus_\ninf$};
\drawleg \AnchorOneTwo{e1}{north} -- \AnchorOneTwo{sum1}{south};
\drawleg (e2.north) -- \AnchorTwoTwo{sum1}{south};
\drawleg \AnchorOneTwo{e3}{north} -- \AnchorOneTwo{sum2}{south};
\drawleg (e4.north) -- \AnchorTwoTwo{sum2}{south};
\node[detnode] (a1) at (0pt, 150pt) {$\alice_1$};
\node[detnode] (a2) at (60pt, 150pt) {$\alice_1$};
\drawleg (sum1.north) -- \AnchorOneThree{a1}{south};
\drawleg (sum2.north) -- \AnchorOneThree{a2}{south};
\drawleg \AnchorTwoTwo{e1}{north} -- \AnchorTwoThree{a1}{south};
\drawleg \AnchorTwoTwo{e3}{north} -- \AnchorTwoThree{a2}{south};
\node[voidnode] (o1) [above=\outcomevspace of a1] {};
\node[voidnode] (o2) [above=\outcomevspace of a2] {};
\drawleg (a1.north) -- (o1.south);
\drawleg (a2.north) -- (o2.south);
\node[tensornode] (beta) [below right=10pt and 0pt of a1] {$\maxunif$};
\node[tensornode] (gamma) [below right=10pt and 0pt of a2] {$\maxunif$};
\drawleg (beta.north) -- \AnchorThreeThree{a1}{south};
\drawleg (gamma.north) -- \AnchorThreeThree{a2}{south};
}  \\
%
%
%
\overset{\eqref{eq:random modulo}}{=}
\centertikz{
\node[tensornode] (alpha) at (30pt,40pt) {$\maxunif$};
\node[copynode] (alphacopy) [above=\outcomevspace of alpha] {};
\drawleg (alpha.north) -- (alphacopy);
\node[detnode] (e1) at (-10pt,85pt) {$E_\ninf$};
\node[detnode] (e3) at (70pt,85pt) {$E_\ninf$};
\node[margnode] (m1) [above left=\outcomevspace and \outcomehspacetargetp of e1] {};
\node[margnode] (m2) [above left=\outcomevspace and \outcomehspacetargetp of e3] {};
\drawleg \AnchorOneTwo{e1}{north} -- (m1);
\drawleg \AnchorOneTwo{e3}{north} -- (m2);
\drawleg (alphacopy) -- (e1.south);
\drawleg (alphacopy) -- (e3.south);
\node[detnode] (a1) at (-10pt, 150pt) {$\alice_1$};
\node[detnode] (a2) at (70pt, 150pt) {$\alice_1$};
\drawleg \AnchorTwoTwo{e1}{north} -- \AnchorTwoThree{a1}{south};
\drawleg \AnchorTwoTwo{e3}{north} -- \AnchorTwoThree{a2}{south};
\node[voidnode] (o1) [above=\outcomevspace of a1] {};
\node[voidnode] (o2) [above=\outcomevspace of a2] {};
\drawleg (a1.north) -- (o1.south);
\drawleg (a2.north) -- (o2.south);
\node[tensornode] (beta) [below right=10pt and 0pt of a1] {$\maxunif$};
\node[tensornode] (gamma) [below right=10pt and 0pt of a2] {$\maxunif$};
\drawleg (beta.north) -- \AnchorThreeThree{a1}{south};
\drawleg (gamma.north) -- \AnchorThreeThree{a2}{south};
\node[tensornode] (un1) [below left=10pt and 0pt of a1] {$\dsunif$};
\node[tensornode] (un2) [below left=10pt and 0pt of a2] {$\dsunif$};
\drawleg (un1.north) -- \AnchorOneThree{a1}{south};
\drawleg (un2.north) -- \AnchorOneThree{a2}{south};
} 
%
%
\overset{\eqref{eq:easy copy en}}{=}
\centertikz{
\node[detnode] (a1) at (-5pt,10pt) {$\alice_1$};
\node[detnode] (a2) at (35pt,10pt) {$\alice_1$};
\node[voidnode] (o1) [above=\outcomevspace of a1] {};
\node[voidnode] (o2) [above=\outcomevspace of a2] {};
\drawleg (a1.north) -- (o1.south);
\drawleg (a2.north) -- (o2.south);
\node[tensornode] (un1) at (-45pt,-45pt) {$\dsunif$};
\drawleg (un1.north) -- \AnchorOneThree{a1}{south};
\node[tensornode] (u1) at (-15pt,-45pt) {$\maxunif$};
\node[copynode] (copy) [above=\outcomevspace of u1] {};
\drawleg (u1.north) -- (copy);
\drawleg (copy) -- \AnchorTwoThree{a1}{south};
\drawleg (copy) -- \AnchorTwoThree{a2}{south};
\node[tensornode] (u2) at (15pt,-45pt) {$\maxunif$};
\drawleg (u2.north) -- \AnchorThreeThree{a1}{south};
\node[tensornode] (un2) at (45pt,-45pt) {$\dsunif$};
\drawleg (un2.north) -- \AnchorOneThree{a2}{south};
\node[tensornode] (u3) at (75pt,-45pt) {$\maxunif$};
\drawleg (u3.north) -- \AnchorThreeThree{a2}{south};
}.
\end{align}
\end{subequations}
This establishes \eqref{eq:condition for det a caseone}, and the case of \eqref{eq:condition for det a casetwo} is completely analogous.
\end{proof}

\subsection{Deterministic approximation, exact marginal}

We will make use of the following lemma relating the trace distance between the marginals of arbitrary distributions:
\begin{lemma}
\label{lem:onenorm marg}
For all $\macroatargetp{\targetpname}{}{}$, $\macroatargetp{\targetptildename}{}{}$, it holds that
\begin{equation}
\onenorm{\macroatargetpmarg{\targetpname}{} - \macroatargetpmarg{\targetptildename}{}}
\leq
\onenorm{\macroatargetp{\targetpname}{}{} - \macroatargetp{\targetptildename}{}{}}.
\end{equation}
\end{lemma}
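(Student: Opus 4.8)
The plan is to recognize this statement as the monotonicity of the total-variation ($\ell^1$) distance under marginalization, which follows from a single application of the triangle inequality. First I would unfold both sides into components using the extension of \cref{def:p norms} to probability tensors recorded just above the lemma. Writing the two outputs of the distributions as $\outputa_1$ and $\outputa_2$, the marginal tensor is obtained by summing over the second output,
\begin{equation}
\macroatargetpmarg{\targetpname}{\outputa_1} = \sum_{\outputa_2} \macroatargetp{\targetpname}{\outputa_1}{\outputa_2},
\end{equation}
and similarly for $\macroatargetp{\targetptildename}{}{}$, so that the left-hand side of the claim reads $\sum_{\outputa_1} \big| \sum_{\outputa_2} ( \macroatargetp{\targetpname}{\outputa_1}{\outputa_2} - \macroatargetp{\targetptildename}{\outputa_1}{\outputa_2} ) \big|$ while the right-hand side reads $\sum_{\outputa_1,\outputa_2} | \macroatargetp{\targetpname}{\outputa_1}{\outputa_2} - \macroatargetp{\targetptildename}{\outputa_1}{\outputa_2} |$.

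The key step is to bound, for each fixed $\outputa_1$, the modulus of a sum by the sum of the moduli:
\begin{equation}
\Big| \sum_{\outputa_2} \big( \macroatargetp{\targetpname}{\outputa_1}{\outputa_2} - \macroatargetp{\targetptildename}{\outputa_1}{\outputa_2}\big) \Big| \leq \sum_{\outputa_2} \big| \macroatargetp{\targetpname}{\outputa_1}{\outputa_2} - \macroatargetp{\targetptildename}{\outputa_1}{\outputa_2}\big|.
\end{equation}
Summing this inequality over $\outputa_1$ immediately yields the claim. The argument uses neither positivity nor normalization of the tensors, and it extends verbatim to marginalizing any subset of legs: one simply groups all the summed-over indices into a single composite index playing the role of $\outputa_2$, so the same proof covers the general ``marginal node'' contraction of \cref{sec:special tensors}.

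There is essentially no obstacle here, since the only inequality invoked is the triangle inequality for the modulus, and the result is just the elementary fact that coarse-graining a distribution cannot increase its $\ell^1$ distance to another distribution. The only point requiring mild care is the bookkeeping of which leg is being marginalized; but because the statement is symmetric in the roles of the two legs, this causes no real difficulty and the stated two-output case is representative of the general situation.
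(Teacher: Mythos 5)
Your proof is correct and is essentially identical to the paper's own argument: both expand the marginals componentwise and apply the triangle inequality $\bigl|\sum_{\outputb}(\cdot)\bigr| \leq \sum_{\outputb}|\cdot|$ before summing over the remaining index. Nothing further is needed.
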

\begin{proof}
Using the triangle inequality and the definition of the $\onenorm{\cdot}$ norm (see \cref{def:p norms}), we get:
\begin{subequations}
\begin{align}
\onenorm{\macroatargetpmarg{\targetpname}{} - \macroatargetpmarg{\targetptildename}{}}
&= 
\sum_{\outputa}
\left|
\macroatargetpmarg{\targetpname}{\outputa} - \macroatargetpmarg{\targetptildename}{\outputa}
\right| \\
&= 
\sum_{\outputa}
\left|
\sum_{\outputb}
\left(
\macroatargetp{\targetpname}{\outputa}{\outputb} - \macroatargetp{\targetptildename}{\outputa}{\outputb}
\right)
\right| \\
&\leq
\sum_{\outputa,\outputb}
\left| 
\macroatargetp{\targetpname}{\outputa}{\outputb} - \macroatargetp{\targetptildename}{\outputa}{\outputb}
\right| \\
&= \onenorm{\macroatargetp{\targetpname}{}{} - \macroatargetp{\targetptildename}{}{}},
\end{align}
\end{subequations}
which concludes the proof.
\end{proof}

We now prove a slightly stronger result, which takes into account the marginal constraint of the Correlated Sleeper task. The idea is to slightly deform the strategy obtained in \cref{lem:sc approx no marg} to maintain closeness with the target output distributions while \emph{exactly} achieving the desired marginal constraint in the network $\scnetthree$ of equation \eqref{eq:sc config 3}.

\begin{lemma}
\label{lem:sc approx with marg}
Let $\Big(\atargetp{1}{}{},\atargetp{2}{}{}\Big)$ be such that there exists a (Riemann integrable, as usual) probability tensor $\twostrat{\alice_0}{}{}{}$ with
\begin{equation}
\label{eq:non det strat 2}
\atargetp{1}{}{} = \acaseone{\alice_0}{}{}{0},\quad 
\atargetp{2}{}{} = \acasetwo{\alice_0}{}{}{0},\quad \isource{\macrodsunif{2}}{} = \amargconstraint{\alice_0}{}{0}.
\end{equation}
Then, for all $\epsilon > 0$, there exists a \emph{deterministic} probability tensor $\scdetstrat$ such that
\begin{equation}
\label{eq:target marg constraint}
\isource{\macrodsunif{2}}{} = \amargconstraint{\alice}{}{1}
\end{equation}
and
\begin{equation}
\label{eq:target det strat 2}
\frac{1}{2}\onenorm{\atargetp{1}{}{} - \acaseone{\alice}{}{}{1}}
+
\frac{1}{2}\onenorm{\atargetp{2}{}{} - \acasetwo{\alice}{}{}{1}} \leq \epsilon.
\end{equation}
\end{lemma}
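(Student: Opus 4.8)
The plan is to start from the deterministic strategy furnished by \cref{lem:sc approx no marg} and then to \emph{repair} its single-round marginal by altering it on a set of small Lebesgue measure, in such a way that the exact marginal constraint \eqref{eq:target marg constraint} holds while the approximation \eqref{eq:target det strat 2} is only mildly degraded. Concretely, I would first invoke \cref{lem:sc approx no marg} with tolerance $\epsilon' := \epsilon/3$ in place of $\epsilon$. Identify the resulting deterministic tensor with a function $f : [0,1]^2 \to \{1,2\}$, and let $p_1^f$ and $p_2^f$ denote its case-one and case-two output distributions; then \eqref{eq:target det strat} gives $\onenorm{\atargetp{1}{}{} - p_1^f} \le \epsilon'$ and $\onenorm{\atargetp{2}{}{} - p_2^f} \le \epsilon'$.

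Next I would exploit the marginal hypothesis in \eqref{eq:non det strat 2}. The key observation is that marginalizing a case-one (equivalently, case-two) distribution over one of its two outputs returns the single-round marginal of the underlying strategy, because the other copy of the strategy sums to $1$ over its output. Applied to the non-deterministic strategy of \eqref{eq:non det strat 2}, this forces the first-round marginal of $\atargetp{1}{}{}$ to equal $\amargconstraint{\alice_0}{}{0} = \isource{\macrodsunif{2}}{}$, i.e.\ it is exactly uniform, and likewise for $\atargetp{2}{}{}$. For the deterministic $f$, both $p_1^f$ and $p_2^f$ have single-round marginal equal to $\amargconstraint{\alice}{}{1}$ (with $\alice = f$). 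Feeding $\atargetp{1}{}{}$ and $p_1^f$ into \cref{lem:onenorm marg} then yields $\onenorm{\isource{\macrodsunif{2}}{} - \amargconstraint{\alice}{}{1}} \le \epsilon'$, so, writing $m$ for the measure of $\{f = 1\} \subseteq [0,1]^2$, we get $\delta := \lvert m - \tfrac12\rvert \le \epsilon'/2$.

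For the deformation, assume without loss of generality that $m \ge 1/2$ (otherwise flip the roles of the outputs $1$ and $2$). I would choose a measurable $R \subseteq \{f = 1\}$ of measure exactly $\delta$; such an $R$ exists because $t \mapsto \lvert \{f=1\} \cap ([0,t]\times[0,1]) \rvert$ is continuous (Lebesgue measure on $[0,1]$ is non-atomic) and increases from $0$ to $m \ge \delta$, so the intermediate value theorem applies. Define $\tilde f$ to agree with $f$ off $R$ and to equal $2$ on $R$; this is again a deterministic strategy, and $\lvert \{\tilde f = 1\}\rvert = 1/2$ exactly, so its single-round marginal is uniform and \eqref{eq:target marg constraint} holds with $\alice = \tilde f$.

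It then remains to control how much flipping $f$ on the measure-$\delta$ set $R$ moves the outputs. Writing each case-one output as the bilinear contraction of two copies of the strategy tensor and using the telescoping identity $gg' - \tilde g\tilde g' = (g-\tilde g)g' + \tilde g(g'-\tilde g')$, together with the facts that $f$ and $\tilde f$ differ only on $R$ and that a strategy tensor marginalized over its continuous inputs equals $1$, a direct estimate gives $\onenorm{p_1^f - p_1^{\tilde f}} \le 4\delta$ and likewise $\onenorm{p_2^f - p_2^{\tilde f}} \le 4\delta$. By the triangle inequality, with $\alice = \tilde f$, one obtains $\onenorm{\atargetp{1}{}{} - \acaseone{\alice}{}{}{1}} \le \epsilon' + 4\delta \le 3\epsilon' = \epsilon$ and the analogous bound for case two, whose average is precisely \eqref{eq:target det strat 2}. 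The main obstacle is exactly this last perturbation estimate: I must verify that the shared source in the case-one and case-two contractions does not spoil the linearity of the error in $\delta$, and that the bilinear map is Lipschitz with the claimed constant; by comparison, the measure-theoretic construction of the flip region $R$ is routine.
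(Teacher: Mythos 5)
Your proposal is correct and follows essentially the same route as the paper's proof: invoke \cref{lem:sc approx no marg} at tolerance $\epsilon/3$, deduce via \cref{lem:onenorm marg} that the deterministic strategy's marginal is off by at most $\epsilon/6$ in each component, flip the output on a region of exactly that measure inside $\{f=1\}$, and control the resulting perturbation of the two-round distributions by $4\delta$ via the telescoping/bilinearity estimate, giving $\epsilon/3 + 2\epsilon/3 = \epsilon$. The perturbation estimate you flag as the main obstacle does go through exactly as you sketch it (the shared source poses no problem since the unchanged factor is bounded by a strategy tensor that integrates to one), and your intermediate-value construction of $R$ is if anything more explicit than the paper's.
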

\begin{proof}
Thanks to \cref{lem:sc approx no marg}, we know that for all $\epsilon > 0$, there exists $\dettwostrat{\alice_1}{}{}{}$ such that (the factor of 1/3 is chosen for later convenience):
\begin{align}
\label{eq:lemma a1 close}
\onenorm{\atargetp{1}{}{} - \acaseone{\alice_1}{}{}{1}} \leq \frac{\epsilon}{3}, \quad
\onenorm{\atargetp{2}{}{} - \acasetwo{\alice_1}{}{}{1}} \leq \frac{\epsilon}{3}.
\end{align}
This $\dettwostrat{\alice_1}{}{}{}$ does not have the desired marginal of \eqref{eq:target marg constraint}, but is close to it: using \cref{lem:onenorm marg}, the last equation of \eqref{eq:non det strat 2} and the first inequality of equation \eqref{eq:lemma a1 close}, we get
\begin{multline}
\label{eq:lemma a1 u2}
\onenorm{\amargconstraint{\alice_1}{}{1} - \isource{\macrodsunif{2}}{}} 
\overset{\textup{eq.}\ \eqref{eq:non det strat 2}}{=}
\onenorm{
\centertikz{
\node[detnode] (a1) at (0,0) {$A_1$};
\node[detnode] (a2) at (30pt,0) {$A_1$};
\node[voidnode] (o1) [above=\outcomevspace of a1] {\indexstyle{}};
\node[margnode] (o2) [above=\outcomevspace of a2] {\indexstyle{}};
\drawleg (a1.north) -- (o1.south);
\drawleg (a2.north) -- (o2.south);
\node[tensornode] (u1) at (-15pt,-45pt) {$\maxunif$};
\node[copynode] (copy) [above=\outcomevspace of u1] {};
\drawleg (u1.north) -- (copy);
\drawleg (copy) -- \AnchorOneTwo{a1}{south};
\drawleg (copy) -- \AnchorOneTwo{a2}{south};
\node[tensornode] (u2) at (15pt,-45pt) {$\maxunif$};
\drawleg (u2.north) -- \AnchorTwoTwo{a1}{south};
\node[tensornode] (u3) at (45pt,-45pt) {$\maxunif$};
\drawleg (u3.north) -- \AnchorTwoTwo{a2}{south};
}
-
\macroatargetpmarg{\targetpname_1}{}
} \\
\overset{\textup{lem.}\ \ref{lem:onenorm marg}}{\leq}
\onenorm{\acaseone{\alice_1}{}{}{1} - \atargetp{1}{}{}} \overset{\textup{eq.}\ \eqref{eq:lemma a1 close}}{\leq} \frac{\epsilon}{3}.
\end{multline}
If we parametrize the marginal distribution of $\dettwostrat{\alice_1}{}{}{}$ with some $\nu\in\mathbb R$ such that
\begin{subequations}
\label{eq:lemma param a1 nu}
\begin{align}
\amargconstraint{\alice_1}{1}{1} &= \frac{1}{2} + \nu, \label{eq:lemma nu1} \\
\amargconstraint{\alice_1}{2}{1} &= \frac{1}{2} - \nu,
\end{align}
\end{subequations}
then \eqref{eq:lemma a1 u2} implies that 
\begin{equation}
\label{eq:relation nu epsilon}
|\nu| \leq \epsilon/6.
\end{equation}
Suppose that $\nu \geq 0$ (otherwise, swap the role of the outcomes 1 and 2 in the following argument). 
Let $\mathcal Y$ be a subset of area $\nu$ of the unit square $[0,1]^{\times 2}$ such that, for all $(\alpha,\beta) \in \mathcal Y$, we have
\begin{equation}
\dettwostrat{\alice_1}{\outputa}{\alpha}{\beta} = \kdelta{\outputa}{1},
\end{equation}
that is, $\mathcal Y$ corresponds to an input range where $\alice_1$ always outputs $1$.
Such a $\mathcal Y$ always exists: \eqref{eq:lemma nu1} states that there exists a subset of area $1/2 + \nu$ of the unit square which has this property, so we may simply choose a subset thereof with the right area.

Now, consider the modified strategy $\dettwostrat{\alice}{}{}{}$ defined as follows:
\begin{equation}
\dettwostrat{\alice}{\outputa}{\alpha}{\beta} := \left\{\begin{aligned}
\dettwostrat{\alice_1}{\outputa}{\alpha}{\beta}, &\textup{ if } (\alpha,\beta) \in [0,1]^{\times 2} \setminus \mathcal Y, \\
\kdelta{\outputa}{2}, &\textup{ if } (\alpha,\beta) \in \mathcal Y,
\end{aligned}\right.
\end{equation}
that is, we reverse the output in the region $\mathcal Y$ but otherwise leave the strategy unchanged. This new strategy clearly verifies the marginal constraint \eqref{eq:target marg constraint}: for all $\outputa \in \{1,2\}$,
\begin{subequations}
\begin{align}
\amargconstraint{\alice}{\outputa}{1} &= \int_{[0,1]^{\times 2}} \dd \alpha \dd\beta \dettwostrat{\alice}{\outputa}{\alpha}\beta \\
&=
\int_{[0,1]^{\times 2} \setminus \mathcal Y} \dd \alpha \dd\beta \dettwostrat{\alice}{\outputa}{\alpha}\beta 
+ \int_{\mathcal Y}\dd \alpha \dd\beta \dettwostrat{\alice}{\outputa}{\alpha}\beta \\
&=
\int_{[0,1]^{\times 2} \setminus \mathcal Y} \dd \alpha \dd\beta \dettwostrat{\alice_1}{\outputa}{\alpha}\beta 
+ \nu \kdelta{\outputa}{2} &\netnote{def.\ of $\dettwostrat{\alice}{}{}{}$, area of $\mathcal Y$ is $\nu$}\\
&=
\int_{[0,1]^{\times 2}} \dd \alpha \dd\beta \dettwostrat{\alice_1}{\outputa}{\alpha}\beta - \int_{\mathcal Y} \dd\alpha\dd\beta \dettwostrat{\alice_1}{\outputa}{\alpha}{\beta}
+ \nu \kdelta{\outputa}{2} \\
&= \amargconstraint{\alice_1}{\outputa}{1} - \nu \kdelta{\outputa}{1} + \nu \kdelta{\outputa}{2} \\
&= \frac{1}{2}. &\netnote{see eq.\ \eqref{eq:lemma param a1 nu}}
\end{align}
\end{subequations}
Furthermore, we can show that this small modification of the strategy has a small impact on the output correlations.

\begin{claim}
it holds that 
\begin{subequations}
\label{eq:lemma small output change}
\begin{align}
\onenorm{\atargetp{1}{}{} - \acaseone{\alice}{}{}{1}}
&\leq \epsilon, \label{eq:lemma small output change a}\\
\onenorm{\atargetp{2}{}{} - \acasetwo{\alice}{}{}{1}} &\leq \epsilon. \label{eq:lemma small output change b}
\end{align}
\end{subequations}
\begin{claimproof}
Using in particular the triangle inequality, we verify explicitly \eqref{eq:lemma small output change a}:
\begin{subequations}
\begin{align}
&\onenorm{\atargetp{1}{}{} - \acaseone{\alice}{}{}{1}} \\
\leq\ 
&\onenorm{\atargetp{1}{}{} - \acaseone{\alice_1}{}{}{1}}
+
\onenorm{\acaseone{\alice_1}{}{}{1} - \acaseone{\alice}{}{}{1}} \\
\leq\ 
&\frac{\epsilon}{3} +
\sum_{\outputa,\outputb \in \{1,2\}} \left|
\int \dd\alpha\dd{\beta_1}\dd{\beta_2}
\dettwostrat{\alice_1}{\outputa}{\alpha}{\beta_1}
\dettwostrat{\alice_1}{\outputb}{\alpha}{\beta_2}
-
\dettwostrat{\alice}{\outputa}{\alpha}{\beta_1}
\dettwostrat{\alice}{\outputb}{\alpha}{\beta_2}
\right| \\
=\ 
&\frac{\epsilon}{3}
+
\sum_{\outputa,\outputb}
\left|
\int \dd\alpha\dd{\beta_1}\dd{\beta_2}
\dettwostrat{\alice_1}{\outputa}{\alpha}{\beta_1}
\left(
\dettwostrat{\alice_1}{\outputb}{\alpha}{\beta_2}
-
\dettwostrat{\alice}{\outputb}{\alpha}{\beta_2}
\right)
+
\left(
\dettwostrat{\alice_1}{\outputa}{\alpha}{\beta_1}
-
\dettwostrat{\alice}{\outputa}{\alpha}{\beta_1}
\right)
\dettwostrat{\alice}{\outputb}{\alpha}{\beta_2}
\right|  \\
\leq\ 
&\frac{\epsilon}{3} + \sum_{\outputb}
\int\dd\alpha\dd{\beta_2} \left|
\dettwostrat{\alice_1}{\outputb}{\alpha}{\beta_2}
-
\dettwostrat{\alice}{\outputb}{\alpha}{\beta_2}
\right|
+
\sum_{\outputa}\int\dd\alpha\dd{\beta_1} \left|
\dettwostrat{\alice_1}{\outputa}{\alpha}{\beta_1}
-
\dettwostrat{\alice}{\outputa}{\alpha}{\beta_1}
\right| \\
=\ 
&\frac{\epsilon}{3} + 2\sum_{\outputa}\int_{[0,1]^{\times 2}\setminus\mathcal Y}\dd\alpha\dd\beta \left|
\dettwostrat{\alice_1}{\outputa}{\alpha}{\beta}
-
\dettwostrat{\alice}{\outputa}{\alpha}{\beta}
\right|
+
2\sum_{\outputa}\int_{\mathcal Y}\dd\alpha\dd\beta \left|
\dettwostrat{\alice_1}{\outputa}{\alpha}{\beta}
-
\dettwostrat{\alice}{\outputa}{\alpha}{\beta}
\right| \\
=\ 
&\frac{\epsilon}{3} + 2 \cdot 0 + 2 \sum_{\outputa\in\{1,2\}} \nu |\kdelta{\outputa}{1} - \kdelta{\outputa}{2}| \\
=\ &\epsilon + 4 \nu \\
\leq\ &\frac{\epsilon}{3} + 4 \frac{\epsilon}{6} \qquad \netnote{see eq.\ \eqref{eq:relation nu epsilon}}\\
=\ &\epsilon.
\end{align}
\end{subequations}
The case of \eqref{eq:lemma small output change b} can be obtained by applying the same argument to the strategies $\alice_1$ and $\alice$ under the exchange of the role of the two inputs $\alpha$ and $\beta$.
\end{claimproof}
\end{claim}
This then implies the desired property \eqref{eq:target det strat 2}, which concludes the proof.
\end{proof}

\subsection{Conclusion}

The last lemma that we need is a basic topological result that we will use in the following proof.

\begin{lemma}
\label{lem:basic topology}
Let $X$ be a subset of a metric space $M$ with metric $d : M \times M \to \mathbb{R}$. Let $\phi : M \rightarrow \mathbb R$ be a continuous function. Then, it holds that the closure of the image of $X$ equals the closure of the image of the closure of $X$:
\begin{equation}
\closure{\phi(X)} = \closure{\phi(\closure{X})}.
\end{equation}
\end{lemma}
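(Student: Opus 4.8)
The statement to prove is the topological lemma: for $X \subseteq M$ a metric space and $\phi : M \to \mathbb{R}$ continuous, we have $\closure{\phi(X)} = \closure{\phi(\closure{X})}$. The plan is to establish the two inclusions separately, and the non-trivial direction is $\closure{\phi(\closure{X})} \subseteq \closure{\phi(X)}$.

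\textbf{Easy inclusion.} First I would note that $X \subseteq \closure{X}$ trivially implies $\phi(X) \subseteq \phi(\closure{X})$, and since the closure operation is monotone, $\closure{\phi(X)} \subseteq \closure{\phi(\closure{X})}$. This requires no continuity and is immediate.

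\textbf{Hard inclusion.} For the reverse inclusion, it suffices to show $\phi(\closure{X}) \subseteq \closure{\phi(X)}$, because then applying the closure (which is idempotent and monotone) to both sides gives $\closure{\phi(\closure{X})} \subseteq \closure{\closure{\phi(X)}} = \closure{\phi(X)}$. So the crux is: for any $y \in \closure{X}$, I must show $\phi(y) \in \closure{\phi(X)}$. The natural approach is sequential. Since $M$ is a metric space, $y \in \closure{X}$ means there exists a sequence $(x_n)_{n\in\mathbb{N}}$ in $X$ with $x_n \to y$ in the metric $d$. By continuity of $\phi$, it follows that $\phi(x_n) \to \phi(y)$ in $\mathbb{R}$. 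Each $\phi(x_n)$ lies in $\phi(X)$, so $\phi(y)$ is a limit of points of $\phi(X)$, hence $\phi(y) \in \closure{\phi(X)}$. Combining the two inclusions yields the claimed equality.

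\textbf{Main obstacle.} There is no deep obstacle here — the only subtlety is invoking the metric-space structure correctly: in a general topological space closure points are not necessarily sequential limits, but in a metric space (which is first-countable) they are, so the sequential characterization of both the closure in $M$ and the continuity of $\phi$ is legitimate. The one step that deserves care is the reduction from the set-level inclusion $\phi(\closure X)\subseteq\closure{\phi(X)}$ to the closure-level inclusion, which uses monotonicity and idempotence of the closure operator; I would state this explicitly rather than gloss over it. Everything else is routine.
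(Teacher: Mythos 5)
Your proof is correct. Both you and the paper prove the easy inclusion identically, and both rest the hard inclusion on the same two facts: in a metric space every closure point is a sequential limit, and continuity preserves sequential limits. The difference is organizational. The paper works directly with an arbitrary $u \in \closure{\phi(\closure{X})}$: it takes a sequence $(x_k)$ in $\closure{X}$ with $\phi(x_k) \to u$, then for each $k$ picks $y_k \in X$ with $|\phi(x_k) - \phi(y_k)| \leq 1/k$, and runs a diagonal/triangle-inequality argument to conclude $\phi(y_k) \to u$. You instead factor the argument through the intermediate set-level inclusion $\phi(\closure{X}) \subseteq \closure{\phi(X)}$ and then invoke monotonicity and idempotence of the closure operator. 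Your route avoids the double approximation and the explicit $\epsilon$-bookkeeping, at the cost of having to state (as you rightly do) the reduction step via idempotence; the paper's route is more self-contained but slightly more laborious. Either is a complete and valid proof.
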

\begin{proof}
Since $\phi(X) \subseteq \phi(\closure{X})$, we also have $\closure{\phi(X)} \subseteq \closure{\phi(\closure{X})}$. To show the other direction, let $u \in \closure{\phi(\closure{X})}$. In particular, there exists a sequence $(x_k \in \closure{X})_{k\in\mathbb N}$ such that
\begin{equation}
\label{eq:uxk}
u = \lim_{k\rightarrow \infty} \phi(x_k).
\end{equation}
For each $k$, since $x_k \in \closure{X}$, it holds that for all $\delta_k > 0$ there exists $y_k \in X$ such that
\begin{equation}
d(x_k,y_k) \leq \delta_k.
\end{equation}
Choose $\delta_k$ such that $|\phi(x_k) - \phi(y_k)| \leq 1/k$ (this is possible since $\phi$ is continuous).

\begin{claim}
It holds that
\begin{equation}
u = \lim_{k\to\infty} \phi(y_k).
\end{equation}
\begin{claimproof}
We compute:
\begin{align}
|u - \phi(y_k)| &= |u - \phi(x_k) + \phi(x_k) - \phi(y_k)| \\
&\leq |u - \phi(x_k)| + |\phi(x_k) - \phi(y_k)| \\
&\leq |u - \phi(x_k)| + \frac{1}{k} \overset{k\to\infty}{\to} 0,
\end{align}
where we used \cref{eq:uxk} in the last step.
\end{claimproof}
\end{claim}
This proves that $u \in \closure{\phi(X)}$.
\end{proof}

\PropDetApp*

\begin{proof}
Consider the metric space $M$ with metric $d : M \times M \to \mathbb R$ such that
\begin{subequations}
\begin{align}
M := \left\{ \Big(\atargetp{1}{}{},\atargetp{2}{}{}\Big) \middle| \atargetp{1}{}{}, \atargetp{2}{}{} \in \mathbb{R}^{2\times 2} \textup{ are probability tensors}\right\}, \\
\metric\left[ \Big(\atargetp{1}{}{}, \atargetp{2}{}{}\Big), 
\Big(\atargetpbis{1}{}{}, \atargetpbis{2}{}{}\Big)
\right]
:=
\frac{1}{2}\onenorm{\atargetp{1}{}{} - \atargetpbis{1}{}{}}
+
\frac{1}{2}\onenorm{\atargetp{2}{}{} - \atargetpbis{2}{}{}}.
\end{align}
\end{subequations}
Consider the two subsets $\mathcal R \subset M$ (standing for ``allowing the use of local Randomness'') and $\mathcal L \subset M$ (a special case of \cref{def:causal compat}) defined as
\begin{subequations}
\begin{align}
\mathcal R &:= \left\{
\Big(\atargetp{1}{}{}, \atargetp{2}{}{}\Big)
\middle|
\exists \twostrat{\alice}{}{}{} \textup{ s.t. } \atargetp{1}{}{} = \acaseone{\alice}{}{}{0}, \nonumber
\right. \\
&\hspace{3.4cm} \left. 
\atargetp{2}{}{} = \acaseone{\alice}{}{}{0},
\isource{\macrodsunif{2}}{} = \amargconstraint{\alice}{}{0}
\right\}, \\
\mathcal L &:= \left\{
\Big(\atargetp{1}{}{}, \atargetp{2}{}{}\Big)
\middle|
\exists \scdetstrat \textup{ s.t. } \atargetp{1}{}{} = \acaseone{\alice}{}{}{1},
\right. \nonumber\\
&\hspace{3.4cm}
\left.
\atargetp{2}{}{} = \acasetwo{\alice}{}{}{1},
\isource{\macrodsunif{2}}{} = \amargconstraint{\alice}{}{1}
\right\}.
\end{align}
\end{subequations}
\begin{claim}
$\closure{\mathcal R} = \closure{\mathcal L}$.
\begin{claimproof}
\Cref{lem:sc approx with marg} proved that $\mathcal R \subseteq \closure{\mathcal L}$. Indeed, in this notation, \cref{lem:sc approx with marg} reads: ``for all $r\in\mathcal R$, for all $\epsilon > 0$, there exists $l \in\mathcal L$ such that $\metric(r,l) < \epsilon$''. This implies in turn that $\closure{\mathcal R} \subseteq \closure{\mathcal L}$. Further more, it is clear that $\mathcal L \subseteq \mathcal R$ (deterministic probability tensors are a special case of probability tensors), which implies $\closure{\mathcal L} \subseteq \closure{\mathcal R}$. Altogether we see that $\closure{\mathcal R} = \closure{\mathcal L}$.
\end{claimproof}
\end{claim}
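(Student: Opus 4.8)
The plan is to establish the two inclusions $\closure{\mathcal R} \subseteq \closure{\mathcal L}$ and $\closure{\mathcal L} \subseteq \closure{\mathcal R}$ separately, delegating all the genuine analytic content to \cref{lem:sc approx with marg} and using only elementary properties of the closure operator in the ambient metric space $(M, \metric)$ --- namely monotonicity ($X \subseteq Y$ implies $\closure X \subseteq \closure Y$) and idempotence ($\closure{\closure X} = \closure X$) --- for the bookkeeping.

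First I would dispose of the easy inclusion $\closure{\mathcal L} \subseteq \closure{\mathcal R}$. Every deterministic probability tensor $\scdetstrat$ is in particular a probability tensor $\twostrat{\alice}{}{}{}$, and the three defining equations of $\mathcal L$ (the two target-distribution constraints together with the marginal constraint $\isource{\macrodsunif{2}}{} = \amargconstraint{\alice}{}{1}$) are identical in form to those defining $\mathcal R$. Hence $\mathcal L \subseteq \mathcal R$, and monotonicity of the closure immediately gives $\closure{\mathcal L} \subseteq \closure{\mathcal R}$.

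For the reverse inclusion I would first prove $\mathcal R \subseteq \closure{\mathcal L}$, which is exactly \cref{lem:sc approx with marg} recast in metric-space language: given any $r = \Big(\atargetp{1}{}{},\atargetp{2}{}{}\Big) \in \mathcal R$, realised by some Riemann-integrable strategy $\twostrat{\alice_0}{}{}{}$, that lemma produces, for every $\epsilon > 0$, a deterministic strategy $\scdetstrat$ meeting the marginal constraint exactly and satisfying $\frac12\onenorm{\atargetp{1}{}{} - \acaseone{\alice}{}{}{1}} + \frac12\onenorm{\atargetp{2}{}{} - \acasetwo{\alice}{}{}{1}} \leq \epsilon$, i.e.\ it yields a point $l \in \mathcal L$ with $\metric(r,l) \leq \epsilon$. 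Thus every $r \in \mathcal R$ is an adherent point of $\mathcal L$, so $\mathcal R \subseteq \closure{\mathcal L}$; taking closures and applying idempotence gives $\closure{\mathcal R} \subseteq \closure{\closure{\mathcal L}} = \closure{\mathcal L}$. Combining the two inclusions yields $\closure{\mathcal R} = \closure{\mathcal L}$.

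The hard part does not live at the level of this claim at all --- here only monotonicity and idempotence of closure are used --- but is entirely contained in the black box \cref{lem:sc approx with marg}, whose proof must (i) rewrite a non-deterministic strategy as a deterministic one fed by an auxiliary $\isource{\maxunif}{}$ source, (ii) simulate that local randomness deterministically by extracting and summing modulo $\ninf$ discretised digits of the two genuine continuous inputs (the $E_\ninf$ and $\oplus_\ninf$ construction), invoking Riemann integrability to control the discretisation error, and (iii) perturb the strategy on an input region of measure $\nu$ to restore the marginal constraint exactly while changing the correlations by only $O(\nu)$. Since that lemma is already available, the present claim follows with no additional difficulty.
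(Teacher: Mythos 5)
Your proof is correct and follows essentially the same route as the paper's: the easy inclusion $\mathcal L \subseteq \mathcal R$ from determinism being a special case, and the reverse inclusion by recasting \cref{lem:sc approx with marg} as the statement that every point of $\mathcal R$ is adherent to $\mathcal L$, then closing up via monotonicity and idempotence of the closure. Your added remarks correctly identify that all the analytic work lives in \cref{lem:sc approx with marg}, exactly as in the paper, so no further comment is needed.
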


Let us now introduce the map $\phi : M \to \mathbb R$ defined as
\begin{equation}
\phi\left[ \Big(\atargetp{1}{}{},\atargetp{2}{}{} \Big) \right] := \frac{1}{2}\sum_{\outputa\in\{1,2\}} \left(\atargetp{1}{\outputa}{\outputa} + \atargetp{2}{\outputa}{\outputa}\right).
\end{equation}

\begin{claim}
$\phi$ is a continuous map.
\begin{claimproof}
It suffices to verify that
\begin{align}
\left|\phi\left[ \Big(\atargetp{1}{}{},\atargetp{2}{}{} \Big) \right]
- 
\phi\left[ \Big(\atargetpbis{1}{}{},\atargetpbis{2}{}{} \Big) \right] \right|
&\leq \frac{1}{2}\sum_{\outputa\in\{1,2\}} \left| \atargetp{1}{\outputa}{\outputa} - \atargetpbis{1}{\outputa}{\outputa} \right| +  \left| \atargetp{2}{\outputa}{\outputa} - \atargetpbis{2}{\outputa}{\outputa} \right| \\
&\leq \metric\left[ \Big(\atargetp{1}{}{},\atargetp{2}{}{} \Big),
\Big(\atargetpbis{1}{}{},\atargetpbis{2}{}{} \Big) \right]
\end{align}
which implies the continuity of $\phi$.
\end{claimproof}
\end{claim}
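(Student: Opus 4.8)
The plan is to recast both optimization problems as suprema of a single continuous functional over two subsets of a common metric space, and then to show that these subsets have equal closures. First I would introduce the metric space $M$ of pairs of $2\times 2$ output distributions $\big(\atargetp{1}{}{},\atargetp{2}{}{}\big)$, metrized by the averaged trace distance $\metric$, and single out two feasible sets inside it: the set $\mathcal R$ of pairs realizable by some (Riemann integrable) probability tensor $\twostrat{\alice}{}{}{}$ obeying the marginal constraint, and the set $\mathcal L$ of pairs realizable by a \emph{deterministic} tensor $\scdetstrat$ obeying the same constraint. Packaging the score as the map $\phi:M\to\mathbb R$, $\big(\atargetp{1}{}{},\atargetp{2}{}{}\big)\mapsto\frac{1}{2}\sum_\outputa\big(\atargetp{1}{\outputa}{\outputa}+\atargetp{2}{\outputa}{\outputa}\big)$, the statement reduces to $\sup\phi(\mathcal R)=\sup\phi(\mathcal L)$, since $\aopt=\sup\phi(\mathcal R)$ by definition of \eqref{eq:sleeping correlations tensor} and the deterministic problem \eqref{eq:sc det tensor} is $\sup\phi(\mathcal L)$.

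Next I would prove $\closure{\mathcal R}=\closure{\mathcal L}$. One inclusion is free: deterministic tensors are probability tensors, so $\mathcal L\subseteq\mathcal R$ and hence $\closure{\mathcal L}\subseteq\closure{\mathcal R}$. The reverse inclusion $\mathcal R\subseteq\closure{\mathcal L}$ is precisely what the deterministic-approximation result \cref{lem:sc approx with marg} delivers: for each $r\in\mathcal R$ and each $\epsilon>0$ it exhibits a deterministic strategy whose realized pair $l\in\mathcal L$ satisfies $\metric(r,l)\le\epsilon$, so $\mathcal R\subseteq\closure{\mathcal L}$ and therefore $\closure{\mathcal R}\subseteq\closure{\mathcal L}$; the two inclusions give the equality of closures.

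Then I would check that $\phi$ is continuous, in fact $1$-Lipschitz for $\metric$: a short triangle-inequality estimate bounds $\big|\phi(r)-\phi(r')\big|$ by $\metric(r,r')$ term by term, using that $\phi$ only reads off diagonal entries. The delicate point, and the one step I would be careful not to skip, is that $\closure{\mathcal R}=\closure{\mathcal L}$ does \emph{not} on its own yield $\sup\phi(\mathcal R)=\sup\phi(\mathcal L)$, because $\phi$ need not commute with closures. I would instead feed continuity into the topological \cref{lem:basic topology} applied to each set, obtaining $\closure{\phi(\mathcal R)}=\closure{\phi(\closure{\mathcal R})}=\closure{\phi(\closure{\mathcal L})}=\closure{\phi(\mathcal L)}$; since $\phi$ takes values in $[0,1]$, both image sets are nonempty and bounded, and a bounded set of reals has the same supremum as its closure, so $\sup\phi(\mathcal R)=\sup\phi(\mathcal L)$, which is the proposition.

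I expect the genuine difficulty to lie entirely upstream, in the input \cref{lem:sc approx with marg} that powers $\mathcal R\subseteq\closure{\mathcal L}$: manufacturing true local randomness from the two continuous inputs by a purely deterministic map while still meeting the marginal constraint exactly. That construction (the modulo-extraction trick) is established separately, so within the proof of the proposition itself the only real care needed is the closure-and-continuity bookkeeping above, rather than any further estimate.
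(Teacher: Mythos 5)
Your handling of the statement at hand---the continuity of $\phi$---is exactly the paper's argument: a one-step triangle-inequality estimate showing $\phi$ is $1$-Lipschitz with respect to $\metric$, because $\phi$ only reads the diagonal entries of each distribution and the sum of their absolute differences is dominated by the full $1$-norm sums defining the metric. The surrounding bookkeeping you sketch (equal closures of $\mathcal R$ and $\mathcal L$ via \cref{lem:sc approx with marg}, then \cref{lem:basic topology} to pass the supremum through the closure) likewise mirrors the paper's proof of \cref{prop:det app}, so there is nothing to correct.
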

Now, using \cref{lem:basic topology}, we know that $\closure{\phi(\mathcal R)} = \closure{\phi(\closure{\mathcal R})}$ and $\closure{\phi(\mathcal L)} = \closure{\phi(\closure{\mathcal L})}$. Since $\closure{\mathcal R} = \closure{\mathcal L}$, we in fact have $\closure{\phi(\mathcal R)} = \closure{\phi(\mathcal L)}$. The result now follows easily: the original definition of $p^*$ in \eqref{eq:sleeping correlations tensor} can be rewritten in this notation as:
\begin{equation}
p^* = \sup_{u \in \phi(\mathcal R)} u = \sup_{u\in\closure{\phi(\mathcal R)}} u = \sup_{u\in\closure{\phi(\mathcal L)}} u = \sup_{u\in \phi(\mathcal L)} u,
\end{equation}
where the last supremum is the same optimization problem as equation \eqref{eq:sc det tensor}.
\end{proof}

\newpage
\section{Postselected inflation: proofs}
\label{app:post-selected inflation proofs}

\subsection{Main lemma}

We now restate and prove \cref{lem:average twonorm} (recall the definitions of the $p$-norms in \cref{def:p norms}).

\LemmaMainLemma*
\begin{proof}
Let us label the outcomes of the probability tensors with $\dsvalalpha = 1,\dots,k$.
\begin{subequations}
\begin{align}
\int\dd\hvval \isource{\hv}{\hvval} \twonorm{\isource{\targetpname}{} - \onestrat{\targetptildename}{}{\hvval}}^2
&=
\left| \int\dd\hvval \isource{\hv}{\hvval} \sum_{\dsvalalpha=1}^k \left[ \isource{\targetpname}{\dsvalalpha}^2 - 2 \isource{\targetpname}{\dsvalalpha}\onestrat{\targetptildename}{\dsvalalpha}{\hvval} + \onestrat{\targetptildename}{\dsvalalpha}{\hvval}^2 \right] \right|\\
&=
\left| \int\dd\hvval \isource{\hv}{\hvval} \sum_{\dsvalalpha=1}^k \left[
2 \isource{\targetpname}{\dsvalalpha}\left( \isource{\targetpname}{\dsvalalpha} - \onestrat{\targetptildename}{\dsvalalpha}{\hvval}\right) - \left(\isource{\targetpname}{\dsvalalpha}^2 - \onestrat{\targetptildename}{\dsvalalpha}{\hvval}^2 \right) \right] \right|\\
&=
\left| \sum_{\dsvalalpha=1}^k \left[
2 \isource{\targetpname}{\dsvalalpha}\left( \isource{\targetpname}{\dsvalalpha} - 
\centertikz{
\node[tensornode] (hv) {$\hv$};
\node[tensornode] (q) [above=\outcomevspace of hv] {$\targetptildename$};
\node[voidnode] (o) [above=\outcomevspace of q] {\indexstyle{\dsvalalpha}};
\drawleg (hv.north) -- (q.south);
\drawleg (q.north) -- (o.south);
}
\right) - \left(\isource{\targetpname}{\dsvalalpha}^2 - \cpdoubleq{\dsvalalpha}{\dsvalalpha} \right) \right] \right|\\
&\leq
2 \sum_{\dsvalalpha=1}^k \left| \isource{\targetpname}{\dsvalalpha} - 
\centertikz{
\node[tensornode] (hv) {$\hv$};
\node[tensornode] (q) [above=\outcomevspace of hv] {$\targetptildename$};
\node[voidnode] (o) [above=\outcomevspace of q] {\indexstyle{\dsvalalpha}};
\drawleg (hv.north) -- (q.south);
\drawleg (q.north) -- (o.south);
}
\right|
+ 
\sum_{i=1}^k \left|\isource{\targetpname}{\dsvalalpha}^2 - \cpdoubleq{\dsvalalpha}{\dsvalalpha}  \right| \\
&\leq
2 
\onenorm{
\centertikz{
\node[tensornode] (p1) {$p$};
\node[voidnode] (o1) [above=\outcomevspace of p1] {};
\drawleg (p1.north) -- (o1);
\node[tensornode] (p2) [right=5pt of p1] {$p$};
\node[margnode] (m) [above=\outcomevspace of p2] {};
\drawleg (p2.north) -- (m);
}
-
\centertikz{
\node[tensornode] (q1) at (0,0) {$\targetptildename$};
\node[voidnode] (o1) [above=\outcomevspace of q1] {};
\drawleg (q1.north) -- (o1.south);
\node[tensornode] (q2) at (20pt,0) {$\targetptildename$};
\node[margnode] (o2) [above=\outcomevspace of q2] {};
\drawleg (q2.north) -- (o2);
\node[tensornode] (hv) at (10pt,-30pt) {$\hv$};
\node[copynode] (hvcopy) [above=\outcomevspace of hv] {};
\drawleg (hvcopy) -- (q1.south);
\drawleg (hvcopy) -- (q2.south);
\drawleg (hv.north) -- (hvcopy);
}
}
+ 
\onenorm{
\isource{\targetpname}{} \isource{\targetpname}{
} - \cpdoubleq{}{}
} \\
&\hspace{-10pt}\overset{\textup{lem.}\ \ref{lem:onenorm marg}}{\leq} 3 \onenorm{
\isource{\targetpname}{}\isource{\targetpname}{}
-
\cpdoubleq{}{}
},
\end{align}
\end{subequations}
where we used the definition of the $1$- and $2$-norms (\cref{def:p norms}), the triangle inequality, and \cref{lem:onenorm marg}.
\end{proof}

\subsection{Certification}

Let us now prove \cref{th:certifying ps}.

\ThCertifyingPs*
\begin{proof}
Suppose that $\genmultopotargetps \in \outputdistribs{\genmultopo}$. Then, there exist deterministic probability tensors 
\begin{equation}
\label{eq:original strategies}
\left\{\gonestratdet{\alice^{(0)}_\pindex}{}{}\right\}_{\pindex=1}^{\pcount}
\end{equation}
which verify the condition of equation \eqref{eq:causal compat multiple config condition}. 
We added the superscript ``0'' to make clear that these are the original agent strategies, solving the causal compatibility problem, not to be confused with the agent strategies of \eqref{eq:prob gen inf set primitives} that are to solve the postselected inflation problem.

It remains to write down the probability tensors that solve the postselected inflation problem \eqref{eq:prob gen inf set primitives} to establish the inclusion relation \eqref{eq:causal compat certified}. The random variable $\isource{\hv}{}$ should actually distribute a tuple $\vec\hvval$ of $\ninf$ independent values sampled from $\ninf$ sources $\isource{\maxunif}{}$, which we can represent as, for any sequence $\vec\hvval$ of $\ninf$ real numbers in $[0,1]$,
\begin{equation}
\label{eq:solution lambda}
\isource{\hv}{\vec\hvval} \quad = \quad 
\centertikz{
\node[tensornode] (u1) {$\maxunif$};
\node[tensornode] (u2) [right=10pt of u1] {$\maxunif$};
\node[voidnode] (dots) [right=10pt of u2] {$\overset{(\ninf - 3)}{\dots}$};
\node[tensornode] (u3) [right=10pt of dots] {$\maxunif$};
\node[tuplenode] (tuple) [above right=30pt and 4pt of u2] { };
\node[voidnode] (tupledots) [below right=5pt and -5pt of tuple] {\indexstyle{\dots}};
\foreach \source in {u1,u2,u3}
	\drawleg (\source.north) -- (tuple.south);
\node[voidnode] (o) [above=15pt of tuple] {\indexstyle{\vec\hvval}};
\drawtupleleg (tuple.north) -- (o.south);
}.
\end{equation}
Then, the agents should use their tuple of values $\vec\dsvalalpha$, which is received from the sources $\isource{\dsunif}{}$ that they have access to, to \emph{select} which of the $\ninf$ source distributed by $\isource{\hv}{}$ they should use as inputs to the \emph{original} agent strategies. This can be represented as, for each $\pindex \in\{1,\dots,\pcount\}$,
\begin{equation}
\label{eq:solution a}
\geninfstratargs{\alice_\pindex}{}{\vec\dsvalalpha}{\vec\hvval} \quad := \quad 
\centertikz{
\node[detnode] (a) {$\alice^{(0)}_\pindex$};
\node[voidnode] (o) [above=\outcomevspace of a] { };
\drawleg (a.north) -- (o.south);
\node[selectnode] (select) [below=10pt of a] { };
\node[voidnode] (in) [left=10pt of select] {\indexstyle{\vec\dsvalalpha}};
\node[voidnode] (hvin) [below=10pt of select] {\indexstyle{\vec\hvval}};
\drawtupleleg (in.east) to [out=0,in=180] (select.west);
\drawtupleleg (hvin.north) to [out=90,in=270] (select.south);
\drawtupleleg (select.north) -- (a.south);
}.
\end{equation}
Recall the definition of the selector node $\centertikz{\node[selectnode] (a) {};}$ in \cref{sec:special tensors}.
It remains to argue why this construction implies the desired inclusion of equation \eqref{eq:causal compat certified}, that is, why this construction solves the postselected inflation problem of equation \eqref{eq:geninfcondition}. Let us fix $\topoindex\in\{1,\dots,\topocount\}$. Consider the left-hand side diagram of \eqref{eq:geninfcondition} but for a fixed value assignment for each of the outputs of the $\scount_\topoindex\cdot\ninfcons$ sources $\isource{\dsunif}{}$, such that this value assignment is compatible with the postselection, i.e., such that all the values therein are pairwise distinct. Let us refer to this assignment as a ``$\isource{\dsunif}{}$-conditioning'' for brevity.
Furthermore, let us refer to the left-hand side diagram as being formed from $\ninfcons$ ``groups'', where a group consists of $\scount_\topoindex$ sources $\isource{\dsunif}{}$ and $\npcount_\topoindex$ agents. 

\begin{claim}
with equations \eqref{eq:solution lambda} and \eqref{eq:solution a}, the left-hand side diagram of \eqref{eq:geninfcondition} under any $\isource{\dsunif}{}$-conditioning factorizes in the same way as the right-hand side diagram of \eqref{eq:geninfcondition} factorizes.
\begin{claimproof}
There is no value of the sources $\isource{\dsunif}{}$ of one group in common with any of the value of the sources $\isource{\dsunif}{}$ of any other group under any $\isource{\dsunif}{}$-conditioning thanks to the postselection.
Given \eqref{eq:solution a}, this means that the agents of one group are \emph{guaranteed} to only look at sources $\isource{\maxunif}{}$ distributed through $\isource{\hv}{}$ which are independent from those that the agents of any other group look at.
Since the only potential correlations between two groups, after $\isource{\dsunif}{}$-conditioning, would come from the shared source $\isource{\hv}{}$, the claim follows.
\end{claimproof}
\end{claim}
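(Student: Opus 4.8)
The plan is to prove the inclusion constructively: given any causally compatible list $\genmultopotargetps \in \outputdistribs{\genmultopo}$, I would exhibit a single choice of inflation witnesses $\left\{\geninfstrat{\alice_\pindex}{}{}\right\}_{\pindex=1}^\pcount$ and $\isource{\hv}{}$ that satisfies the inflation constraint \eqref{eq:geninfcondition} for every $\topoindex$. By \cref{def:causal compat} I am handed deterministic strategies $\left\{\gonestratdet{\alice^{(0)}_\pindex}{}{}\right\}_{\pindex=1}^\pcount$ realizing the targets through \eqref{eq:causal compat multiple config condition}. The natural construction is to let the global source $\isource{\hv}{}$ carry a \emph{reservoir} of $\ninf$ independent ideal sources $\isource{\maxunif}{}$ (as in \eqref{eq:solution lambda}), and to let each inflated agent use the discrete values $\vec\dsvalalpha$ it reads from the $\isource{\dsunif}{}$ sources as \emph{pointers} that select which slices of this reservoir to feed into the corresponding original strategy $\alice^{(0)}_\pindex$ (as in \eqref{eq:solution a}). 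Intuitively, all $\ninfcons$ copies draw from one shared reservoir, but each copy is forced to read a different slice of it.

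First I would fix a network $\topoindex$ and reduce the verification of \eqref{eq:geninfcondition} to a statement conditioned on the discrete sources. Since the left-hand side postselects on the $\scount_\topoindex\cdot\ninfcons$ sources $\isource{\dsunif}{}$ taking pairwise distinct values, I would condition on one admissible value assignment (a ``$\isource{\dsunif}{}$-conditioning''); such assignments exist precisely because $\ninf \geq \scount_\topoindex\cdot\ninfcons$, which is the role played by the constraint \eqref{eq:ps m n condition}. I would then regroup the $\ninfcons$-fold tensor product into $\ninfcons$ \emph{groups}, each consisting of the $\scount_\topoindex$ sources and $\npcount_\topoindex$ agents belonging to one copy.

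The core of the argument is to show that, under any such conditioning, the diagram factorizes across groups into the $\ninfcons$-fold product of $\macrogtargetp{\targetpname_\topoindex}$ with itself. Pairwise distinctness of the pointer values across groups forces the agents of different groups to select \emph{disjoint} slices of the reservoir $\isource{\hv}{}$; as those slices are independent ideal sources and $\isource{\hv}{}$ is the only potential channel of correlation between groups, the groups decouple. Within a single group, the selected slices are fresh i.i.d.\ uniform sources fed into the original deterministic strategies, so that group reproduces exactly $\macrogtargetp{\targetpname_\topoindex}$ by \eqref{eq:causal compat multiple config condition}. Crucially, every admissible $\isource{\dsunif}{}$-conditioning yields the \emph{same} factorized value, so the postselected (normalized) average over all admissible conditionings collapses to that same product, matching the right-hand side of \eqref{eq:geninfcondition}.

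I expect the main obstacle to be making the decoupling step genuinely rigorous rather than merely intuitive: one must verify that distinct pointer tuples really do index disjoint coordinates of the reservoir (so that no $\isource{\maxunif}{}$ slice is shared between two groups), that the selector node together with the i.i.d.\ structure of $\isource{\hv}{}$ introduces no spurious \emph{intra}-group correlations, and that marginalizing over the unread slices of the reservoir is harmless. Once this combinatorial bookkeeping is secured, the desired equality becomes a direct rewriting, and summing the common factorized value over all admissible conditionings completes the proof.
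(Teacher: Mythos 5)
Your argument for the factorization step is the same as the paper's: the postselection forces the pointer values of distinct groups to be pairwise distinct, so by the selector construction each group reads disjoint (hence independent) slices of the $\isource{\maxunif}{}$ reservoir carried by $\isource{\hv}{}$, and since $\isource{\hv}{}$ is the only cross-group channel the groups decouple. The paper's own claimproof is at the same level of informality on this point, so no gap here.
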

\begin{claim}
with equations \eqref{eq:solution lambda} and \eqref{eq:solution a}, the marginal of the left-hand side diagram of \eqref{eq:geninfcondition} under any $\isource{\dsunif}{}$-conditioning where all groups except one are ignored equals to $\macrogtargetp{\targetpname_\topoindex}$.
\begin{claimproof}
In this marginal, each agent will be sampling several sources $\isource{\maxunif}{}$ sent out by $\isource{\hv}{}$. Which of the $\ninf$ such sources $\isource{\maxunif}{}$ the agent samples depends on the values sent out by the subset of the sources $\isource{\dsunif}{}$ the agent has access to. Thus, all the agents that are receiving the value of a given source $\isource{\dsunif}{}$ will be sampling from the same source $\isource{\maxunif}{}$ sent out by $\isource{\hv}{}$, and so the corresponding inputs of the original strategies (equation \eqref{eq:original strategies}) will be connected to the same source $\isource{\maxunif}{}$, exactly as they should in the original causal compatibility diagram of \eqref{eq:causal compat multiple config condition}. Furthermore, because the values of the different sources $\isource{\dsunif}{}$ are all pairwise distinct under $\isource{\dsunif}{}$-conditioning (thanks to the postselection), it follows that the agents will never be using the same source $\isource{\maxunif}{}$ sent out by $\isource{\hv}{}$, except if they are connected to the same source $\isource{\dsunif}{}$.
Thus, the marginal under consideration will be equal to the tensor contraction of the left-hand side of equation \eqref{eq:causal compat multiple config condition}, which is, by assumption on the original agent strategies of \eqref{eq:original strategies}, equal to the right-hand side of equation \eqref{eq:causal compat multiple config condition}, which is just $\macrogtargetp{\targetpname_\topoindex}$.
\end{claimproof}
\end{claim}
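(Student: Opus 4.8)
The plan is to evaluate the stated marginal directly by unwinding the explicit choices \eqref{eq:solution lambda} and \eqref{eq:solution a}. Fix the network index $\topoindex$, a $\isource{\dsunif}{}$-conditioning (a pairwise-distinct assignment of values in $\{1,\dots,\ninf\}$ to all $\scount_\topoindex\cdot\ninfcons$ discrete sources appearing on the left of \eqref{eq:geninfcondition}), and the single group $g$ that is kept. Since $\isource{\hv}{}$ is independent of the $\isource{\dsunif}{}$ sources, conditioning on a fixed assignment leaves the remaining randomness as the product of $\ninf$ i.i.d.\ $\isource{\maxunif}{}$ coordinates of $\vec\hvval$, so the conditional marginal is an integral over $\vec\hvval$ of a product of the agents' strategy tensors. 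First I would dispose of the ignored groups: placing a marginal node on every output leg of every agent outside group $g$ and summing each such output gives $1$ by normalisation of the strategy tensor, for every value of the $\vec\hvval$-coordinates that agent reads. Those agents therefore drop out identically, and the coordinates of $\vec\hvval$ consumed only by them integrate to $1$. What remains is exactly the group-$g$ subdiagram, with $\isource{\hv}{}$ now feeding only its $\npcount_\topoindex$ agents.

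Next I would read off what each surviving agent computes. Under the conditioning, the group-$g$ copy of source $s\in\{1,\dots,\scount_\topoindex\}$ carries a fixed value $i_s\in\{1,\dots,\ninf\}$, fanned out to every group-$g$ agent connected to $s$. By \eqref{eq:solution a}, such an agent $\npindex$ uses its tuple $\vec\dsvalalpha=(i_s)_{s\in\cmap_\topoindex(\npindex)}$ to select the coordinates $(\hvval_{i_s})_{s\in\cmap_\topoindex(\npindex)}$ of $\vec\hvval$ and feed them to the original strategy $\alice^{(0)}_{\pmap_\topoindex(\npindex)}$. The two structural facts I need are: (i) any two group-$g$ agents sharing a source $s$ select the \emph{same} coordinate $\hvval_{i_s}$, since $i_s$ is one fixed value; and (ii) the postselection of \eqref{eq:geninfcondition} forces the in-group values $i_1,\dots,i_{\scount_\topoindex}$ to be pairwise distinct, so $s\mapsto i_s$ is injective and the selected $\hvval_{i_s}$ are distinct, hence independent and uniform, coordinates of $\vec\hvval$.

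Finally I would integrate. After these reductions the conditional marginal equals $\int\prod_{s=1}^{\scount_\topoindex}\dd\hvval_{i_s}\,\isource{\maxunif}{\hvval_{i_s}}\,\prod_{\npindex=1}^{\npcount_\topoindex}\alice^{(0)}_{\pmap_\topoindex(\npindex)}\big(\,\cdot\mid(\hvval_{i_s})_{s\in\cmap_\topoindex(\npindex)}\big)$. Relabelling $x_s:=\hvval_{i_s}$, which is a bijective change of variables by the injectivity of $s\mapsto i_s$, turns this into exactly the left-hand-side tensor contraction of \eqref{eq:causal compat multiple config condition} built from the original strategies \eqref{eq:original strategies} for network $\topoindex$: one i.i.d.\ uniform source $x_s$ per source $s$, wired to the agents precisely according to $\cmap_\topoindex$, with each agent $\npindex$ running $\alice^{(0)}_{\pmap_\topoindex(\npindex)}$. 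By the defining assumption \eqref{eq:causal compat multiple config condition} on those strategies, that contraction equals $\macrogtargetp{\targetpname_\topoindex}$, which is the claim.

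The only genuinely delicate point is the correspondence in the second step between \emph{two agents sharing a source $s$ in the original network $\topoindex$} and \emph{two agents selecting the same coordinate of $\vec\hvval$ in the inflation}, together with the injectivity $s\mapsto i_s$ ensuring that distinct original sources map to independent uniform coordinates. This is precisely where the pairwise-distinct postselection is used, so I would make (i) and (ii) fully explicit. Everything else — the collapse of the ignored groups, the freeing of unused $\hvval$-coordinates, and the relabelling — is routine once the fanout and selector semantics of \cref{sec:special tensors} are spelled out.
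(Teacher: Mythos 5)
Your proposal is correct and follows essentially the same route as the paper's own proof: ignored groups vanish by normalization of the strategy tensors, agents wired to a common $\isource{\dsunif}{}$ source select the same coordinate of $\vec\hvval$, and the pairwise-distinctness enforced by the postselection makes $s\mapsto i_s$ injective so that distinct original sources map to independent uniform coordinates, recovering the contraction of \eqref{eq:causal compat multiple config condition}. You merely render the paper's verbal diagrammatic argument as an explicit integral with a change of variables, which is a faithful (and slightly more rigorous) transcription rather than a different approach.
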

Using the two claims together, we see that under $\isource{\dsunif}{}$-conditioning, the left-hand side of \eqref{eq:geninfcondition} equals the right-hand side. Averaging over the possible $\isource{\dsunif}{}$-conditioning, this property remains true, so that we obtain that \eqref{eq:geninfcondition} holds true, and $\topoindex$ was arbitrary in $\{1,\dots,\topocount\}$.
\end{proof}

\subsection{Hierarchy}

Let us now turn to \cref{th:inclusion relations}.

\ThInclusionRelations*

\begin{proof}
We first prove that the relation \eqref{eq:general inclusion} works in the case $\ninf' = \ninf$ and $\ninfcons' = \ninfcons - 1$, and then in the case $\ninf' < \ninf$ and $\ninfcons' = \ninfcons$. The general relation \eqref{eq:general inclusion} then follows by repeated iteration of the arguments.

\begin{claim}
the relation \eqref{eq:general inclusion} holds in the case $\ninf' = \ninf$ and $\ninfcons' = \ninfcons - 1$.
\begin{claimproof}
For any outcome distributions $\genmultopotargetps \in \geninfset{\genmultopo}{\ninf}{\ninfcons}$, consider equation \eqref{eq:geninfcondition} with $\ninfcons$ groups, as displayed. The marginal where the last group is ignored is simply the condition of equation \eqref{eq:geninfcondition} with $\ninfcons - 1$ groups. This is clear for the right-hand side. For the left-hand side, this follows from the fact that one can safely marginalize sources that are inputs to the postselection, that is, for any $k \in\mathbb{N}$ such that $k + 1 \leq n$,
\begin{equation}
\centertikz{
\node[voidnode] (i) {};
\node[tuplenode] (tuple) [below=10pt of i] {};
\drawtupleleg (tuple.north) -- (i.south);
\node[voidnode] (tupledots) [below=1pt of tuple] {\indexstyle{\dots}};
\node[voidnode] (dots) [below=30pt of tuple] {$\overset{(k-2)}{\dots}$};
\node[tensornode] (u1) [left=6pt of dots] {$\dsunif$};
\node[copynode] (u1copy) [above=\outcomevspace of u1] {};
\drawleg (u1.north) -- (u1copy);
\drawleg (u1copy) -- \AnchorOneThree{tuple}{south};
\node[tensornode] (uk) [right=6pt of dots] {$\dsunif$};
\node[copynode] (ukcopy) [above=\outcomevspace of uk] {};
\drawleg (uk.north) -- (ukcopy);
\drawleg (ukcopy) -- \AnchorThreeThree{tuple}{south};
\node[tensornode] (uk1) [right=25pt of uk] {$\dsunif$};
\node[copynode] (uk1copy) [above=\outcomevspace of uk1] {};
\drawleg (uk1.north) -- (uk1copy);
\node[psnode] (f) [above right=30pt and 3pt of uk] {\psdiffname{k+1}};
\node[voidnode] (fdots) [below left=1pt and -6pt of f] {\indexstyle{\dots}};
\drawleg (u1copy) -- (f.south west);
\drawleg (ukcopy) -- \AnchorTwoThree{f}{south};
\drawleg (uk1copy) -- \AnchorThreeThree{f}{south};
\node[margnode] (marg) [above right=10pt and 5pt of uk1copy] {};
\drawleg (uk1copy) -- (marg.south);
}
=
\centertikz{
\node[voidnode] (i) {};
\node[tuplenode] (tuple) [below=10pt of i] {};
\drawtupleleg (tuple.north) -- (i.south);
\node[voidnode] (tupledots) [below=1pt of tuple] {\indexstyle{\dots}};
\node[voidnode] (dots) [below=30pt of tuple] {$\overset{(k-2)}{\dots}$};
\node[tensornode] (u1) [left=6pt of dots] {$\dsunif$};
\node[copynode] (u1copy) [above=\outcomevspace of u1] {};
\drawleg (u1.north) -- (u1copy);
\drawleg (u1copy) -- \AnchorOneThree{tuple}{south};
\node[tensornode] (uk) [right=6pt of dots] {$\dsunif$};
\node[copynode] (ukcopy) [above=\outcomevspace of uk] {};
\drawleg (uk.north) -- (ukcopy);
\drawleg (ukcopy) -- \AnchorThreeThree{tuple}{south};
\node[psnode] (f) [above=30pt of uk] {\psdiffname{k}};
\drawleg (u1copy) -- (f.south west);
\drawleg (ukcopy) -- (f.south east);
\node[voidnode] (fdots) [below=1pt of f] {\indexstyle{\dots}};
}.
\end{equation}
Thus, the tensors of \eqref{eq:prob gen inf set primitives} that solve the problem of  \eqref{eq:prob gen inf set} with $\ninfcons$ groups also solve the problem of \eqref{eq:prob gen inf set} with $\ninfcons-1$ groups, and hence $\genmultopotargetps \in \geninfset{\genmultopo}{\ninf}{\ninfcons-1}$.
\end{claimproof}
\end{claim}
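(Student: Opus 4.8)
The plan is to prove the inclusion $\geninfset{\genmultopo}{\ninf}{\ninfcons} \subseteq \geninfset{\genmultopo}{\ninf}{\ninfcons-1}$ by reusing the witnessing tensors verbatim: I would show that any tensors of \eqref{eq:prob gen inf set primitives} solving the order-$\ninfcons$ problem \eqref{eq:prob gen inf set} automatically solve the order-$(\ninfcons-1)$ problem, with $\ninf$ held fixed. The admissibility bound for the smaller problem, $\ninf \geq \max_\topoindex \scount_\topoindex \cdot (\ninfcons-1)$, is immediate from $\ninf \geq \max_\topoindex \scount_\topoindex \cdot \ninfcons$, so no change to $\isource{\hv}{}$ or to the strategies is needed.

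Given $\genmultopotargetps \in \geninfset{\genmultopo}{\ninf}{\ninfcons}$, I would fix $\topoindex$ and take the marginal of both sides of \eqref{eq:geninfcondition} over the output legs of the $\npcount_\topoindex$ agents forming the last of the $\ninfcons$ groups. On the right-hand side this is the easy direction: the condition at order $\ninfcons$ carries $\ninfcons$ tensor-product copies of $\macrogtargetp{\targetpname_\topoindex}$, and since each copy is normalized, marginalizing away the outputs of one copy deletes that factor and leaves exactly the $(\ninfcons-1)$-fold product appearing on the right-hand side of the order-$(\ninfcons-1)$ condition.

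The substance is on the left-hand side. After marginalizing the last group's agents, the $\scount_\topoindex$ sources $\isource{\dsunif}{}$ belonging to that group are no longer connected to any strategy; their only surviving link is into the global ``all-distinct'' postselection $\gpstensor{\scount_\topoindex \cdot \ninfcons}$. The key step is a tensor identity stating that marginalizing a source $\isource{\dsunif}{}$ feeding one input of the arity-$(k+1)$ postselection $\gpstensor{k+1}$ yields the arity-$k$ postselection $\gpstensor{k}$ on the remaining inputs, valid whenever $k+1 \leq \ninf$. Applying this identity $\scount_\topoindex$ times removes all of the last group's sources and reduces the postselection to $\gpstensor{\scount_\topoindex \cdot (\ninfcons-1)}$ on the sources of the surviving $\ninfcons-1$ groups; the bound $\ninf \geq \scount_\topoindex \cdot \ninfcons$ ensures $k+1 \leq \ninf$ at each of these steps. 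What is then left is precisely the left-hand side of the order-$(\ninfcons-1)$ condition, assembled from the unchanged tensors, and since $\topoindex$ was arbitrary in $\{1,\dots,\topocount\}$ the claim follows.

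I expect the main obstacle to be establishing the postselection-marginalization identity cleanly, since it is entangled with the renormalization built into the definition of postselection in \cref{sec:post-selection}. The underlying fact is combinatorial: the uniform distribution over ordered $(k+1)$-tuples of pairwise-distinct elements of $\{1,\dots,\ninf\}$ has, in any single coordinate, the uniform distribution over pairwise-distinct $k$-tuples as its marginal, because each distinct $k$-tuple admits exactly $\ninf-k$ valid extensions. The care required is in checking that the success-probability factors of the two postselections are consistent across orders and that marginalizing the agents commutes with marginalizing the sources, after which both sides of \eqref{eq:geninfcondition} descend simultaneously from order $\ninfcons$ to order $\ninfcons-1$.
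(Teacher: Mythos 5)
Your proposal is correct and follows essentially the same route as the paper: marginalize the last group's agents (the right-hand side drops one normalized factor of $\macrogtargetp{\targetpname_\topoindex}$), then remove that group's now-dangling sources via the identity that marginalizing one input of the arity-$(k+1)$ all-distinct postselection yields the arity-$k$ postselection, applied $\scount_\topoindex$ times. The paper states this tensor identity without proof, whereas you additionally supply its combinatorial justification (each distinct $k$-tuple has exactly $\ninf-k$ valid extensions, so the renormalizations are automatically consistent), which is a welcome but not structurally different refinement.
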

\begin{claim}
the relation \eqref{eq:general inclusion} holds in the case where $\ninf' < \ninf$ and $\ninfcons' = \ninfcons$.
\begin{claimproof}
Consider some $\genmultopotargetps \in \geninfset{\genmultopo}{\ninf}{\ninfcons}$, and let $\Big\{\geninfstrat{\alice_\pindex}\Big\}_{\pindex=1}^\pcount$ and $\isource{\hv}{}$ be some choice of tensors 
that solve equation \eqref{eq:prob gen inf set}. We will construct new tensors $\Big\{\geninfstrat{\alice'_\pindex}\Big\}_{\pindex=1}^\pcount$ and $\isource{\hv'}{}$ that solve the problem of \eqref{eq:prob gen inf set} with parameters $\ninf' < \ninf$ and $\ninfcons$, thus showing that also $\genmultopotargetps \in \geninfset{\genmultopo}{\ninf'}{\ninfcons}$. 
The construction is the following: let the new $\isource{\hv'}{}$ tensor be a tuple of the original $\isource{\hv}{}$ tensor together with a uniformly sampled permutation $\pi$ of $\ninf$ indices (such permutations are in one-to-one correspondence with the set $\{1,\dots,\ninf!\}$):
\begin{subequations}
\label{eq:primitives for smaller n}
\begin{equation}
\isource{\hv'}{\vec \hvval' = (\hvval, \pi)} = \isource{\hv}{\hvval}\isource{\macrodsunif{(\ninf!)}}{\pi}.
\end{equation}
The new agent strategies $\Big\{\geninfstrat{\alice'_\pindex}\Big\}_{\pindex=1}^\pcount$ are then obtained by letting the agents first apply the permutation $\pi$ on all the values they receive from the sources $\isource{\dsunif}{}$, and then using the original strategies:
\begin{equation}
\geninfstratargs{\alice'_\pindex}{}{\vec\dsvalalpha}{\vec\lambda'=(\lambda,\pi)}{}{}{}
=
\geninfstratargs{\alice_\pindex}{}{\pi(\vec\dsvalalpha)}{\hvval}{}{},
\end{equation}
\end{subequations}
where we introduced the notation $\pi(\vec\dsvalalpha)$ to denote the application of the permutation $\pi$ to all the components of $\vec\dsvalalpha$, e.g.\ if $\pi=(1\leftrightarrow2)(3\leftrightarrow4)$ and $\vec\dsvalalpha = (1,6,2,3)$, then $\pi(\vec\dsvalalpha) = (2,6,1,4)$.
It remains to proves that the choice of \eqref{eq:primitives for smaller n} does solve the problem of \eqref{eq:prob gen inf set} for $\ninf' < \ninf$. Using the terminology of the proof of \cref{th:certifying ps}, the diagram of the left-hand side of \eqref{eq:geninfcondition} with $\ninf$ replaced by $\ninf' < \ninf$, and under $\isource{\macrodsunif{\ninf'}}{}$-conditioning, i.e., under a choice of value assignment for all the sources $\isource{\macrodsunif{\ninf'}}{}$ where the value assignment is compatible with the postselection, does in fact already verifies the condition \eqref{eq:geninfcondition}. Thus, the average over all $\isource{\macrodsunif{\ninf'}}{}$-conditioning will also verify this condition \eqref{eq:geninfcondition}, and the claim follows.
Consider for instance the value assignment where the sources $\isource{\macrodsunif{\ninf'}}{}$ take the values $1,2,\dots, \scount_\topoindex\cdot \ninfcons$ (recall that by assumption, $\scount_\topoindex\cdot\ninfcons \leq \ninf'$). 
After all agents apply the random permutation $\pi$, the sources effectively take the values $\pi(1), \pi(2), \dots, \pi(\scount_\topoindex\cdot\ninfcons)$, which is a uniformly distributed tuple of values all pairwise distinct and in the range $\{1,\dots,\ninf\}$, exactly as those obtained from the $\scount_\topoindex\cdot\ninfcons$ sources $\isource{\dsunif}{}$ postselected with the tensor 
$\centertikz{
\node[psnode,inner sep=-3pt] (f) {\psdiffname{\scount_\topoindex\cdot\ninfcons}};
\node[voidnode] (i1) [below left=\outcomevspace and 3pt of f] {};
\node[voidnode] (i2) [right=3pt of i1] {};
\node[voidnode] (dots) [above right=-2pt and -1pt of i2] {\indexstyle{\dots}};
\node[voidnode] (i3) [below right=\outcomevspace and 3pt of f] {};
\drawdsource (i1.north) -- \AnchorOneThree{f}{south};
\drawdsource (i2.north) -- \AnchorTwoThree{f}{south};
\drawdsource (i3.north) -- \AnchorThreeThree{f}{south};
}$.
\end{claimproof}
\end{claim}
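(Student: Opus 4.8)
The plan is to obtain the general inclusion \eqref{eq:general inclusion} from two atomic one-step reductions together with transitivity of set inclusion. I would factor the passage $(\ninf,\ninfcons)\to(\ninf',\ninfcons')$ through the intermediate pair $(\ninf,\ninfcons')$: first decrease the tensor-product order from $\ninfcons$ to $\ninfcons'$ while keeping $\ninf$ fixed, then decrease the source cardinality from $\ninf$ to $\ninf'$ while keeping $\ninfcons'$ fixed. Both intermediate pairs remain admissible in the sense of \cref{def:postselected inflation set}: for the first stage, every $j$ with $\ninfcons'\le j\le\ninfcons$ satisfies $\ninf\ge\max_\topoindex\scount_\topoindex\cdot\ninfcons\ge\max_\topoindex\scount_\topoindex\cdot j$; for the second stage, every value between $\ninf'$ and $\ninf$ exceeds $\ninf'\ge\max_\topoindex\scount_\topoindex\cdot\ninfcons'$. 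Hence it suffices to prove the two inclusions $\geninfset{\genmultopo}{\ninf}{\ninfcons}\subseteq\geninfset{\genmultopo}{\ninf}{\ninfcons-1}$ and $\geninfset{\genmultopo}{\ninf}{\ninfcons}\subseteq\geninfset{\genmultopo}{\ninf'}{\ninfcons}$ for $\ninf'<\ninf$, and then iterate.

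For the first inclusion I would keep the witnessing tensors unchanged. Given a point of $\geninfset{\genmultopo}{\ninf}{\ninfcons}$ with witnesses $\left\{\geninfstrat{\alice_\pindex}\right\}_{\pindex=1}^{\pcount}$ and $\isource{\hv}{}$ as in \eqref{eq:prob gen inf set}, I would marginalize, on both sides of \eqref{eq:geninfcondition}, the outputs of a single one of the $\ninfcons$ groups. On the right-hand side this manifestly turns the $\ninfcons$-fold product into an $(\ninfcons-1)$-fold one. On the left-hand side the agents' output legs marginalize trivially, so that the $\scount_\topoindex$ sources that fed the discarded group now enter only the global pairwise-distinctness postselection; marginalizing them must collapse that postselection from $\scount_\topoindex\cdot\ninfcons$ to $\scount_\topoindex\cdot(\ninfcons-1)$ legs. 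This reduces to the elementary identity that summing $\gpstensorbase{k+1}$ over one of its arguments reproduces $\gpstensorbase{k}$ on the remaining legs, up to the combinatorial factor absorbed into the postselection renormalization, which I would read off directly from the definition of $\gpstensorbase{k}$. The same witnesses then satisfy the $(\ninfcons-1)$-group version of \eqref{eq:geninfcondition}.

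For the second inclusion the witnesses must be rebuilt, and this is the step I expect to be the \emph{main obstacle}. Given witnesses for parameter $\ninf$, I would enlarge $\isource{\hv}{}$ into a source $\isource{\hv'}{}$ that additionally emits an independent, uniformly random permutation $\pi\in S_\ninf$, and define new agent strategies that first relabel through $\pi$ every value received from the $\ninf'$-valued sources and then apply the original strategies. Using the ``$\isource{\dsunif}{}$-conditioning'' language of the proof of \cref{th:certifying ps}, I would fix any value assignment of the $\scount_\topoindex\cdot\ninfcons$ sources over $\{1,\dots,\ninf'\}$ that passes the postselection; after relabeling by the random $\pi$ this becomes a uniformly distributed pairwise-distinct tuple over $\{1,\dots,\ninf\}$, which is precisely the conditioning distribution that the $\ninf$-parameter diagram averages over. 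Consequently each conditioned left-hand side already coincides with the corresponding factorized right-hand side of \eqref{eq:geninfcondition}, and averaging over the admissible conditionings preserves the equality.

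The delicate parts of this last step, and where I would spend the most care, are: confirming that the admissibility bound $\ninf'\ge\max_\topoindex\scount_\topoindex\cdot\ninfcons$ guarantees that assignments passing the postselection actually exist; and checking that applying one global permutation $\pi$ simultaneously to all agents across all $\ninfcons$ groups preserves exactly the pattern of coincidences and distinctnesses among the source values, so that the permuted diagram matches a genuine $\ninf$-parameter conditioned diagram rather than merely resembling one. Granting the two atomic inclusions, the full statement \eqref{eq:general inclusion} of \cref{th:inclusion relations} follows by composing them in the order above and invoking transitivity of $\subseteq$.
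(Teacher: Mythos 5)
Your construction for the $\ninf'<\ninf$ step — appending a uniformly random permutation $\pi\in S_\ninf$ to the global source, having every agent relabel its $\isource{\dsunif}{}$-inputs through $\pi$ before applying the original strategy, and then arguing via conditioning that a postselection-passing assignment over $\{1,\dots,\ninf'\}$ is mapped to a uniformly distributed pairwise-distinct tuple over $\{1,\dots,\ninf\}$ — is exactly the paper's proof of this claim, and the points you flag as delicate (existence of admissible assignments, preservation of the coincidence pattern under a single global bijection) are precisely the ones the paper's argument relies on. Your overall factorization through the intermediate pair and the marginalization argument for the $\ninfcons$-reduction likewise match the surrounding proof, so the proposal is correct and takes essentially the same route.
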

This concludes the proof.
\end{proof}

\subsection{Convergence}

The following lemma gives a useful relation between norms.
\begin{lemma}
\label{lem:onetwonorms}
For all $k\in\mathbb{N}$, $x \in \mathbb{R}^k$, it holds that
\begin{equation}
\onenorm{x} \leq \sqrt{k} \twonorm{x}.
\end{equation}
\end{lemma}
\begin{proof}
Let $y = (1,\dots,1) \in \mathbb{R}^k$, and $\tilde x = (|x_1|, \dots, |x_k|) \in\mathbb{R}^k$. We see that
\begin{equation}
\onenorm{x} = \sum_{i=1}^k y_i \tilde x_i \leq \twonorm{y}\twonorm{\tilde x} = \sqrt{k} \twonorm{x},
\end{equation}
where we used the Cauchy-Schwartz inequality for the canonical inner product of $\mathbb{R}^k$.
\end{proof}

We make formal the intuition that the postselection has almost no effect for very large $\ninf$ in the following lemma:

\begin{lemma}
\label{lem:postselection approx}
For any $\ninf, \scount \in \mathbb{N}$ with $\ninf \geq \scount$, for any probability tensor $\cpm{}{}$ with $\scount$ input legs, it holds that
\begin{equation}
\onenorm{
\centertikz{
\node[tensornode] (u1) {$\dsunif$};
\node[voidnode] (dots) [right=5pt of u1] {$\overset{(\scount-2)}{\dots}$};
\node[tensornode] (un) [right=5pt of dots]{$\dsunif$};
\node[copynode] (u1copy) [above=\outcomevspace of u1] {};
\node[copynode] (uncopy) [above=\outcomevspace of un] {};
\drawleg (u1.north) -- (u1copy);
\drawleg (un.north) -- (uncopy);
\node[tuplenode] (tuple) [above=20pt of dots] {};
\node[tensornode] (c) [above=10pt of tuple] {$M$};
\drawleg (c.north) -- ($(c.north) + (0,\outcomevspace)$);
\node[voidnode] [below=1pt of tuple] {\indexstyle{\dots}};
\drawleg (u1copy) -- \AnchorOneThree{tuple}{south};
\drawleg (uncopy) -- \AnchorThreeThree{tuple}{south};
\drawtupleleg (tuple.north) -- (c.south);
\node[psnode] (f) [above right=10pt and 5pt of uncopy] {\psdiffname{\scount}};
\node[voidnode] [below left=0pt and -12pt of f] {\indexstyle{\dots}};
\drawleg (u1copy) to [out=10,in=220] (f.south west);
\drawleg (uncopy) to [out=10,in=220] (f.south east);
}
-
\centertikz{
\node[tensornode] (u1) {$\dsunif$};
\node[voidnode] (dots) [right=5pt of u1] {$\overset{(\scount-2)}{\dots}$};
\node[tensornode] (un) [right=5pt of dots]{$\dsunif$};
\node[copynode] (u1copy) [above=\outcomevspace of u1] {};
\node[copynode] (uncopy) [above=\outcomevspace of un] {};
\drawleg (u1.north) -- (u1copy);
\drawleg (un.north) -- (uncopy);
\node[tuplenode] (tuple) [above=20pt of dots] {};
\node[tensornode] (c) [above=10pt of tuple] {$M$};
\drawleg (c.north) -- ($(c.north) + (0,\outcomevspace)$);
\node[voidnode] [below=1pt of tuple] {\indexstyle{\dots}};
\drawleg (u1copy) -- \AnchorOneThree{tuple}{south};
\drawleg (uncopy) -- \AnchorThreeThree{tuple}{south};
\drawtupleleg (tuple.north) -- (c.south);
}
}
\leq
\frac{\scount(\scount - 1)}{\ninf} + \bigo{\frac{1}{\ninf^2}}.
\end{equation}
\end{lemma}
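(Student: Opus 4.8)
The plan is to evaluate both tensors componentwise and bound their difference directly. Write $\dsvalalpha_1,\dots,\dsvalalpha_\scount\in\{1,\dots,\ninf\}$ for the values produced by the sources $\isource{\dsunif}{}$, and, for a fixed value of the output index of $M$, write $M(\dsvalalpha_1,\dots,\dsvalalpha_\scount)\geq 0$ for the corresponding component. By normalization of the probability tensor $M$, summing $M(\vec\dsvalalpha)$ over the output index gives $1$ for each fixed input tuple $\vec\dsvalalpha$. The non-postselected (right-hand) tensor is then the uniform average over all $\ninf^\scount$ tuples, while the postselected (left-hand) tensor conditions on the tuple being pairwise distinct. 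Writing $N := \ninf(\ninf-1)\cdots(\ninf-\scount+1)$ for the number of distinct tuples (nonzero precisely because $\ninf\geq\scount$), the postselection normalization (see \cref{sec:post-selection}) yields the two tensors as $N^{-1}\sum_{\vec\dsvalalpha\,\textup{distinct}}M(\vec\dsvalalpha)$ and $\ninf^{-\scount}\sum_{\vec\dsvalalpha}M(\vec\dsvalalpha)$ respectively.

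First I would split the uniform average over all tuples into the distinct tuples and the remaining tuples (those with at least one repeated value), so that the difference of the two tensors reads
\begin{equation*}
\left(\frac{1}{N}-\frac{1}{\ninf^\scount}\right)\sum_{\vec\dsvalalpha\,\textup{distinct}}M(\vec\dsvalalpha)\ -\ \frac{1}{\ninf^\scount}\sum_{\vec\dsvalalpha\,\textup{repeated}}M(\vec\dsvalalpha),
\end{equation*}
each summand being a tensor in the output index. Applying the triangle inequality for $\onenorm{\cdot}$, and using that the prefactor $\tfrac{1}{N}-\tfrac{1}{\ninf^\scount}$ is nonnegative (as $N\leq\ninf^\scount$) together with $M\geq 0$, the $1$-norm of each of the two pieces is obtained by summing its components over the output index.

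The key simplification is that this output-index sum equals $1$ for every fixed $\vec\dsvalalpha$. Hence the first piece contributes $\big(\tfrac{1}{N}-\tfrac{1}{\ninf^\scount}\big)\cdot N = 1-\tfrac{N}{\ninf^\scount}$, and the second contributes $\tfrac{\ninf^\scount-N}{\ninf^\scount}=1-\tfrac{N}{\ninf^\scount}$ as well, giving
\begin{equation*}
\onenorm{\textup{(postselected)}-\textup{(non-postselected)}}\ \leq\ 2\left(1-\frac{N}{\ninf^\scount}\right).
\end{equation*}
Finally I would expand the falling factorial,
\begin{equation*}
\frac{N}{\ninf^\scount}=\prod_{j=0}^{\scount-1}\left(1-\frac{j}{\ninf}\right)=1-\frac{1}{\ninf}\sum_{j=0}^{\scount-1}j+\bigo{\frac{1}{\ninf^2}}=1-\frac{\scount(\scount-1)}{2\ninf}+\bigo{\frac{1}{\ninf^2}},
\end{equation*}
which turns the bound into $\scount(\scount-1)/\ninf+\bigo{1/\ninf^2}$, as claimed.

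There is no genuine obstacle beyond bookkeeping; the one point worth care is that the two sources of error — the reweighting of the distinct tuples by the postselection normalization $1/N$, and the outright removal of the repeated-value tuples — each contribute exactly $1-N/\ninf^\scount$, so they add rather than cancel and produce the factor of $2$. The asymptotic estimate is then a routine first-order expansion of $\prod_{j}(1-j/\ninf)$.
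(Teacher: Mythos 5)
Your proposal is correct and follows essentially the same route as the paper's proof: split the uniform average into distinct and repeated tuples, apply the triangle inequality, use the normalization $\sum_a M(a|\vec\dsvalalpha)=1$ to reduce each piece to $1-N/\ninf^\scount$ (with $N=\ninf!/(\ninf-\scount)!$), and expand the falling factorial to first order. The observation that the two error terms each contribute $1-N/\ninf^\scount$ and hence produce the factor of $2$ is exactly the computation carried out in the paper.
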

\begin{proof}
We label the outcomes of the tensor $\cpm{}{}$ with $a$, and the combined inputs form the $\isource{\dsunif}{}$ sources with $\vec\dsvalalpha$. We further use the notation `` $\vec\dsvalalpha \allneq$ '' in case the tuple of inputs $\vec\dsvalalpha$ is compatible with the postselection, i.e., all the components of $\vec\dsvalalpha$ are pairwise distinct, and the notation `` $\vec\dsvalalpha \notallneq$ '' otherwise, i.e., if at least two components of $\vec\dsvalalpha$ are equal. Note that with $\scount$ sources outputting $\ninf$ distinct values, it holds that
\begin{equation}
|\{\vec\dsvalalpha \allneq \}| = \ninf(\ninf-1)\cdots(\ninf-(\scount - 1)) = \frac{\ninf!}{(\ninf - \scount)!}.
\end{equation}
Expanding the norm and the source contractions, we obtain
\begin{subequations}
\begin{align}
\onenorm{
\centertikz{
\node[tensornode] (u1) {$\dsunif$};
\node[voidnode] (dots) [right=5pt of u1] {$\overset{(\scount-2)}{\dots}$};
\node[tensornode] (un) [right=5pt of dots]{$\dsunif$};
\node[copynode] (u1copy) [above=\outcomevspace of u1] {};
\node[copynode] (uncopy) [above=\outcomevspace of un] {};
\drawleg (u1.north) -- (u1copy);
\drawleg (un.north) -- (uncopy);
\node[tuplenode] (tuple) [above=20pt of dots] {};
\node[tensornode] (c) [above=10pt of tuple] {$M$};
\drawleg (c.north) -- ($(c.north) + (0,\outcomevspace)$);
\node[voidnode] [below=1pt of tuple] {\indexstyle{\dots}};
\drawleg (u1copy) -- \AnchorOneThree{tuple}{south};
\drawleg (uncopy) -- \AnchorThreeThree{tuple}{south};
\drawtupleleg (tuple.north) -- (c.south);
\node[psnode] (f) [above right=10pt and 5pt of uncopy] {\psdiffname{\scount}};
\node[voidnode] [below left=0pt and -12pt of f] {\indexstyle{\dots}};
\drawleg (u1copy) to [out=10,in=220] (f.south west);
\drawleg (uncopy) to [out=10,in=220] (f.south east);
}
-
\centertikz{
\node[tensornode] (u1) {$\dsunif$};
\node[voidnode] (dots) [right=5pt of u1] {$\overset{(\scount-2)}{\dots}$};
\node[tensornode] (un) [right=5pt of dots]{$\dsunif$};
\node[copynode] (u1copy) [above=\outcomevspace of u1] {};
\node[copynode] (uncopy) [above=\outcomevspace of un] {};
\drawleg (u1.north) -- (u1copy);
\drawleg (un.north) -- (uncopy);
\node[tuplenode] (tuple) [above=20pt of dots] {};
\node[tensornode] (c) [above=10pt of tuple] {$M$};
\drawleg (c.north) -- ($(c.north) + (0,\outcomevspace)$);
\node[voidnode] [below=1pt of tuple] {\indexstyle{\dots}};
\drawleg (u1copy) -- \AnchorOneThree{tuple}{south};
\drawleg (uncopy) -- \AnchorThreeThree{tuple}{south};
\drawtupleleg (tuple.north) -- (c.south);
}
}
=
\sum_{\outputa}
\left|
\sum_{\vec\dsvalalpha \allneq}
\cpm{\outputa}{\vec\dsvalalpha}
\frac{(\ninf- \scount)!}{\ninf!}
-
\sum_{\vec\dsvalalpha}
\cpm{\outputa}{\vec\dsvalalpha}
\frac{1}{\ninf^{\scount}}
\right|
\end{align}
\begin{align}
&=
\sum_{\outputa}
\left|
\sum_{\vec\dsvalalpha \allneq}
\cpm{\outputa}{\vec\dsvalalpha}
\left(
\frac{(\ninf- \scount)!}{\ninf!}
-
\frac{1}{\ninf^{\scount}}
\right)
-
\sum_{\vec\dsvalalpha \notallneq}
\cpm{\outputa}{\vec\dsvalalpha}
\frac{1}{\ninf^{\scount}}
\right| \\
&\leq 
\sum_{\vec\dsvalalpha \allneq}
\sum_{\outputa}
\cpm{\outputa}{\vec\dsvalalpha}
\left(
\frac{(\ninf- \scount)!}{\ninf!}
-
\frac{1}{\ninf^{\scount}}
\right)
+
\sum_{\vec\dsvalalpha \notallneq}
\sum_{\outputa}
\cpm{\outputa}{\vec\dsvalalpha}
\frac{1}{\ninf^{\scount}} \\
&=
\frac{\ninf!}{(\ninf - \scount)!}\left(
\frac{(\ninf- \scount)!}{\ninf!}
-
\frac{1}{\ninf^{\scount}}
\right)
+ \left(\ninf^{\scount} - \frac{\ninf!}{(\ninf - \scount)!}\right)\frac{1}{\ninf^{\scount}} \\
&=
2\left( 1 - \frac{\ninf!}{\ninf^{\scount}(\ninf-\scount)!}\right)\\
&=
2\left( 1 - \prod_{k=1}^{\scount-1} \left(1 - \frac{k}{\ninf}\right)\right) \\
&=
\frac{2}{\ninf}\sum_{k=1}^{\scount-1} k + \bigo{\frac{1}{\ninf^2}} \\
&= \frac{\scount(\scount -1)}{\ninf} + \bigo{\frac{1}{\ninf^2}},
\end{align}
\end{subequations}
as expected.
\end{proof}

We now prove \cref{th:convergence}.

\ThConvergence*

\begin{proof}
Let us make implicit the infimum constraint for better readability. We have, using \cref{lem:onetwonorms} for $\mathbb{R}^{d_\topoindex}$, then for $\mathbb{R}^\topocount$, and then using the monotonicity of the square root:
\begin{subequations}
\begin{align}
\inf_{
\scalebox{0.95}{$
\scriptsize\big(\macrogtargetp{\targetptildename_\topoindex}\big)_{\topoindex}
$}
}
\frac{1}{\topocount}\sum_{\topoindex=1}^\topocount
\onenorm{\macrogtargetp{\targetpname_\topoindex} - \macrogtargetp{\targetptildename_\topoindex}}
&\leq 
\inf_{
\scalebox{0.95}{$
\scriptsize\big(\macrogtargetp{\targetptildename_\topoindex}\big)_{\topoindex}
$}
}
\frac{1}{\topocount}
\underbrace{
\sum_{\topoindex=1}^{\topocount} \sqrt{d_\topoindex} \twonorm{\macrogtargetp{\targetpname_\topoindex} - \macrogtargetp{\targetptildename_\topoindex}}
}_{=: \onenorm{x},\textup{ for some } x \in\mathbb R^{C}} \\
&\leq 
\inf_{
\scalebox{0.95}{$
\scriptsize\big(\macrogtargetp{\targetptildename_\topoindex}\big)_{\topoindex}
$}
}
\frac{1}{\topocount}\cdot 
\underbrace{
\sqrt{\topocount}
\sqrt{
\sum_{\topoindex=1}^{\topocount} d_\topoindex \twonorm{\macrogtargetp{\targetpname_\topoindex} - \macrogtargetp{\targetptildename_\topoindex}}^2
}}_{= \sqrt{C}\twonorm{x}} \\
&=
\sqrt{
\inf_{
\scalebox{0.95}{$
\scriptsize\big(\macrogtargetp{\targetptildename_\topoindex}\big)_{\topoindex}
$}
}
\frac{1}{\topocount}\sum_{\topoindex=1}^{\topocount} d_\topoindex \twonorm{\macrogtargetp{\targetpname_\topoindex} - \macrogtargetp{\targetptildename_\topoindex}}^2
}.\label{eq:temp convergence bound}
\end{align}
\end{subequations}
We now introduce some notation. We know that $\genmultopotargetps$ is in $\geninfset{\genmultopo}{\ninf}{\ninfcons}$ with $\ninfcons \geq 2$, so thanks to \cref{th:inclusion relations}, we have in particular $\genmultopotargetps \in \geninfset{\genmultopo}{\ninf}{2}$: consider the tensors $\Big\{\geninfstrat{\alice_\pindex}\Big\}_{\pindex=1}^\pcount$ and $\isource{\hv}{}$ of equation \eqref{eq:prob gen inf set primitives} that establish that $\genmultopotargetps \in \geninfset{\genmultopo}{\ninf}{2}$.
Define, for all $\hvval$ in the output domain of the source $\isource{\hv}{}$ and for all $\topoindex$, the distributions
\begin{equation}
\label{eq:def cpq}
\cpq
:=
\centertikz{
%
%
%
%
%
\node[voidnode] (pdots3) [right=140pt of pdots2] {$\dots$};
\node[detnode] (a3) [left=3pt of pdots3] {$\alice_{\pmap_\topoindex(1)}$};
\node[voidnode] (outa3) [above=\outcomevspace of a3] { };
\node[detnode] (b3) [right=3pt of pdots3] {$\alice_{\pmap_\topoindex(\npcount_\topoindex)}$};
\node[voidnode] (outb3) [above=\outcomevspace of b3] { };
\drawoutcomeleg (a3.north) -- (outa3.south);
\drawoutcomeleg (b3.north) -- (outb3.south); 
%
\node[selectnode] (sa3) [below=20pt of a3] { };
\node[selectnode] (sb3) [below=20pt of b3] { };
\node[voidnode] (sdots3) [right=7pt of sa3] {\indexstyle{\dots}};
\drawtupledsource (sa3.north) -- (a3.south);
\drawtupledsource (sb3.north) -- (b3.south);
%
\node[voidnode] (ia3) [left=6pt of sa3] {\indexstyle{\cmap_\topoindex(1)}};
\node[voidnode] (ib3) [left=6pt of sb3] {\indexstyle{\cmap_\topoindex(\npcount_\topoindex)}};
\drawtupleoutcomeleg (ia3.east) -- (sa3.west);
\drawtupleoutcomeleg (ib3.east) -- (sb3.west);
%
\node[copynode] (copy3) [below=55pt of pdots3] { };
\drawtupledsource (copy3) to [out=160,in=270] (sa3.south);
\drawtupledsource (copy3) to [out=20,in=270] (sb3.south);
\node[voidnode] (copydots3) [above=2pt of copy3] {\indexstyle{\dots}};
%
\node[tuplenode] (tuple3) [below=15pt of copy3] { };
\node[voidnode] (tupledots3) [below=6pt of tuple3] {\indexstyle{\dots}};
\drawtupledsource (tuple3.north) -- (copy3.south);
%
\node[voidnode] (sourcedots3) [below=20pt of tuple3] {$\overset{(\scount_\topoindex-2)}{\dots}$};
\node[tensornode] (beta3) [below left=-12pt and 10pt of sourcedots3] {$\dsunif$};
\node[tensornode] (gamma3) [below right=-12pt and 10pt of sourcedots3] {$\dsunif$};
\node[copynode] (copybeta3) [above=\outcomevspace of beta3] {};
\node[copynode] (copygamma3) [above=\outcomevspace of gamma3] {};
\drawdsource (beta3.north) -- (copybeta3);
\drawdsource (gamma3.north) -- (copygamma3);
\drawdsource (copybeta3) -- \AnchorOneThree{tuple3}{south};
\drawdsource (copygamma3) -- \AnchorThreeThree{tuple3}{south};
%
%
%
%
\node[voidnode] (hv) [above right=10pt and 70pt of tuple3] {\indexstyle{\hvval}};
\node[copynode] (copyhv) [above=\outcomevspace of hv] {};
\drawleg (hv.north) -- (copyhv);
\foreach \player/\outangle/\inangle in {a3/180/280,b3/180/270}
	\drawdashedleg (copyhv) to [out=\outangle,in=\inangle] ($(\player.south west)!\AnchorTwoTwoFactor!(\player.south east)$);
\node[voidnode] (dotshv)  [above left=0pt and 18pt of copyhv] {\indexstyle{\dots}};
}.
\end{equation}
%
By construction, for any $\hvval$, it holds that
\begin{equation}
\left( \cpq \right)_{\topoindex=1}^{\topocount} \in \outputdistribs{\genmultopo}.
\end{equation}
Thus, we can upper bound the infimum of \eqref{eq:temp convergence bound} by the convex combination
\begin{align}
\label{eq:temp convergence bound 2}
\eqref{eq:temp convergence bound} 
\leq
\sqrt{
\int \dd\hvval \isource{\hv}{\hvval} \frac{1}{\topocount}\sum_{\topoindex=1}^{\topocount} d_\topoindex 
\twonorm{\macrogtargetp{\targetpname_\topoindex} - \cpq}^2
}.
\end{align}
Now, using \cref{lem:average twonorm}:
\begin{align}
\label{eq:temp convergence bound 3}
\eqref{eq:temp convergence bound 2} \leq
\sqrt{
\frac{3}{\topocount}
\sum_{\topoindex=1}^\topocount d_\topoindex
\onenorm{
\macrogtargetp{\targetpname_\topoindex}\macrogtargetp{\targetpname_\topoindex} 
-
\centertikz{
\node[tensornode] (q1) at (0,0) {$\targetptildename_\topoindex$};
\node[voidnode] (a1) [above left=\outcomevspace and 0pt of q1] { };
\node[voidnode] (b1) [right=0pt of a1] { };
\node[voidnode] (dots1) [below right=-1.5pt and 0pt of b1] {\indexstyle{\dots}};
\node[voidnode] (c1) [above right=\outcomevspace and 0pt of q1] { };
\drawoutcomeleg \AnchorOneThree{q1}{north} -- (a1.south);
\drawoutcomeleg \AnchorTwoThree{q1}{north} -- (b1.south);
\drawoutcomeleg \AnchorThreeThree{q1}{north} -- (c1.south);
\node[tensornode] (q2) at (25pt,0pt) {$\targetptildename_\topoindex$};
\node[voidnode] (a2) [above left=\outcomevspace and 0pt of q2] { };
\node[voidnode] (b2) [right=0pt of a2] { };
\node[voidnode] (dots2) [below right=-1.5pt and 0pt of b2] {\indexstyle{\dots}};
\node[voidnode] (c2) [above right=\outcomevspace and 0pt of q2] { };
\drawoutcomeleg \AnchorOneThree{q2}{north} -- (a2.south);
\drawoutcomeleg \AnchorTwoThree{q2}{north} -- (b2.south);
\drawoutcomeleg \AnchorThreeThree{q2}{north} -- (c2.south);
\node[tensornode] (hv) at (12.5pt,-35pt) {$\hv$};
\node[copynode] (hvcopy) [above=\outcomevspace of hv] {};
\drawleg (hv.north) -- (hvcopy);
\drawleg (hvcopy) -- (q1.south);
\drawleg (hvcopy) -- (q2.south);
}
}
}.
\end{align}
Let us now introduce, for all $\topoindex =1,\dots,\topocount$, a new probability tensor $\cpmc{}{}$ that allows us to rewrite the constraint of equation \eqref{eq:geninfcondition} as
\begin{equation}
\label{eq:def mc}
\centertikz{
\node[tensornode] (u1) {$\dsunif$};
\node[voidnode] (dots) [right=5pt of u1] {$\overset{(2\scount_\topoindex-2)}{\dots}$};
\node[tensornode] (un) [right=5pt of dots]{$\dsunif$};
\node[copynode] (u1copy) [above=\outcomevspace of u1] {};
\node[copynode] (uncopy) [above=\outcomevspace of un] {};
\drawleg (u1.north) -- (u1copy);
\drawleg (un.north) -- (uncopy);
\node[tuplenode] (tuple) [above=20pt of dots] {};
\node[tensornode] (c) [above=10pt of tuple] {$M_\topoindex$};
\drawtupleleg (c.north) -- ($(c.north) + (0,\outcomevspace)$);
\node[voidnode] [below=1pt of tuple] {\indexstyle{\dots}};
\drawleg (u1copy) -- \AnchorOneThree{tuple}{south};
\drawleg (uncopy) -- \AnchorThreeThree{tuple}{south};
\drawtupleleg (tuple.north) -- (c.south);
\node[psnode] (f) [above right=10pt and 5pt of uncopy] {\psdiffname{2\scount_\topoindex}};
\node[voidnode] [below left=0pt and -12pt of f] {\indexstyle{\dots}};
\drawleg (u1copy) to [out=10,in=220] (f.south west);
\drawleg (uncopy) to [out=10,in=220] (f.south east);
}
=
\macrogtargetp{\targetpname_\topoindex}\macrogtargetp{\targetpname_\topoindex},
\end{equation}
and looking back at the definition of $\cpq$ in equation \eqref{eq:def cpq}, we also have
\begin{equation}
\label{eq:def mc 2}
\centertikz{
\node[tensornode] (u1) {$\dsunif$};
\node[voidnode] (dots) [right=5pt of u1] {$\overset{(2\scount_\topoindex-2)}{\dots}$};
\node[tensornode] (un) [right=5pt of dots]{$\dsunif$};
\node[copynode] (u1copy) [above=\outcomevspace of u1] {};
\node[copynode] (uncopy) [above=\outcomevspace of un] {};
\drawleg (u1.north) -- (u1copy);
\drawleg (un.north) -- (uncopy);
\node[tuplenode] (tuple) [above=20pt of dots] {};
\node[tensornode] (c) [above=10pt of tuple] {$M_\topoindex$};
\drawtupleleg (c.north) -- ($(c.north) + (0,\outcomevspace)$);
\node[voidnode] [below=1pt of tuple] {\indexstyle{\dots}};
\drawleg (u1copy) -- \AnchorOneThree{tuple}{south};
\drawleg (uncopy) -- \AnchorThreeThree{tuple}{south};
\drawtupleleg (tuple.north) -- (c.south);
}
=
\centertikz{
\node[tensornode] (q1) at (0,0) {$\targetptildename_\topoindex$};
\node[voidnode] (a1) [above left=\outcomevspace and 0pt of q1] { };
\node[voidnode] (b1) [right=0pt of a1] { };
\node[voidnode] (dots1) [below right=-1.5pt and 0pt of b1] {\indexstyle{\dots}};
\node[voidnode] (c1) [above right=\outcomevspace and 0pt of q1] { };
\drawoutcomeleg \AnchorOneThree{q1}{north} -- (a1.south);
\drawoutcomeleg \AnchorTwoThree{q1}{north} -- (b1.south);
\drawoutcomeleg \AnchorThreeThree{q1}{north} -- (c1.south);
\node[tensornode] (q2) at (25pt,0pt) {$\targetptildename_\topoindex$};
\node[voidnode] (a2) [above left=\outcomevspace and 0pt of q2] { };
\node[voidnode] (b2) [right=0pt of a2] { };
\node[voidnode] (dots2) [below right=-1.5pt and 0pt of b2] {\indexstyle{\dots}};
\node[voidnode] (c2) [above right=\outcomevspace and 0pt of q2] { };
\drawoutcomeleg \AnchorOneThree{q2}{north} -- (a2.south);
\drawoutcomeleg \AnchorTwoThree{q2}{north} -- (b2.south);
\drawoutcomeleg \AnchorThreeThree{q2}{north} -- (c2.south);
\node[tensornode] (hv) at (12.5pt,-35pt) {$\hv$};
\node[copynode] (hvcopy) [above=\outcomevspace of hv] {};
\drawleg (hv.north) -- (hvcopy);
\drawleg (hvcopy) -- (q1.south);
\drawleg (hvcopy) -- (q2.south);
}.
\end{equation}
Using equations \eqref{eq:def mc} and \eqref{eq:def mc 2} together with \cref{lem:postselection approx}, we have that
\begin{equation}
\onenorm{
\macrogtargetp{\targetpname_\topoindex}\macrogtargetp{\targetpname_\topoindex} 
-
\centertikz{
\node[tensornode] (q1) at (0,0) {$\targetptildename_\topoindex$};
\node[voidnode] (a1) [above left=\outcomevspace and 0pt of q1] { };
\node[voidnode] (b1) [right=0pt of a1] { };
\node[voidnode] (dots1) [below right=-1.5pt and 0pt of b1] {\indexstyle{\dots}};
\node[voidnode] (c1) [above right=\outcomevspace and 0pt of q1] { };
\drawoutcomeleg \AnchorOneThree{q1}{north} -- (a1.south);
\drawoutcomeleg \AnchorTwoThree{q1}{north} -- (b1.south);
\drawoutcomeleg \AnchorThreeThree{q1}{north} -- (c1.south);
\node[tensornode] (q2) at (25pt,0pt) {$\targetptildename_\topoindex$};
\node[voidnode] (a2) [above left=\outcomevspace and 0pt of q2] { };
\node[voidnode] (b2) [right=0pt of a2] { };
\node[voidnode] (dots2) [below right=-1.5pt and 0pt of b2] {\indexstyle{\dots}};
\node[voidnode] (c2) [above right=\outcomevspace and 0pt of q2] { };
\drawoutcomeleg \AnchorOneThree{q2}{north} -- (a2.south);
\drawoutcomeleg \AnchorTwoThree{q2}{north} -- (b2.south);
\drawoutcomeleg \AnchorThreeThree{q2}{north} -- (c2.south);
\node[tensornode] (hv) at (12.5pt,-35pt) {$\hv$};
\node[copynode] (hvcopy) [above=\outcomevspace of hv] {};
\drawleg (hv.north) -- (hvcopy);
\drawleg (hvcopy) -- (q1.south);
\drawleg (hvcopy) -- (q2.south);
}
}
\leq
\frac{2\scount_\topoindex(2\scount_\topoindex -1)}{\ninf} + \bigo{\frac{1}{\ninf^2}}.
\end{equation}
Inserting this bound into equation \eqref{eq:temp convergence bound 3} yields the desired result.
\end{proof}

The \cref{corollary:convergence} is now easy to obtain:

\CorollaryConvergence*
\begin{proof}
\Cref{th:certifying ps} already implies that
\begin{equation}
\outputdistribs{\genmultopo} \subseteq \bigcap_{\ninf=\ninf_0}^\infty \geninfset{\genmultopo}{\ninf}{\ninfcons}.
\end{equation}
Now, let
\begin{equation}
\genmultopotargetps \in \bigcap_{\ninf=\ninf_0}^\infty \geninfset{\genmultopo}{\ninf}{\ninfcons}.
\end{equation}
For all $\epsilon > 0$, choose $\ninf$ sufficiently large such that the right-hand side of equation \eqref{eq:convergence rate} is less than or equal to $\epsilon$. Since we have in particular that
\begin{equation}
\genmultopotargetps \in \geninfset{\genmultopo}{\ninf}{\ninfcons},
\end{equation}
\cref{th:convergence} implies that
\begin{equation}
\label{eq:corollary temp 1}
\inf_{
\scalebox{0.95}{$
\scriptsize\big(\macrogtargetp{\targetptildename_\topoindex}\big)_{\topoindex=1}^\topocount\in\outputdistribs{\genmultopo}
$}
}
\metric\left[
\genmultopotargetps,
\left(\macrogtargetp{\targetptildename_\topoindex}\right)_{\topoindex=1}^\topocount
\right]
\leq \epsilon.
\end{equation}
Since equation \eqref{eq:corollary temp 1} holds for all $\epsilon > 0$, and since the metric $\metric$ is positive definite, we must have in fact
\begin{equation}
\inf_{
\scalebox{0.95}{$
\scriptsize\big(\macrogtargetp{\targetptildename_\topoindex}\big)_{\topoindex=1}^\topocount\in\outputdistribs{\genmultopo}
$}
}
\metric\left[
\genmultopotargetps,
\left(\macrogtargetp{\targetptildename_\topoindex}\right)_{\topoindex=1}^\topocount
\right]
= 0.
\end{equation}
This is equivalent to the statement that there exists a sequence $\left(l_k \in \outputdistribs{\genmultopo}\right)_{k\in\mathbb N}$ such that
\begin{equation}
\lim_{k\to\infty} l_k = \genmultopotargetps
\end{equation}
in the metric $\metric$, so that
\begin{equation}
\bigcap_{\ninf=\ninf_0}^\infty \geninfset{\genmultopo}{\ninf}{\ninfcons}
\subseteq \closure{\outputdistribs{\genmultopo}}
\end{equation}
holds.
\end{proof}

\newpage
\section{Correlated Sleeper: additional material}

\subsection{Parametrization}
\label{sec:sc parametrization}


\LemmaScLambdas*

\begin{proof}
First, the symmetry of each $\atargetp{\topoindex}{}{}$, apparent from \eqref{eq:sc det tensor param}, implies
\begin{equation}
\label{eq:proof lambda sym}
\atargetp{\topoindex}{1}{2} = \atargetp{\topoindex}{2}{1}.
\end{equation}
Additionally, the marginal constraint implied by the last equality of \eqref{eq:sc det tensor param} yields
\begin{subequations}
\begin{align}
\label{eq:proof lambda marg}
\atargetp{\topoindex}{1}{1} + \atargetp{\topoindex}{1}{2} &= \frac{1}{2}, \\
\atargetp{\topoindex}{2}{1} + \atargetp{\topoindex}{2}{2} &= \frac{1}{2},
\end{align}
\end{subequations}
which together with \eqref{eq:proof lambda sym} implies that
\begin{equation}
\atargetp{\topoindex}{1}{1} = \atargetp{\topoindex}{2}{2}.
\end{equation}
Thus, the only free parameter in the distribution $\atargetp{\topoindex}{}{}$ is $\lambda_\topoindex := \atargetp{\topoindex}{1}{1}$. It is clear from \eqref{eq:proof lambda marg} that
\begin{equation}
0 \leq \atargetp{\topoindex}{1}{1} \leq \frac{1}{2},
\end{equation}
which concludes the proof.
\end{proof}

\LemmaBoundLambdas*
\begin{proof}
To see this, consider the $1,2$ component of the distribution:
\begin{align}
\label{eq:proof lambda pre inner product}
\atargetp{\topoindex}{1}{2} &= \int \dd \alpha f_1^{(\topoindex)}(\alpha) f_2^{(\topoindex)}(\alpha),
\end{align}
where we defined
\begin{equation}
f_i^{(\topoindex)}(\alpha) := \left\{
\begin{aligned}
\centertikz{
\node[detnode] (a) {$\alice$};
\node[voidnode] (o) [above=\outcomevspace of a] {\indexstyle{i}};
\drawleg (a.north) -- (o.south);
\node[tensornode] (u) [below right=10pt and -5pt of a] {$\maxunif$};
\drawleg (u.north) -- \AnchorTwoTwo{a}{south};
\node[voidnode] (i) [below left=10pt and 0pt of a] {\indexstyle{\alpha}};
\drawleg (i.north) -- \AnchorOneTwo{a}{south};
}, 
\quad &\textup{ if }\topoindex = 1,\\
\centertikz{
\node[detnode] (a) {$\alice$};
\node[voidnode] (o) [above=\outcomevspace of a] {\indexstyle{i}};
\drawleg (a.north) -- (o.south);
\node[tensornode] (u) [below left=10pt and -5pt of a] {$\maxunif$};
\drawleg (u.north) -- \AnchorOneTwo{a}{south};
\node[voidnode] (i) [below right=10pt and 0pt of a] {\indexstyle{\alpha}};
\drawleg (i.north) -- \AnchorTwoTwo{a}{south};
}, \quad &\textup{ if }\topoindex = 2.
\end{aligned}
\right.
\end{equation}
The right-hand side of equation \eqref{eq:proof lambda pre inner product} is an inner product on the function space $L^2([0,1])$, so that we can apply the Cauchy-Schwartz inequality:
\begin{equation}
\atargetp{\topoindex}{1}{2} = \left< f_1^{(\topoindex)} \middle| f_2^{(\topoindex)} \right >_{L^2([0,1])} \leq \twonorm{f_1^{(\topoindex)}}\twonorm{f_2^{(\topoindex)}}.
\end{equation}
Now, these norms can be evaluated explicitly:
\begin{align}
\twonorm{f_i^{(\topoindex)}}^2 = \int \dd \alpha f_i^{(\topoindex)}(\alpha) f_i^{(\topoindex)}(\alpha) = \atargetp{\topoindex}{i}{i} = \atargetp{\topoindex}{1}{1}.
\end{align}
We hence have
\begin{equation}
\atargetp{\topoindex}{1}{2} \leq \atargetp{\topoindex}{1}{1},
\end{equation}
so that
\begin{equation}
\atargetp{\topoindex}{1}{1} = \frac{1}{2} - \atargetp{\topoindex}{1}{2} \geq \frac{1}{2} - \atargetp{\topoindex}{1}{1}
\end{equation}
which implies
\begin{equation}
\atargetp{\topoindex}{1}{1} \geq \frac{1}{4}.
\end{equation}
This inequality expresses the intuitive fact that the best $A$ can do to obtain opposite output bits across the two rounds is to ignore her inputs and output a random bit instead.
\end{proof}

\subsection{Explicit linear programs}
\label{sec:app linear program explicit}

\paragraph{Explicit feasibility problem.} Continuing with the notation of \cref{sec:applications}, we say that $(\lambda_1,\lambda_2) \in \appinffeasibleparam$ if and only if there exists
\begin{equation}
\left\{\isource{\hv}{\appmat}\right\}_{\appmat\in\appmatsetreduced}
\end{equation}
such that (the distributions $\atargetp{\topoindex}{}{}$ for $\topoindex\in\{1,2\}$ are obtained from $(\lambda_1,\lambda_2)$ as in \cref{lem:sc lambdas}, the indices $\dsvalalpha_1,\dots,\dsvalbeta_4$ are always in the set $\{1,\dots,4\}$)
\begin{subequations}
\begin{align}
\forall \appmat&\in\appmatsetreduced \st \isource{\hv}{\appmat} \geq 0, \quad \sum_{\appmat' \in \appmatsetreduced} \isource{\hv}{\appmat'} = 1, \\
\forall (\outputa_1,\outputa_2,\outputa_3,\outputa_4) &\in \left\{(1,1,1,1),(1,1,1,2),(1,1,2,2),(1,2,1,2),(1,2,2,2)\right\} \st \nonumber \\
\sum_{\appmat\in\appmatsetreduced} \isource{\hv}{\appmat} \frac{1}{12\cdot 4!} 
&\sum_{\substack{
\dsvalalpha_1,\dsvalalpha_2 \allneq \\
\dsvalbeta_1,\dots,\dsvalbeta_4 \allneq
}}
\delta(\appmat_{\dsvalalpha_1,\dsvalbeta_1} = \outputa_1)
\delta(\appmat_{\dsvalalpha_1,\dsvalbeta_2} = \outputa_2)
\delta(\appmat_{\dsvalalpha_2,\dsvalbeta_3} = \outputa_3)
\delta(\appmat_{\dsvalalpha_2,\dsvalbeta_4} = \outputa_4) \nonumber\\
&\qquad= \atargetp{1}{\outputa_1}{\outputa_2}\atargetp{1}{\outputa_3}{\outputa_4}, \\
\forall (\outputa_1,\outputa_2,\outputa_3,\outputa_4) &\in \left\{(1,1,1,1),(1,1,1,2),(1,1,2,2),(1,2,1,2),(1,2,2,2)\right\} \st \nonumber \\
\sum_{\appmat\in\appmatsetreduced} \isource{\hv}{\appmat} \frac{1}{4!\cdot 12} 
&\sum_{\substack{
\dsvalalpha_1,\dots,\dsvalalpha_4 \allneq \\
\dsvalbeta_1,\dsvalbeta_2 \allneq
}}
\delta(\appmat_{\dsvalalpha_1,\dsvalbeta_1} = \outputa_1)
\delta(\appmat_{\dsvalalpha_2,\dsvalbeta_1} = \outputa_2)
\delta(\appmat_{\dsvalalpha_3,\dsvalbeta_2} = \outputa_3)
\delta(\appmat_{\dsvalalpha_4,\dsvalbeta_2} = \outputa_4) \nonumber\\
&\qquad= \atargetp{2}{\outputa_1}{\outputa_2}\atargetp{2}{\outputa_3}{\outputa_4}, \\
\forall (\outputa_1,\outputa_2,\outputa_3,\outputa_4) &\in \left\{(1,1,1,1),(1,1,1,2),(1,1,2,2),(1,2,2,2)\right\} \st \nonumber\\
\sum_{\appmat\in\appmatsetreduced} \isource{\hv}{\appmat} \frac{1}{4!\cdot 4!} 
&\sum_{\substack{
\dsvalalpha_1,\dots,\dsvalalpha_4 \allneq \\
\dsvalbeta_1,\dots,\dsvalbeta_4 \allneq
}}
\delta(\appmat_{\dsvalalpha_1,\dsvalbeta_1} = \outputa_1)
\delta(\appmat_{\dsvalalpha_2,\dsvalbeta_2} = \outputa_2)
\delta(\appmat_{\dsvalalpha_3,\dsvalbeta_3} = \outputa_3)
\delta(\appmat_{\dsvalalpha_4,\dsvalbeta_4} = \outputa_4) \nonumber\\
&\qquad= \frac{1}{16}, \\
\forall (\outputa_1,\outputa_2),(\outputa_3,\outputa_4) &\in \big\{(1,1),(1,2),(2,2)\big\},\outputa_5 \in \{1,2\}, (\outputa_1,\outputa_2,\outputa_3,\outputa_4,\outputa_5) \neq (2,2,2,2,2) \st \nonumber\\
\sum_{\appmat\in\appmatsetreduced} \isource{\hv}{\appmat} \frac{1}{4!\cdot 4!} 
&\sum_{\substack{
\dsvalalpha_1,\dots,\dsvalalpha_4 \allneq \\
\dsvalbeta_1,\dots,\dsvalbeta_4 \allneq
}}
\delta(\appmat_{\dsvalalpha_1,\dsvalbeta_1} = \outputa_1)
\delta(\appmat_{\dsvalalpha_1,\dsvalbeta_2} = \outputa_2)
\delta(\appmat_{\dsvalalpha_2,\dsvalbeta_3} = \outputa_3)
\delta(\appmat_{\dsvalalpha_3,\dsvalbeta_3} = \outputa_4)
\delta(\appmat_{\dsvalalpha_4,\dsvalbeta_4} = \outputa_5) \nonumber\\
&\qquad= \frac{1}{2}\atargetp{1}{\outputa_1}{\outputa_2}\atargetp{2}{\outputa_3}{\outputa_4}.
\end{align}
\end{subequations}

\paragraph{Explicit dual optimization problem.} 
The problem of equation \eqref{eq:app inf dual} can be rewritten as
\begin{subequations}
\begin{align}
\adual = &\min_{z_{11},z_{12},z_{21},z_{22}\in\mathbb{R}} \frac{1}{4} \sum_{\outputa,\outputb\in\{1,2\}} z_{\outputa\outputb} \\
&\textup{s.t. }\forall \appmat \in \appmatset\ (\textup{or }\appmatsetreduced): \nonumber
\end{align}
\vspace{-25pt}
\begin{multline}
\sum_{\outputa,\outputb} z_{\outputa\outputb} \frac{1}{12\cdot12}\sum_{\substack{\dsvalalpha_1,\dsvalalpha_2 \allneq\\\dsvalbeta_1,\dsvalbeta_2\allneq}}
\delta(\appmat_{\dsvalalpha_1,\dsvalbeta_1} = \outputa)\delta(\appmat_{\dsvalalpha_2,\dsvalbeta_2} = \outputb)
\\
\geq
\frac{1}{2}
\sum_{\outputa}
\left[
\frac{1}{4\cdot 12} \sum_{\substack{\dsvalalpha\\\dsvalbeta_1\neq\dsvalbeta_2}}
\delta(\appmat_{\dsvalalpha,\dsvalbeta_1} = \outputa)
\delta(\appmat_{\dsvalalpha,\dsvalbeta_2} = \outputa)
+
\frac{1}{12\cdot4} \sum_{\substack{\dsvalalpha_1\neq\dsvalalpha_2\\\dsvalbeta}}
\delta(\appmat_{\dsvalalpha_1,\dsvalbeta} = \outputa)
\delta(\appmat_{\dsvalalpha_2,\dsvalbeta} = \outputa)
\right].
\end{multline}
\end{subequations}

\newpage
\subsection{Extended plot of the feasible region}
\label{sec:extended plot}

In \cref{fig:bound on lambdas}, we show how the inflation described in \cref{def:app inf set} behaves on the whole region $[0,1/2]^{\times 2}$ which corresponds, thanks to \cref{lem:sc lambdas}, to the set of pairs of distributions with $2\times 2$ outcomes that are symmetric under the exchange of the two outcomes, and that have uniform marginals.
We saw in \cref{lem:bound lambdas} that if this pair of distributions is in $\appfeasible$ (see e.g.\ equation \eqref{eq:app feasible}), then it must further verify $\lambda_1,\lambda_2 \geq 1/4$. This is not the case according to our outer approximation of \cref{def:app inf set}: as we see in \cref{fig:bound on lambdas}, the postselected inflation only enforces that $\lambda_1,\lambda_2$ are greater than $\approx 0.17$.

\begin{figure}[h!]
	\centering
	\includegraphics{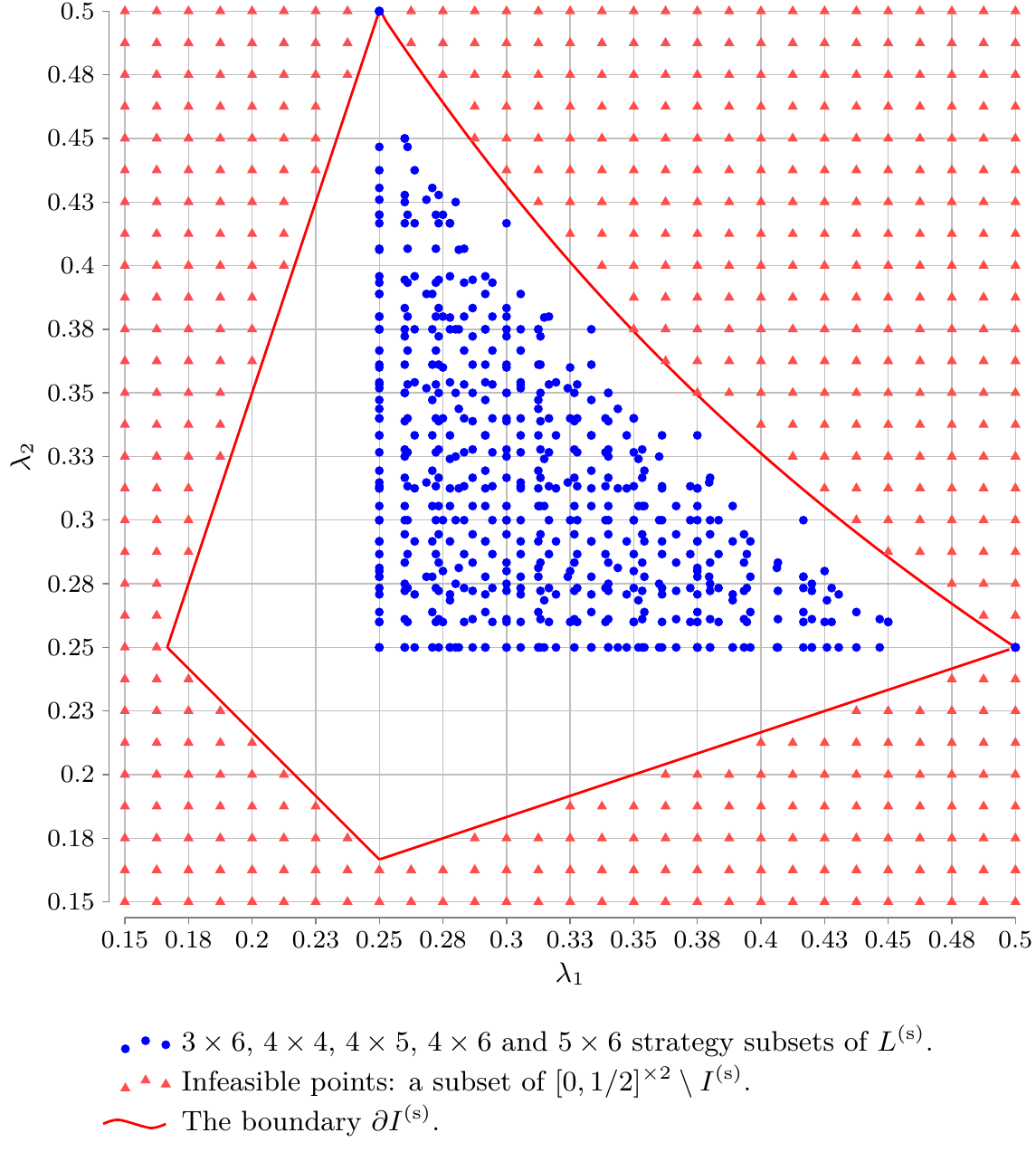}
	\caption{Behavior of the postselected inflation of \cref{def:app inf set} on the whole $[0,1/2]^{\times 2}$ region. In fact, if either of the two $\lambda_1,\lambda_2$ is less than $0.15$, we find that the corresponding distribution does not admit a postselected inflation, so we restrict the range of the plot to better visualize the boundary $\setboundary\appinffeasibleparam$. This boundary is obtained similarly to that of \cref{fig:main plot}: we ran a dichotomic search of the threshold radius (with respect to feasibility of the corresponding linear program) in polar coordinates centered around $(\lambda_1=0.25,\lambda_2=0.25)$, which makes sense given the overall shape of $\appinffeasibleparam$ shown with the scan of infeasible points.}
	\label{fig:bound on lambdas}
\end{figure}

\end{document}